\documentclass[11pt]{article}

\usepackage[a4paper,margin=2.5cm]{geometry}
\usepackage{microtype}

\usepackage{amsmath,amssymb,amsfonts}
\usepackage{amsthm}
\usepackage{mathrsfs}
\usepackage{amsmath}
\usepackage{graphicx}
\usepackage{amssymb}
\usepackage[dvipsnames]{xcolor}
\usepackage{mathtools}
\usepackage{tikz}
\usepackage{bbold}
\usepackage{bbding}
\usepackage{comment}
\usepackage[none]{hyphenat}
\usepackage{enumitem}
\usepackage{etoolbox}
\usepackage[toc,page]{appendix}

\setlist[itemize]{itemsep=0pt, topsep=1pt}
\setlist[enumerate]{itemsep=0pt, topsep=1pt}

\usetikzlibrary{automata, positioning, arrows}
\tikzset{
	node distance=3cm,
	initial text=$ $
}

\theoremstyle{plain}
\newtheorem{theorem}{Theorem}[section]
\newtheorem{lemma}[theorem]{Lemma}

\newtheorem{definition}[theorem]{Definition}
\newtheorem{corollary}[theorem]{Corollary}

\theoremstyle{definition}
\newtheorem{example}[theorem]{Example}

\theoremstyle{remark}
\newtheorem{remark}[theorem]{Remark}

\newcommand{\arc}{\mathrm{arc}}
\newcommand{\amer}{\mathrm{amer}}

\newcommand{\epsfree}{\eps\text{-}\mathrm{free}}

\newcommand{\free}{\mathrm{free}}

\newcommand{\llong}{\mathrm{long}}

\newcommand{\op}{\operatorname{op}}
\newcommand{\eur}{\mathrm{eur}}

\newcommand{\prev}{\mathrm{prev}}
\newcommand{\QFL}{\mathrm{QFL}}

\newcommand{\Reg}{\mathrm{REG}}
\newcommand{\res}{\mathrm{res}}
\newcommand{\Run}{\mathrm{Run}}
\newcommand{\short}{\mathrm{short}}

\newcommand{\supp}{\mathrm{supp}}

\newcommand{\trop}{\mathrm{trop}}

\newcommand{\WFFA}{\mathrm{WFFA}}
\newcommand{\wt}{\mathrm{wt}}

\newcommand{\zero}{\mathbb{0}}
\newcommand{\one}{\mathbb{1}}
\newcommand{\eps}{\varepsilon}

\newcommand{\A}{\mathcal{A}}
\newcommand{\C}{\mathcal{C}}
\newcommand{\D}{\mathcal{D}}

\newcommand{\F}{\mathcal{F}}
\newcommand{\I}{\mathcal{I}}
\renewcommand{\L}{\mathcal{L}}
\renewcommand{\O}{\mathcal{O}}
\renewcommand{\P}{\mathcal{P}}
\newcommand{\R}{\mathcal{R}}

\newcommand{\T}{\mathcal{T}}
\newcommand{\W}{\mathcal{W}}

\renewcommand{\AA}{\mathbb{A}}
\newcommand{\bb}{\mathbb{b}}

\newcommand{\CC}{\mathbb{C}}

\newcommand{\DD}{\mathbb{D}}
\newcommand{\EE}{\mathbb{E}}
\newcommand{\FF}{\mathbb{F}}
\newcommand{\hh}{\mathbb{h}}

\renewcommand{\ll}{\mathbb{l}}

\newcommand{\mm}{\mathbb{m}}
\newcommand{\MM}{\mathbb{M}}
\newcommand{\NN}{\mathbb{N}}
\newcommand{\PP}{\mathbb{P}}

\newcommand{\RR}{\mathbb{R}}

\renewcommand{\SS}{\mathbb{S}}
\newcommand{\UU}{\mathbb{U}}

\newcommand{\lsem}{[\![}
\newcommand{\rsem}{]\!]}
\newcommand{\llangle}{\langle \! \langle}
\newcommand{\rrangle}{\rangle \! \rangle}
\newcommand{\beh}[1]{{\lsem #1 \rsem}}
\newcommand{\bind}[1]{{\llangle #1 \rrangle}}
\newcommand{\dist}[1]{{| \! | #1 | \! |}}
\newcommand{\clp}[1]{\langle #1 \rangle_{\oplus}}

\newcommand{\keywords}[1]{\par\noindent\textbf{Keywords: }#1}

\title{Weighted Automata and Regular Expressions for Financial Systems}

\author{
Manfred Droste\\
Institut f\"ur Informatik, Universit\"at Leipzig, Leipzig, Germany\\
\texttt{droste@informatik.uni-leipzig.de}
\and
Vitaly N\"urnberg\\
Independent Researcher, Herrischried, Germany\\
\texttt{vitaly.nuernberg@gmail.com}
}

\date{}
\begin{document}
\maketitle

\abstract{
We introduce \emph{weighted finite finance automata} (WFFA), 
a formal framework for modeling and analyzing quantitative properties of financial systems driven by uncertain economic variables such as stock prices, interest rates, and exchange rates. 
The model provides a compositional and language-theoretic approach to scenario-based financial analysis, enabling systematic evaluation of financial instruments and trading strategies.

To specify such systems, we introduce \emph{weighted finance regular expressions}, 
a declarative language for quantitative financial properties. 
We establish a Kleene--Sch\"utzenberger-type correspondence~\cite{Sch61,Kle56} between WFFAs and weighted finance regular expressions, together with effective translation procedures between the two formalisms.

On the algorithmic side, we investigate fundamental decision and optimization problems for WFFAs, including the computation of extremal payoffs, and identify expressive yet computationally tractable subclasses. 
These results provide a foundation for formal, compositional, and efficient analysis of financial systems under multiple market scenarios.
}

\keywords{weighted automata, financial modeling, scenario analysis, derivatives, option pricing, formal methods, max-plus algebra, regular expressions}

\maketitle

\section{Introduction}\label{sec:introduction}

Financial markets play a central role in modern economic systems and give rise to a wide range of complex financial instruments \cite{Fab08,Hul93,Lue99,Nat14}. 
To support decision-making, a variety of sophisticated mathematical models, in particular stochastic models, and software tools have been developed. However, given the inherent unpredictability of financial markets, these approaches cannot guarantee reliable outcomes. Moreover, they typically lack a compositional and formally analyzable structure that would allow quantitative reasoning across sequences of interacting market events.

A central challenge in computational finance is the efficient evaluation of financial instruments over large collections of possible market scenarios. In practice, such analyses rely on scenario generation, stress testing, and backtesting (see, e.g., \cite{Has16}), which often require ad hoc numerical procedures and offer limited support for compositional modeling. As a result, it remains difficult to reason about complex financial systems in a scalable and interpretable way. These limitations motivate the development of formal frameworks that support both expressive modeling of financial scenarios and efficient quantitative evaluation.

Formal methods provide a natural foundation for such frameworks. By offering mathematically precise models with well-defined semantics, they enable rigorous specification, analysis, and verification of complex systems. In recent years, formal methods have been successfully applied in financial contexts, including the design of exchange mechanisms \cite{GS22} and the analysis of DeFi smart contracts \cite{BDKJ23}, demonstrating their potential for improving reliability and risk management.

At the core of formal methods lie non-deterministic finite automata \cite{RS59,HU79} and regular expressions \cite{Kle56,HU79}, which serve as fundamental models for representing and analyzing discrete systems. Automata provide intuitive operational models and form the basis for verification techniques such as model checking \cite{BK08}, while regular expressions offer concise specification formalisms.

\emph{Weighted automata}, introduced by Sch\"utzenberger \cite{Sch61} (see also \cite{Eil74,KS86,SS78,DKV09}), extend classical automata by assigning weights from a semiring to transitions, which are aggregated along runs. They provide a general framework for modeling quantitative properties such as time~\cite{LBBFHP01}, costs and rewards~\cite{DP16}, probabilities~\cite{Rab63}, and energy consumption~\cite{BBFLMR21}. They have also found important applications in natural language processing and artificial intelligence~\cite{WZZS23}. 
The algebraic structure of weighted automata yields robust closure and decidability properties, making them a powerful tool for quantitative reasoning.
\emph{Weighted regular expressions} provide a complementary specification formalism, and the Kleene--Sch\"utzenberger theorem \cite{Kle56,Sch61} establishes their expressive equivalence with weighted automata.

Despite their expressive power, weighted automata have not yet been systematically applied to financial modeling, where system behavior depends on dynamically evolving external variables. 
Classical models do not directly capture such data-dependent dynamics. 
Bridging this gap enables the transfer of automata-theoretic techniques to financial applications, in particular for scenario-based analysis.

In this paper, we extend the theory of weighted automata and regular expressions to model financial systems driven by external quantitative data. Our framework provides a compositional and formally analyzable approach to scenario-based financial modeling.

{\em Our contributions:}

\begin{itemize}
\item We introduce {\em weighted finite finance automata (WFFA)} over semirings. 
WFFAs are non-deterministic finite automata that process finite sequences of valued letters, i.e., pairs consisting of a symbol from a finite alphabet and a value from a (possibly infinite) data domain. 
In contrast to classical weighted automata, transition weights are arithmetic expressions with variables (so-called {\em $\FF$-expressions}) referring to financial quantities at the time of a transition.
\item We demonstrate the expressiveness of WFFAs through financial examples, including bonds, stocks, European and American options, and limit orders, demonstrating that the framework captures a broad range of financial instruments.
\item We establish closure of WFFAs under standard operations, including sum, Hadamard and Cauchy products, and Kleene star. 
To control the growth of transition expressions, we identify subclasses of bounded size. In particular, we show that monomial
expressions suffice to capture all WFFA behaviors, while general $\FF$-expressions provide a convenient specification formalism.
For certain primitive expressions, the corresponding constructions can be further optimized, ensuring computational tractability in practical applications.
\item We introduce {\em weighted finance regular expressions} and prove a Kleene--Sch\"utzenberger-type theorem~\cite{Sch61,Kle56} establishing expressive equivalence with WFFAs. 
We further show that, when restricted to primitive expressions, weighted finance regular expressions are strictly more expressive than WFFAs, and we identify a restricted fragment that exactly matches WFFA expressiveness.
\item We give an algebraic characterization of WFFAs via matrix representations and derive an efficient evaluation procedure running in time quadratic in the number of states.
\item In the max-plus setting, we prove decidability of the support and threshold problems for WFFAs. These problems are fundamental in the theory of weighted automata (cf.~\cite{KS86,DKV09}). 
In our setting, they determine, respectively, whether any non-trivial outcome is achievable and whether a given payoff level can be exceeded, thereby providing a basis for the formal analysis of financial strategies.
\end{itemize}

{\em Related models:}

\begin{itemize}

\item {\em Binomial trees}~\cite{CRR79,Hul93} are a classical tool in financial engineering for modeling asset price evolution and pricing derivatives. 
They describe a discrete-time branching structure where, at each step, the asset price moves to one of two possible values. 
Such trees can be naturally interpreted as sets of market scenarios, where each path corresponds to a finance word in our setting, while WFFAs provide a mechanism to evaluate quantitative properties along these scenarios.
In contrast to classical binomial models, where price dynamics are fixed in advance, WFFAs allow the use of symbolic transition expressions, enabling more flexible and data-dependent modeling.

\item The {\em discounted cash flow (DCF) model}~\cite{KL05} estimates the value of an investment by aggregating discounted future cash flows.
Several weighted automata models incorporate discounting (cf.~\cite{DK06,DR09}), typically assuming fixed discount factors encoded in the model.
In contrast, WFFAs support dynamically determined discount factors that are provided as part of the input data, for instance derived from observed market rates.

\item WFFAs can be viewed as {\em automata over an infinite alphabet}, since they process symbols paired with values from a potentially infinite domain.
Related models include weighted timed automata~\cite{LBBFHP01} and weighted register automata~\cite{BDP18}, which use clocks or registers to handle data.
While these models achieve high expressiveness, this often comes at a significant computational cost, and classical closure properties may fail.
In contrast, WFFAs remain lightweight and retain key algebraic and algorithmic properties of semiring-weighted automata over finite alphabets.
Moreover, WFFAs are complementary to register-based models: whereas register automata are well-suited for specifying and constraining data-dependent scenarios, WFFAs provide a natural framework for quantitative evaluation of such scenarios.

\end{itemize}

\section{Finance Semirings and Expressions}
\label{SEC:preliminaries}

In this section we introduce the algebraic structures underlying the
definition of WFFAs. In particular, we define \emph{finance semirings} and
\emph{$\FF$-expressions} (financial expressions), which specify quantitative
computations over finance data words and also serve as building blocks
for the weighted finance regular expressions introduced later.

Let $\RR_{\ge 0}$ be the set of non-negative real numbers. For any $k \in \NN$, let ${\NN_{\ge k} = \{n \in \NN  \mid  n \ge k\}}$ and $[k] = \{1, \dots, k\} \subseteq \NN$.

Let $X$ be any non-empty set, which can be infinite. A {\em word} over $X$ is a finite sequence $w = x_1 \dots x_n$ where $n \ge 0$ and $x_1, \dots, x_n \in X$. 
The {\em length} of $w$, denoted by $|w|$, is defined as $|w| = n$. The empty word is denoted by $\eps$, and $|\eps| = 0$. Let $X^*$ denote the set of all words over $X$. If $X$ is finite, then it is called an {\em alphabet}. 

Now, let $\DD$ be a non-empty (possibly infinite) set, referred to as a {\em finance data domain}, and let $\Sigma$ be an alphabet. A {\em finance word} over $\Sigma$ and $\DD$ is a sequence over $\Sigma \times \DD$, i.e., a sequence $w = (a_1, d_1) \dots (a_n, d_n)$ where $n \ge 0$, $a_1, \dots, a_n \in \Sigma$ and $d_1, \dots, d_n \in \DD$. The set of all finance words over $\Sigma$ and $\DD$ is denoted by $\DD \Sigma^* = (\Sigma \times \DD)^*$.

In general, the finance data domain $\DD$ is an arbitrary non-empty set.
In many financial examples considered in this paper, $\DD$ is instantiated
by a subset of $\RR$, since the data values represent quantities
such as prices, rates, or discount factors.

\begin{example}
\label{EX:finance_word}
Let $w = (a_1, d_1) (a_2, d_2) \dots (a_n, d_n) \in \DD \Sigma^*$ be a finance word. The sequence $a_1 \dots a_n$ represents events or actions, while 
$d_1 \dots d_n$ describes the evolution of a financial variable
(e.g., a stock price or exchange rate).

As an example, consider the domain $\DD = \RR_{\ge 0}$ representing a stock price and the alphabet $\Sigma = \{\ll, \mm, \hh\}$, which classifies the stock price as low ($\ll)$, moderate ($\mm$), and high ($\hh$), respectively. For illustration purposes, assume that prices below 1 are classified as low, prices between 1 and 2 as moderate, and prices above 2 as high. Note, however, that in a real-world scenario the classification of a price may be time-dependent.

Then, the finance word $w = (\ll, 0.5) (\hh, 2.5) (\mm, 1.5) (\hh, 2.5)(\ll, 0.5)$ captures the evolution of the stock price over five discrete time units. A graphical representation of $w$ is provided in Figure \ref{FIG:finance_word}.
    
\begin{figure}[ht]
\centering
\scalebox{0.7}{
\begin{tikzpicture}
\node at (0.5, 0.5) [fill,circle,inner sep=0pt,minimum size=3pt, label=below:{\scriptsize $\ll$}] (A) {};
\node at (1.5, 2.5) [fill,circle,inner sep=0pt,minimum size=3pt, label=above:{\scriptsize $\hh$}] (B) {};
\node at (2.5, 1.5) [fill,circle,inner sep=0pt,minimum size=3pt, label=below:{\scriptsize $\mm$}] (C) {};
\node at (3.5, 2.5) [fill,circle,inner sep=0pt,minimum size=3pt, label=above:{\scriptsize $\hh$}] (D) {};
\node at (4.5, 0.5) [fill,circle,inner sep=0pt,minimum size=3pt, label=above:{\scriptsize $\ll$}] (E) {};
    
\node at (-0.2, 0) {\scriptsize{$0$}};
\node at (-0.2, 1) {\scriptsize{$1$}};
\node at (-0.2, 2) {\scriptsize{$2$}};
\node[label=above:{\scriptsize{Price}}] at (0, 2.7) {};
\node[label=right:{\scriptsize{Time}}] at (5, 0) {};
    
\draw (A) -- (B) -- (C) -- (D) -- (E);
\draw[->] (0, 0) -- (5, 0);
\draw[->] (0, 0) -- (0, 2.8);
\draw[dashed] (0, 1) -- (5, 1);
\draw[dashed] (0, 2) -- (5, 2);		
\end{tikzpicture}
}
\caption{Graphical representation of the finance word from Example \ref{EX:finance_word}.}
\label{FIG:finance_word}
\end{figure}
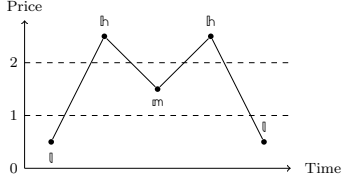
\end{example}

Recall that a {\em monoid} $(M, \cdot, \one)$ consists of a set $M$, a binary operation ${{\cdot}: M \times M \to M}$, which is associative (i.e., $(a \cdot b) \cdot c = a \cdot (b \cdot c)$ for all $a, b, c \in M$), and a unit element $\one \in M$ such that $m \cdot \one = \one \cdot m = m$ for all $m \in M$. The monoid $M$ is called {\em commutative} if $a \cdot b = b \cdot a$ for all $a, b \in M$.
A {\em semiring} is a structure $\SS = (S, \oplus, \otimes, \zero, \one)$ where $(S, \oplus, \zero)$ is a commutative monoid and $(S, \otimes, \one)$ is a monoid, multiplication distributes over addition and $\zero$ is absorbing, i.e.,
$s \otimes \zero = \zero \otimes s = \zero$ for all $s \in S$.
If $(S, \otimes, \one)$ is a commutative monoid, then the semiring $\SS$ is also called {\em commutative}.

Semirings are a widely used computational framework for quantitative systems (see, e.g.,~\cite{DKV09}). Typical examples include:

\begin{itemize}
\item the semiring of real numbers $(\mathbb{R}, +, \cdot, 0.0, 1.0)$ with the usual addition and multiplication;
\item the \emph{arctic (max-plus) semiring}
$
\SS_{\arc} = (\RR \cup \{-\infty\}, \max, +, -\infty, 0.0);
$
\item the \emph{tropical (min-plus)semiring}
$
\SS_{\trop} = (\RR \cup \{\infty\}, \min, +, \infty, 0.0).
$
\end{itemize}

\begin{remark}
We occasionally write $0.0$ and $1.0$ for real constants to emphasize that they are numerical values and to avoid confusion with the abstract semiring elements $\zero$ and $\one$.
\end{remark}

All the semirings introduced above are commutative. 
In the following, the semiring $S$ will serve as the underlying domain
for quantitative values manipulated by the automaton.

Let $\SS = (S, \oplus, \otimes, \zero, \one)$ be a semiring and $\DD$ a finance data domain.
A {\em finance data-binding function} (or simply, a {\em data-binding function}) over $\SS$ and $\DD$ is a mapping $\bb: \DD \times S \to S$.

\begin{example}
\label{EX:data_binding_prod}
Let $\SS = \SS_{\arc}$ and $\DD = \RR_{\ge 0}$. Consider the data-binding function ${\bb_{\times}: \DD \times S \to S}$, defined  by

\begin{equation}
\label{EQ:data_binding_prod}
\bb_{\times}(d, s) = d \times s
\end{equation}
for all $d \in \DD$ and $s \in S$, where $\times$ denotes multiplication extended to $S$ by the convention that $d \times (-\infty) = -\infty$ for all $d \in \RR_{\ge 0}$.

The function $\bb_{\times}$ models standard financial quantities. 
For example, if $d$ represents an annual interest rate and $s$ a principal amount, then $\bb_{\times}(d, s)$ yields the corresponding annual interest. 
If $d$ denotes the price of a stock and $s$ the number of shares, then $\bb_{\times}(d, s)$ gives the total value of the position.
\end{example}

\begin{definition}
\label{DEF:finance_semiring}
A {\em finance semiring} $\FF = (\SS, \DD, \bb)$ consists of a commutative semiring $\SS$, a finance data domain $\DD$, and a data-binding function $\bb: \DD \times S \to S$.
\end{definition}

\begin{remark}
Note that in the definition of a finance semiring we restrict our focus to commutative semirings. This restriction encompasses practically relevant semirings, such as $\SS_{\arc}$ and $\SS_{\trop}$, and is imposed to streamline the presentation.
\end{remark}

\begin{example}
\label{EX:ff_arc_times}
\begin{enumerate}[label=(\alph*)]
\item Let $\bb_{\times}$ be the data-binding function as defined in Example~\ref{EX:data_binding_prod}. Then, $\FF_{\arc, \times} = (\SS_{\arc}, \RR_{\ge 0}, \bb_{\times})$ is a finance semiring. We will use this finance semiring in numerous examples.
\item Symmetrically, we construct the finance semiring
$\FF_{\trop, \times} = (\SS_{\trop}, \RR, \bb_{\times})$,
where $\bb_{\times}$ is defined for all $d \in \RR_{\ge 0}$ and $s \in \RR \cup \{\infty\}$ by Equation~\eqref{EQ:data_binding_prod}, 
and we adopt the convention $d \times \infty = \infty$ for all $d \in \RR_{\ge 0}$.
\item In situation where the financial data may take negative values, the finance semiring $\FF_{\arc, \times}^{\RR} = (\SS_{\arc}, \RR, \bb_{\times}^{\RR})$ is appropriate. The data-binding function ${\bb_{\times}^{\RR}: \RR \times S \to S}$ is defined analogously to Equation~\eqref{EQ:data_binding_prod}, with the data domain extended to $\RR$.
\end{enumerate}
\end{example}

Now we introduce $\FF$-expressions, which are used to specify how
transition weights in WFFAs depend on the data value associated
with a transition.
\begin{definition}
Let $\FF = (\SS, \DD, \bb)$ be a finance semiring with $\SS = (S, \oplus, \otimes, \zero, \one)$.
The set $\EE_\FF$ of {\em $\FF$-expressions} is given by

\[
e \; ::= \; s  \mid  \bind{s}  \mid  e \oplus e  \mid  e \otimes e  \mid  e = e  \mid  e \neq e
\]
where $s \in S$.
\end{definition}
For any $e \in \EE_{\FF}$, the {\em semantics} of $e$ is the mapping $\beh{e}: \DD \to S$ which is defined for every $d \in \DD$ inductively on the structure of $e$ as follows:

{\scriptsize
\[
\begin{aligned}
\beh{s}(d) &= s \\
\beh{\bind{s}}(d) &= \bb(d, s) \\
\beh{e_1 \oplus e_2}(d) &= \beh{e_1}(d) \oplus \beh{e_2}(d) \\
\beh{e_1 \otimes e_2}(d) &= \beh{e_1}(d) \otimes \beh{e_2}(d) \\
\beh{e_1 = e_2}(d) &= \begin{cases}
\one, & \text{if } \beh{e_1}(d) = \beh{e_2}(d), \\
\zero, & \text{otherwise}
\end{cases} \\
\beh{e_1 \neq e_2}(d) &= \begin{cases}
\one, & \text{if } \beh{e_1}(d) \neq \beh{e_2}(d), \\
\zero, & \text{otherwise}
\end{cases}
\end{aligned}
\]
}

Here, $s \in S$ and $e_1, e_2 \in \EE_{\FF}$. To avoid confusion, we will occasionally place $\FF$-expressions in square brackets.

For any $E \subseteq \EE_{\FF}$, let $\beh{E} = \{\beh{e}  \mid  e \in E \}$. The following types of $\FF$-expressions will be important for the developments in Section~\ref{SEC:restricted_wffa}:
\begin{itemize}
\item Let $\bind{S}_{\FF} = \{\bind{s} \mid s \in S \}$. Let $\PP_{\FF} = S \cup \bind{S}_{\FF} \subseteq \EE_{\FF}$ denote the set of all {\em primitive} $\FF$-expressions.
\item We say that $e \in \EE_{\FF}$ is a {\em plain $\FF$-constraint} if $e = [e_1 \bowtie e_2]$ where $e_1, e_2 \in \EE_{\FF}$ are arbitrary (possibly nested) $\FF$-expressions, and ${\bowtie} \in \{{=}, {\neq}\}$. Then, a (composite) {\em $\FF$-constraint} has the form $e = [c_1 \otimes {\dots} \otimes c_k]$ where $k \in \NN_{\ge 1}$ and $c_1, \dots, c_k \in \EE_{\FF}$ are plain $\FF$-constraints. Let $\CC_{\FF}$ denote the set of all $\FF$-constraints.
\item Let $\AA_{\FF} = \{[\llangle s \rrangle \otimes s'] \mid s, s' \in S\} \subseteq \EE_{\FF}$, the collection of {\em affine} $\FF$-expressions.
\end{itemize}
The {\em size} $|e| \in \NN_{\ge 1}$ of an $\FF$-expression $e \in \EE_{\FF}$ is defined inductively as follows.
For any $s \in S$, $|\llangle s \rrangle| = |s| = 1$. For $\op \in \{\oplus, \otimes, =, \neq\}$ and $e_1, e_2 \in \EE_{\FF}$, let  ${|e_1 \op e_2| = |e_1| + |e_2| + 1}$.

\begin{remark}
\label{REM:exp_size}
Assume that the operations $\bb$, $\oplus$, $\otimes$, as well as the comparisons $=$ and $\neq$, can each be evaluated in $\O(1)$ time. Then, for any $e \in \EE_{\FF}$ and $d \in \DD$, the value $\beh{e}(d)$ can be computed in $\O(|e|)$ time. This implies, in particular, that primitive $\FF$-expressions can be evaluated in $\O(1)$ time.
\end{remark}

\begin{example}
\label{EX:ff_arc_times_expressions}
Consider the finance semiring $\FF = \FF_{\arc, \times}$ from Example~\ref{EX:ff_arc_times}(a).
Let $e_1, e_2 \in \EE_{\FF}$. The following derived constructs illustrate the expressive power of $\FF$-expressions.

\begin{enumerate}[label=(\alph*)]

\item \label{EX_ITEM:ff_arc_times_expressions_less_equal}
Define
$
[e_1 \le e_2] := [(e_1 \oplus e_2) = e_2].
$
Then, for all $x \in \DD$, $\beh{e_1 \le e_2}(x) \in \{\zero, \one\}$ and
$
\beh{e_1 \le e_2}(x) = \one \ \text{iff } \beh{e_1}(x) \le \beh{e_2}(x),
$
and $\zero$ otherwise.

\item \label{EX_ITEM:ff_arc_times_expressions_less}
Define
$
[e_1 < e_2] := [(e_1 \oplus e_2) \neq e_1].
$
Then, for all $x \in \DD$, $\beh{e_1 < e_2}(x) \in \{\zero, \one\}$ and
$
\beh{e_1 < e_2}(x) = \one \ \text{iff } \beh{e_1}(x) < \beh{e_2}(x).
$

\item \label{EX_ITEM:ff_arc_times_expressions_min}
Using such guards, we can define
$
\min(e_1, e_2) := [((e_1 \le e_2) \otimes e_1) \oplus ((e_2 \le e_1) \otimes e_2)],
$
and hence
$
\beh{\min(e_1, e_2)}(x) = \min\{\beh{e_1}(x), \beh{e_2}(x)\}.
$
\end{enumerate}

In particular, although the underlying semiring is max-plus, the presence of equality and inequality constraints
 allows us to simulate guards on data values and define operations such as $\min$, which are not available in the pure max-plus setting.
\end{example}

\section{Weighted Finite Finance Automata}
\label{SECTION:WFFA}

Let $\Sigma$ be an alphabet and $\FF = (\SS, \DD, \bb)$ a finance semiring with the underlying commutative semiring $\SS = (S, \oplus, \otimes, \zero, \one)$.

\begin{definition}
\label{DEF:WFFA}
A (non-deterministic) {\em weighted finite finance automaton (WFFA)} over $\Sigma$ and $\FF$ is a tuple $\A = (Q, I, T, F, \wt_I, \wt_T, \wt_F)$ where:
\begin{itemize}
\item $Q$ is a finite set of {\em states},
\item $I, F \subseteq Q$ are sets of {\em initial} resp. {\em final} states,
\item $T \subseteq Q \times \Sigma \times Q$ is a finite set of {\em transitions},
\item $\wt_I: I \to S$ and $\wt_F: F \to S$ are an {\em initial weight function} resp. a {\em final weight function},
\item $\wt_T: T \to \EE_{\FF}$ is a {\em transition weight function} assigning an $\FF$-expression to every transition.
\end{itemize}
\end{definition}

In financial applications, the states $Q$ of a WFFA $\A$ encode the possible stages of a contract or financial process. 
These may represent, for instance, periods between cash flows, phases of a bond or option contract, or different market or credit states.

Let $|\A|_Q = |Q|$, and let
$
|\A|_{\EE_{\FF}} = \max\{|\wt_T(t)| \mid  t \in T\}
$
denote the maximal size of a transition weight occurring in $\A$. If $T = \emptyset$, we set $|\A|_{\EE_{\FF}} = 0$.
We say that $\A$ is {\em purely transition-weighted} if all initial and final weights of $\A$ are equal to $\one$, i.e., $\wt_I(I) = \wt_F(F) = \{\one\}$.
Let $\WFFA_{\Sigma, \FF}$ denote the collection of all WFFAs over $\Sigma$ and $\FF$.
Let ${\WFFA_{\Sigma, \FF}^{\tau} \subseteq \WFFA_{\Sigma, \FF}}$ denote the collection of all purely transition-weighted WFFAs.
Let $t = (q, a, q') \in T$ be any transition with $q, q' \in Q$ and $a \in \Sigma$. Let $\ell(t) = a$, the {\em label} of $t$. We say that $q$ is the {\em source state} of $t$ and $q'$ is the {\em target state} of $t$.
A {\em run of $\A$} is a sequence
\begin{equation}
\label{EQ:wffa_run}
\varrho = \bigl(q_0 \xrightarrow{a_1} q_1 \xrightarrow{a_2} {\dots} \xrightarrow{a_n} q_n \bigr)
\end{equation}
where $q_0 \in I$, $q_n \in F$ and, for every $i \in [n]$, we have $t_i = (q_{i-1}, a_i, q_i) \in T$. The {\em label} of $\varrho$ is the word $\ell(\varrho) = a_1 \dots a_n \in \Sigma^*$.
Let $\Run_{\A}$ denote the set of all runs of $\A$.
A {\em run context} of $\varrho$ is a word of data values $\delta = d_1 \dots d_n \in \DD^*$ describing the evolution of finance data along the run. The {\em weight} of $\varrho$ in context $\delta$ is defined as
\[
\wt_{\A} (\varrho, \delta) = \wt_I(q_0) \otimes \beh{\wt_T(t_1)}(d_1) \otimes {\dots} \otimes \beh{\wt_T(t_n)}(d_n) \otimes \wt_F(q_n).
\]
We say that $\varrho$ is \emph{valid} in run context $\delta$ if $\wt_{\A}(\varrho, \delta) \neq \zero$. Equivalently, $\varrho$ is \emph{invalid} if $\wt_I(q_0) = \zero$ or $\wt_F(q_n) = \zero$ or $\beh{\wt_T(t_i)}(d_i) = \zero$ for some $i \in [n]$. For any finance word $w = (a_1, d_1) \dots (a_n, d_n) \in \DD \Sigma^*$, let $\Run_{\A}(w)$ be the set of all runs $\varrho \in \Run_{\A}$ with label $a_1 \dots a_n$ that are valid in run context $d_1 \dots d_n$. The set is always finite, but it may contain exponentially many runs in $|w|$.

The {\em behavior} of $\A$ is the mapping $\beh{\A}: \DD \Sigma^* \to S$ defined for all finance words $w = (a_1, d_1) \dots (a_n, d_n) \in \DD \Sigma^*$ by 
\[
\beh{\A}(w) = \bigoplus \big(\wt_{\A}(\varrho, d_1 \dots d_n)  \mid  \varrho \in \Run_{\A}(w) \big)
\]

\noindent
where $\bigoplus \emptyset = \zero$. Intuitively, the behavior of $\A$ is the value that the automaton assigns to a given finance word by aggregating the contributions of all valid runs.

For any WFFA collection $\W \subseteq \WFFA_{\Sigma, \FF}$, let $\beh{\W} = \{\beh{\A}  \mid  \A \in \W\}$.

\begin{remark}
Since $S \subseteq \EE_{\FF}$, any WFFA $\A$ with transition weights in $S$ reduces to a standard semiring-weighted automaton. In this case, the behavior of $\A$ is independent of the data values, i.e., $\beh{\A}(w) = \beh{\A}(w')$ for all finance words $w, w'$ with the same underlying event sequence. Hence, $\beh{\A}$ can be viewed as a function $\Sigma^* \to S$, i.e., a formal power series. Thus, WFFAs extend classical weighted automata by allowing data-dependent transition weights.
\end{remark}

The following examples illustrate the expressive power of WFFAs in modelling a range of financial scenarios.

\begin{example}
\label{EX:bond_wffa}
In bond valuation \cite{Fab08}, a sequence of spot rates $s_1, \dots, s_n$ determines discount factors $d_i = \frac{1}{(1+s_i)^i}$. For a bond with coupon $C$ and face value $F$, its present value is
\begin{equation}
\label{EQ:bond_present_value}
PV_{\mathrm{bond}} = C d_1 + \cdots + C d_{n-1} + (C+F)d_n,
\end{equation}
and the profit is $\pi_{\mathrm{bond}} = PV_{\mathrm{bond}} - P_0$, where $P_0$ denotes the purchase price of the bond at time $0$.
Let $\Sigma = \{\texttt{cpn}, \texttt{fin}, \texttt{dfl}\}$. The cash-flow schedule is represented by the finance word
\begin{equation}
\label{EQ:bond_finance_word}
w = (\texttt{cpn}, d_1) \dots (\texttt{cpn}, d_{n-1}) (\texttt{fin}, d_n),
\end{equation}
while a default after $k$ periods yields
$
w' = (\texttt{cpn}, d_1) \dots (\texttt{cpn}, d_k) (\texttt{dfl}, d_{k+1}).
$
We construct a WFFA $\A_{\mathrm{bond}} = (Q, I, T, F, \wt_I, \wt_T, \wt_F) \in \WFFA_{\Sigma, \FF}$ over $\FF = \FF_{\arc, \times}$
with the following components:
\begin{itemize}
\item $Q = \{q_c, q_f, q_d\}$, where $q_c$ represents the state in which an interim coupon payment is received, $q_f$ the state of the final payment, and $q_d$ the default state;

\item $I = \{q_c\}$,\; $F = \{q_f, q_d\}$, \;
$T = \{(q_c, \texttt{cpn}, q_c), (q_c, \texttt{fin}, q_f), (q_c, \texttt{dfl}, q_d)\}$;

\item $\wt_I(q_c) = -P_0$, \; $\wt_F(q_f) = 0.0$, \; $\wt_F(q_d) = 0.0$;

\item $\wt_T(q_c, \texttt{cpn}, q_c) = \llangle C \rrangle$, 
$\wt_T(q_c, \texttt{fin}, q_f) = \llangle C \otimes F \rrangle$,
$\wt_T(q_c, \texttt{dfl}, q_d) = 0.0$.
\end{itemize}
A graphical representation of $\A_{\mathrm{bond}}$ is shown in Figure~\ref{FIG:wffa_bond_ddm_examples} (a).
In the default-free case, $\A_{\mathrm{bond}}$ has a unique run on $w$ with weight
\[
-P_0 + C d_1 + \cdots + C d_{n-1} + (C+F)d_n = \pi_{\mathrm{bond}}.
\]
In the default case,
$
\beh{\A_{\mathrm{bond}}}(w') = -P_0 + C d_1 + \cdots + C d_k.
$
Thus, the WFFA directly evaluates discounted cash flows while remaining independent of any fixed interest-rate model.
\end{example}

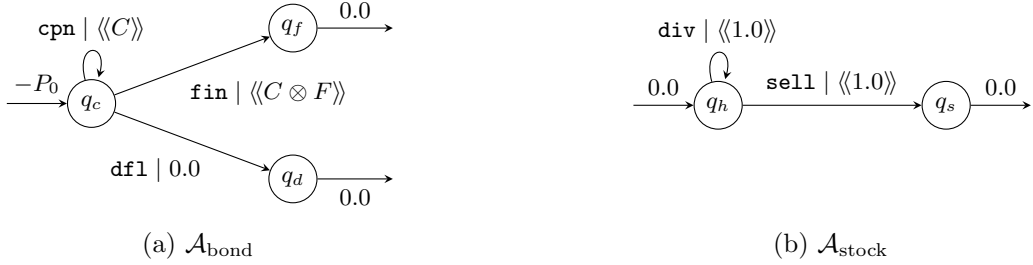
\begin{figure}[ht]
\centering

\begin{minipage}[c]{0.48\textwidth}
\centering
\begin{tikzpicture}[
    >=stealth,
    every node/.style={font=\footnotesize},
    state/.style={circle, draw, minimum size=18pt, inner sep=1pt}
]
\node (q0) at (0,0) {};
\node[state] (q1) at (1.25,0) {$q_c$};
\node[state] (q2) at (3.9,1.0) {$q_f$};
\node (q22) at (5.35,1.0) {};
\node[state] (q3) at (3.9,-1.0) {$q_d$};
\node (q33) at (5.35,-1.0) {};

\draw[->]
    (q0) edge[above] node{$-P_0$} (q1)
    (q1) edge[loop above] node{$\texttt{cpn} \mid \llangle C \rrangle$} (q1)
    (q1) edge node[below right,pos=.42] {$\texttt{fin} \mid \llangle C \otimes F \rrangle$} (q2)
    (q1) edge node[below left,pos=.62] {$\texttt{dfl} \mid 0.0$} (q3)
    (q2) edge[above] node{$0.0$} (q22)
    (q3) edge[below] node{$0.0$} (q33);
\end{tikzpicture}
\end{minipage}
\hfill
\begin{minipage}[c]{0.48\textwidth}
\vspace{-0.9cm}
\centering
\begin{tikzpicture}[
    >=stealth,
    every node/.style={font=\footnotesize},
    state/.style={circle, draw, minimum size=18pt, inner sep=1pt}
]
\node (q0) at (0,0) {};
\node[state] (q1) at (1.25,0) {$q_h$};
\node[state] (q2) at (4.25,0) {$q_s$};
\node (q22) at (5.5,0) {};

\draw[->]
    (q0) edge[above] node{$0.0$} (q1)
    (q1) edge[loop above] node{$\texttt{div} \mid \llangle 1.0 \rrangle$} (q1)
    (q1) edge[above] node{$\texttt{sell} \mid \llangle 1.0 \rrangle$} (q2)
    (q2) edge[above] node{$0.0$} (q22);
\end{tikzpicture}
\end{minipage}
\vfill
\vspace{0.2cm}
\begin{minipage}[c]{0.48\textwidth}
\centering
\small (a) $\A_{\mathrm{bond}}$
\end{minipage}
\hfill
\begin{minipage}[c]{0.48\textwidth}
\centering
\small (b) $\A_{\mathrm{stock}}$
\end{minipage}

\caption{Graphical representations of the WFFAs from Examples~\ref{EX:bond_wffa} and~\ref{EX:ddm_wffa}.}
\label{FIG:wffa_bond_ddm_examples}
\end{figure}

\begin{remark}
\label{REM:bond_wffa_graphical_representation}
We use the following graphical conventions (cf.~Example~\ref{EX:bond_wffa}): states are drawn as circles; initial (resp.~final) states are indicated by incoming (resp.~outgoing) arrows labeled with the corresponding weights; transitions are labeled in the form ``label $|$ weight''. If the alphabet is irrelevant (typically a singleton alphabet $\{\bot\}$), we omit the label and display only the weight.
\end{remark}

\begin{example}
\label{EX:effective_duration}
\emph{Effective duration} \cite[Ch.~15]{Fab08} measures the sensitivity of a bond’s value to small parallel shifts in the spot-rate curve.
Let $\A_{\mathrm{bond}}$ be the WFFA from Example~\ref{EX:bond_wffa} with $P_0 = 0$, so that $\beh{\A_{\mathrm{bond}}}$ computes the bond value. For a finance word $w$, we have $V_0 = \beh{\A_{\mathrm{bond}}}(w)$.
For a small shift $\Delta > 0$, define the perturbed finance words
$
w^{(\pm\Delta)} = (\texttt{cpn}, d_1^{(\pm\Delta)}) \dots (\texttt{fin}, d_n^{(\pm\Delta)}),
$
where $d_i^{(\pm\Delta)} = (1+s_i \pm \Delta)^{-i}$.
Then $V_0^{(\pm\Delta)} = \beh{\A_{\mathrm{bond}}}(w^{(\pm\Delta)})$, and the effective duration is given by
$
\frac{V_0^{(-\Delta)} - V_0^{(+\Delta)}}{2 V_0 \Delta}.
$
This example illustrates that WFFAs naturally support scenario-based sensitivity analysis by evaluating the same automaton on shifted input words.
\end{example}

\begin{example}
\label{EX:ddm_wffa}
The \emph{dividend discount model (DDM)} \cite[Ch.~12]{Fab08} values a stock as the present value of its future cash flows. 
Assume that an investor receives dividends $D_1, \dots, D_n$ and then sells the stock at price $S_{n+1}$. 
With discount rate $r$, the present value is
\[
P_0 = \frac{D_1}{(1+r)} + \dots + \frac{D_n}{(1 + r)^n} + \frac{S_{n+1}}{(1 + r)^{n+1}}.
\]
Let $\Sigma = \{\texttt{div}, \texttt{sell}\}$, where $\texttt{div}$ denotes a dividend payment and $\texttt{sell}$ denotes the sale of the stock. 
We encode discounted cash flows directly in the input by setting
$
d_i = \frac{D_i}{(1+r)^i}
$
($i \in [n]$) and 
$
s_{n+1} = \frac{S_{n+1}}{(1+r)^{n+1}}.
$
Thus, the cash-flow schedule is represented by the finance word
$
w = (\texttt{div}, d_1) \dots (\texttt{div}, d_n)\, (\texttt{sell}, s_{n+1}) \in \DD \Sigma^*.
$
Note that the values $d_i$ and $s_{n+1}$ are not fixed model parameters but part of the input and may vary across scenarios.
Consider the WFFA $\A_{\mathrm{stock}} \in \WFFA_{\Sigma, \FF}$ over $\FF = \FF_{\arc, \times}$ shown in Figure~\ref{FIG:wffa_bond_ddm_examples}~(b), 
where $q_h$ represents holding the stock and $q_s$ the state after selling. 
Then $\beh{\A_{\mathrm{stock}}}(w) = P_0$.
\end{example}

\begin{example}
\label{EX:european_call_option_wffa}
A \emph{European call option} (cf. \cite[Ch.~14]{Fab08}) gives its holder the right to buy an underlying asset 
at \emph{strike price} $K$ at \emph{maturity} $T$. 
The \emph{long-position profit} is
\[
\pi_{\eur}^{\llong}(S_T) = -c_0 + \max\{S_T - K, 0\}.
\]
Here, $S_T$ denotes the underlying price at maturity and $c_0$ the initial option premium. 
We model this payoff using WFFAs over $\FF = \FF_{\arc, \times}$. 
We assume a singleton alphabet $\Sigma = \{\bot\}$ and encode the terminal price $S_T$
as the finance word $\bot_{S_T} = (\bot, S_T)$.

\smallskip
\noindent
\textbf{Two modeling approaches.}
\begin{enumerate}[label=(\alph*)]
\item \emph{Single-expression approach:} encode the payoff using a single $\FF$-expression, e.g.
\begin{equation}
\label{EQ:eur_option_long_expression}
e_{\eur}^{\llong} = \big[[-c_0] \otimes \big((\llangle 1.0 \rrangle \otimes [-K]) \oplus [0.0]\big)\big],
\end{equation}
and use a one-transition automaton.
\item \emph{Automata-theoretic approach:} construct a WFFA with simple primitive expressions from $\PP_{\FF}$ such that the desired behavior is obtained via the aggregation over all runs.
\end{enumerate}

\smallskip
\noindent
We focus on the second approach. 
Consider the WFFA $\A_{\eur}^{\llong}$ shown in Figure~\ref{FIG:european_call_option_wffa_both}~(a). 
The state $q_s$ represents the initial state of the contract, 
$q_e$ corresponds to the exercised case, 
and $q_d$ to the non-exercised (dismissed) case.
For each terminal price $S_T$, the automaton has two runs whose weights are
$-c_0 + S_T - K$ and $-c_0$.
Hence,
$
\beh{\A_{\eur}^{\llong}}(\bot_{S_T})
= \max\{-c_0 + S_T - K,\,-c_0\}
= \pi_{\eur}^{\llong}(S_T).
$

This example illustrates a key advantage of the automata-theoretic approach: 
it avoids large nested $\FF$-expressions and yields a modular and interpretable model.
\end{example}

\begin{figure}[ht]
\centering

\begin{minipage}[c]{0.48\textwidth}
\centering
\begin{tikzpicture}[
    >=stealth,
    every node/.style={font=\footnotesize},
    state/.style={circle, draw, minimum size=18pt, inner sep=1pt}
]
\node (q0) at (0,0) {};
\node[state] (q1) at (1.25,0) {$q_s$};
\node[state] (q2) at (3.25,1.0) {$q_e$};
\node (q22) at (4.5,1.0) {};
\node[state] (q3) at (3.25,-1.0) {$q_d$};
\node (q33) at (4.5,-1.0) {};

\draw[->]
    (q0) edge[above] node{$-c_0$} (q1)
    (q1) edge node[above left,pos=.6] {$\llangle 1.0 \rrangle$} (q2)
    (q1) edge node[below left,pos=.6] {$0.0$} (q3)
    (q2) edge[above] node{$-K$} (q22)
    (q3) edge[below] node{$0.0$} (q33);
\end{tikzpicture}

\vspace{1mm}
\footnotesize (a) $\A^{\llong}_{\eur}$
\end{minipage}
\hfill
\begin{minipage}[c]{0.48\textwidth}
\centering
\begin{tikzpicture}[
    >=stealth,
    every node/.style={font=\footnotesize},
    state/.style={circle, draw, minimum size=18pt, inner sep=1pt}
]
\node (q0) at (0,0) {};
\node[state] (q1) at (1.25,0) {$q_s$};
\node[state] (q2) at (3.25,1.0) {$q_e$};
\node (q22) at (4.5,1.0) {};
\node[state] (q3) at (3.25,-1.0) {$q_d$};
\node (q33) at (4.5,-1.0) {};

\draw[->]
    (q0) edge[above] node{$c_0$} (q1)
    (q1) edge node[below right,pos=.4] {$[K < \llangle 1.0 \rrangle] \otimes \llangle -1.0 \rrangle$} (q2)
    (q1) edge node[below left,pos=.7] {$[\llangle 1.0 \rrangle \le K] \otimes 0.0$} (q3)
    (q2) edge[above] node{$K$} (q22)
    (q3) edge[below] node{$0.0$} (q33);
\end{tikzpicture}

\vspace{1mm}
\footnotesize (b) $\A^{\short}_{\eur}$
\end{minipage}

\caption{WFFAs for the European call option. 
(a) \emph{Long position}: payoff obtained via max-aggregation over runs. 
(b) \emph{Short position}: guarded transitions enforce case distinction $S_T > K$ and $S_T \le K$.}
\label{FIG:european_call_option_wffa_both}
\end{figure}
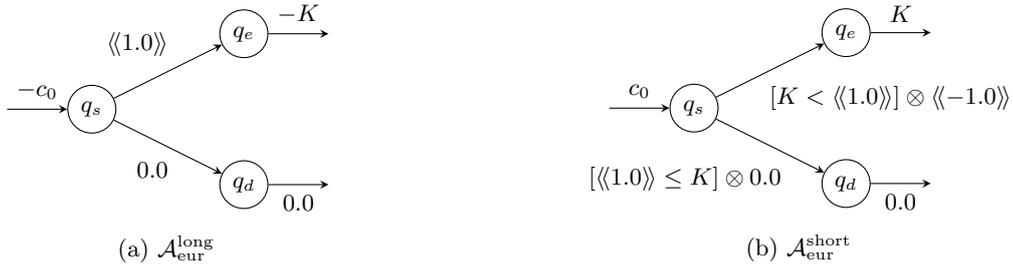

\begin{example}
\label{EX:european_call_option_short_wffa}
We now consider the {\em short-position profit} of a {\em European call option} (cf. \cite[Ch.~14]{Fab08}). 
Recall that
$
\pi_{\eur}^{\short}(S_T) = c_0 - \max\{S_T - K, 0\}.
$
Equivalently, $\pi_{\eur}^{\short}(S_T) = {c_0 + \min\{K - S_T, 0\}}$.
Unlike the long-position case from Example~\ref{EX:european_call_option_wffa}, 
this payoff cannot be obtained directly via max-aggregation over unrestricted runs, 
since the max-plus semiring does not provide a native minimum operation.
There are two ways to address this limitation. 
On the level of $\FF$-expressions, one can encode the $\min$ operator explicitly 
(cf.~Example~\ref{EX:ff_arc_times_expressions}). 
Alternatively, in an automaton-theoretic construction, one can simulate the minimum 
by introducing guards that distinguish the cases $S_T > K$ and $S_T \le K$.

The transition expressions are guarded by the constraints 
$[K < \llangle 1.0 \rrangle]$ and $[\llangle 1.0 \rrangle \le K]$, 
which correspond to the conditions $S_T > K$ and $S_T \le K$, respectively. 
Thus, exactly one of the two runs contributes to the result.
For each $S_T \in \RR_{\ge 0}$, the automaton has two runs whose weights are
$c_0 + (K - S_T)$ and $c_0$.
Due to the guards, only the valid branch is taken, and hence
$
\beh{\A_{\eur}^{\short}}(\bot_{S_T})
= c_0 + \min\{K - S_T, 0\}
= \pi_{\eur}^{\short}(S_T).
$

This example illustrates how $\FF$-constraints compensate for the absence of 
a native minimum operation in the max-plus semiring, enabling precise 
control over branching behavior.
\end{example}

\begin{example}
\label{EX:american_call_option_wffa}
An {\em American call option} gives its holder the right to buy the underlying asset
at strike price $K$ at any time prior to or at maturity.
For $t \in [T]$, where $T$ is the maturity of the option, let $S_t$ denote the underlying price at time $t$, and let $c_0$ be the initial premium.
For a price sequence $S_1,\dots,S_T$, the long-position profit is
\[
\pi_{\amer}^{\llong}(S_1,\dots,S_T)
=
-c_0 + \max\Bigl\{0,\max_{1\le t\le T}(S_t-K)\Bigr\}.
\]
We model this payoff over $\FF=\FF_{\arc,\times}$ using the singleton alphabet $\Sigma=\{\bot\}$.
The price evolution is encoded by the finance word
$
w=(\bot,S_1)\dots(\bot,S_T)\in\DD\Sigma^*.
$
Consider the WFFA $\A_{\amer}^{\llong}\in \WFFA_{\Sigma,\FF}$ shown in Figure~\ref{FIG:american_limit_side_by_side}~(a).
Here, $q_w$ denotes waiting, $q_e$ exercise, and $q_d$ non-exercise (dismissal).
The automaton may remain in $q_w$ for an arbitrary number of time steps and move to $q_e$ to exercise.
Alternatively, if no exercise occurs before maturity, it moves to $q_d$ at the final step.
For each $t\in [T]$, a run that stays in $q_w$ for the first $t-1$ steps and then moves to $q_e$
at time $t$ has weight $-c_0 + S_t - K.$
The run that moves to $q_d$ yields weight $-c_0$.
Hence,
\[
\beh{\A_{\amer}^{\llong}}(w)
=
-c_0 + \max\Bigl\{0,\max_{1\le t\le T}(S_t-K)\Bigr\}
=
\pi_{\amer}^{\llong}(S_1,\dots,S_T).
\]
\end{example}

\begin{figure}[ht]
\centering

\begin{minipage}[c]{0.48\textwidth}
\vspace{0.1cm}
\centering
\begin{tikzpicture}[
    >=stealth,
    every node/.style={font=\footnotesize},
    state/.style={circle, draw, minimum size=18pt, inner sep=1pt}
]
\node (q0) at (0,0) {};
\node[state] (q1) at (1.25,0) {$q_w$};
\node[state] (q2) at (3.9,1.0) {$q_e$};
\node (q22) at (5.2,1.0) {};
\node[state] (q3) at (3.9,-1.0) {$q_d$};
\node (q33) at (5.2,-1.0) {};

\draw[->]
    (q0) edge[above] node{$-c_0$} (q1)
    (q1) edge[loop above] node{$0.0$} (q1)
    (q1) edge node[above left,pos=.65] {$\llangle 1.0 \rrangle$} (q2)
    (q1) edge node[below left,pos=.65] {$0.0$} (q3)
    (q2) edge[loop above] node{$0.0$} (q2)
    (q2) edge[above] node{$-K$} (q22)
    (q3) edge[below] node{$0.0$} (q33);
\end{tikzpicture}

\vspace{1mm}
\footnotesize (a) $\A_{\amer}^{\llong}$
\end{minipage}
\hfill
\begin{minipage}[c]{0.48\textwidth}
\centering
\begin{tikzpicture}[
    >=stealth,
    every node/.style={font=\footnotesize},
    state/.style={circle, draw, minimum size=18pt, inner sep=1pt}
]
\node (q0) at (0,0) {};
\node[state] (q1) at (1.25,0) {$q_w$};
\node[state] (q2) at (3.9,1.0) {$q_e$};
\node (q22) at (5.2,1.0) {};
\node[state] (q3) at (3.9,-1.0) {$q_c$};

\draw[->]
    (q0) edge[above] node{$0.0$} (q1)
    (q1) edge[loop above] node{$\texttt{a}\mid [\llangle 1.0 \rrangle > 50]$} (q1)
    (q1) edge node[below right,pos=.55] {$\texttt{a}\mid [\llangle 1.0 \rrangle \le 50]\otimes \llangle 10\rrangle$} (q2)
    (q1) edge node[below left,pos=.55] {$\texttt{c}\mid 0.0$} (q3)
    (q2) edge[loop above] node{$\texttt{a}\mid 0.0$} (q2)
    (q2) edge[above] node{$0.0$} (q22);
\end{tikzpicture}

\vspace{1mm}
\footnotesize (b) $\A_{\mathrm{limit}}$
\end{minipage}

\caption{WFFAs for (a) an American call option and (b) a buy limit order.}
\label{FIG:american_limit_side_by_side}
\end{figure}
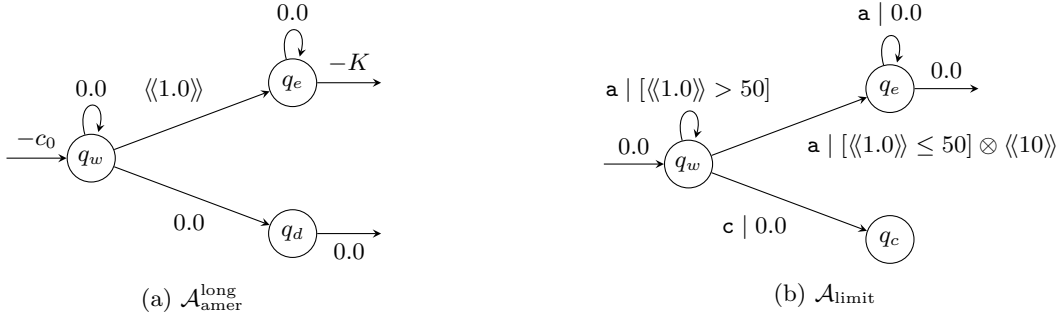

\begin{example}
\label{EX:limit_order_wffa}
A \emph{buy limit order} \cite{FPR13} executes only when the market price is at or below a prescribed limit price. 
We model a simplified buy limit order for $10$ shares with limit price $50$, assuming perfect liquidity and allowing the investor to cancel the order before execution.
Let $\Sigma = \{\texttt{i}, \texttt{c}\}$, where $\texttt{a}$ denotes that the order remains active and $\texttt{c}$ that the investor cancels it. 
We work over the finance semiring $\FF=\FF_{\arc,\times}$ and encode the joint evolution of market price and order status as a finance word in $\DD\Sigma^*$.
Consider the WFFA $\A_{\mathrm{limit}} \in \WFFA_{\Sigma,\FF}$ shown in Figure~\ref{FIG:american_limit_side_by_side}~(b). 
The state $q_w$ denotes waiting for execution, $q_e$ execution, and $q_c$ cancellation. 
If the order is executed, the automaton returns the total acquisition cost; if it is never executed, the value is $-\infty$.

For example, for
$
w_1 = (\texttt{a}, 51)\,(\texttt{a}, 53)\,(\texttt{a}, 48)\,(\texttt{a}, 46),
$
the order is executed at price $48$, and hence
$
\beh{\A_{\mathrm{limit}}}(w_1)=480.
$
For
$
w_2 = (\texttt{a}, 51)\,(\texttt{a}, 53)\,(\texttt{a}, 51)\,(\texttt{a}, 52),
$
the price never reaches the limit, so $\beh{\A_{\mathrm{limit}}}(w_2)=-\infty$. 
For
$
w_3 = (\texttt{a}, 51)\,(\texttt{c}, 50),
$
the order is cancelled before execution, and therefore
$
\beh{\A_{\mathrm{limit}}}(w_3)=-\infty.
$
\end{example}

\section{Weighted Finance Regular Expressions}
\label{SEC:weighted_regexp}

Let $\Sigma$ be an alphabet and $\FF = (\SS, \DD, \bb)$ be a finance semiring, where $\SS = (S, \oplus, \otimes, \zero, \one)$ is a commutative semiring.
A \emph{quantitative finance language (QFL)} over $\Sigma$ and $\FF$ is a mapping $\F: \DD \Sigma^* \to S$. 
We say that $\F$ is {\em proper} if $\F(\eps) = \zero$. 
Let $\QFL_{\Sigma, \FF}$ denote the collection of all QFLs over $\Sigma$ and $\FF$.

From an economic perspective, a QFL $\F$ assigns a quantitative outcome to each possible market scenario. 
A finance word $w \in \DD\Sigma^*$ represents a discrete-time evolution of relevant market variables together with observable events, while the value $\F(w)$ corresponds to a financial quantity of interest, such as a payoff, a cost, or a risk measure. 
Thus, QFLs provide a formal, scenario-based view of financial systems, capturing how the outcome of a financial instrument or strategy depends on the underlying market trajectory. 
This perspective naturally encompasses key tasks in quantitative finance, including scenario analysis, stress testing, and the evaluation of trading strategies under varying market conditions.

Given $\A \in \WFFA_{\Sigma, \FF}$, the behavior $\beh{\A}$ is a QFL over $\Sigma$ and $\FF$. Hence,
$
\beh{\WFFA_{\Sigma, \FF}} \subseteq \QFL_{\Sigma, \FF}.
$
That is, WFFAs define a subclass of QFLs that admit a finite-state, operational representation.

While QFLs provide a semantic view of financial systems, such mappings are typically too complex to be specified explicitly in a compact form.
Instead, one seeks structured and compositional formalisms for defining them. 
WFFAs provide one such formalism. 
In this section, we introduce an alternative, expression-based approach for defining QFLs, namely {\em weighted finance regular expressions}.

To define their semantics, we first lift the standard operations on formal power series (see, e.g., \cite{DKV09}) to the setting of quantitative finance languages.

\begin{definition}
\label{DEF:qfl_operations}
Let $\F_1, \F_2 \in \QFL_{\Sigma, \FF}$.
\begin{enumerate}[label=(\alph*)]
\item The {\em sum} $(\F_1 \oplus \F_2) \in \QFL_{\Sigma, \FF}$ is defined for all $w \in \DD \Sigma^*$ by
$
(\F_1 \oplus \F_2)(w) = \F_1(w) \oplus \F_2(w).
$
\item The {\em Hadamard product} $(\F_1 \otimes \F_2) \in \QFL_{\Sigma, \FF}$ is defined for all $w \in \DD \Sigma^*$ by $(\F_1 \otimes \F_2)(w) = \F_1(w) \otimes \F_2(w)$.
\item The {\em Cauchy product} $(\F_1 \cdot \F_2) \in \QFL_{\Sigma, \FF}$ is defined for all $w \in \DD \Sigma^*$ by
\[
{(\F_1 \cdot \F_2)(w) = \bigoplus \big(\F_1(w_1) \otimes \F_2(w_2)  \mid  w_1, w_2 \in \DD \Sigma^* \text{ and } w_1 \cdot w_2 = w \big)}
\]
where ${w_1 \cdot w_2}$ denotes the concatenation of finance words.
\item Let $\F \in \QFL_{\Sigma, \FF}$ be proper. The {\em Kleene star} $\F^* \in \QFL_{\Sigma, \FF}$ is defined by
\[
{\F^*(w) = \bigoplus_{k \in \NN} \left(\bigotimes_{i = 1}^k \F(w_i)  \mid  w_1, \dots, w_k \in \DD \Sigma^* \text{ and } w_1 \cdot {\dots} \cdot w_k = w \right)}
\]
for all $w \in \DD \Sigma^*$, where the empty product (for $k = 0$) is defined to be $\one$.
\end{enumerate}
\end{definition}

\begin{remark}
Note that the infinite sum in the equation for $\F^*(w)$ has only finitely many non-zero terms, since $\F$ is proper. Hence, $\F^*(w)$ is well defined.
\end{remark}

\begin{remark}
The operations from Definition~\ref{DEF:qfl_operations} admit a natural interpretation in quantitative finance.
The sum $\F_1 \oplus \F_2$ models the aggregation of alternative outcomes, for instance selecting the better
payoff in max-plus settings or combining competing strategies. 
The Hadamard product $\F_1 \otimes \F_2$ combines the values of $\F_1$ and $\F_2$ for the same market scenario.
In the max-plus case, this operation 
reduces to additive composition, so that combining with a constant corresponds 
to shifting the resulting value, e.g., modeling fixed costs or premiums.
The Cauchy product $\F_1 \cdot \F_2$ captures the sequential composition of financial mechanisms: it aggregates all possible decompositions of a scenario into two consecutive phases, reflecting, for example, multi-period contracts or staged investment decisions. Finally, the Kleene star $\F^*$ models iteration and repetition, allowing for an unbounded number of consecutive applications of a financial mechanism, as in repeated trading strategies or rolling contracts.

Together, these operations provide a compositional toolkit for constructing complex financial models from simpler building blocks.
\end{remark}

\begin{definition}
\label{DEF:regexp}
A {\em weighted finance regular expression} $\R$ over $\Sigma$ and $\FF$ is derived from the grammar
\[
\R \; ::= \; s_{\eps}  \mid  e_a  \mid  \R \oplus \R  \mid  \R \cdot \R  \mid  \R^*
\]
where $s \in S$, $e \in \EE_{\FF}$ and $a \in \Sigma$.
\end{definition}

The {\em semantics} of $\R$ is the QFL $\beh{\R}: \DD \Sigma^* \to S$ defined for all $w \in \DD \Sigma^*$ inductively on the structure of $\R$ as follows:
{\scriptsize
\[
\begin{aligned}
\beh{s_{\varepsilon}}(w) &= \begin{cases}
s, & \text{if } w = \varepsilon, \\
\zero, & \text{otherwise}
\end{cases} \\
\beh{e_{a}}(w) &= \begin{cases}
\beh{e}(d), & \text{if } w = (a, d) \text{ with } a \in \Sigma \text{ and } d \in \DD \\
\zero, & \text{otherwise.}
\end{cases} \\
\beh{\R_1 \oplus \R_2}(w) &= (\beh{\R_1} \oplus \beh{\R_2})(w) \\
\beh{\R_1 \cdot \R_2}(w) &= (\beh{\R_1} \cdot \beh{\R_2})(w) \\
\beh{(\R')^*}(w) &= \beh{\R'}^*(w) 
\end{aligned}
\]
}

\noindent where $s \in S$, $e \in \EE_{\FF}$, $a \in \Sigma$ and $\R_1, \R_2, \R'$ are subexpressions of $\R$. Here, if $\beh{\R'}$ is not proper, then the semantics $\beh{\R'}^*$ may be undefined, since the definition of the star involves infinite sums, which need not exist in general semirings. A regular expression $\R \in \Reg_{\Sigma, \FF}$ is called {\em valid} if, for every subexpression $(\R')^*$ of $\R$, $\beh{\R'}$ is a proper QFL. Let $\Reg_{\Sigma, \FF}$ denote the set of all valid weighted finance regular expressions over $\Sigma$ and $\FF$. In the rest of this section we consider only valid expressions, so that the semantics is well defined. 
For any collection of weighted finance regular expressions $\W \subseteq \Reg_{\Sigma, \FF}$, let $\beh{\W} = \{\beh{\R} \mid \R \in \W\}$.

\begin{remark}
In contrast to the classical theory of weighted regular expressions (see, e.g., \cite{DKV09}), 
the Hadamard product is not included as a primitive syntactic operation in the grammar above, since it can be expressed using the available constructs.
Nevertheless, it is convenient to use the Hadamard product at the semantic level, 
as it provides a concise way to express pointwise combinations of quantitative values.
\end{remark}

\begin{example}
If $E = S$, then $\Reg_{\Sigma, \FF}$ corresponds to the standard semiring-weighted regular expressions of Sch\"utzenberger \cite{Sch61} (cf. also \cite{DKV09,DK21}).
\end{example}

\begin{example}
\label{EX:bond_regexp}
Consider the bond valuation model from Example~\ref{EX:bond_wffa}. 
We recall the alphabet 
$
\Sigma = \{\texttt{cpn}, \texttt{fin}, \texttt{dfl}\},
$
where \texttt{cpn}, \texttt{fin}, and \texttt{dfl} denote coupon, final, and default events, respectively.
We also recall the finance semiring $\FF = \FF_{\arc, \times}$.
The bond cash-flow structure can be described by the weighted finance regular expression
\[
\R_{\mathrm{bond}} = [-P_0]_{\varepsilon} \cdot (\llangle C \rrangle_{\texttt{cpn}})^* 
\cdot \big(\llangle C \otimes F \rrangle_{\texttt{fin}} \oplus [0.0]_{\texttt{dfl}}\big).
\]
For a finance word $w \in \DD \Sigma^*$ of the form~\eqref{EQ:bond_finance_word}, 
we obtain $\beh{\R_{\mathrm{bond}}}(w) = \pi_{\mathrm{bond}}$.
\end{example}

\begin{example}
\label{EX:eur_call_option_regexp}
In this example, we construct a regular expression modeling the profit of a long position 
in the European call option from Example~\ref{EX:european_call_option_wffa}. 
Let $\Sigma = \{\bot\}$ and $\FF = \FF_{\arc, \times}$. 
Let $e^{\llong}_{\eur} \in \EE_{\FF}$ be defined as in Equation~\eqref{EQ:eur_option_long_expression}. 
The corresponding regular expression
$
\R^{\llong}_{\eur} = [e^{\llong}_{\eur}]_{\bot} \in \Reg_{\Sigma, \FF}
$
directly encodes the payoff function.
An equivalent expression using only primitive $\FF$-expressions can be constructed as
\[
\tilde{\R}^{\llong}_{\eur} = [-c_0]_{\eps} \cdot 
  \bigl((\bind{1.0}_{\bot} \cdot [-K]_{\eps}) \oplus [0.0]_{\bot}\bigr),
\]
where $\tilde{\R}^{\llong}_{\eur} \in \Reg_{\Sigma, \FF}(\PP_{\FF})$. 
For all $S_T \in \RR_{\ge 0}$, we have
$
\beh{\tilde{\R}^{\llong}_{\eur}}(\bot_{S_T}) = \pi^{\llong}_{\eur}(S_T).
$
\end{example}

\begin{example}
\label{EX:amer_call_option_regexp}
Consider the American call option from Example~\ref{EX:american_call_option_wffa}. 
We construct a weighted finance regular expression over 
$\Sigma = \{\bot\}$ and $\FF = \FF_{\arc, \times}$ describing the payoff of a long position.
The expression is given by
\[
\R_{\amer} = [-c_0]_{\varepsilon} \cdot 
\bigl([0.0]_{\bot})^* \cdot 
\bigl([0.0]_{\varepsilon} \oplus (\llangle 1.0 \rrangle_{\bot} \cdot ([0.0]_{\bot})^*)\bigr)\bigr).
\]
Intuitively, the factor $([0.0]_{\bot})^*$ models waiting, allowing the option holder 
to postpone the exercise arbitrarily. The choice 
$[0.0]_{\varepsilon} \oplus (\llangle 1.0 \rrangle_{\bot} \cdot ([0.0]_{\bot})^*)$ 
captures the decision either not to exercise the option or to exercise it at a given time step. 
The term $[-c_0]_{\varepsilon}$ accounts for the initial premium.
For all input words $w = (\bot, S_1)\dots(\bot, S_T) \in \DD\Sigma^*$, 
the expression $\R_{\amer}$ yields the payoff of the American call option.
\end{example}
These examples demonstrate that weighted finance regular expressions 
provide a compositional framework for modeling financial payoffs.

Our first main result is the following theorem.
\begin{theorem}[Kleene–Sch\"utzenberger Theorem for QFLs]
\label{THM:kleene_schuetzenberger}
For every QFL ${\F : \DD \Sigma^* \to S}$, the following are equivalent:
\begin{enumerate}[label=(\alph*)]
    \item $\F$ is definable by a weighted finance regular expression, i.e., $\F \in \beh{\Reg_{\Sigma,\FF}}$;
    \item $\F$ is recognizable by a WFFA, i.e., $\F \in \beh{\WFFA_{\Sigma,\FF}}$.
\end{enumerate}
Moreover, the equivalence is effective: from a weighted finance regular expression one can effectively construct a WFFA defining the
same QFL, and conversely, from a WFFA one can effectively construct a weighted finance regular expression defining the same QFL.
\end{theorem}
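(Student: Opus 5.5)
We prove the two directions separately, following the classical Kleene--Sch\"utzenberger scheme. The key observation is that a WFFA is an ordinary weighted automaton over the infinite alphabet $\Sigma\times\DD$, in which the weight of reading $(a,d)$ along $(q,a,q')$ is $\beh{\wt_T(q,a,q')}(d)$. Since the data value is consumed locally at each transition and never stored, every classical construction goes through. We only have to check that each construction places $\FF$-expressions on transitions, rather than arbitrary functions $\DD\to S$, and keeps their number finite.

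\emph{(a)$\Rightarrow$(b).} We proceed by structural induction on $\R\in\Reg_{\Sigma,\FF}$, giving explicit WFFAs for the base cases and closure constructions for the operations.
\begin{itemize}
\item For $s_\eps$, take one state that is both initial and final, with $\wt_I=s$, $\wt_F=\one$, and no transitions.
\item For $e_a$, take two states $p\to q$ joined by the transition $(p,a,q)$ with weight $e$, and set both the initial and final weights to $\one$.
\item For the sum, take the disjoint union of the two automata.
\item For the Cauchy product, take the disjoint union and add $\eps$-free ``bridging'' transitions. Every transition $(q,a,f)$ of $\A_1$ with $f\in F_1$ is copied as $(q,a,i_2)$ for each $i_2\in I_2$, with weight $[\wt_T(q,a,f)\otimes \wt_F(f)\otimes \wt_I(i_2)]$, which is again an $\FF$-expression. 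Empty prefixes and suffixes are handled by adjusting the initial and final weights with the constants $\beh{\A_1}(\eps)$ and $\beh{\A_2}(\eps)$. These constants are computable as $\bigoplus_{q\in I\cap F}\wt_I(q)\otimes\wt_F(q)$.
\item For the star of a proper $\F=\beh{\A}$, first normalize $\A$ so that $\beh{\A}(\eps)=\zero$ contributes nothing; this is possible because we may drop states in $I\cap F$ by splitting states. Then add a fresh state $i_0$ that is both initial and final with weight $\one$, so that $\eps\mapsto\one$. Add loop-back transitions: every $(q,a,f)$ with $f\in F$ is copied to target each $i\in I$ with weight $[\wt_T\otimes\wt_F(f)\otimes\wt_I(i)]$. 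Also add copies of the initial transitions leaving $i_0$.
\end{itemize}
Correctness in each case is a bijection between runs of the new automaton and decompositions $w=w_1\cdots w_k$ paired with runs on the factors. Commutativity and distributivity of $\SS$ then give the required equalities of sums. This is exactly as in \cite{DKV09}, and the data values ride along unchanged. All constructions are effective.

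\emph{(b)$\Rightarrow$(a).} Given $\A$, consider for each transition $t=(q,a,q')$ the atomic expression $(\wt_T(t))_a$. Group parallel transitions with $\oplus$, so that entry $M_{q,q'}=\bigoplus_{(q,a,q')\in T}(\wt_T(q,a,q'))_a$ is a regular expression, and every entry is proper. We then run the McNaughton--Yamada / state-elimination (or Arden's lemma) procedure. Let $\R^{(k)}_{p,q}$ describe runs from $p$ to $q$ whose intermediate states lie in $\{q_1,\dots,q_k\}$, defined by
\[
\R^{(k)}_{p,q}=\R^{(k-1)}_{p,q}\oplus \R^{(k-1)}_{p,q_k}\cdot\bigl(\R^{(k-1)}_{q_k,q_k}\bigr)^*\cdot\R^{(k-1)}_{q_k,q}.
\]
The starred expressions are valid because every run of length at least one has a nonempty label, so $\beh{\R^{(k-1)}_{q_k,q_k}}$ is proper. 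We take $\R^{(0)}$ to be the nonempty-path expressions and treat length-zero runs separately. Finally, set
\[
\R=\bigoplus_{p\in I,q\in F}\bigl(\wt_I(p)_\eps\cdot(\one_\eps\oplus\R^{(n)}_{p,q})\bigr)\cdot\wt_F(q)_\eps,
\]
where the $\one_\eps$ term is restricted to $p=q$. By induction on $k$, we show that $\beh{\R^{(k)}_{p,q}}(w)$ equals the sum of the weights of all runs on $w$ from $p$ to $q$ of length at least one whose intermediate states have index at most $k$. This relies on the unique decomposition of such a run at its visits to $q_k$, together with commutativity.

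\emph{Main obstacle.} The main obstacle is the star case, together with properness and validity. We must ensure that every star is applied only to proper QFLs in direction (b)$\Rightarrow$(a), and that the $\eps$-value is handled correctly in direction (a)$\Rightarrow$(b). The construction has no $\eps$-transitions, so $\eps$-weights must be pushed into the initial and final weights. This is why the run-decomposition bookkeeping at the empty word is the delicate point. Everything else transfers verbatim from the classical proof, because data enters only through the local evaluation $\beh{e}(d)$ at each letter.
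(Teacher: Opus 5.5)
Your direction (b)$\Rightarrow$(a) is the paper's argument. It uses the same McNaughton--Yamada recursion $\R^{(k)}_{p,q}$, and the stars are valid because $\R^{(k-1)}_{q_k,q_k}$ is proper. Your $\one_\eps$ summand for $p=q$ just replaces the paper's extra summand $[\beh{\A}(\eps)]_\eps$.

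For (a)$\Rightarrow$(b) you take a different route. The paper first normalizes: it adds fresh $q_I$/$q_F$, pushes $\wt_I,\wt_F$ into transition expressions via $\otimes$ and $\oplus$, and makes $q_I$ final with weight $\beh{\A}(\eps)$ where needed. It then glues $q_{F_1}=q_{I_2}$ for the Cauchy product and merges $q_I$ with $q_F$ for the star. This way every factorization, including those with empty factors, is realized exactly once automatically. You skip normalization and add bridging and loop-back transitions directly. That is legitimate and rests on the same fact: $\EE_\FF$ is closed under $\otimes$ and $\oplus$. Since $T$ is a set of triples, copies of different $(q,a,f)$ that yield the same $(q,a,i_2)$ must be merged into one $\oplus$-expression. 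The cost of your route is that empty factors must be handled by hand, and as written that step is wrong.

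\textbf{Cauchy product.} Consider a run that takes a bridge into some $i_2\in I_2\cap F_2$ and stops there. It already realizes the split $w_1=w$, $w_2=\eps$, and these runs contribute $\beh{\A_1}(w)\otimes\beh{\A_2}(\eps)$ in total. Making $F_1$ final with weight $\wt_{F_1}(f)\otimes\beh{\A_2}(\eps)$ counts this split a second time. On $\eps$, the adjusted length-$0$ runs at $I_1\cap F_1$ and those at $I_2\cap F_2$ each contribute $\beh{\A_1}(\eps)\otimes\beh{\A_2}(\eps)$. This is invisible over the idempotent $\SS_{\arc}$ but fails in general. Over $(\RR,+,\cdot,0,1)$ with $\F_1=\F_2=\beh{\one_\eps}$, your automaton yields $2$ on $\eps$ instead of $1$. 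With bridges into $I_2$, empty suffixes are already covered. So you should only add $I_2$ as initial states with weight $\beh{\A_1}(\eps)\otimes\wt_{I_2}(i_2)$ and leave $F_1$ non-final. Alternatively, first split away $I_2\cap F_2$, as you do for the star, and then adjust both sides.

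\textbf{Star.} The construction is ambiguous in the same way. If $I$ stays initial alongside $i_0$ with its copied transitions, every first factor is counted twice. If $I$ is dropped, the copies leaving $i_0$ must include the new loop-back transitions leaving $I$; otherwise factorizations whose first factor has length one are lost. The simplest correct version keeps $I$ initial with $\wt_I$ and adds $i_0$ as an isolated state with initial and final weight $\one$.
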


\begin{remark}	
The proof of Theorem~\ref{THM:kleene_schuetzenberger} is deferred to Section~\ref{SEC:proof_kleene_schuetzenberger}, where we present a stronger version of the theorem. This stronger result provides additional technical insights into the constructions, and Theorem~\ref{THM:kleene_schuetzenberger} can be seen as a special case of that more general statement.
\end{remark}

\begin{remark}[Relation to classical Kleene--Sch\"utzenberger results]
\label{REM:ks_relation}
The construction in Theorem~\ref{THM:kleene_schuetzenberger} follows the classical proof scheme of the Kleene--Sch\"utzenberger theorem for weighted automata. One may ask whether this result can be derived directly from classical equivalence results between weighted automata and rational expressions over semirings.

Such a reduction can be formulated by encoding WFFAs as weighted automata over suitable semirings of functions. For instance, transition weights may be viewed as local weight functions
$
f : D \times \Sigma \to S,
$
describing the contribution of a transition to the weight of a valued symbol. Runs of the automaton then correspond to finite sequences of such functions, whose evaluation along a finance word induces a quantitative finance language
$
\F: (D \times \Sigma)^* \to S .
$

More generally, one may consider semirings of multisets of finite sequences of such local weight functions. This yields a multiset-based semantics in the spirit of unifying frameworks for weighted automata and weighted logics (see, e.g., \cite{GM18,DP16}). In this way, classical equivalence results between weighted automata and rational expressions over suitable semirings (see, e.g., \cite[Theorem~2.12]{DKV09}) can be related to the present setting.

However, such encodings lead to rather technically involved semirings whose elements no longer admit a natural
financial interpretation. For this reason, we prefer to give direct constructions tailored to WFFAs.
These constructions are based on standard automata-theoretic techniques, such as product and state-elimination constructions, but are carried out directly within the WFFA model.
Theorem~\ref{THM:kleene_schuetzenberger}
therefore provides the conceptual backbone of our framework and serves as a basis for the later study of
large but syntactically restricted classes of finance expressions.
\end{remark}

\section{Closure Properties of WFFAs}
\label{SEC:closure_wffa}

In this section, we study the closure properties of QFLs captured by WFFAs. 
These results will be used in Section~\ref{SEC:proof_kleene_schuetzenberger} 
to prove our Kleene--Sch\"utzenberger theorem. 
They are also of independent interest, as they provide a compositional view 
of financial systems in terms of simpler building blocks.

Thanks to our definition of $\FF$-expressions, the automata constructions 
establishing these closure properties closely resemble those for standard 
semiring-weighted automata. 
We present these constructions in detail, as they will also be used in 
Section~\ref{SEC:restricted_wffa} to analyze WFFAs with restricted expressions.

Let $\Sigma$ be an alphabet and $\FF = (\SS, \DD, \bb)$ a finance semiring where ${\SS = (S, \oplus, \otimes, \zero, \one)}$ is a commutative semiring.

The following theorem establishes that WFFAs share the same closure properties as weighted automata over commutative semirings (see, e.g., \cite{DKV09}) with respect to the operations introduced in Definition~\ref{DEF:qfl_operations}.

\begin{theorem}
\label{THM:closure_wffa}
Let $\F_1, \F_2, \F \in \beh{\WFFA_{\Sigma, \FF}}$ such that $\F$ is proper. Then, the QFLs $\F_1 \oplus \F_2$, $\F_1 \otimes \F_2$, $\F_1 \cdot \F_2$ and $\F^*$ are also in $\beh{\WFFA_{\Sigma, \FF}}$.
\end{theorem}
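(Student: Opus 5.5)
The plan is to mimic the classical constructions for weighted automata over commutative semirings (cf.~\cite{DKV09}), checking at each step that the transition weights remain $\FF$-expressions and that run weights decompose correctly. Fix WFFAs $\A_i = (Q_i, I_i, T_i, F_i, \wt_I^i, \wt_T^i, \wt_F^i)$ with $\beh{\A_i} = \F_i$ for $i=1,2$ and $\A$ with $\beh{\A} = \F$. Assume $Q_1 \cap Q_2 = \emptyset$. Since $\bigoplus$ over an empty set is $\zero$, runs with weight $\zero$ may be freely included or excluded. We may therefore sum over all runs with the given label, not only the valid ones, without changing the behavior. This observation will be used throughout.

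\emph{Sum.} Take the disjoint union of $\A_1$ and $\A_2$. Its runs are exactly the runs of $\A_1$ together with those of $\A_2$, with unchanged weights, so the behavior is $\F_1 \oplus \F_2$.

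\emph{Hadamard product.} Use the product automaton with states $Q_1 \times Q_2$, initial states $I_1 \times I_2$ and final states $F_1 \times F_2$. The initial and final weights are the products of the component weights. For transitions $t_1 = (p,a,p') \in T_1$ and $t_2 = (q,a,q') \in T_2$ there is a transition $((p,q),a,(p',q'))$ with weight $[\wt_T^1(t_1) \otimes \wt_T^2(t_2)]$. This weight is again an $\FF$-expression, and by the semantics of $\otimes$ it evaluates to $\beh{\wt_T^1(t_1)}(d) \otimes \beh{\wt_T^2(t_2)}(d)$. Runs of the product on label $a_1\dots a_n$ correspond bijectively to pairs of runs with that label. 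Commutativity of $\SS$ lets us regroup the factors, and distributivity turns the product of the two sums into the sum over pairs. Together these give $\F_1(w) \otimes \F_2(w)$.

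\emph{Cauchy product.} This is the main obstacle, because WFFAs have no $\eps$-transitions. The plan is the standard $\eps$-free construction. Take the disjoint union of $\A_1$ and $\A_2$, keep the initial states of $\A_1$, and keep the final states of $\A_2$. For each $(p,a,p') \in T_1$ with $p' \in F_1$ and each $q \in I_2$, add a transition $(p,a,q)$ with weight $[\wt_T^1(p,a,p') \otimes \wt_F^1(p') \otimes \wt_I^2(q)]$; constants are $\FF$-expressions. This handles all splittings $w = w_1 w_2$ with $w_1 \neq \eps$.

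The case $w_1 = \eps$ contributes $\F_1(\eps) \otimes \F_2(w)$. It is handled by also making the initial states of $\A_2$ initial, with initial weight $\F_1(\eps) \otimes \wt_I^2(q)$, where $\F_1(\eps) = \bigoplus_{p \in I_1 \cap F_1} \wt_I^1(p)\otimes \wt_F^1(p)$ is computable. Symmetrically, the case $w_2 = \eps$ is covered because the final states of $\A_1$ remain final, with weight $\wt_F^1(p) \otimes \F_2(\eps)$. The case $w = \eps$, where both parts are empty, must be counted exactly once. To ensure this, first normalize $\A_1$ and $\A_2$ so that no state is both initial and final; this can be done by splitting states in a preprocessing step that preserves behavior. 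The $\eps$-value is then realized through one fresh state that is both initial and final, with initial weight $\F_1(\eps)\otimes\F_2(\eps)$ and final weight $\one$. Finally, verify by a bijection between runs and pairs (decomposition point, run of $\A_1$, run of $\A_2$), using distributivity.

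\emph{Kleene star.} Because $\F$ is proper, normalize $\A$ so that $I \cap F = \emptyset$, again by state splitting if necessary. Add a fresh state $q_0$ that is both initial and final with weights $\one$; this contributes $\one$ at $\eps$. For every transition $(p,a,p')$ with $p' \in F$ and every $q \in I$, add a loop-back transition $(p,a,q)$ with weight $[\wt_T(p,a,p') \otimes \wt_F(p') \otimes \wt_I(q)]$. Runs of the resulting automaton on $w \neq \eps$ then correspond bijectively to factorizations $w = w_1\cdots w_k$ with all $w_i \neq \eps$, together with runs on each factor. Properness ensures that empty factors contribute nothing, so distributivity yields $\F^*(w)$.

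The delicate points are the $\eps$-bookkeeping in the Cauchy product and the star, and the state-splitting normalization. Every construction is effective, and all new weights are products of existing $\FF$-expressions with constants.
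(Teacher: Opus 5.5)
Your sum and Hadamard constructions are the same as the paper's: a disjoint union, and a synchronous product with transition weights $[\wt_T^1(t_1)\otimes\wt_T^2(t_2)]$. Your remark that zero-weight runs may be freely included makes rigorous a run correspondence that the paper leaves implicit.

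For the Cauchy product and the star you take a genuinely different route. The paper first proves a normalization lemma (Lemma~\ref{LEMMA:normalize_wffa}). Initial and final weights are absorbed into transition expressions via $\oplus$-sums of $\otimes$-products. This yields a unique initial state $q_I$ and a unique final state $q_F$, each of weight $\one$, with no incoming resp.\ outgoing transitions. The paper then glues automata by identifying states, without changing any transition weight: $q_F$ of the first automaton with $q_I$ of the second for the Cauchy product, and $q_F$ with $q_I$ for the star.

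You instead keep arbitrary initial and final sets and only split states to obtain $I\cap F=\emptyset$. You then add bridge resp.\ loop-back transitions carrying the constants $\wt_F(p')\otimes\wt_I(q)$. The $\eps$-bookkeeping is done by the factors $\F_1(\eps)$, $\F_2(\eps)$ and a fresh $\eps$-state. This is the classical $\eps$-free NFA construction and gives a very direct run decomposition.

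The paper's split into ``normalize, then glue'' has a different advantage. It confines all expression growth to the normalization step. Later it replaces that step by a variant that leaves transition expressions untouched (Lemma~\ref{LEMMA:init_and_final_normalization_for_arbitrary_expressions}) to handle restricted classes $E$ in Section~\ref{SEC:restricted_wffa}.

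Note also that splitting $I_2\cap F_2$ is needed not only for $w=\eps$. Without it, a bridge at the last position into a state of $I_2\cap F_2$ would count $\F_1(w)\otimes\F_2(\eps)$ a second time.

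Two points need repair.

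\emph{Colliding transitions.} In a WFFA, $T\subseteq Q\times\Sigma\times Q$ is a set and $\wt_T$ is a function, so your added transitions can collide. In the Cauchy product, several $p'\in F_1$ with $(p,a,p')\in T_1$ yield the same bridge $(p,a,q)$. In the star, a loop-back $(p,a,q)$ may coincide with an existing transition, e.g.\ $I=\{q\}$, $F=\{f\}$, $T=\{(q,a,q),(q,a,f)\}$. You must merge colliding transitions into one whose weight is the $\oplus$-sum of the individual expressions. This is allowed since $\EE_{\FF}$ is closed under $\oplus$, and it is exactly what the paper does inside its normalization lemma. Your bijection between runs and (factorization, component runs) must then be replaced by a distributivity argument: a run through merged transitions expands into the sum over all choices made at the merged steps.

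\emph{State splitting in the Cauchy product.} Splitting cannot preserve the value at $\eps$ when $\F_i(\eps)\neq\zero$, because $I\cap F=\emptyset$ forces the value $\zero$ at $\eps$. What your construction actually needs is preservation on non-empty words, with $\F_i(\eps)$ computed from the original automata before splitting.

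With these fixes your argument is complete.
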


\begin{example}
\label{EX:bull_spread}
A \emph{bull spread} \cite{Nat14} is constructed by taking a long position in a call option 
with strike $K^{\mathrm{low}}$ and a short position in a call option 
with strike $K^{\mathrm{high}} > K^{\mathrm{low}}$.
Let $\pi_{\mathrm{low}}^{\llong}$ and $\pi_{\mathrm{high}}^{\short}$ denote the corresponding payoffs, 
and let $\F(\pi)$ be their encoding as QFLs over $\Sigma = \{\bot\}$ and $\FF = \FF_{\arc,\times}$. 
Then the payoff of the bull spread satisfies
\[
\F(\pi_{\mathrm{bull}}) 
= \F(\pi_{\mathrm{low}}^{\llong}) \otimes \F(\pi_{\mathrm{high}}^{\short}).
\]
Since both components are recognizable by WFFAs (Example~\ref{EX:european_call_option_wffa}), 
Theorem~\ref{THM:closure_wffa} implies that the bull spread is also recognizable.
This illustrates the compositional modeling of financial strategies via closure properties.
\end{example}

In the remainder of this subsection, we present a proof of Theorem~\ref{THM:closure_wffa}.

\begin{lemma}
\label{LEMMA:closure_sum_wffa}
Let $\A_1, \A_2 \in \WFFA_{\Sigma, \FF}$. Then there exists $\A_{\oplus} \in \WFFA_{\Sigma, \FF}$ such that 
$\beh{\A_{\oplus}} = \beh{\A_1} \oplus \beh{\A_2}$.
\end{lemma}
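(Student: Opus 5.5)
The plan is the standard disjoint-union construction. Write $\A_i = (Q_i, I_i, T_i, F_i, \wt_I^i, \wt_T^i, \wt_F^i)$ for $i \in \{1,2\}$. After renaming states we may assume $Q_1 \cap Q_2 = \emptyset$. Define
\[
\A_{\oplus} = (Q_1 \cup Q_2,\; I_1 \cup I_2,\; T_1 \cup T_2,\; F_1 \cup F_2,\; \wt_I,\; \wt_T,\; \wt_F),
\]
where each of $\wt_I, \wt_T, \wt_F$ is the union of the corresponding functions of $\A_1$ and $\A_2$. These unions are well defined because the domains are disjoint. This is clearly a WFFA over $\Sigma$ and $\FF$. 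Transition weights are copied unchanged, so $|\A_{\oplus}|_{\EE_{\FF}} = \max\{|\A_1|_{\EE_{\FF}}, |\A_2|_{\EE_{\FF}}\}$. It also follows that if both $\A_i$ use only expressions from some class (e.g.\ $\PP_{\FF}$), so does $\A_{\oplus}$, which will matter later in Section~\ref{SEC:restricted_wffa}.

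The key step is to describe the runs. No transition of $\A_{\oplus}$ connects $Q_1$ with $Q_2$, so every run of $\A_{\oplus}$ stays entirely inside $Q_1$ or entirely inside $Q_2$. Hence $\Run_{\A_{\oplus}}$ is the disjoint union of $\Run_{\A_1}$ and $\Run_{\A_2}$. Weights are computed locally from $\wt_I(q_0)$, the transition expressions evaluated at the $d_i$, and $\wt_F(q_n)$. Therefore $\wt_{\A_{\oplus}}(\varrho, \delta) = \wt_{\A_i}(\varrho, \delta)$ for $\varrho \in \Run_{\A_i}$. In particular validity is preserved, so for each finance word $w$ we get $\Run_{\A_{\oplus}}(w) = \Run_{\A_1}(w) \,\dot\cup\, \Run_{\A_2}(w)$.

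Finally, split the $\oplus$-sum defining $\beh{\A_{\oplus}}(w)$ according to this partition, using commutativity and associativity of $\oplus$ for the finite sums. This gives $\beh{\A_1}(w) \oplus \beh{\A_2}(w)$. The convention $\bigoplus \emptyset = \zero$ handles the cases where one side has no valid runs, including $w = \eps$ (then only runs with $q_0 = q_n \in I \cap F$ count). There is no real obstacle. The only point needing care is the disjointness renaming and checking that the weight functions are well defined on the unions.
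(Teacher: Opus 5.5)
Your proposal is correct and follows essentially the same route as the paper: take the disjoint union of the two automata, observe that every run stays inside one component with its weight unchanged, and split the $\oplus$-sum accordingly. Your extra remarks on well-definedness of the unioned weight functions, the empty word, and the preservation of restricted expression classes are accurate and consistent with how the paper reuses this construction later.
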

\begin{proof}
Let $\A_1$ and $\A_2$ be WFFAs with state sets $Q_1$ and $Q_2$, respectively.
We use the standard disjoint union construction. 
Assume $Q_1 \cap Q_2 = \emptyset$ and define $\A_{\oplus}$ by taking the union of the components of $\A_1$ and $\A_2$. 
Then every run of $\A_{\oplus}$ is a run of either $\A_1$ or $\A_2$, with weights preserved. 
Hence, for all $w \in \DD\Sigma^*$,
$
\beh{\A_{\oplus}}(w) = \beh{\A_1}(w) \oplus \beh{\A_2}(w).
$
\end{proof}

\begin{lemma}
\label{LEMMA:closure_hadamard_wffa}
Let $\A_1, \A_2 \in \WFFA_{\Sigma, \FF}$. Then there exists 
$\A_{\otimes} \in \WFFA_{\Sigma, \FF}$ such that 
$\beh{\A_{\otimes}} = \beh{\A_1} \otimes \beh{\A_2}$.
\end{lemma}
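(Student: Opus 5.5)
We use the standard product construction for weighted automata over commutative semirings, carried out directly on WFFAs. Let $\A_i = (Q_i, I_i, T_i, F_i, \wt_{I}^{i}, \wt_{T}^{i}, \wt_{F}^{i})$ for $i \in \{1,2\}$. We define $\A_{\otimes}$ as follows.
\begin{itemize}
\item The state set is $Q_1 \times Q_2$, with initial states $I_1 \times I_2$ and final states $F_1 \times F_2$.
\item The transitions are all triples $((q_1,q_2), a, (q_1',q_2'))$ with $(q_1,a,q_1') \in T_1$ and $(q_2,a,q_2') \in T_2$.
\item The initial and final weights are $\wt_I(q_1,q_2) = \wt_{I}^{1}(q_1) \otimes \wt_{I}^{2}(q_2)$ and $\wt_F(q_1,q_2) = \wt_{F}^{1}(q_1) \otimes \wt_{F}^{2}(q_2)$.
\item The transition weight is the $\FF$-expression $[\wt_{T}^{1}(t_1) \otimes \wt_{T}^{2}(t_2)]$.
\end{itemize}
This is a legal $\FF$-expression because $\EE_{\FF}$ is closed under $\otimes$. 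By the inductive semantics, its value at any $d \in \DD$ is $\beh{\wt_{T}^{1}(t_1)}(d) \otimes \beh{\wt_{T}^{2}(t_2)}(d)$. The transitions are well defined since a pair of transitions determines the product transition. The product transition set is still finite.

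The key step is a bijection between runs. Runs of $\A_{\otimes}$ with label $a_1 \dots a_n$ correspond exactly to pairs $(\varrho_1, \varrho_2)$ of runs of $\A_1$ and $\A_2$ with that same label. For each run context $\delta = d_1 \dots d_n$ we then check
\[
\wt_{\A_{\otimes}}(\varrho, \delta) = \wt_{\A_1}(\varrho_1, \delta) \otimes \wt_{\A_2}(\varrho_2, \delta).
\]
This holds because both data words are the same $\delta$, so the $i$-th factor of each run is evaluated at the same $d_i$. Commutativity of $\SS$, which is built into Definition~\ref{DEF:finance_semiring}, lets us regroup the factors.

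We then compute the behavior. By definition, $\beh{\A}(w)$ sums only over runs valid in context $\delta$. Invalid runs contribute $\zero$ anyway, so $\beh{\A}(w)$ equals the sum over all runs with the given label. This avoids having to match valid product runs with pairs of valid runs, which need not correspond exactly when the semiring has zero divisors. Using the run bijection and distributivity, we get for every $w = (a_1,d_1)\dots(a_n,d_n)$:
\[
\beh{\A_{\otimes}}(w) = \bigoplus_{\varrho_1,\varrho_2} \wt_{\A_1}(\varrho_1,\delta) \otimes \wt_{\A_2}(\varrho_2,\delta) = \beh{\A_1}(w) \otimes \beh{\A_2}(w).
\]
The sums are finite, so distributivity applies directly. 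The case $w=\eps$ needs no separate treatment: runs of length $0$ are pairs of states in $(I_1\cap F_1)\times(I_2\cap F_2)$, and their weights factor the same way.

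The main obstacle is only the bookkeeping around validity just described: rewriting the behavior as a sum over all runs of the given label, not just the valid ones, before applying distributivity. Everything else is routine. One further remark is that the expression size grows additively, $|\A_{\otimes}|_{\EE_{\FF}} \le |\A_1|_{\EE_{\FF}} + |\A_2|_{\EE_{\FF}} + 1$, which is worth recording for the later section on restricted expressions.
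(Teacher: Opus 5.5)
Your proposal is correct and uses the same synchronous product construction as the paper, including the syntactic transition weights $[\wt_T^1(t_1) \otimes \wt_T^2(t_2)]$ and the run bijection combined with commutativity and distributivity. One step of yours is more careful than the paper: you sum over all runs with the given label rather than only the valid ones, whereas the paper's one-to-one correspondence between runs of $\A_{\otimes}$ and pairs of runs tacitly ignores that, over semirings with zero divisors, two valid component runs can yield an invalid product run.
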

\begin{proof}
We use the standard synchronous product construction for weighted automata
over commutative semirings (see, e.g., \cite{DKV09}). Let
$
\A_i = (Q_i, I_i, T_i, F_i, \wt_{I_i}, \wt_{T_i}, \wt_{F_i}) \; (i \in \{1,2\}).
$
We define
$
Q = Q_1 \times Q_2,
$
$
I = I_1 \times I_2,
$
$
F = F_1 \times F_2
$
and let $T$ consist of all synchronized transitions
$
{t = ((q_1, q_2), a, (q_1', q_2'))}
$
such that $(q_1, a, q_1') \in T_1$ and ${(q_2, a, q_2') \in T_2}$.
The initial and final weights are defined componentwise by
$
{\wt_I(i_1,i_2)=\wt_{I_1}(i_1)\otimes \wt_{I_2}(i_2)},
$
for $i_1 \in I_1$ and ${i_2 \in I_2}$, and
$
\wt_F(f_1,f_2)=\wt_{F_1}(f_1)\otimes \wt_{F_2}(f_2),
$
for ${f_1 \in F_1}$, $f_2 \in F_2$.
The weight of a synchronized transition
${t = ((q_1, q_2), a, (q_1', q_2')) \in T}$ is
$
{\wt_T(t)=\wt_{T_1}(q_1,a,q_1') \otimes \wt_{T_2}(q_2,a,q_2')},
$
where $\otimes$ denotes the syntactic combination of $\FF$-expressions.
This is well defined since $\EE_{\FF}$ is closed under $\otimes$.
Then runs of $\A_{\otimes}$ are in one-to-one correspondence with pairs of runs
of $\A_1$ and $\A_2$ on the same word, and their weights evaluate to the product
of the corresponding run weights. Hence, for every $w \in \DD\Sigma^*$,
$
\beh{\A_{\otimes}}(w)=\beh{\A_1}(w)\otimes \beh{\A_2}(w).
$
\end{proof}

\begin{definition}
\label{DEF:normalized_wffa}
A WFFA $\A = (Q, I, T, F, \wt_I, \wt_T, \wt_F)$ is called:
\begin{enumerate}[label=(\alph*)]
\item {\em initially normalized} if $I = \{q_I\}$, $\wt_I(q_I)=\one$, and no transition enters $q_I$;
\item {\em finally normalized} if $F = \{q_F\}$, $\wt_F(q_F)=\one$, and no transition leaves $q_F$;
\item {\em completely normalized} if it is both initially and finally normalized.
\end{enumerate}
\end{definition}

\begin{lemma}
\label{LEMMA:normalize_wffa}
Let $\A \in \WFFA_{\Sigma, \FF}$. Then:
\begin{enumerate}[label=(\alph*)]
\item there exists an initially normalized $\A_I$ with $\beh{\A_I}=\beh{\A}$;
\item there exists a finally normalized $\A_F$ with $\beh{\A_F}=\beh{\A}$;
\item if $\beh{\A}$ is proper, there exists a completely normalized $\A_{I,F}$ with $\beh{\A_{I,F}}=\beh{\A}$.
\end{enumerate}
\end{lemma}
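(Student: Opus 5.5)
We adapt the standard normalization constructions for weighted automata. The one change is that weights absorbed into new transitions must be combined with $\FF$-expressions via the syntactic $\otimes$. This is legitimate because $S\subseteq\EE_{\FF}$ and $\EE_{\FF}$ is closed under $\otimes$, and by commutativity of $\SS$ the order of the factors does not matter semantically.

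For (a), let $\A=(Q,I,T,F,\wt_I,\wt_T,\wt_F)$ and add a fresh state $q_I\notin Q$. Put $I'=\{q_I\}$ with $\wt_{I'}(q_I)=\one$ and keep all old transitions. For every $i\in I$ and every transition $(i,a,q)\in T$, add the transition $(q_I,a,q)$ with weight $[\wt_I(i)\otimes\wt_T(i,a,q)]$. If the same triple $(q_I,a,q)$ arises from several initial states $i$, the weights must be merged, since $T$ is a set. I merge them with the syntactic $\oplus$ of the corresponding expressions, and distributivity makes this agree with summing over the separate runs.

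To handle the empty word, let $F'=F\cup\{q_I\}$ if $I\cap F\neq\emptyset$, with $\wt_{F'}(q_I)=\bigoplus_{i\in I\cap F}\wt_I(i)\otimes\wt_F(i)$, and otherwise $F'=F$. No transition enters $q_I$.

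Correctness comes from a weight-preserving correspondence between runs of $\A$ and runs of $\A_I$. A run of length $n\ge 1$ of $\A$ starting at $i$ corresponds to the run of $\A_I$ that first uses the new transition out of $q_I$ and then continues identically. Its weight in any context $d_1\dots d_n$ equals $\wt_I(i)\otimes\beh{\wt_T(t_1)}(d_1)\otimes\cdots$, by the semantics of $\otimes$ on $\FF$-expressions. Length-$0$ runs are handled by the final weight of $q_I$. Invalid runs contribute $\zero$ on both sides, so restricting to valid runs in $\Run_\A(w)$ is harmless.

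Part (b) is symmetric. Add a fresh $q_F$ with $\wt_{F'}(q_F)=\one$, and for each transition $(q,a,f)$ with $f\in F$ add $(q,a,q_F)$ with weight $[\wt_T(q,a,f)\otimes\wt_F(f)]$, merging duplicates by $\oplus$. Keep $q_F$ out of every source position. The empty word is handled by making $q_F$ initial with weight $\bigoplus_{i\in I\cap F}\wt_I(i)\otimes\wt_F(i)$, or, more cleanly, by applying (b) to the output of (a) and carrying the empty-word value through.

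For (c), apply (a) and then the finally-normalizing construction to $\A_I$. The value $\beh{\A}(\eps)$ is realized only by length-$0$ runs, so properness gives $\beh{\A}(\eps)=\zero$. In the final-normalization step we can therefore drop $q_I$ from the final states altogether and need no $\eps$-compensation. The result has the single initial state $q_I$ with weight $\one$ and no incoming transitions, since the new transitions into $q_F$ have sources in $Q\cup\{q_I\}$ but never target $q_I$. It also has the single final state $q_F$ with weight $\one$ and no outgoing transitions.

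The main obstacle is the bookkeeping of the empty word together with the merging of parallel transitions. The empty word is exactly the reason (c) needs properness. The merging is unavoidable because $T\subseteq Q\times\Sigma\times Q$ is a set, and its correctness depends on distributivity, which holds pointwise for the semantics of $\oplus$ and $\otimes$.
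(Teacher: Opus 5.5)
Your proposal is correct and follows essentially the same construction as the paper: a fresh state $q_I$ (dually $q_F$) whose new transitions absorb the initial (resp.\ final) weights. Parallel transitions are merged by the syntactic $\oplus$, and the empty word is handled by a final (resp.\ initial) weight on the new state. For (c), the paper writes out the combined automaton directly via the transition sets $T_I$, $T_{I,F}$, $T_F$, which is exactly what your composition of (a) and (b) yields once properness removes the need for any $\eps$-compensation.
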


\begin{proof}
We use standard normalization constructions (cf., e.g.,~\cite{DKV09}). 
The only non-trivial aspect is the redistribution of initial and final weights to transitions. 
In our setting, this is achieved by lifting weights to $\FF$-expressions: an initial weight $s$ can be incorporated into outgoing transitions as $s \otimes w$, and dually for final weights. 
Moreover, closure of $\EE_{\FF}$ under $\oplus$ allows us to combine contributions from multiple initial (or final) states.
For initial normalization, introduce a fresh state $q_I$ and redirect all transitions from initial states.  If necessary, adjust the final weight of $q_I$ to preserve the value on $\eps$. Final normalization is dual.
For complete normalization, assuming that the automaton is proper, introduce fresh states $q_I, q_F$ and combine the two constructions.
\end{proof}

\begin{lemma}
\label{LEMMA:cauchy_product_kleene_star_normalized_wffa}
\begin{enumerate}[label=(\alph*)]
\item 
Let $\A_F \in \WFFA_{\Sigma, \FF}$ be finally normalized and $\A_I \in \WFFA_{\Sigma, \FF}$ be initially normalized. 
Then there exists $\A_{\mathrm{Cauchy}} \in \WFFA_{\Sigma, \FF}$ such that
$\beh{\A_{\mathrm{Cauchy}}}=\beh{\A_F}\cdot\beh{\A_I}$.

\item 
Let $\A \in \WFFA_{\Sigma, \FF}$ be completely normalized. 
Then there exists $\A^* \in \WFFA_{\Sigma, \FF}$ with $\beh{\A^*}=\beh{\A}^*$.
\end{enumerate}
\end{lemma}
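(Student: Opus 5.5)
For part (a), I will glue the two automata along the final state of $\A_F$ and the initial state of $\A_I$, as in the classical construction. Write $\A_F=(Q_1,I_1,T_1,\{q_F\},\wt_{I_1},\wt_{T_1},\wt_{F_1})$ and $\A_I=(Q_2,\{q_I\},T_2,F_2,\wt_{I_2},\wt_{T_2},\wt_{F_2})$, and assume $Q_1\cap Q_2=\emptyset$. The automaton $\A_{\mathrm{Cauchy}}$ is obtained from the disjoint union by identifying $q_F$ with $q_I$ into a single state $p$. Since no transition leaves $q_F$ and none enters $q_I$, the state $p$ inherits exactly the incoming transitions of $q_F$ and the outgoing transitions of $q_I$, and the transition weights are kept unchanged. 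The initial states are $I_1$ with weights $\wt_{I_1}$, plus $p$ with weight $\wt_{I_1}(q_F)$ in case $q_F\in I_1$. The final states are $F_2$ with weights $\wt_{F_2}$; since $\wt_{F_1}(q_F)=\wt_{I_2}(q_I)=\one$, no extra factor is needed.

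The correctness argument is a bijection. Every run of $\A_{\mathrm{Cauchy}}$ visits $p$ exactly once, because no transition re-enters the $Q_1$-part from $Q_2$ and $p$ has no self-loop (a loop would have to leave $q_F$). Splitting the run at $p$ therefore gives a unique pair consisting of a run of $\A_F$ on a prefix $w_1$ and a run of $\A_I$ on the matching suffix $w_2$, with $w_1w_2=w$. The data values are split consistently, because the run context is just the data word cut at the same position. The run weights multiply: the weight of the glued run is the $\otimes$-product of the two run weights, using that the normalized final and initial weights equal $\one$. Distributivity then turns the sum over glued runs into $\bigoplus_{w_1w_2=w}\beh{\A_F}(w_1)\otimes\beh{\A_I}(w_2)$. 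Invalid runs contribute $\zero$ on both sides, so restricting to valid runs causes no discrepancy.

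For part (b), let $\A$ be completely normalized with $q_I\neq q_F$. Since $\beh{\A}(\eps)=\zero$ and there is no transition from $q_I$ to itself, we may assume $q_I\neq q_F$; if they coincided, the behavior would be $\zero$ away from $\eps$, a trivial case. I construct $\A^*$ by identifying $q_F$ with $q_I$ into a state $p$ that is both initial and final with weight $\one$. Every transition into $q_F$ is redirected into $p$, and every transition out of $q_I$ now leaves $p$; all expressions are kept. A run of $\A^*$ on $w$ then decomposes uniquely at its visits to $p$ into $k\ge 0$ segments. Each segment is a run of $\A$ on a nonempty factor $w_i$, and conversely every sequence of such runs glues back. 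The empty run at $p$ gives the $k=0$ term, with value $\one$. By properness, factors equal to $\eps$ contribute $\zero$, so restricting to nonempty factors matches the definition of $\F^*$. Distributivity again yields $\beh{\A^*}(w)=\beh{\A}^*(w)$.

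The main obstacle I expect is the bookkeeping for uniqueness of the decomposition. In (b) this means showing that an intermediate visit to $p$ always corresponds to a segment boundary. This holds because $q_I$ has no incoming transitions and $q_F$ has no outgoing ones in $\A$, so inside a single segment of a run of $\A$ the glued state cannot occur except at its endpoints. Getting this right is also what guarantees that the sums in the Cauchy product and the Kleene star are matched term by term.
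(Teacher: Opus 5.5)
Your proposal is correct and follows the paper's constructions. In (a) you glue the unique final state of $\A_F$ to the unique initial state of $\A_I$, and in (b) you merge $q_F$ into $q_I$; the normalization conditions (no transition enters $q_I$, none leaves $q_F$) give the unique decomposition of runs, and you make that decomposition more explicit than the paper does. One small slip: in (b) the case $q_I = q_F$ is not a ``trivial case'' but an excluded one, since then $\beh{\A}(\eps)=\one$ and $\beh{\A}^*$ is undefined. Properness, which you and the paper both implicitly assume, already forces $q_I \neq q_F$.
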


\begin{proof}
We use standard constructions for weighted automata (see, e.g.,~\cite{DKV09}), adapted to $\FF$-expressions.

\begin{enumerate}[label=(\alph*)]
\item 
Identify the final state of $\A_F$ with the initial state of $\A_I$ and take the union of the remaining components. 
Then every run of the resulting automaton is obtained by concatenating a run of $\A_F$ with a run of $\A_I$. 
The weight of such a run is the product of the corresponding run weights, since the connecting state carries no additional weight. 
Hence, the behavior of the constructed automaton coincides with the Cauchy product.

\item 
Assume that $\A$ is completely normalized, i.e., it has a unique initial state and a unique final state. 
Identify the final state with the initial state, and redirect all transitions entering the final state to the initial state, thereby creating loops. 
Then every run corresponds to a finite concatenation of runs of $\A$. 
The weight of such a run is the product of the weights of the corresponding subruns. 
Hence, the behavior of the constructed automaton coincides with the Kleene star of $\beh{\A}$. \qedhere
\qedhere
\end{enumerate}
\end{proof}

As a corollary from Lemmas~\ref{LEMMA:normalize_wffa} and \ref{LEMMA:cauchy_product_kleene_star_normalized_wffa}, we obtain:

\begin{corollary}
\label{COR:cauchy_product_kleene_star_full_wffa}
\begin{enumerate}[label=(\alph*)]
\item For all $\A_1,\A_2 \in \WFFA_{\Sigma,\FF}$, there exists $\A' \in \WFFA_{\Sigma,\FF}$ such that 
$\beh{\A'}=\beh{\A_1}\cdot\beh{\A_2}$.
\item For all $\A \in \WFFA_{\Sigma, \FF}$ such that $\beh{\A}$ is proper, there exists $\A^* \in \WFFA_{\Sigma, \FF}$ such that 
$\beh{\A^*}=\beh{\A}^*$.
\end{enumerate}
\end{corollary}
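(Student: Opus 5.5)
The plan is to reduce both parts of Corollary~\ref{COR:cauchy_product_kleene_star_full_wffa} to Lemma~\ref{LEMMA:cauchy_product_kleene_star_normalized_wffa} by first normalizing the given automata via Lemma~\ref{LEMMA:normalize_wffa}.

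For part (a), we apply Lemma~\ref{LEMMA:normalize_wffa}(b) to $\A_1$. This yields a finally normalized $\A_F$ with $\beh{\A_F}=\beh{\A_1}$. Next we apply Lemma~\ref{LEMMA:normalize_wffa}(a) to $\A_2$, obtaining an initially normalized $\A_I$ with $\beh{\A_I}=\beh{\A_2}$. These two normalizations are chosen independently, so neither behavior needs to be proper. Lemma~\ref{LEMMA:cauchy_product_kleene_star_normalized_wffa}(a) then gives $\A'=\A_{\mathrm{Cauchy}}$ with
\[
\beh{\A'}=\beh{\A_F}\cdot\beh{\A_I}=\beh{\A_1}\cdot\beh{\A_2}.
\]
The Cauchy product is defined purely semantically (Definition~\ref{DEF:qfl_operations}(c)), so replacing each factor by an equivalent automaton does not change it.

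For part (b), assume $\beh{\A}$ is proper. Lemma~\ref{LEMMA:normalize_wffa}(c) then provides a completely normalized $\A_{I,F}$ with $\beh{\A_{I,F}}=\beh{\A}$. In particular $\beh{\A_{I,F}}$ is still proper, so its Kleene star is well defined. Lemma~\ref{LEMMA:cauchy_product_kleene_star_normalized_wffa}(b) gives an automaton $\A^*$ with
\[
\beh{\A^*}=\beh{\A_{I,F}}^*=\beh{\A}^*.
\]
Again, the star depends only on the behavior.

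The only point needing care is part (b). Complete normalization requires properness: a completely normalized automaton has distinct initial and final states and no $\eps$-run, so it must assign $\zero$ to $\eps$. The hypothesis of Corollary~\ref{COR:cauchy_product_kleene_star_full_wffa}(b) is exactly what Lemma~\ref{LEMMA:normalize_wffa}(c) needs, so this is where the assumption enters. It is also the reason part (a) uses only one-sided normalizations, which exist without any properness assumption.
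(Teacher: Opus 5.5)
Your proof is correct and is exactly the paper's argument: the paper obtains this corollary by finally normalizing $\A_1$ and initially normalizing $\A_2$ via Lemma~\ref{LEMMA:normalize_wffa}(a),(b), or by completely normalizing $\A$ via Lemma~\ref{LEMMA:normalize_wffa}(c) using properness, and then applying Lemma~\ref{LEMMA:cauchy_product_kleene_star_normalized_wffa}. One small correction to your closing side remark, which does not affect the proof: Definition~\ref{DEF:normalized_wffa} does not itself force $q_I \neq q_F$ (the one-state automaton with $q_I = q_F$ and no transitions is completely normalized and recognizes $\beh{\one_{\eps}}$); rather, properness forces $q_I \neq q_F$, and that is what the star construction uses.
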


\begin{proof}[Proof of Theorem~\ref{THM:closure_wffa}]
The theorem follows by combining Lemmas~\ref{LEMMA:closure_sum_wffa} and \ref{LEMMA:closure_hadamard_wffa} with Corollary~\ref{COR:cauchy_product_kleene_star_full_wffa}. 
\end{proof}

Recall that $\WFFA_{\Sigma, \FF}^{\tau}$ denotes the class of purely transition-weighted WFFAs. As an immediate corollary of Lemma~\ref{LEMMA:normalize_wffa}, we obtain:
\begin{corollary}
$\beh{\WFFA_{\Sigma, \FF}} = \beh{\WFFA_{\Sigma, \FF}^{\tau}}$.
\end{corollary}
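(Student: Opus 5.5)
The inclusion $\beh{\WFFA_{\Sigma, \FF}^{\tau}} \subseteq \beh{\WFFA_{\Sigma, \FF}}$ is trivial, since every purely transition-weighted WFFA is a WFFA. For the converse, let $\A \in \WFFA_{\Sigma,\FF}$. The plan is to apply Lemma~\ref{LEMMA:normalize_wffa} in two stages: first part (a), to make the initial side trivial, and then part (b), to make the final side trivial. The resulting automaton will have a single initial state of weight $\one$ and a single final state of weight $\one$, so it is purely transition-weighted.

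\emph{Step 1.} Apply Lemma~\ref{LEMMA:normalize_wffa}(a) to $\A$. This yields an initially normalized $\A_I$ with $\beh{\A_I}=\beh{\A}$, unique initial state $q_I$ and $\wt_I(q_I)=\one$. Its final weights may still be arbitrary. In particular, $q_I$ may be final with some weight $\beh{\A}(\eps)$, since the construction adjusts the final weight of $q_I$ to preserve the value on $\eps$.

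\emph{Step 2.} Apply the dual construction, Lemma~\ref{LEMMA:normalize_wffa}(b), to $\A_I$. This produces a fresh final state $q_F$ with weight $\one$. Every transition $(p,a,q)$ with $q\in F$ gets a companion transition $(p,a,q_F)$ carrying the $\FF$-expression $\wt_T(p,a,q)\otimes \wt_F(q)$. Parallel contributions are merged with $\oplus$, using closure of $\EE_\FF$ under $\oplus$ and $\otimes$.

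\emph{Main obstacle: the empty word.} A single-final-state automaton with all weights $\one$ whose initial and final states differ assigns $\zero$ to $\eps$, so Step 2 as stated only works when $\beh{\A}$ is proper. I would handle this by making the normalization \emph{not} require $q_F \neq q_I$ when $\beh{\A}(\eps)\ne\zero$. Concretely, I keep the initial state $q_I$ of weight $\one$ and allow several final states, all of weight $\one$; only the initial weight is forced, not uniqueness.

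The construction is as follows. For each final state $q$ of $\A_I$ other than $q_I$, redistribute its final weight into a copy $q_F$ as in Step 2, which removes every final state except possibly $q_I$. The state $q_I$ itself remains final only with weight $s=\beh{\A}(\eps)$. If $s\in\{\zero,\one\}$ we are done: drop $q_I$ from $F$ if $s=\zero$, and keep it with weight $\one$ if $s=\one$.

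If $s\notin\{\zero,\one\}$, note that no transitions enter $q_I$. I would add a second fresh initial state $q_I'$ whose outgoing transitions are those of $q_I$ with expressions unchanged. I then make $q_I'$ final with weight $\one$, keep $q_I$ initial with all its outgoing transitions, and remove $q_I$ from $F$. It remains to check that the value on $\eps$ comes out as $s$ rather than $\one$, and this adjustment is exactly where care is needed.

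The correct choice of what to place at $q_I'$ depends on whether the definition of runs forces weights from $I$; one must decide this explicitly and verify, run by run, that the behavior on $\eps$ and on nonempty words is unchanged. Verifying nonempty words is routine, as in the proof of Lemma~\ref{LEMMA:normalize_wffa}.
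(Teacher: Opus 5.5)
Your reduction of the problem to the value on $\eps$ is correct, and Steps~1--2 together with the case $s=\beh{\A}(\eps)\in\{\zero,\one\}$ go through. The remaining case, however, cannot be completed by choosing the right weights at $q_I'$. The definition of runs does charge $\wt_I(q_0)$ and $\wt_F(q_n)$. Hence in any $\A'\in\WFFA^{\tau}_{\Sigma,\FF}$ with initial states $I'$ and final states $F'$, the only runs on $\eps$ are the length-$0$ runs at states $q\in I'\cap F'$, each of weight $\one\otimes\one=\one$. It follows that $\beh{\A'}(\eps)=\bigoplus_{q\in I'\cap F'}\one$, independently of the transitions. In particular, your fresh state $q_I'$, being initial and final with weight $\one$, contributes exactly $\one$ on $\eps$, never $s$.

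Concretely, over $\FF_{\arc,\times}$ every purely transition-weighted WFFA has $\eps$-value $0.0$ or $-\infty$. By contrast, the one-state WFFA with $I=F=\{p\}$, no transitions, $\wt_I(p)=5$ and $\wt_F(p)=0.0$ realizes $\beh{5_{\eps}}$ (cf.\ Lemma~\ref{LEMMA:primitive_regexp_recognizable}(b)) and has $\eps$-value $5$. So the equality as literally stated is false, and no argument can close your open case.

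The paper calls the corollary immediate from Lemma~\ref{LEMMA:normalize_wffa}. But only part~(c) of that lemma produces a purely transition-weighted automaton, and it assumes $\beh{\A}$ is proper. Parts~(a) and~(b) leave a non-$\one$ final (resp.\ initial) weight on the state carrying the $\eps$-value, exactly as you observed.

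What is true, and what your Steps~1--2 actually prove, is a restricted statement. An $\F\in\beh{\WFFA_{\Sigma,\FF}}$ lies in $\beh{\WFFA^{\tau}_{\Sigma,\FF}}$ if and only if $\F(\eps)$ is a finite $\oplus$-sum of copies of $\one$, including the empty sum $\zero$. For the converse direction, run your construction with $q_I$ removed from the final states, then add that many isolated states that are both initial and final with weight $\one$. For $\SS_{\arc}$ this condition means $\F(\eps)\in\{\zero,\one\}$, and it covers every proper $\F$. This matches the hypothesis $\F_1(\eps),\F_2(\eps)\in\{\zero,\one\}$ that the paper itself imposes in Corollary~\ref{COR:closure_properties_purely_transition_weighted}(a).
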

Intuitively, this means that any WFFA can be transformed into an equivalent purely transition-weighted form without changing its behavior.

\begin{remark}
\label{REM:computational_complexity_of_expressions}
While the closure constructions for WFFAs follow the same principles as for weighted automata over semirings, they may lead to a significant increase in the size of $\FF$-expressions.
In particular, normalization (Lemma~\ref{LEMMA:normalize_wffa}) can yield expressions of size $\O((|\A|_Q)^2 \cdot |\A|_{\EE_{\FF}})$, which may be costly to evaluate.
\end{remark}

\section{Restricted WFFAs}
\label{SEC:restricted_wffa}

As discussed in Remark~\ref{REM:computational_complexity_of_expressions}, 
allowing unrestricted $\FF$-expressions in WFFAs can lead to a significant increase in the size and evaluation cost of transition-weight expressions. 
This becomes particularly problematic when the automaton's behavior must be evaluated repeatedly or when extremal values are computed.

These observations motivate the study of more disciplined subclasses of $\FF$-expressions that control this growth. 
In this section, we identify several such subclasses that remain computationally tractable while still preserving useful closure properties of WFFAs.

Throughout all of this section, let $\Sigma$ be an alphabet and $\FF = (\SS, \DD, \bb)$ a finance semiring such that $\SS = (S, \oplus, \otimes, \zero, \one)$ is a commutative semiring. Let $E \subseteq \EE_{\FF}$ be an arbitrary collection of $\FF$-expressions. We call an automaton
$
{\A = (Q, I, T, F, \wt_I, \wt_T, \wt_F) \in \WFFA_{\Sigma, \FF}}
$
{\em $E$-weighted} if $\wt_T(T) \subseteq E$, i.e., all transition weights are drawn from the restricted class $E$. 
Intuitively, this means that the quantitative behavior of the automaton is built using only a prescribed set of simple or ``admissible'' expressions, typically chosen to control the size and computational complexity of the resulting automata.

Let $\WFFA_{\Sigma, \FF}(E) \subseteq \WFFA_{\Sigma, \FF}$ denote the collection of all $E$-weighted WFFAs over $\Sigma$ and $\FF$. 
Let $\WFFA_{\Sigma, \FF}^{\tau}(E) = \WFFA^{\tau}_{\Sigma, \FF} \cap \WFFA_{\Sigma, \FF}(E)$ denote the collection of all $E$-weighted WFFAs over $\Sigma$ and $\FF$, which are purely transition-weighted.

\subsection{Closure Properties of Restricted WFFAs}

In this subsection, we generalize Theorem~\ref{THM:closure_wffa} to $E$-weighted WFFA.

We say that $E$ is {\em $\otimes$-closed} if, for every $e_1, e_2 \in E$, $\beh{e_1 \otimes e_2} \in \beh{E}$. We say that $E$ is {\em $(S, \otimes)$-closed} if, for every $s \in S$ and $e \in E$, $\beh{s \otimes e} \in \beh{E}$.

\begin{example}
\label{EX:closure_expressions}
\begin{enumerate}[label=(\alph*)]
\item $\EE_{\FF}$ and $S$ are $\otimes$- and $(S, \otimes)$-closed.
\item If $\FF = \FF_{\arc, \times}$, then $\bind{S}_{\FF}$ is $\otimes$-closed but not $(S, \otimes)$-closed. This is due to the fact that expressions in $\bind{\SS}_{\FF}$ can define only functions $d \mapsto s \cdot d$ for some constant $s \in S$. On the other side, expressions of the form $s \otimes \bind{s'}$ define functions of the form $d \mapsto s \cdot d + s'$.
\item
\label{EX:closure_expressions_primitive}
For $\FF = \FF_{\arc, \times}$, the collection of primitive expressions $\PP_{\FF}$ is neither $\otimes$- nor $(S, \otimes)$-closed. The reason for this is that primitive expressions define either constant functions $d \mapsto s$ or functions of the form $d \mapsto s \cdot d$ whereas expressions $[\pi_1 \otimes \pi_2]$ (with $\pi_1, \pi_2 \in \PP_{\FF}$) and $[s \otimes \pi]$ (with $s \in S$ and $\pi \in \PP_{\FF}$) can define, e.g., the function $d \mapsto d + 1$.
\item Recall that $\AA_{\FF}$ denotes the collection of affine $\FF$-expressions. If $\FF = \FF_{\arc, \times}$, then $\AA_{\FF}$ is $\otimes$- and $(S, \otimes)$-closed.
\end{enumerate}
\end{example}

\begin{theorem}
\label{THM:closure_properties_e_weighted_wffa}
Let $E \subseteq \EE_{\FF}$ and $\F_1, \F_2 \in \beh{\WFFA_{\Sigma, \FF}(E)}$. Then, the following hold.
\begin{enumerate}[label=(\alph*)]
\item $(\F_1 \oplus \F_2) \in \beh{\WFFA_{\Sigma, \FF}(E)}$.
\item
If $E$ is $\otimes$-closed, then $(\F_1 \otimes \F_2) \in \beh{\WFFA_{\Sigma, \FF}(E)}$.
\item
$(\F_1 \cdot \F_2) \in \beh{\WFFA_{\Sigma, \FF}(E)}$.
\item 
If $E$ is $(S, \otimes)$-closed and $\F \in \beh{\WFFA_{\Sigma, \FF}(E)}$ is proper, then its Kleene star $\F^*$ belongs to $\beh{\WFFA_{\Sigma, \FF}(E)}$.
\end{enumerate}
\end{theorem}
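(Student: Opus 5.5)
We prove each item by an explicit automaton construction. The guiding principle: constants of $S$ may only be placed in initial and final weights, never folded into transition expressions, unless the corresponding closure hypothesis allows it. This is why we cannot simply invoke Lemma~\ref{LEMMA:normalize_wffa}. Its proof multiplies initial and final weights into transition expressions and combines expressions with $\oplus$, and neither operation preserves $E$-weightedness. Throughout, let $\A_j=(Q_j,I_j,T_j,F_j,\wt_{I_j},\wt_{T_j},\wt_{F_j})\in\WFFA_{\Sigma,\FF}(E)$ recognize $\F_j$, with $Q_1\cap Q_2=\emptyset$.

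For (a), the disjoint union construction of Lemma~\ref{LEMMA:closure_sum_wffa} creates no new transition weights, so it stays inside $\WFFA_{\Sigma,\FF}(E)$.

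For (b), we use the synchronous product of Lemma~\ref{LEMMA:closure_hadamard_wffa}. The only change concerns the transition weight $\wt_{T_1}(t_1)\otimes\wt_{T_2}(t_2)$. Since $E$ is $\otimes$-closed, there is some $e\in E$ with $\beh{e}=\beh{\wt_{T_1}(t_1)\otimes\wt_{T_2}(t_2)}$, and we use $e$ instead. Run weights depend only on the semantics of transition expressions, so the behavior is unchanged.

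For (c), no closure hypothesis is available, so we carry the connecting constant along in the state. First we split off the case $w_2=\eps$. Writing $\F_2'$ for $\F_2$ restricted to nonempty words, we have $\F_1\cdot\F_2=(\F_1\cdot\F_2')\oplus(\F_2(\eps)\otimes\F_1)$. The second summand is obtained from $\A_1$ by multiplying all initial weights by $\F_2(\eps)=\bigoplus_{i\in I_2\cap F_2}\wt_{I_2}(i)\otimes\wt_{F_2}(i)$, which is a constant in $S$ and is allowed in initial weights. We then conclude by (a).

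For $\F_1\cdot\F_2'$ we build the following automaton.
\begin{itemize}
\item States: $(Q_1\times F_1\times I_2)\cup Q_2$.
\item Transitions inside each copy: for every $(p,a,p')\in T_1$, $f\in F_1$, $i\in I_2$, the transition $((p,f,i),a,(p',f,i))$ with weight $\wt_{T_1}(p,a,p')$.
\item Switching transitions: for every $(p,a,f)\in T_1$ with $f\in F_1$ and every $i\in I_2$, the transition $((p,f,i),a,i)$ with weight $\wt_{T_1}(p,a,f)$. Since $f$ and $i$ are fixed by the source state, distinct original transitions never collapse to the same triple.
\item Original transitions: all of $T_2$ with their weights.
\item Initial states and weights: $(q,f,i)$ with weight $\wt_{I_1}(q)\otimes\wt_{F_1}(f)\otimes\wt_{I_2}(i)$ for $q\in I_1$. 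In addition, each $i\in I_2$ is initial with weight $\bigl(\bigoplus_{q\in I_1\cap F_1}\wt_{I_1}(q)\otimes\wt_{F_1}(q)\bigr)\otimes\wt_{I_2}(i)$, which covers the case $w_1=\eps$.
\item Final states and weights: $F_2$ with weights $\wt_{F_2}$.
\end{itemize}
Every transition weight is copied from $T_1$ or $T_2$, so the automaton is $E$-weighted. A routine bijection between valid runs and decompositions $w=w_1w_2$ with $w_2\neq\eps$, together with commutativity of $\SS$, gives the behavior $\F_1\cdot\F_2'$.

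For (d), the step back into the loop cannot be deferred to a state, because there may be unboundedly many iterations. Here we genuinely need $(S,\otimes)$-closure, which is the main obstacle. Since $\F$ is proper, the $\eps$-part of a run of $\A$ contributes nothing. We take the states $Q\cup(Q\times F)\cup\{q_\eps\}$.
\begin{itemize}
\item $q_\eps$ is initial and final with weight $\one$ and carries no transitions; it contributes $\F^*(\eps)=\one$.
\item The copy $Q$ keeps $I$, $T$, $F$ with their original weights.
\item Restart transitions: for every $(p,a,f)\in T$ with $f\in F$ and every $i\in I$, we add a transition from $p$ to the fresh state $(i,f)$. Its weight is an $e\in E$ with $\beh{e}=\beh{\wt_T(p,a,f)\otimes c}$, where $c=\wt_F(f)\otimes\wt_I(i)$; such an $e$ exists by $(S,\otimes)$-closure.
\item Each state $(i,f)$ copies all outgoing transitions and the final weight of $i$, has no incoming transitions other than the restart ones, and is not initial.
\end{itemize}
Indexing the restart target by $f$ is essential. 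Without it, several pairs $(f,i)$ would yield the same triple $(p,a,i)$, forcing an $\oplus$ of expressions, which is not available in $E$. A valid run of the new automaton then decomposes uniquely at the restart transitions into valid runs of $\A$ on $w_1,\dots,w_k$, each $w_j$ nonempty. Its weight is the product of their weights by commutativity, so the behavior is $\F^*$. The points to verify carefully are this bijection and the absence of parallel transitions.
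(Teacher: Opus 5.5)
\textbf{Gap in part (c): double counting.} The automaton you build for $\F_1\cdot\F_2'$ does not compute $\F_1\cdot\F_2'$. The switching transition $((p,f,i),a,i)$ enters the state $i\in Q_2$, and you keep $F_2$ with $\wt_{F_2}$ as the final states. So whenever $i\in I_2\cap F_2$, the run may stop right after switching. That accepted run has weight $\wt_{I_1}(q)\otimes\wt_{F_1}(f)\otimes\wt_{I_2}(i)\otimes(\text{path weight})\otimes\wt_{F_2}(i)$, which is exactly the decomposition $w_1=w$, $w_2=\eps$. Likewise, for $w=\eps$ the length-zero runs at the extra initial states $i\in I_2\cap F_2$ contribute $\F_1(\eps)\otimes\F_2(\eps)$. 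Hence your automaton already computes the full $\F_1\cdot\F_2$. Adding $\F_2(\eps)\otimes\F_1$ via (a) then gives $(\F_1\cdot\F_2)\oplus(\F_2(\eps)\otimes\F_1)$. This equals $\F_1\cdot\F_2$ only if $\oplus$ is idempotent (as in $\SS_{\arc}$) or $\F_2(\eps)=\zero$.

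The paper allows $(\RR,+,\cdot,0.0,1.0)$, and there the construction fails. Take $\A_1$ with the single transition $(p_0,a,p_1)$, $I_1=\{p_0\}$, $F_1=\{p_1\}$, and take $\A_2$ to be a single state $r$ that is initial and final with weight $1.0$ and has no transitions. Then $\F_1\cdot\F_2=\F_1$. Your construction already returns $\F_1$ via the run $(p_0,p_1,r)\xrightarrow{a}r$, so the final automaton returns $2\F_1$. The repair is simply to delete the splitting step. The construction as written, including the extra initial weights on $I_2$ for $w_1=\eps$, is an $E$-weighted WFFA for $\F_1\cdot\F_2$. Runs starting in $Q_2$ cover $w_1=\eps$; otherwise the position of the unique switching transition determines $w_1\neq\eps$.

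\textbf{Comparison with the paper's route.} With that fix, (a) and (b) coincide with the paper's proof, while (c) and (d) take a more direct route. The paper obtains (c) in two steps. First it applies the normalization of Lemma~\ref{LEMMA:init_and_final_normalization_for_arbitrary_expressions}, which remembers the initial state and moves $\wt_I$ into the final weights. Then it identifies the final state of one automaton with the initial state of the other. Your single construction guesses $(f,i)$ and puts $\wt_{F_1}(f)\otimes\wt_{I_2}(i)$ into the initial weight, which is the same deferral idea done in one step.

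For (d), the paper runs the generic star construction and uses $(S,\otimes)$-closure to rewrite the new weights as elements of $\clp{E}$. It then removes the $\oplus$-sums with the state-splitting Lemma~\ref{LEMMA:plus_closed_expressiveness}. Your restart states $(i,f)$ perform that splitting up front, so no sum ever arises; this is more self-contained. Your (d) is sound, but two points should be made explicit:
\begin{itemize}
\item The copies at $(i,f)$ must include the restart transitions leaving $i$; otherwise one-letter middle blocks are lost.
\item Runs ending at some $(i,f)$ with $i\in I\cap F$ need not individually have weight $\zero$ in a non-idempotent semiring. They cancel only in aggregate: for a fixed prefix, their sum is a multiple of $\bigoplus_{i\in I\cap F}\wt_I(i)\otimes\wt_F(i)=\F(\eps)=\zero$. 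Alternatively, simply do not make the states $(i,f)$ final.
\end{itemize}
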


\begin{remark}
Note that the constructions for $\oplus$ and the Cauchy product $\cdot$ do not require any closure assumptions on $E$, in contrast to $\otimes$ and the star operation.
\end{remark}

In the rest of this subsection, we provide a proof of Theorem~\ref{THM:closure_properties_e_weighted_wffa}.

\begin{lemma}
\label{LEMMA:closure_cauchy_for_arbitrary_expressions}
Let $E \subseteq \EE_{\FF}$ be any collection of $\FF$-expressions and ${\F_1, \F_2 \in \beh{\WFFA_{\Sigma, \FF}(E)}}$. Then their Cauchy product $\F_1 \cdot \F_2$ belongs to $\beh{\WFFA_{\Sigma, \FF}(E)}$.
\end{lemma}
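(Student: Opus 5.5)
The idea is to avoid the normalization of Lemma~\ref{LEMMA:normalize_wffa}. That lemma pushes initial and final constants into transition expressions (forming $s \otimes e$), which may leave $E$. Instead, all constants arising at the junction of the two automata will be kept in the initial and final weight functions, which are unrestricted elements of $S$. The state space will remember finitely much information about where the junction happened. Transition weights of the new automaton will be copied verbatim from $T_1$ or $T_2$, so the result is again $E$-weighted.

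Concretely, let $\A_j = (Q_j, I_j, T_j, F_j, \wt_{I_j}, \wt_{T_j}, \wt_{F_j})$ be $E$-weighted with $\beh{\A_j} = \F_j$. I take the state set
\[
Q = Q_1 \,\uplus\, \bigl(Q_2 \times F_1 \times I_2 \times \{0,1\}\bigr).
\]
A state $(q,f,i,b)$ records that $\A_1$ stopped in $f$, that $\A_2$ started in $i$, and whether $\A_2$ has already read a letter ($b=1$) or not ($b=0$). The components are as follows.
\begin{itemize}
\item \emph{Transitions.} I keep all of $T_1$ with their weights. For every $f \in F_1$ and $(i,a,q) \in T_2$, I add a transition $(f, a, (q,f,i,1))$ with weight $\wt_{T_2}(i,a,q)$. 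For the case where $\A_1$ reads nothing, I also add $((i,f,i,0), a, (q,f,i,1))$ with the same weight. Finally, for every $(q,a,q') \in T_2$, I add $((q,f,i,1), a, (q',f,i,1))$ with weight $\wt_{T_2}(q,a,q')$.
\item \emph{Initial states.} These are $I_1$ with weights $\wt_{I_1}$, together with the states $(i,f,i,0)$ for $f \in I_1 \cap F_1$ and $i \in I_2$, with weight $\wt_{I_1}(f)$.
\item \emph{Final states.} These are the $(q,f,i,1)$ with $q \in F_2$, with weight $\wt_{F_1}(f) \otimes \wt_{I_2}(i) \otimes \wt_{F_2}(q)$. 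In addition, each $f \in F_1$ is final with weight $\wt_{F_1}(f) \otimes \bigoplus_{i \in I_2 \cap F_2} \wt_{I_2}(i) \otimes \wt_{F_2}(i)$; this handles $w_2 = \eps$. If this sum is $\zero$ the state simply contributes nothing.
\end{itemize}
Every transition weight lies in $E$, so the automaton belongs to $\WFFA_{\Sigma,\FF}(E)$.

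For correctness I will set up a bijection on runs. Fix a finance word $w$ and a split $w = w_1 w_2$.
\begin{itemize}
\item If $w_2 \neq \eps$, runs of the new automaton on $w$ whose first second-phase letter is the first letter of $w_2$ correspond bijectively to triples $(\varrho_1, \varrho_2, i)$. Here $\varrho_1$ is a run of $\A_1$ on $w_1$ ending in $f$ (for $w_1 = \eps$, the trivial run at $f \in I_1 \cap F_1$), and $\varrho_2$ is a run of $\A_2$ on $w_2$ starting in $i$.
\item The weight of such a run is $\wt_{\A_1}(\varrho_1,\cdot) \otimes \wt_{\A_2}(\varrho_2,\cdot)$, by commutativity of $\otimes$, since the constants $\wt_{F_1}(f)$ and $\wt_{I_2}(i)$ are just moved to the final weight.
\item If $w_2 = \eps$, the runs ending in a state of $Q_1$ are exactly the runs of $\A_1$ on $w$. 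Their final weight is that of $\varrho_1$ multiplied by $\F_2(\eps)$.
\end{itemize}
Summing over all runs and using distributivity then yields $(\F_1 \cdot \F_2)(w)$. The same run is valid in the new automaton iff both component runs are valid, since the evaluated weights coincide.

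The main obstacle is the bookkeeping of the degenerate splits $w_1 = \eps$ and $w_2 = \eps$, in particular avoiding double counting when $w_1 = w_2 = \eps$. The flag $b$ is exactly what prevents this: states with $b=0$ are never final, so the empty split is counted only through the final weights of states in $F_1$. I expect the rest of the verification to be routine.
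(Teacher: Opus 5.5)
\emph{There is a genuine gap: your automaton double counts the splits with $w_1=\eps$ and $w_2\neq\eps$. The states with $b=0$ cause this rather than prevent it.}

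Fix $f\in I_1\cap F_1$, $i\in I_2$ and $(i,a,q)\in T_2$. In your automaton, $f$ is already initial with weight $\wt_{I_1}(f)$, because all of $I_1$ is. You also already added the bridging transition $(f,a,(q,f,i,1))$ for every $f\in F_1$. Hence a run $f\xrightarrow{a}(q,f,i,1)\to\cdots$ already represents the triple consisting of the trivial run of $\A_1$ at $f$, a run $\varrho_2$ of $\A_2$ from $i$, and $i$. The run $(i,f,i,0)\xrightarrow{a}(q,f,i,1)\to\cdots$ has the same initial weight, the same transition weights and the same final state, so it represents the same triple a second time.

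Your claimed bijection is therefore two-to-one on these triples. For $w\neq\eps$ your automaton computes $(\F_1\cdot\F_2)(w)\oplus\bigl(\F_1(\eps)\otimes\F_2(w)\bigr)$. In an idempotent semiring such as $\SS_{\arc}$ the extra summand is absorbed, but the lemma is stated for an arbitrary commutative semiring. Concretely, work over $(\RR,+,\cdot,0,1)$ and let $\A_1$ be a single state that is initial and final with weight $1$ and has no transitions, so $\F_1\cdot\F_2=\F_2$. Your automaton then yields $2\,\F_2(w)$ whenever $\F_2(w)\neq 0$.

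\emph{The repair.} Delete the $b=0$ states, and with them the flag. Take $Q=Q_1\uplus(Q_2\times F_1\times I_2)$, with:
\begin{itemize}
\item the transitions of $T_1$, the bridging transitions out of $F_1$, and the copies of $T_2$ inside each layer $Q_2\times\{(f,i)\}$;
\item initial states $I_1$ only.
\end{itemize}
Second-phase states are then never initial, so every run ending in one has read at least one letter after the junction. The split $w_2=\eps$ is accounted for only through the final weights $\wt_{F_1}(f)\otimes\F_2(\eps)$ on $F_1$. The split $w_1=\eps$ arises exactly once, from runs starting in $f\in I_1\cap F_1$.

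With this change your argument goes through. Sum over all runs, since invalid runs contribute $\zero$. You do not need the claim that validity transfers in both directions, which in any case fails in semirings with zero divisors.

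\emph{Comparison with the paper.} The repaired construction is a genuinely different route. The paper first applies Lemma~\ref{LEMMA:init_and_final_normalization_for_arbitrary_expressions}: final normalization of $\A_1$ guesses the final state and moves $\wt_{F_1}$ into the initial weights, and initial normalization of $\A_2$ records the initial state and moves $\wt_{I_2}$ into the final weights. It then merges the two distinguished states as in Lemma~\ref{LEMMA:cauchy_product_kleene_star_normalized_wffa}(a). That modular route reuses lemmas needed elsewhere and yields $\O(|Q_1|\,|F_1|+|I_2|\,|Q_2|)$ states. Your one-shot construction avoids normalization altogether, but it carries the pair $(f,i)$ through the whole second phase, so it has $|Q_1|+|Q_2|\,|F_1|\,|I_2|$ states.
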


The proof of this theorem follows from Lemmas~\ref{LEMMA:cauchy_product_kleene_star_normalized_wffa} and \ref{LEMMA:init_and_final_normalization_for_arbitrary_expressions}.

\begin{lemma}
\label{LEMMA:init_and_final_normalization_for_arbitrary_expressions}
Let $E \subseteq \EE_{\FF}$ be any collection of $\FF$-expressions and $\A \in \WFFA_{\Sigma, \FF}(E)$.
\begin{enumerate}[label=(\alph*)]
\item There exists an initially normalized $\A_{I} \in \WFFA_{\Sigma, \FF}(E)$ such that $\beh{\A_{I}} = \beh{\A}$.
\item There exists a finally normalized $\A_{F} \in \WFFA_{\Sigma, \FF}(E)$ such that $\beh{\A_{F}} = \beh{\A}$.
\end{enumerate}
\end{lemma}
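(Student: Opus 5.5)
The idea is to avoid pushing initial or final weights into transition expressions, since $E$ need not be closed under any operation. Because $\SS$ is commutative, an initial weight can instead be postponed to the end of a run and stored in a final weight, and dually. Every new transition will carry an expression that already occurs in $\A$, so the new automaton stays in $\WFFA_{\Sigma,\FF}(E)$. To remember which initial (resp. final) state a run started from (resp. will end in), I would make one copy of $\A$ for each such state.

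For (a), let $\A = (Q, I, T, F, \wt_I, \wt_T, \wt_F)$ and let $q_I$ be a fresh state. Put $Q' = \{q_I\} \cup (I \times Q)$ and $I' = \{q_I\}$ with $\wt_{I'}(q_I) = \one$. The transitions are as follows.
\begin{itemize}
\item For each $(p,a,q) \in T$ with $p \in I$, add $(q_I, a, (p,q))$ with weight $\wt_T(p,a,q)$.
\item For each $p \in I$ and $(q,a,q') \in T$, add $((p,q), a, (p,q'))$ with weight $\wt_T(q,a,q')$.
\end{itemize}
Final states are all $(p,f)$ with $p \in I$, $f \in F$, with weight $\wt_I(p) \otimes \wt_F(f)$. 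In addition, $q_I$ is final with weight $\bigoplus_{p \in I \cap F} \wt_I(p) \otimes \wt_F(p)$, provided $I \cap F \neq \emptyset$; this accounts for the empty word. No transition enters $q_I$, so $\A_I$ is initially normalized, and all transition weights lie in $\wt_T(T) \subseteq E$.

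For correctness, I would set up a bijection between runs. Each run of $\A$ on a nonempty label, $p = q_0 \to q_1 \to \dots \to q_n$, corresponds to the run $q_I \to (p,q_1) \to \dots \to (p,q_n)$ of $\A_I$. These two runs use the same transition expressions in the same order, so for every run context they have the same validity. Their weights agree by commutativity: $\wt_I(p)$ has simply been moved to the final factor. For the empty word, the single run of $\A_I$ (staying in $q_I$) has weight equal to the sum over $I \cap F$, which matches $\beh{\A}(\eps)$ by distributivity. Summing over valid runs then gives $\beh{\A_I} = \beh{\A}$.

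Part (b) is dual. Take copies $F \times Q$ and a fresh final state $q_F$ with $\wt_{F'}(q_F) = \one$ and no outgoing transitions. The transitions are as follows.
\begin{itemize}
\item For each $f \in F$ and $(q,a,q') \in T$, add $((f,q),a,(f,q'))$ with weight $\wt_T(q,a,q')$.
\item For each $(q,a,f) \in T$ with $f \in F$, add $((f,q), a, q_F)$ with weight $\wt_T(q,a,f)$.
\end{itemize}
Initial states are the $(f,i)$ with $i \in I$, $f \in F$, with weight $\wt_I(i) \otimes \wt_F(f)$. For the empty word, $q_F$ is also initial with weight $\bigoplus_{p \in I \cap F} \wt_I(p) \otimes \wt_F(p)$. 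The run correspondence is the same as in (a).

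The main obstacle is the bookkeeping for the empty word and for runs of length zero. The shortcut used in Lemma~\ref{LEMMA:normalize_wffa}, merging weights into expressions, is unavailable here because $E$ may not be closed under $\otimes$ or $\oplus$. I would check carefully that the product-with-index construction yields a true bijection of runs for nonempty words, and that separating out the empty-word contribution does not break normalization. In particular, the extra initial weight on $q_F$ in (b) is harmless, because finally normalized only constrains $F$, the weight of $q_F$, and transitions leaving $q_F$.
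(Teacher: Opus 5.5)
Your proposal is correct and follows the paper's proof essentially verbatim. The paper uses the same state set $\{q_I\} \cup (I \times Q)$ with transitions copied per origin, moves $\wt_I(i)\otimes\wt_F(f)$ into the final weight of $(i,f)$, and makes $q_I$ final with weight $\beh{\A}(\eps)$, which equals your sum over $I \cap F$. For (b) the paper only says ``symmetric,'' and your explicit dual construction with $F \times Q$ and $q_F$ fills that in correctly.
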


\begin{proof}
Let $\A = (Q, I, T, F, \wt_I, \wt_T, \wt_F)$.

\begin{enumerate}[label=(\alph*)]
\item 
In contrast to Lemma~\ref{LEMMA:normalize_wffa}, we cannot absorb initial weights into transitions, as $E$ is not assumed to be $\oplus$- or $(S,\otimes)$-closed.
Instead, we record the initial state along each run and defer the contribution of $\wt_I$ to the final step.
Formally, we introduce a fresh state $q_I$ and extend states $q \in Q$ to pairs $(i,q)$ with $i \in I$, simulating each run of $\A$ while remembering its origin. 
Transitions are lifted componentwise, and final weights are defined by
\[
\wt_F(i,f) = \wt_I(i)\otimes \wt_F(f).
\]
for $i \in I$ and $f \in F$.
If necessary, the value on $\eps$ is handled by making $q_I$ final.
Then each run of $\A_{I}$ corresponds to a run of $\A$ with the same accumulated weight, using commutativity of $\SS$, hence $\beh{\A_{I}}=\beh{\A}$.

\item 
Final normalization is dual. \qedhere
\end{enumerate}
\end{proof}

\begin{remark}
Given an arbitrary collection $E \subseteq \EE_{\FF}$ and $\A_1, \A_2 \in \WFFA_{\Sigma, \FF}(E)$, using the normalization constructions of Lemma~\ref{LEMMA:init_and_final_normalization_for_arbitrary_expressions} we can construct an $E$-weighted WFFA for $\beh{\A_1} \cdot \beh{\A_2}$ with $\O((|\A_1|_Q)^2 + (|\A_2|_Q)^2)$ states.
\end{remark}

For any $E \subseteq \EE_{\FF}$, let the {\em $\oplus$-closure} of $E$ be defined as
\[
\clp{E} = \Bigl\{[e_1 \oplus {\dots} \oplus e_k]  \mid  k \in \NN_{\ge 1} \text{ and } e_i \in E \text{ for all } i \in [k]\Bigr\} \subseteq \EE_{\FF}.
\]

Here, $[e_1 \oplus {\dots} \oplus e_k]$ denotes an expression obtained by choosing some valid way of parenthesizing the summands $e_1, \dots, e_k$. While different parenthesizations yield syntactically distinct expressions, their evaluation in the semiring is independent of the grouping, due to the associativity of $\oplus$.

The following lemma shows that closing $E$ under $\oplus$ does not increase the expressive power of WFFAs. 
The key idea is that sums of expressions can be simulated by nondeterministic branching at the automaton level, as formalized in the proof below.

\begin{lemma}
\label{LEMMA:plus_closed_expressiveness}
Let $E \subseteq \EE_{\FF}$ be any collection of $\FF$-expressions. Then
$
\beh{\WFFA_{\Sigma, \FF}(\clp{E})} = \beh{\WFFA_{\Sigma, \FF}(E)}.
$
\end{lemma}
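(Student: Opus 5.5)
The inclusion $\beh{\WFFA_{\Sigma, \FF}(E)} \subseteq \beh{\WFFA_{\Sigma, \FF}(\clp{E})}$ is immediate. Every $e \in E$ is itself a member of $\clp{E}$ (take $k = 1$), so every $E$-weighted WFFA is already $\clp{E}$-weighted.

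For the converse, I start from $\A = (Q, I, T, F, \wt_I, \wt_T, \wt_F) \in \WFFA_{\Sigma, \FF}(\clp{E})$. For each transition $t \in T$ I fix a decomposition $\wt_T(t) = [e^t_1 \oplus \dots \oplus e^t_{k_t}]$ with $e^t_j \in E$. I then split each transition into parallel copies. Since $T \subseteq Q \times \Sigma \times Q$ is a set, parallel transitions between the same pair of states with the same label are not available directly. I therefore refine the target state. The new state set is $Q' = Q \cup \{(t, j) \mid t \in T, j \in [k_t]\}$. The original states keep their roles as initial states, with $\wt_I$ unchanged. The state $(t,j)$ with $t = (q,a,q')$ is a copy of $q'$: it inherits all outgoing transitions of $q'$ and its final status, with $\wt_F'((t,j)) = \wt_F(q')$ when $q' \in F$. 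For every transition $t = (q,a,q')$ and every copy $p$ of $q$ (either $q$ itself or some $(t',j')$ whose target is $q$), and for every $j \in [k_t]$, I add the transition $(p, a, (t,j))$ with weight $e^t_j \in E$. The original transitions are removed. Alternatively, it suffices to let the copies carry only the target information and use states $Q \cup T \times \mathbb{N}$; the bookkeeping is the same.

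The key step is to check that runs correspond correctly. A run $q_0 \xrightarrow{a_1} q_1 \cdots \xrightarrow{a_n} q_n$ of $\A$ together with a choice $(j_1, \dots, j_n)$, $j_i \in [k_{t_i}]$, corresponds bijectively to a run of the new automaton: it starts in $q_0$ and passes through $(t_1, j_1), \dots, (t_n, j_n)$. Its weight in context $\delta$ is
\[
\wt_I(q_0) \otimes \beh{e^{t_1}_{j_1}}(d_1) \otimes \cdots \otimes \beh{e^{t_n}_{j_n}}(d_n) \otimes \wt_F(q_n).
\]
Summing over all choices and using distributivity of $\otimes$ over $\oplus$ (together with associativity, so that the parenthesization in $\clp{E}$ is irrelevant) gives exactly $\wt_{\A}(\varrho, \delta)$. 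The empty word is handled by the unchanged initial and final weights on $Q$.

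One subtlety concerns the restriction of $\beh{\A}$ to \emph{valid} runs. Invalid runs, and invalid choices $j$, contribute $\zero$, and $\zero$ is neutral for $\oplus$, so summing over all runs or only over valid ones gives the same value. I expect this bookkeeping to be the only delicate point. The construction needs no closure property of $E$, since the only weights it creates are the original summands $e^t_j$.
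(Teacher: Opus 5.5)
Your proof is correct and follows essentially the same construction as the paper. Each transition is split into one copy per summand, the target state is refined to remember which summand was chosen, and distributivity recovers $\beh{\wt_T(t)}(d)$ when summing over the choices. The only difference is the indexing: the paper uses copies $(q,i)$ with $i$ ranging up to the maximal number of summands on transitions entering $q$, which keeps the state count at most $|Q| \cdot \max_t k_t$ (a bound used later in Lemma~\ref{LEMMA:wffa_constructions_with_bounded_sum}), whereas your copies $(t,j)$ yield $|Q| + \sum_t k_t$ states.
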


\begin{proof}
Let $\A_{\clp{E}} = (Q, I, T, F, \wt_I, \wt_T, \wt_F) \in \WFFA_{\Sigma, \FF}(\clp{E})$. 
We construct ${\A_E \in \WFFA_{\Sigma, \FF}(E)}$ with the same behavior.
For each transition $t = (p,a,q) \in T$ with weight
\[
\wt_T(t) = [e_1 \oplus \dots \oplus e_k], \qquad e_i \in E,
\]
we replace $t$ by $k$ transitions from copies of $p$ to copies of $q$, all labeled by $a$, whose weights are the summands $e_1,\dots,e_k$.
Accordingly, the automaton $\A_E$ is obtained by expanding the state space, replacing each state $q$ by copies $(q,i)$, where $i$ is an index indicating which summand in the $\oplus$-decomposition of a transition weight has been selected.
Since WFFA semantics aggregates runs using $\oplus$, we obtain
$
\beh{\A_E} = \beh{\A_{\clp{E}}}.
$
The converse inclusion is immediate, since $E \subseteq \clp{E}$.
\end{proof}

\begin{lemma}
\label{LEMMA:closure_kleene_star_for_e}
\begin{enumerate}[label=(\alph*)]
\item Let $E \subseteq \EE_{\FF}$ be $(S, \otimes)$-closed and let $\F \in \beh{\WFFA_{\Sigma, \FF}(E)}$ be proper. Then $\F^* \in \beh{\WFFA_{\Sigma, \FF}(E)}$.
\item  Let $E \subseteq \EE_{\FF}$ be arbitrary and let $\F \in \beh{\WFFA^{\tau}_{\Sigma, \FF}(E)}$ be proper. Then ${\F^* \in \beh{\WFFA^{\tau}_{\Sigma, \FF}(E)}}$.
\end{enumerate}
\end{lemma}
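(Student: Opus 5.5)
The plan for (b) is to use the star construction of Lemma~\ref{LEMMA:cauchy_product_kleene_star_normalized_wffa}(b), but on an automaton with a single initial state and no incoming transitions into it, keeping all final weights equal to $\one$. For (a), we first reduce to the purely transition-weighted case by absorbing initial and final weights into transitions, and this is the only place where $(S,\otimes)$-closedness is used.

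\textbf{Part (b).} Let $\A = (Q,I,T,F,\wt_I,\wt_T,\wt_F) \in \WFFA^{\tau}_{\Sigma,\FF}(E)$ with $\beh{\A} = \F$ proper; all initial and final weights equal $\one$.

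First, we make $\A$ initially normalized without leaving $\WFFA^{\tau}_{\Sigma,\FF}(E)$. Add a fresh state $q_I$, the only initial state, with weight $\one$. For every transition $(i,a,q) \in T$ with $i \in I$, add $(q_I,a,q)$ with the same $E$-weight. Since $\F(\eps)=\zero$ and all weights are $\one$, we have $I \cap F = \emptyset$ (otherwise $\F(\eps) = \one \oplus \dots$; if $\one\oplus\one$ could vanish we argue instead via runs, and removing $I\cap F$ from $F$ is handled by copying states). So $q_I$ is not final. Runs starting in $q_I$ correspond bijectively, with the same weights, to runs of $\A$ of length $\ge 1$.

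Next, build $\A^*$. For every transition $(p,a,f)$ with $f \in F$, add a copy $(p,a,q_I)$ with the same weight. Make $q_I$ final with weight $\one$, keeping the other final states (also with weight $\one$).

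A run of $\A^*$ now decomposes uniquely at its visits to $q_I$. Each segment that returns to $q_I$, or that ends in an old final state, corresponds to a run of $\A$ on a nonempty factor, with the same weight. The redirected transitions into $q_I$ are the ones that close a segment and start a new one.

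Summing over runs gives $\beh{\A^*}(w) = \bigoplus_{k}\bigotimes_{j}\F(w_j)$ over factorizations of $w$ into nonempty factors, plus $\one$ for $w=\eps$. Since $\F$ is proper, factorizations with empty factors contribute $\zero$, so this equals $\F^*(w)$.

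There is a double count: a run ending exactly at $q_I$ after the last segment and a run ending at the original final state represent the same factorization. To avoid this, do not keep the old final states as final. Instead, make $q_I$ the sole final state, and accept only runs returning to $q_I$. All weights stay in $E \cup \{\one\}$ on initial and final weights, and every transition weight lies in $E$, so $\A^* \in \WFFA^{\tau}_{\Sigma,\FF}(E)$.

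\textbf{Part (a).} Given $\A \in \WFFA_{\Sigma,\FF}(E)$ with $\beh{\A}$ proper, first apply Lemma~\ref{LEMMA:init_and_final_normalization_for_arbitrary_expressions}(a) to obtain a single initial state $q_I$. Its weight can be taken to be $\one$, since that construction pushes $\wt_I$ into final weights. Then push each final weight $\wt_F(f)=s$ into the incoming transitions: for a transition $(p,a,f)$ with weight $e$, create $(p,a,f')$ with weight an element of $E$ equivalent to $s \otimes e$. This element exists by $(S,\otimes)$-closedness. Here $f'$ is a fresh copy of $f$ that is final with weight $\one$ and has no outgoing transitions, and $f$ itself is no longer final.

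Properness ensures that no length-zero run needs a final weight, so the result lies in $\WFFA^{\tau}_{\Sigma,\FF}(E)$. Part (b) then applies.

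\textbf{Main obstacle.} The delicate step is the bookkeeping for $\eps$ and for the unique decomposition of runs. In particular, we must keep the final-weight contribution of $q_I$ from producing spurious terms, while staying inside $E$ without ever forming sums or products of expressions.
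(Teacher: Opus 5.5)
\textbf{Part (a).} Your reduction is correct and takes a different route from the paper's. You defer $\wt_I$ into the final weights with Lemma~\ref{LEMMA:init_and_final_normalization_for_arbitrary_expressions}(a), then move each final weight $s$ onto the transitions entering a fresh copy $f'$. Neither step creates two weights for one triple, since the target $f'$ determines $f$. So $(S,\otimes)$-closedness is used only for final weights, and no $\oplus$-sums are formed. The paper instead absorbs both initial and final weights into merged transitions, producing expressions $\bigoplus_i (s_i\otimes e_i\otimes s_i')$. It then rewrites the summands into $E$ and removes the sums with Lemma~\ref{LEMMA:plus_closed_expressiveness}.

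\textbf{The gap is in part (b)}, on which your part (a) also depends. In this model $T\subseteq Q\times\Sigma\times Q$ and $\wt_T$ assigns exactly one expression to each triple, so parallel transitions do not exist. Your step ``for every $(i,a,q)\in T$ with $i\in I$, add $(q_I,a,q)$ with the same weight'' is therefore ill-defined whenever two initial states have $a$-transitions to the same $q$. The same problem arises in three further places:
\begin{itemize}
\item redirecting $(p,a,f_1)$ and $(p,a,f_2)$ with $f_1\neq f_2$ to the single state $q_I$;
\item the loops $(q_I,a,q_I)$ coming from several pairs $(i,a,f)$;
\item the automaton produced by your part (a), e.g.\ when $p$ has $a$-transitions to two final copies $f_1',f_2'$.
\end{itemize}
Consequently the claimed weight-preserving bijection between runs fails. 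To preserve the behavior, the merged transition must carry $\bigl[\bigoplus_i \wt_T(i,a,q)\bigr]\in\clp{E}$, which is in general not in $E$. That is precisely the difficulty this lemma is about, so your closing claim of ``never forming sums'' is not achieved by the construction as written.

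\textbf{Repair.} The paper's route works: allow these sums. All initial and final weights are $\one$, so the star automaton lies in $\WFFA^{\tau}_{\Sigma,\FF}(\clp{E})$. Then apply Lemma~\ref{LEMMA:plus_closed_expressiveness}, checking that its state-splitting keeps every initial and final weight equal to $\one$. Alternatively, you can stay genuinely sum-free by tagging the new states with the originating initial state and the reached final state. Then every new triple arises from exactly one transition of $\A$, and the run bijection you assert actually holds.
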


\begin{proof}
\begin{enumerate}[label=(\alph*)]
\item 
Let $\A \in \WFFA_{\Sigma, \FF}(E)$ with $\beh{\A} = \F$. 
Using the Kleene-star construction from Lemma~\ref{LEMMA:cauchy_product_kleene_star_normalized_wffa}, 
we obtain $\A^* \in \WFFA_{\Sigma, \FF}(\EE_{\FF})$ with $\beh{\A^*} = \F^*$.

The transition weights of $\A^*$ are built from expressions in $E$ by combining them with constants from $S$ via $\otimes$ and $\oplus$. 
Since $E$ is $(S,\otimes)$-closed and $\otimes$ is commutative, every such expression is equivalent to an element of $\clp{E}$. 
Hence, by replacing transition weights with equivalent expressions, we obtain $\A^*_{\clp{E}} \in \WFFA_{\Sigma, \FF}(\clp{E})$ with the same behavior.
Finally, Lemma~\ref{LEMMA:plus_closed_expressiveness} yields an equivalent automaton $\A^*_E \in \WFFA_{\Sigma, \FF}(E)$, so $\F^* \in \beh{\WFFA_{\Sigma, \FF}(E)}$.

\item 
The argument is analogous. In the purely transition-weighted case, the construction only introduces expressions of the form
$e$, $\one \otimes e$, $e \otimes \one$, or $\one \otimes e \otimes \one$ with $e \in E$, all of which are equivalent to $e$. 
Thus all transition weights lie in $\clp{E}$, and applying Lemma~\ref{LEMMA:plus_closed_expressiveness} yields the result. \qedhere
\end{enumerate}
\end{proof}

\begin{proof}[Proof of Theorem~\ref{THM:closure_properties_e_weighted_wffa}]
\begin{enumerate}[label=(\alph*)]
\item By Lemma~\ref{LEMMA:closure_sum_wffa}, the disjoint union construction preserves $E$-weightedness.
\item By Lemma~\ref{LEMMA:closure_hadamard_wffa}, transition weights have the form $[e_1 \otimes e_2]$ with $e_1,e_2 \in E$. 
Since $E$ is $\otimes$-closed, these can be replaced by equivalent expressions in $E$.
\item Follows from Lemma~\ref{LEMMA:closure_cauchy_for_arbitrary_expressions}.
\item Follows from Lemma~\ref{LEMMA:closure_kleene_star_for_e}~(a). \qedhere
\end{enumerate}
\end{proof}

As a corollary from the proof of Theorem~\ref{THM:closure_properties_e_weighted_wffa}, we obtain the following closure properties for purely transition weighted WFFAs.

\begin{corollary}
\label{COR:closure_properties_purely_transition_weighted}
Let $E \subseteq \EE_{\FF}$ be an arbitrary collection of $\FF$-expressions. Then, for all QFLs $\F_1, \F_2 \in \beh{\WFFA^{\tau}_{\Sigma, \FF}(E)}$, the following hold.
\begin{enumerate}[label=(\alph*)]
\item If $\F_1(\eps), \F_2(\eps) \in \{\zero, \one\}$, then $(\F_1 \oplus \F_2) \in \beh{\WFFA^{\tau}_{\Sigma, \FF}(E)}$.
\item
$(\F_1 \cdot \F_2) \in \beh{\WFFA^{\tau}_{\Sigma, \FF}(E)}$.
\item If $\F \in \beh{\WFFA^{\tau}_{\Sigma, \FF}(E)}$ is proper, then $\F^* \in \beh{\WFFA^{\tau}_{\Sigma, \FF}(E)}$ 
\end{enumerate}
\end{corollary}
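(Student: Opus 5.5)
Each of the three items comes from rerunning the corresponding construction in the proof of Theorem~\ref{THM:closure_properties_e_weighted_wffa} and checking two things: that it introduces no initial or final weight other than $\one$, and that every new transition weight is an element of $E$ or is equivalent to one of the form $\one \otimes e$, $e \otimes \one$, $\one\otimes e\otimes\one$ with $e \in E$. Such a weight is semantically equal to $e$ and can be replaced by $e$ itself. Throughout, fix purely transition-weighted $E$-weighted automata $\A_1,\A_2$ (resp.\ $\A$) recognizing $\F_1,\F_2$ (resp.\ $\F$).

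For (a), I would use the disjoint union of Lemma~\ref{LEMMA:closure_sum_wffa}. It keeps all transitions, so the result is $E$-weighted, and it keeps all initial and final weights, so these stay equal to $\one$. The hypothesis $\F_i(\eps)\in\{\zero,\one\}$ plays no role in the union itself. It is needed only if one additionally wants a normal form with a single initial or final state, where the value on $\eps$ has to be realized by an initial-and-final state of weight $\one$; with $\F_i(\eps)\in\{\zero,\one\}$ this is possible (weight $\one$, or no such state for $\zero$). For the plain statement the union already suffices.

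For (b), I would apply the normalizations of Lemma~\ref{LEMMA:init_and_final_normalization_for_arbitrary_expressions} followed by the gluing of Lemma~\ref{LEMMA:cauchy_product_kleene_star_normalized_wffa}(a). The initial normalization remembers the origin $i$ and sets new final weights $\wt_I(i)\otimes\wt_F(f)=\one\otimes\one=\one$; transitions are lifted unchanged, so they stay in $E$. The fresh initial state gets weight $\one$. If it must be final to carry $\F(\eps)$, its final weight is $\wt_I(i)\otimes\wt_F(i)=\one$ for $i\in I\cap F$, or, when several such $i$ exist, a sum of copies of $\one$. \textbf{This is the main obstacle.} A sum $\one\oplus\one$ need not equal $\one$ and cannot be split into runs, since $\eps$ has only one run per initial/final choice. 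I would avoid it by not merging: keep one fresh initial copy per $i\in I$, each of weight $\one$. The automaton then has several initial states, which violates normalization, and this is the reason for the careful treatment of $\eps$ in the Cauchy product below. Alternatively, I would perform the Cauchy product directly with $\eps$-free gluing: a transition from a final state of $\A_1$ into $\A_2$ reads the first letter of $\A_2$, with its $E$-weight unchanged. The $\eps$-contributions $\F_1(\eps)\F_2$ and $\F_1\F_2(\eps)$ are handled by making states of $\A_1$ final when $\A_2$ has initial-final states, and states of $\A_2$ initial when $\A_1$ has initial-final states, with the multiplicity of runs matching the multiplicity of such states. Since all weights are $\one$, the product weights remain $\one$, and each run of the product corresponds bijectively to a pair of runs. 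This gives a purely transition-weighted, $E$-weighted automaton.

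For (c), I would invoke Lemma~\ref{LEMMA:closure_kleene_star_for_e}(b) directly. Properness ensures that no initial-final state contributes on $\eps$, so complete normalization goes through with weights $\one$. The star construction then only adds the $\one$-padded expressions listed there, and Lemma~\ref{LEMMA:plus_closed_expressiveness} removes any resulting sums.
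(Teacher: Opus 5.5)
Your handling of (a) and (c) is the paper's: the disjoint union of Lemma~\ref{LEMMA:closure_sum_wffa}, and Lemma~\ref{LEMMA:closure_kleene_star_for_e}(b). Your diagnosis for (b) is also correct. For a purely transition-weighted automaton, $\F(\eps)=\bigoplus_{q\in I\cap F}\one$, and the fresh state of Lemma~\ref{LEMMA:init_and_final_normalization_for_arbitrary_expressions} has to carry exactly this value. The paper does not get around this. It uses the normalization-plus-gluing construction behind Lemma~\ref{LEMMA:closure_cauchy_for_arbitrary_expressions} and invokes $\F_1(\eps),\F_2(\eps)\in\{\zero,\one\}$, i.e.\ the hypothesis of (a), to make all initial and final weights $\one$. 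That restricted form is all it needs later in Lemma~\ref{LEMMA:epsfree_regexp_recognizable}. You abandon your first route yourself: one initial copy per $i\in I$ destroys the single initial state that the gluing of Lemma~\ref{LEMMA:cauchy_product_kleene_star_normalized_wffa}(a) requires. So your (b) rests entirely on the direct construction.

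That construction, as described, double counts.
\begin{itemize}
\item \textbf{The split $u=\eps$ is already covered.} Your jump transitions leave a final state $f_1$ of $\A_1$ and read the first letter of $\A_2$. If $f_1\in I_1\cap F_1$, a product run can start at $f_1$ and jump immediately.
\item \textbf{The extra initial states count it again.} Also declaring the initial states of $\A_2$ initial counts every pair (empty run of $\A_1$, nonempty run of $\A_2$) a second time. The empty word is likewise counted from both sides.
\item \textbf{Why this matters.} Over an idempotent semiring this is harmless, but there $\F_i(\eps)\in\{\zero,\one\}$ holds automatically and your extra care is unnecessary. Over $(\RR,+,\cdot,0,1)$ the term $\F_1(\eps)\otimes\F_2(w)$ picks up a spurious second contribution, so your claimed bijection between product runs and pairs of runs fails.
\item \textbf{Multiplicities are unspecified.} Marking a state final once gives multiplicity one, so a factor $|I_2\cap F_2|$ needs copies.
\item \textbf{Transitions can collide.} If several $i_2\in I_2$ have $a$-transitions into the same $q$, the single transition $(f_1,a,q)$ would have to carry a $\oplus$-sum, which is not in $E$.
\end{itemize}

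A clean repair:
\begin{itemize}
\item Take the disjoint union of $Q_1\times I_2$ (phase one) and $F_1\times Q_2$ (phase two).
\item Simulate $\A_1$ in phase one, keeping $i_2$ fixed.
\item From $(f_1,i_2)$ with $f_1\in F_1$, add for each $(i_2,a,q)\in T_2$ an $a$-transition to $(f_1,q)$ of weight $\wt_{T_2}(i_2,a,q)$.
\item Simulate $\A_2$ in phase two, keeping $f_1$ fixed.
\item Initial states are $I_1\times I_2$ in phase one. Final states are $F_1\times F_2$ in phase two together with $F_1\times(I_2\cap F_2)$ in phase one. All these weights are $\one$.
\end{itemize}
Product runs then correspond bijectively to pairs of runs, all transition weights lie in $E$, and (b) holds with no assumption on $\F_i(\eps)$.
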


\begin{proof}
\begin{enumerate}[label=(\alph*)]
\item This follows directly from the construction given in the proof of Lemma~\ref{LEMMA:closure_sum_wffa}. 
\item This follows from the construction in the proof of Lemma~\ref{LEMMA:closure_cauchy_for_arbitrary_expressions}. Specifically, since $\F_1(\eps), \F_2(\eps) \in \{\zero, \one\}$, all initial and final weights in the constructed WFFA for $\F_1 \cdot \F_2$ are equal to $\one$.
\item This follows immediately from Lemma~\ref{LEMMA:closure_kleene_star_for_e}~(b). \qedhere
\end{enumerate}
\end{proof}

\subsection{Primitive Expressions}
\label{SUBSEC:primitive_expressions}

In Section \ref{SECTION:WFFA} we saw that WFFAs with primitive expressions are capable of describing a wide range of practically relevant scenarios. 
However, as already illustrated by simple constructions, composing WFFAs (e.g., via sequential or Hadamard composition) quickly leads to increasingly complex $\FF$-expressions. 
Such expressions tend to grow rapidly in size, making their analysis and interpretation cumbersome. 
This raises the question of whether one can remain within a restricted class of simple expressions while still preserving useful closure properties.

In this subsection, we investigate this question and study the closure properties of WFFAs with primitive expressions. 
In particular, we demonstrate that the WFFA constructions for the Cauchy product and the Kleene star can be optimized in this restricted setting.

In Example \ref{EX:closure_expressions}~\ref{EX:closure_expressions_primitive}, we showed that, in general, primitive expressions are neither $\otimes$-closed nor $(S, \otimes)$-closed.
We further show that $\PP_{\FF}$-weighted WFFAs do not, in general, preserve closure under the Hadamard product and the Kleene star. 
This justifies the conditions placed on $E$ in Theorem~\ref{THM:closure_properties_e_weighted_wffa}(b) and (d).

\begin{theorem}
\label{THM:closure_counterexample_primitive}
Let $\Sigma = \{\bot\}$ and $\FF = \FF_{\arc, \times}$. Then, the following hold:
\begin{enumerate}[label=(\alph*)]
\item There exist $\F_1, \F_2 \in \beh{\WFFA_{\Sigma, \FF}(\PP_{\FF})}$ such that $(\F_1 \otimes \F_2) \notin \beh{\WFFA_{\Sigma, \FF}(\PP_{\FF})}$.
\item There exists $\F \in \beh{\WFFA_{\Sigma, \FF}(\PP_{\FF})}$ such that $\F$ is proper and ${\F^* \notin \beh{\WFFA_{\Sigma, \FF}(\PP_{\FF})}}$.
\end{enumerate}
\end{theorem}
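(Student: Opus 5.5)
Both parts will be proved by exhibiting one and the same QFL,
\[
\mathcal{G}(w) = n + \sum_{i=1}^n d_i \qquad \text{for } w=(\bot,d_1)\dots(\bot,d_n),
\]
and showing that no $\PP_{\FF}$-weighted WFFA recognizes it. Equivalently, in $\SS_{\arc}$, $\mathcal{G}$ is the $\otimes$-product over all positions of $(d_i+1)$.

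For (a), take $\F_1$ to be recognized by the one-state automaton with initial and final weight $0.0$ and a loop with weight $\bind{1.0}$. Then $\F_1(w)=\sum_i d_i$. Take $\F_2$ to be recognized by the same automaton but with loop weight $1.0$, so that $\F_2(w)=n$. Both are $\PP_{\FF}$-weighted, and $\F_1\otimes\F_2=\mathcal{G}$.

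For (b), take $\F$ recognized by the two-state automaton $q_0\to q_1$. It has initial weight $1.0$ at $q_0$, one transition with weight $\bind{1.0}$, and final weight $0.0$ at $q_1$. Then $\F(\bot,d)=1+d$, and $\F$ vanishes ($=-\infty$) on all words of length $\neq 1$. In particular $\F$ is proper. Since $\F$ is zero off length~$1$, the only decomposition contributing to $\F^*(w)$ is the split into single letters. Hence $\F^*=\mathcal{G}$.

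It remains to show $\mathcal{G}\notin\beh{\WFFA_{\Sigma,\FF}(\PP_{\FF})}$. Suppose $\A$ is $\PP_{\FF}$-weighted with $\beh{\A}=\mathcal{G}$. Let $C$ be the maximum of $\wt_I(q)\otimes\wt_F(q')$ over all pairs of initial and final states; this is a finite bound or $-\infty$. Fix $n>C$.

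Every run $\varrho$ of length $n$ with finite weight has weight
\[
c_\varrho+\sum_{i\in J_\varrho} s_{\varrho,i}d_i .
\]
Here $J_\varrho$ is the set of positions whose transition carries a primitive expression $\bind{s}$. The constant $c_\varrho$ is $\wt_I\otimes\wt_F$ plus the constant transition weights at positions outside $J_\varrho$. Coefficients $s=-\infty$ make the run invalid, so all relevant functions are real affine. Thus on the open orthant $(0,\infty)^n$, $\mathcal{G}$ is the maximum of finitely many affine functions $g_\varrho$.

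I then argue as follows. The sets $\{x : g_\varrho(x)=\mathcal{G}(x)\}$ are closed and finitely many, and they cover the orthant. So one of them has nonempty interior. On an open set, the affine functions $g_\varrho$ and $\mathcal{G}$ then agree, so they are identical. This forces $J_\varrho=[n]$ and all $s_{\varrho,i}=1$. In other words, every transition of $\varrho$ carries $\bind{1.0}$ and contributes no constant. Consequently $c_\varrho=\wt_I(q_0)\otimes\wt_F(q_n)\le C<n$, contradicting $c_\varrho=n$.

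The only delicate step is this "max of affine functions equals an affine function implies some piece equals it" argument. The rest is bookkeeping of the fact that a primitive expression yields either a pure constant or a pure linear term, never both, so additive constants can only come from positions not carrying data.
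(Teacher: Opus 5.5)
Your proposal is correct and follows essentially the same route as the paper. You use the same witnesses $\F_1,\F_2$ and $\F$, all reducing to $n+\sum_i d_i$, and the same non-recognizability argument: find a run whose affine weight coincides with the behavior on a full-dimensional region, compare coefficients, and use the bound on initial-plus-final weights. You obtain that region from a finite closed cover having a member with nonempty interior rather than from the paper's polyhedral cube lemma, and your single coefficient comparison merges the paper's two subcases (all-binding vs.\ mixed transition weights) into one step.
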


For the proof of Theorem~\ref{THM:closure_counterexample_primitive}, we require Lemmas~\ref{LEMMA:primitive_wffa_cube_property} and~\ref{LEMMA:primitive_counterexample}.

For $\delta = (d_1, \dots, d_n) \in \DD^n$ and $R \in \RR_{>0}$, we define the hypercube
\[
{\mathcal C(\delta, R) = \{(d_1', \dots, d_n') \in \DD^n  \mid  d_i \le d_i' \le d_i + R \text{ for all } i \in [n]\}}.
\]
For $\delta = d_1 \dots d_n \in \DD^*$, let $\bot_{\delta}$ denote the finance data word
$
(\bot, d_1) \dots (\bot, d_n).
$
Let $\Run_{\A}^n \subseteq \Run_{\A}$ be the set of all runs of $\A$ of length $n$, which is a finite set.

\begin{lemma}
\label{LEMMA:primitive_wffa_cube_property}
Let $\Sigma = \{\bot\}$, let $\FF = \FF_{\arc, \times}$, and let $\A \in \WFFA_{\Sigma, \FF}(\PP_{\FF})$ be such that $\beh{\A}(w) \neq \zero$ for all $w \in \DD \Sigma^*$. 
Then for every $n \in \NN_{\ge 1}$ there exist $\varrho \in \Run_{\A}^n$, $\delta \in \DD^n$, and $R > 0$ such that
$
\beh{\A}(\bot_{\delta'}) = \wt_{\A}(\varrho, \delta')
$
for all $\delta' \in \mathcal C(\delta, R)$.
\end{lemma}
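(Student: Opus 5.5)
The plan is to show that, on words of fixed length $n$, the behavior $\beh{\A}$ is a pointwise maximum of finitely many affine functions of the data vector. After that, a general-position argument produces a small cube on which a single affine piece, and hence a single run, attains the maximum.

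\textbf{Step 1 (run weights are affine).} Fix $n \ge 1$ and a run $\varrho \in \Run_{\A}^n$ with transitions $t_1,\dots,t_n$. Since $\A$ is $\PP_{\FF}$-weighted, each $\wt_T(t_i)$ is either a constant $s_i \in S$ or some $\bind{s_i}$. Its semantics at $d_i$ is therefore either the constant $s_i$ or $d_i \cdot s_i$, using the convention $d \times (-\infty) = -\infty$. In $\SS_{\arc}$ the run weight is
\[
\wt_{\A}(\varrho,\delta') = \wt_I(q_0) + \textstyle\sum_{i} \beh{\wt_T(t_i)}(d'_i) + \wt_F(q_n).
\]
Each coordinate $d'_i$ occurs in exactly one summand. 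Hence the map $\delta' \mapsto \wt_{\A}(\varrho,\delta')$ is either identically $-\infty$ on $\DD^n$ (if some constant involved is $-\infty$) or an affine function $f_\varrho(\delta') = c_\varrho + \sum_i a_{\varrho,i} d'_i$ with real coefficients.

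\textbf{Step 2 (behavior as a max of affine maps).} Invalid runs have weight $\zero = -\infty$, so they do not affect the maximum. Therefore
\[
\beh{\A}(\bot_{\delta'}) = \max\{\wt_{\A}(\varrho,\delta') \mid \varrho \in \Run_{\A}^n\}
\]
for every $\delta' \in \DD^n$. Let $\mathcal{R} \subseteq \Run_{\A}^n$ be the set of runs whose weight is not identically $-\infty$. By the hypothesis $\beh{\A}(w) \neq \zero$, this set is non-empty, and the maximum can be taken over $\mathcal R$ alone. Let $f_1,\dots,f_m$ be the distinct affine functions among $\{f_\varrho \mid \varrho \in \mathcal R\}$.

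\textbf{Step 3 (general position and continuity).} For $j \neq k$, the set $\{f_j = f_k\}$ is either empty or an affine hyperplane in $\RR^n$, because $f_j - f_k$ is a non-constant affine map or a non-zero constant. A finite union of hyperplanes has empty interior. So we can choose $\delta \in (0,\infty)^n$ lying on none of them. At $\delta$ the values $f_1(\delta),\dots,f_m(\delta)$ are pairwise distinct, so some $f_j$ is the strict maximum there. By continuity, the strict inequalities $f_j > f_k$ for all $k \neq j$ persist on a neighbourhood of $\delta$. Hence there is $R>0$ with $\mathcal C(\delta,R)$ inside that neighbourhood and inside $\DD^n$. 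Choose any run $\varrho$ with $f_\varrho = f_j$. Then $\beh{\A}(\bot_{\delta'}) = f_j(\delta') = \wt_{\A}(\varrho,\delta')$ for all $\delta' \in \mathcal C(\delta,R)$.

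The only delicate points are bookkeeping ones. First, one must observe that $-\infty$ coefficients make a run weight identically $-\infty$ rather than producing a partial affine function; this holds by the convention $d\times(-\infty)=-\infty$, including at $d=0$. Second, the hyperplane-avoidance step must be carried out inside $\DD^n = \RR_{\ge 0}^n$, which is fine because we work in the open orthant. I expect no step to be a genuine obstacle; the main care is in Step 1.
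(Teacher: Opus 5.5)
Your proof is correct and follows essentially the same route as the paper: run weights are affine in the data, the behavior on length-$n$ words is their pointwise maximum, and some run must realize that maximum on a full-dimensional cube. The one difference is the last step, where the paper covers $\DD^n$ by the polyhedra $\Delta_j$ and simply asserts that one of them contains a cube, while your general-position choice of $\delta$ off the finitely many hyperplanes $\{f_j = f_k\}$, combined with continuity, actually justifies that assertion. Your Step 1 also handles $-\infty$ initial and final weights, which the paper passes over.
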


\begin{proof}
Let $\A = (Q, I, T, F, \wt_I, \wt_T, \wt_F)$ and fix $n \ge 1$. 
Enumerate $\Run_{\A}^n = (\varrho_j)_{j \in [k]}$.
For each run $\varrho_j$, the weight $\wt_{\A}(\varrho_j, \delta)$ is an affine function of $\delta \in \DD^n$. 
Hence,
$
{\beh{\A}(\bot_{\delta}) = \max_{j \in [k]} \wt_{\A}(\varrho_j, \delta)}
$
is the pointwise maximum of finitely many affine functions.
For each $j \in [k]$, let
\[
\Delta_j = \{\delta \in \DD^n \mid \wt_{\A}(\varrho_j, \delta) \ge \wt_{\A}(\varrho_{j'}, \delta) \text{ for all } j' \in [k]\}.
\]
Then each $\Delta_j$ is a polyhedron defined by linear inequalities, and by assumption $\beh{\A}(\bot_{\delta}) \neq \zero$ for all $\delta$, we have
$
\DD^n = \bigcup_{j \in [k]} \Delta_j.
$
Since this is a finite covering of $\DD^n$ by polyhedra, at least one $\Delta_j$ contains a cube $\mathcal C(\delta, R)$ for some $\delta \in \DD^n$ and $R > 0$. 
For this $j$, we have
$
\beh{\A}(\bot_{\delta'}) = \wt_{\A}(\varrho_j, \delta')
$
for all $\delta' \in \mathcal C(\delta, R)$.
\end{proof}

Recall that $\AA_{\FF}$ denotes the collection of affine $\FF$-expressions (cf. Section~\ref{SEC:preliminaries}).

\begin{lemma}
\label{LEMMA:primitive_counterexample}
Let $\Sigma = \{\bot\}$ and $\FF = \FF_{\arc, \times}$. Let $\F: \DD \Sigma^* \to S$ be the QFL defined by  
$
\F(\bot_{d_1, \dots, d_n}) = n + \sum_{i=1}^n d_i
$
for all $n \in \NN$ and $d_1, \dots, d_n \in \DD$.
Then, $\F \in \beh{\WFFA_{\Sigma, \FF}(\AA_{\FF})} \setminus \beh{\WFFA_{\Sigma, \FF}(\PP_{\FF})}$.
\end{lemma}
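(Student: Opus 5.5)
The proof has two parts: membership in $\beh{\WFFA_{\Sigma, \FF}(\AA_{\FF})}$ by an explicit one-state automaton, and non-membership in $\beh{\WFFA_{\Sigma, \FF}(\PP_{\FF})}$ via Lemma~\ref{LEMMA:primitive_wffa_cube_property} together with a comparison of affine coefficients.

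\emph{Membership.} Take the one-state automaton $\A$ with $Q = I = F = \{q\}$ and $\wt_I(q) = \wt_F(q) = 0.0 = \one$. Its only transition is the loop $(q, \bot, q)$ with weight $[\bind{1.0} \otimes 1.0] \in \AA_{\FF}$. For $w = \bot_{d_1 \dots d_n}$ there is exactly one run. Its weight is $\sum_{i=1}^n (d_i + 1) = n + \sum_i d_i$. For $n = 0$ it gives $0.0 = \F(\eps)$. Hence $\beh{\A} = \F$.

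\emph{Non-membership.} Suppose $\A \in \WFFA_{\Sigma, \FF}(\PP_{\FF})$ satisfies $\beh{\A} = \F$. Since $\F$ never takes the value $\zero = -\infty$, Lemma~\ref{LEMMA:primitive_wffa_cube_property} applies. So for every $n \ge 1$ there are a run $\varrho = (q_0 \xrightarrow{\bot} \dots \xrightarrow{\bot} q_n)$, a point $\delta \in \DD^n$ and $R > 0$ such that
\[
n + \textstyle\sum_{i=1}^n d_i' = \wt_{\A}(\varrho, \delta') \quad \text{for all } \delta' = (d_1', \dots, d_n') \in \mathcal C(\delta, R).
\]
Each transition weight of $\varrho$ is primitive: either a constant $s_i$ or $\bind{s_i}$. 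All these $s_i$, as well as $\wt_I(q_0)$ and $\wt_F(q_n)$, must be finite, since otherwise the run weight would be $-\infty$ on the cube. Therefore
\[
\wt_{\A}(\varrho, \delta') = \wt_I(q_0) + \wt_F(q_n) + \sum_{i \in P} s_i d_i' + \sum_{i \notin P} s_i ,
\]
where $P \subseteq [n]$ is the set of positions carrying a binding expression.

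Both sides are affine in $\delta'$ and agree on a cube with nonempty interior. Varying one coordinate $d_i'$ inside $[d_i, d_i + R]$ while fixing the others shows that the coefficients of $d_i'$ coincide. The coefficient on the left is $1$ for every $i$. On the right it is $s_i$ if $i \in P$ and $0$ otherwise. Hence $P = [n]$, every transition of $\varrho$ has weight $\bind{1.0}$, and the run contributes no constant term from its transitions. Comparing constant terms then gives $\wt_I(q_0) + \wt_F(q_n) = n$.

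The values $\wt_I(q_0) + \wt_F(q_n)$, with $q_0 \in I$ and $q_n \in F$, range over a finite set $M \subseteq \RR \cup \{-\infty\}$ that does not depend on $n$. Choosing $n > \max M$ (and $n \ge 1$) yields a contradiction, so $\F \notin \beh{\WFFA_{\Sigma, \FF}(\PP_{\FF})}$.

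The main point needing care is the coefficient comparison. It needs the cube to have nonempty interior (guaranteed by $R > 0$), and it needs finiteness of all weights on $\varrho$ so that the run weight really is an affine real-valued function. Beyond that, the argument is routine bookkeeping.
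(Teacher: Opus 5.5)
Your proof is correct and follows the paper's argument. Membership uses the same one-state automaton with loop weight $[\bind{1.0} \otimes 1.0]$, and non-membership uses Lemma~\ref{LEMMA:primitive_wffa_cube_property} plus comparison of affine coefficients on the cube, forcing every transition on $\varrho$ to be $\bind{1.0}$ and $\wt_I(q_0)+\wt_F(q_n)=n$. The paper splits this into an ``all binding'' case and a ``mixed'' case, whereas your single coordinate-wise coefficient comparison handles both at once; you also state explicitly that the weights along $\varrho$ must be finite, which the paper leaves implicit.
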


\begin{proof}
The QFL $\F$ is recognized by the one-state WFFA with a single loop transition 
${t = (q, \bot, q)}$ with weight
$
\wt_T(t) = [\bind{1.0} \otimes 1.0],
$
where both the initial and final weights of $q$ are $0.0$. 
Thus each input symbol contributes $d_i + 1$, and therefore
$
{\beh{\A_{\F}}(\bot_{d_1,\dots,d_n}) = n + \sum_{i = 1}^n d_i}.
$

Suppose, for contradiction, that $\F \in \beh{\WFFA_{\Sigma,\FF}(\PP_{\FF})}$, and let $\A$ be such an automaton. 
By Lemma~\ref{LEMMA:primitive_wffa_cube_property}, for every $n$ there exist a run $\varrho \in \Run_{\A}^n$, a vector $\delta \in \DD^n$, and $R>0$ such that
$
\beh{\A}(\bot_{\delta'}) = \wt_{\A}(\varrho,\delta')
$
for all $\delta' \in \mathcal C(\delta,R)$.
Since $\A$ has primitive weights, $\wt_{\A}(\varrho,\delta')$ is an affine function
$
c + \sum_{i=1}^n \alpha_i d_i'
$
with $\alpha_i \in \{0,s_i\}$ depending on the transition type. 
On the other hand,
$
\beh{\A}(\bot_{\delta'}) = n + \sum_{i=1}^n d_i'
$
on $\mathcal C(\delta,R)$. 
Hence these affine functions coincide, implying $\alpha_i=1$ for all $i$ and $c=n$.
Thus all transitions along $\varrho$ must be $\bind{1.0}$, and the constant term $c$ equals the sum of initial and final weights, which is bounded independently of $n$. 
This contradicts $c=n$ for arbitrarily large $n$. Therefore, $\F \notin \beh{\WFFA_{\Sigma,\FF}(\PP_{\FF})}$.
\end{proof}

\begin{proof}[Proof of Theorem~\ref{THM:closure_counterexample_primitive}]
\begin{enumerate}[label=(\alph*)]
\item 
Let $\F_1, \F_2 : \DD \Sigma^* \to S$ be defined as follows. 
For ${w = (\bot,d_1)\dots(\bot,d_n) \in \DD \Sigma^*}$, let
$
\F_1(w) = \sum_{i=1}^n d_i
$ 
and
$
\F_2(w) = n,
$
and let ${\F_1(\varepsilon) = \F_2(\varepsilon) = 0.0}$.
Both QFLs are recognized by one-state WFFAs: $\F_1$ via a loop with weight $\llangle 1.0 \rrangle$, and $\F_2$ via a loop with constant weight $1.0$.
Then, for $\F = \F_1 \otimes \F_2$, we have
$
\F(\bot_{d_1,\dots,d_n}) = n + \sum_{i=1}^n d_i.
$
By Lemma~\ref{LEMMA:primitive_counterexample}, $\F \notin \beh{\WFFA_{\Sigma,\FF}(\PP_{\FF})}$.

\item 
Let $\F : \DD \Sigma^* \to S$ be defined by
$
\F((\bot,d)) = d+1
$
and
$
\F(w)=\zero
$
for all $w \in \DD \Sigma^*$ with $|w|\neq 1$. 
Then $\F$ is proper and is recognized by a WFFA with a single transition $(q_0,\bot,q_1)$ of primitive weight $\bind{1.0}$, initial weight $1.0$ at $q_0$, and final weight $0.0$ at $q_1$. 
For its Kleene star, we obtain
$
\F^*(\bot_{d_1,\dots,d_n}) = n + \sum_{i = 1}^n d_i.
$
Again, by Lemma~\ref{LEMMA:primitive_counterexample}, $\F^* \notin \beh{\WFFA_{\Sigma,\FF}(\PP_{\FF})}$.
\qedhere
\end{enumerate}
\end{proof}

Despite the negative results of Theorem~\ref{THM:closure_counterexample_primitive}, 
Corollary~\ref{COR:closure_properties_purely_transition_weighted} shows that 
purely transition-weighted WFFAs with primitive expressions remain closed under the Kleene star. 
Moreover, as we illustrate below, this restricted class admits more efficient constructions for both the Cauchy product and the Kleene star.

To formalize these improvements, we introduce two structural properties.
We say that $E \subseteq \EE_{\FF}$ satisfies the \emph{bounded-sum property} 
if there exists a constant $B \in \NN_{\ge 1}$ such that for all $k \in \NN_{\ge 1}$ 
and all $e_1, \dots, e_k \in E$, there exist $b \le B$ and $e_1', \dots, e_b' \in E$ satisfying
$
\beh{e_1 \oplus \dots \oplus e_k}
=
\beh{e_1' \oplus \dots \oplus e_b'}.
$
Let $B(\PP_{\FF})$ denote the minimal such constant.
We say that a data-binding function $\bb: \DD \times S \to S$ is \emph{additive} if
$
{\bb(d, s \oplus s') = \bb(d, s) \oplus \bb(d, s')}
$
for all $d \in \DD$ and $s, s' \in S$.

\begin{lemma}
\label{LEMMA:bounded_sum_property_ff_arc_times_positive}
Let $\FF$ be a finance semiring with an additive data-binding function. Then, $\PP_{\FF}$ satisfies the bounded sum property and $B(\PP_{\FF}) \le 2$.
\end{lemma}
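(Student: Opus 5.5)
Given $k \ge 1$ and primitive expressions $e_1, \dots, e_k \in \PP_{\FF}$, I will split the summands into two groups: the constants $s \in S$ and the bound expressions $\bind{s} \in \bind{S}_{\FF}$. Up to reindexing, let $e_1, \dots, e_m$ be constants $s_1, \dots, s_m$ and $e_{m+1}, \dots, e_k$ be of the form $\bind{s'_{m+1}}, \dots, \bind{s'_k}$. Because $\oplus$ is associative and commutative in the semiring, for every $d \in \DD$ we can rearrange
\[
\beh{e_1 \oplus \dots \oplus e_k}(d) = \Bigl(\bigoplus_{i \le m} s_i\Bigr) \oplus \Bigl(\bigoplus_{i > m} \bb(d, s'_i)\Bigr).
\]
This rearrangement is independent of how the formal sum is parenthesized, as already noted after the definition of $\clp{E}$.

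The constant part collapses immediately: set $c = s_1 \oplus \dots \oplus s_m \in S$, which is again a primitive expression. For the bound part I will use additivity of $\bb$. By induction on the number of summands, $\bb(d, s'_{m+1}) \oplus \dots \oplus \bb(d, s'_k) = \bb(d, s'_{m+1} \oplus \dots \oplus s'_k)$ for every $d$. Writing $s' = s'_{m+1} \oplus \dots \oplus s'_k$, the whole bound part is therefore equal, semantically, to the single primitive expression $\bind{s'}$.

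This yields a replacement with $b \le 2$ summands, chosen according to which groups are nonempty:
\begin{itemize}
\item if both groups are nonempty, use $c \oplus \bind{s'}$, so $b = 2$;
\item if only the constant group is nonempty, use the single expression $c$, so $b = 1$;
\item if only the bound group is nonempty, use the single expression $\bind{s'}$, so $b = 1$.
\end{itemize}
In every case the replacement uses primitive expressions and has the same semantics as the original sum. The definition of the bounded-sum property holds with $B = 2$, hence $B(\PP_{\FF}) \le 2$.

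I do not expect a real obstacle. The only point that needs care is the induction that extends binary additivity to finite sums, together with the observation that the definition requires $b \ge 1$. That requirement is met automatically, because $k \ge 1$ guarantees at least one of the two groups is nonempty.
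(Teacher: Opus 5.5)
Your proposal is correct and follows essentially the same route as the paper's proof. Both split the summands into constants and bindings, collapse the constants in $S$, collapse the bindings into a single $\bind{s'}$ via additivity of $\bb$, and distinguish the same three cases (all constants, all bindings, mixed). Your explicit remarks on extending additivity to finite sums by induction and on $b \ge 1$ are slightly more careful than the paper.
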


\begin{proof}
Let $e = [\pi_1 \oplus \dots \oplus \pi_k]$ with $\pi_i \in \PP_{\FF}$. 
Since $\oplus$ is commutative, we may group all constant and all binding terms and write
$
e = [s' \oplus \bind{s''}]
$
for some $s', s'' \in S$, where one of the terms may be absent.
By additivity of $\bb$, we have
$
\beh{e} = \beh{s' \oplus \bind{s''}}.
$
Thus every sum of elements of $\PP_{\FF}$ is equivalent to a sum of at most two elements of $\PP_{\FF}$, and hence $B(\PP_{\FF}) \le 2$.
\end{proof}

As an illustration, we can demonstrate that the bounded-sum property holds not only for finance semirings with an additive data-binding function. For example, consider $\FF = \FF^{\RR}_{\arc, \times}$ as defined in Example \ref{EX:ff_arc_times}(b). Clearly, the data-binding function of $\FF$ is not additive. For instance,
$
\bb_{\times}^{\RR}(-1, 1 \oplus 2) \neq \bb_{\times}^{\RR}(-1, 1) \oplus \bb_{\times}^{\RR}(-1, 2)
$.

\begin{lemma}
\label{LEMMA:bounded_sum_property_ff_arc_times_rr}
Let $\FF = \FF_{\arc, \times}^{\RR}$. Then $\PP_{\FF}$ satisfies the bounded-sum property with $B(\PP_{\FF}) \le 5$.
\end{lemma}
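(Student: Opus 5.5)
We must show that any finite sum of primitive expressions over $\FF^{\RR}_{\arc,\times}$ is equivalent to a sum of at most five primitive expressions. Write $e = [\pi_1 \oplus \dots \oplus \pi_k]$ and, by commutativity of $\oplus = \max$, split the summands into constants $s_1,\dots,s_m \in S$ and binding terms $\bind{t_1},\dots,\bind{t_p}$. The constants collapse into a single constant $s' = \max_i s_i$. It remains to analyze the pointwise maximum of the functions $d \mapsto d\cdot t_j$ on $\DD = \RR$, where $t_j \in \RR \cup \{-\infty\}$ and $d \times (-\infty)$ is interpreted by the convention extending Example~\ref{EX:data_binding_prod} to $\RR$.

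First, summands with $t_j = -\infty$ can be handled separately. Whatever value $d \times (-\infty)$ takes for $d<0$, $d=0$ and $d>0$, the function $d \mapsto \bb^{\RR}_{\times}(d,-\infty)$ is fixed, so at most one such summand $\bind{-\infty}$ needs to be kept. For finite $t_j$, we use the fact that the maximum over $j$ of $d\,t_j$ equals $d\cdot t_{\max}$ when $d \ge 0$ and $d \cdot t_{\min}$ when $d \le 0$, where $t_{\max} = \max_j t_j$ and $t_{\min} = \min_j t_j$. Hence
\[
\beh{\bind{t_1} \oplus \dots \oplus \bind{t_p}} = \beh{\bind{t_{\min}} \oplus \bind{t_{\max}}},
\]
which can be checked pointwise by a case split on the sign of $d$. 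Each linear function is dominated at every point by one of the two extreme slopes.

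Collecting the pieces gives at most one constant, at most two finite binding terms, and at most one infinite binding term. That is four summands, comfortably within the claimed bound of $5$. The remaining slack may be needed if the paper's convention for $d \times (-\infty)$ with $d<0$ (for instance $+\infty \notin S$, forcing a special value) or for $d \times 0$ requires separating an additional case such as $\bind{0}$. In that case one extra summand suffices.

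The main obstacle is precisely this bookkeeping of conventions for the non-finite semiring element $-\infty = \zero$ combined with negative data. The finite part is an elementary convexity argument, since a maximum of lines through the origin is determined by the extreme slopes. I would first fix the convention explicitly and then verify the pointwise equality on the three regions $d<0$, $d=0$ and $d>0$. This yields $B(\PP_{\FF}) \le 5$.
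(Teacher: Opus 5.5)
Your proof is correct, and it follows the paper's basic route: collapse the constant summands into one, then replace the binding summands by the two with extreme coefficients $\max_j t_j$ and $\min_j t_j$.

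The difference lies in how the binding summands are reduced. The paper first splits the coefficients by sign and applies this max/min reduction separately to the nonnegative group and the negative group. That leaves up to four binding summands, hence the bound $1+4=5$. Your pointwise case split on the sign of $d$ shows that $\beh{\bind{t_1}\oplus\dots\oplus\bind{t_p}}=\beh{\bind{t_{\min}}\oplus\bind{t_{\max}}}$ holds for coefficients of arbitrary sign. So the paper's sign split is superfluous. (The paper's same-sign subcase already produces exactly your pair, without needing its hypothesis.) You therefore get at most four summands. Under the paper's convention $d\times(-\infty)=-\infty$, the summand $\bind{-\infty}$ is the neutral element $\zero$ and can be dropped, leaving at most three.

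Your handling of $\bind{-\infty}$ by idempotency of $\max$ is also cleaner than the paper's bare ``without loss of generality $s_i\neq-\infty$''. It does not depend on the convention and covers the edge case where every summand is $\bind{-\infty}$.

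The hedging in your last paragraph is unnecessary. $\bind{0}$ needs no special treatment, since $d\times 0=0$, and your argument already closes without using the slack up to $5$.
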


\begin{proof}[Proof sketch]
Let $e = [\pi_1 \oplus \dots \oplus \pi_k]$ with $\pi_i \in \PP_{\FF}$. 
By separating constant and binding terms, and further splitting binding terms according to the sign of their coefficients, 
one can reduce $e$ to an equivalent sum of at most five primitive expressions.
The key observation is that, although the data-binding function is not additive, 
it is still possible to control the contribution of positive and negative coefficients separately. 
A detailed case analysis yields the bound $5$.
\end{proof}

\begin{lemma}
\label{LEMMA:wffa_constructions_with_bounded_sum}
Assume that $\PP_{\FF}$ satisfies the bounded-sum property.
\begin{enumerate}[label=(\alph*)]
\item Let $\A_1, \A_2 \in \WFFA^{\tau}_{\Sigma, \FF}(\PP_{\FF})$. Then, there exists $\A' \in \WFFA_{\Sigma, \FF}^{\tau}(\PP_{\FF})$ such that ${\beh{\A'} = \beh{\A_1} \cdot \beh{\A_2}}$, and $|\A'|_Q = \O(|\A_1|_Q + |\A_2|_Q)$.
\item Let $\A \in \WFFA^{\tau}_{\Sigma, \FF}(\PP_{\FF})$ such that $\beh{\A}$ is proper. Then, there exists ${\A^* \in \WFFA_{\Sigma, \FF}^{\tau}(\PP_{\FF})}$ such that $\beh{\A^*} = \beh{\A}^*$ and $|\A^*|_Q = \O(|\A|_Q)$.
\end{enumerate}
\end{lemma}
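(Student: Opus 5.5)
The plan is to adapt the classical $\eps$-free constructions for the Cauchy product and the Kleene star directly on purely transition-weighted automata. Instead of the normalization of Lemma~\ref{LEMMA:init_and_final_normalization_for_arbitrary_expressions}, which squares the state count, we will use "shortcut" transitions. Since all initial and final weights equal $\one$, no constant ever needs to be absorbed into a transition weight. The only new weights that can arise are $\oplus$-sums of primitive expressions, coming from merging parallel transitions. The bounded-sum property replaces every such sum by at most $B = B(\PP_{\FF})$ primitive summands. The splitting step of Lemma~\ref{LEMMA:plus_closed_expressiveness} then costs only a constant factor in states. Concretely, we keep $B$ parallel copies of each state or transition, with no index set depending on $k$.

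For (a), let $\A_i = (Q_i, I_i, T_i, F_i, \one, \wt_{T_i}, \one)$ with $Q_1 \cap Q_2 = \emptyset$. We take $Q' = Q_1 \cup Q_2$ and keep all transitions of both automata. For every $q \in Q_1$, $f \in F_1$, $q_1 \in I_2$ and transition $(q, a, f) \in T_1$, $(q_1, a, q') \in T_2$ pairs are not needed. Instead, the glue is the standard one: for every transition $(q,a,f)\in T_1$ with $f\in F_1$ and every $i_2\in I_2$, add a transition $(q,a,i_2)$ carrying the same weight. This simulates reading the last letter of $w_1$ and then jumping into $\A_2$. Initial states are $I_1$, together with $I_2$ if $\eps$ is accepted by $\A_1$ (i.e.\ $I_1\cap F_1\neq\emptyset$). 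Final states are $F_2$, together with $F_1$ if $I_2\cap F_2\neq\emptyset$.

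The key steps, in order, are as follows. First, check that runs of $\A'$ correspond bijectively to pairs of runs on a split $w = w_1 w_2$. This covers the degenerate cases $w_1 = \eps$ or $w_2 = \eps$, which are handled by the extra initial and final states. Second, observe that several glued transitions may coincide as triples $(q,a,i_2)$ while coming from different $f$. Coinciding triples get the $\oplus$ of their weights, or equivalently several parallel copies. We resolve this by the bounded-sum property into at most $B$ primitive summands. Third, realize these summands with $B$ copies of the target state, as in Lemma~\ref{LEMMA:plus_closed_expressiveness}. This gives $|\A'|_Q \le B(|Q_1|+|Q_2|) = \O(|\A_1|_Q+|\A_2|_Q)$.

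One subtlety must be checked: $\F_i(\eps)$ lies in $\{\zero,\one\}^{\oplus}$, which in max-plus is $\{\zero,\one\}$, and pure transition weighting keeps that. In general semirings, $\F_1(\eps)$ may be a multiple sum of $\one$'s. The multiplicity is recovered because each pair $(i,f)\in I_1\times F_1$ with $i=f$ contributes once. We handle it by counting runs, not by using a single flag.

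For (b), properness means $I\cap F=\emptyset$, since all initial and final weights are $\one$. We keep $\A$ and add, for each transition $(q,a,f)$ with $f\in F$ and each $i\in I$, a transition $(q,a,i)$ with the same weight. This gives a back-loop with no $\eps$-moves. To accept $\eps$ with weight $\one$, we add one fresh state that is both initial and final and has no transitions. Then a run decomposes uniquely at the points where a back-loop transition is used. Weights multiply by commutativity, and merged parallel transitions are again reduced by the bounded-sum property, giving $\O(|\A|_Q)$ states.

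The main obstacle is the correctness of the state-splitting step when merged sums must be split. The copies $(q,j)$ for $j\le B$ must duplicate all outgoing transitions of $q$. This preserves runs exactly and does not multiply their count in a way that distorts $\oplus$-aggregation. We would verify this by showing that each original run lifts to exactly one run per chosen summand index sequence.
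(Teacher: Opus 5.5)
Your route differs from the paper's. The paper builds no new gluing constructions. It reuses the normalization-based ones behind Corollary~\ref{COR:cauchy_product_kleene_star_full_wffa}: finally normalize $\A_1$, initially normalize $\A_2$, and identify the two fresh states; for the star, completely normalize and merge $q_F$ into $q_I$. It then observes that all new weights lie in $\clp{\PP_{\FF}}$ up to factors $\one$, and finishes exactly as you do, with bounded-sum reduction followed by the splitting of Lemma~\ref{LEMMA:plus_closed_expressiveness}. Your $\eps$-free glue constructions are a legitimate alternative, and your part (b) is fine.

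Part (a), however, contains a concrete error. The clause ``final states are $F_2$, together with $F_1$ if $I_2\cap F_2\neq\emptyset$'' double counts. A glue transition $(q,a,i_2)$ with $i_2\in I_2\cap F_2$ already ends an accepting run. So every split $w=w\cdot\eps$ with $w\neq\eps$ is produced twice: once by that glue run, and once by the original $\A_1$-run ending in $F_1$. Similarly, with $m_i=|I_i\cap F_i|$ and $m_1,m_2\ge 1$, you get $m_1+m_2$ length-zero runs on $\eps$ instead of the required $m_1m_2$. Hence the bijection you claim between runs of $\A'$ and pairs of runs is false. Over an idempotent semiring such as $\SS_{\arc}$ this is harmless. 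But you explicitly aim at general semirings, and there the construction fails even when $\F_1(\eps)=\F_2(\eps)=\one$. Over $(\RR,+,\cdot)$, the term $\F_1(w)\otimes\F_2(\eps)$ is counted twice and the value on $\eps$ becomes $\one\oplus\one$. Drop that clause: the glue transitions into $I_2\cap F_2$ already realize $w_2=\eps$. Without it, your construction is correct over any idempotent semiring, and in general whenever $m_1\le 1$.

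Your remark that larger multiplicities are ``recovered by counting runs'' cannot be carried out. In fact the statement itself needs a restriction here. A purely transition-weighted automaton takes on $\eps$ the $|I\cap F|$-fold sum $\one\oplus\dots\oplus\one$. Take $(\NN,+,\cdot)$ with $\DD=\NN$ and $\bb(d,s)=d\cdot s$. This binding is additive, so $B(\PP_{\FF})\le 2$ by Lemma~\ref{LEMMA:bounded_sum_property_ff_arc_times_positive}. Let $\A_1=\A_2$ consist of $n$ states, all initial and final, with no transitions. Then $(\beh{\A_1}\cdot\beh{\A_2})(\eps)=n^2$ forces at least $n^2$ states, contradicting the $\O(n)$ bound. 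The paper's proof tacitly assumes $\F_1(\eps),\F_2(\eps)\in\{\zero,\one\}$: its glued state carries these values as initial and final weights, and the result is purely transition-weighted only in that case. This is automatic over $\SS_{\arc}$. You should state this assumption (or idempotence) explicitly rather than claim to avoid it.
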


\begin{proof}
\begin{enumerate}[label=(\alph*)]
\item 
Using the standard construction for the Cauchy product 
(Corollary~\ref{COR:cauchy_product_kleene_star_full_wffa}), 
we obtain an automaton $\hat{\A}$ with 
$|\hat{\A}|_Q = |\A_1|_Q + |\A_2|_Q + \O(1)$ and 
$\beh{\hat{\A}} = \beh{\A_1} \cdot \beh{\A_2}$.

Since $\A_1$ and $\A_2$ are purely transition-weighted, 
all transition weights of $\hat{\A}$ lie in $\clp{\PP_{\FF}}$. 
By the bounded-sum property, each such weight can be replaced by an equivalent expression of bounded size. 
Applying Lemma~\ref{LEMMA:plus_closed_expressiveness} then yields 
$\A' \in \WFFA^{\tau}_{\Sigma,\FF}(\PP_{\FF})$ with the same behavior.
Since $B(\PP_{\FF})$ is constant, the number of states remains 
$\O(|\A_1|_Q + |\A_2|_Q)$.

\item 
The proof is analogous, using the Kleene-star construction from 
Corollary~\ref{COR:cauchy_product_kleene_star_full_wffa}.
\qedhere
\end{enumerate}
\end{proof}

\begin{remark}
The proof of Lemma~\ref{LEMMA:wffa_constructions_with_bounded_sum} is non-constructive, 
as it relies on the existence of equivalent expressions with a bounded number of summands. However, for the concrete semirings $\FF = \FF_{\arc, \times}$ and $\FF = \FF_{\arc, \times}^{\RR}$, 
Lemmas~\ref{LEMMA:bounded_sum_property_ff_arc_times_positive} and~\ref{LEMMA:bounded_sum_property_ff_arc_times_rr} 
provide explicit constructions, and thus yield effective procedures.
\end{remark}

\subsection{Affine Expressions}
\label{SUBSEC:affine_expressions}

Recall that an $\FF$-expression $e \in \EE_{\FF}$ is \emph{affine} if either $e \in \PP_{\FF}$ or $e = [\bind{s} \otimes s']$ for some $s,s' \in S$. 
Let $\AA_{\FF}$ denote the set of all affine $\FF$-expressions. Clearly,
$
\PP_{\FF} \subseteq \AA_{\FF} \subseteq \EE_{\FF}.
$

We say that $\FF$ is \emph{multiplicative} if
$
\bb(d, s \otimes s') = \bb(d, s) \otimes \bb(d, s')
$
for all $d \in \DD$ and $s,s' \in S$. For instance, $\FF_{\arc,\times}$ is multiplicative. It is immediate that $\AA_{\FF}$ is $(S,\otimes)$-closed, and if $\FF$ is multiplicative, then $\AA_{\FF}$ is also $\otimes$-closed.

As an immediate consequence of Theorem~\ref{THM:closure_properties_e_weighted_wffa} , we obtain the following closure properties of affine $\FF$-expressions $\AA_{\FF}$.

\begin{corollary}
Let $\F_1, \F_2, \F \in \beh{\WFFA_{\Sigma, \FF}(\AA_{\FF})}$ such that $\F$ is proper. Then:
\begin{enumerate}[label=(\alph*)]
\item The QFLs $\F_1 \oplus \F_2$, $\F_1 \cdot \F_2$ and $\F^*$ are also in $\beh{\WFFA_{\Sigma, \FF}(\AA_{\FF})}$.
\item If $\FF$ is multiplicative, then the QFL $\F_1 \otimes \F_2$ is also in $\beh{\WFFA_{\Sigma, \FF}(\AA_{\FF})}$.
\end{enumerate}
\end{corollary}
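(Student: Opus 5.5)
This corollary follows directly from Theorem~\ref{THM:closure_properties_e_weighted_wffa} with $E = \AA_{\FF}$, once the closure hypotheses are checked. Part~(b) of that theorem needs $\otimes$-closedness, and part~(d) needs $(S,\otimes)$-closedness. Parts~(a) and~(c) of the theorem hold for arbitrary $E$. So the $\oplus$ and Cauchy product claims in~(a) follow immediately, and only the two closure properties of $\AA_{\FF}$ need proof.

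\textbf{$(S,\otimes)$-closedness.} Let $s \in S$ and $e \in \AA_{\FF}$; I treat the three forms of $e$ in turn.
\begin{itemize}
\item If $e = s'$ is a constant, then $\beh{s \otimes s'} = \beh{s \otimes s'}$ with $s \otimes s' \in S \subseteq \PP_{\FF}$.
\item If $e = \bind{s'}$, then $s \otimes \bind{s'}$ is, by commutativity of $\otimes$, semantically equal to $[\bind{s'} \otimes s]$, which is affine.
\item If $e = [\bind{s'} \otimes s'']$, then $\beh{s \otimes e} = \beh{\bind{s'} \otimes (s \otimes s'')}$ by associativity and commutativity of $\SS$, and the right-hand side is affine.
\end{itemize}
Theorem~\ref{THM:closure_properties_e_weighted_wffa}(d) then yields $\F^* \in \beh{\WFFA_{\Sigma,\FF}(\AA_{\FF})}$, which completes~(a).

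\textbf{$\otimes$-closedness when $\FF$ is multiplicative.} Write each affine expression in the normal form $[\bind{s_1} \otimes s_2]$. This is not literally possible for a constant $s_2$, since $\bind{\one}$ need not evaluate to $\one$. So I will not normalize; instead I will do a case analysis on the pair $(e_1, e_2)$. The idea is that all constant factors combine by $\otimes$ in $S$, and all binding factors combine via multiplicativity, $\bb(d,s)\otimes\bb(d,s') = \bb(d, s\otimes s')$, so $\bind{s}\otimes\bind{s'} \equiv \bind{s\otimes s'}$. In every case the product is therefore equivalent to a constant, to a single $\bind{\cdot}$, or to $[\bind{\cdot}\otimes\cdot]$. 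All three are affine, and Theorem~\ref{THM:closure_properties_e_weighted_wffa}(b) gives~(b).

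\textbf{Main obstacle.} The only delicate point is this bookkeeping: making sure that products with no binding factor, or with exactly one binding factor, land back in $\AA_{\FF}$ without assuming $\bb(d,\one)=\one$. The case split above handles this, since affine expressions are defined to include pure constants and pure bindings.
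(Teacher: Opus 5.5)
Your proposal is correct and matches the paper's argument: you instantiate Theorem~\ref{THM:closure_properties_e_weighted_wffa} with $E=\AA_{\FF}$, using that $\AA_{\FF}$ is $(S,\otimes)$-closed and, when $\FF$ is multiplicative, also $\otimes$-closed. The paper only asserts these two closure facts as immediate; your case analysis supplies the routine verification it omits, and it correctly avoids assuming $\bb(d,\one)=\one$.
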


\begin{lemma}
\label{LEMMA:wffa_with_affine_expressions_qfls_inclusion}
Let $\FF = \FF_{\arc, \times}$ and $\Sigma = \{\bot\}$. Then,
\[
\beh{\WFFA_{\Sigma, \FF}(\PP_{\FF})} \subsetneq \beh{\WFFA_{\Sigma, \FF}(\AA_{\FF})} \subsetneq \beh{\WFFA_{\Sigma, \FF}(\EE_{\FF})}.
\]
\end{lemma}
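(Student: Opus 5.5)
The two inclusions are immediate from $\PP_{\FF} \subseteq \AA_{\FF} \subseteq \EE_{\FF}$: every $\PP_{\FF}$-weighted WFFA is $\AA_{\FF}$-weighted, and every $\AA_{\FF}$-weighted WFFA is $\EE_{\FF}$-weighted. So the work lies in the two strictness claims, and I would treat them separately.

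\emph{First strictness.} This is already contained in Lemma~\ref{LEMMA:primitive_counterexample}. The QFL $\F(\bot_{d_1,\dots,d_n}) = n + \sum_{i=1}^n d_i$ is recognized by the one-state WFFA whose loop carries the affine weight $[\bind{1.0} \otimes 1.0]$. The lemma shows that $\F$ is not recognized by any $\PP_{\FF}$-weighted WFFA, which gives $\beh{\WFFA_{\Sigma,\FF}(\PP_{\FF})} \subsetneq \beh{\WFFA_{\Sigma,\FF}(\AA_{\FF})}$.

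\emph{Second strictness.} I would use a payoff that requires a guard, for example the short European call from Example~\ref{EX:european_call_option_short_wffa}, or more simply
\[
\mathcal G((\bot,d)) = \min\{d, 1\}, \qquad \mathcal G(w) = \zero \text{ for } |w| \neq 1.
\]
$\mathcal G$ is recognized by a one-transition WFFA whose weight is $\min(\bind{1.0}, 1.0)$ from Example~\ref{EX:ff_arc_times_expressions}. To show $\mathcal G$ is not $\AA_{\FF}$-recognizable, I would adapt the argument of Lemma~\ref{LEMMA:primitive_wffa_cube_property}, using only length-one runs. For an $\AA_{\FF}$-weighted WFFA, each run of length $1$ has weight $d \mapsto c + \alpha d$ with $c \in S$ and $\alpha \in \RR_{\ge 0}$: initial weight, plus a weight $s\cdot d + s'$, plus final weight. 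If all these entries are finite the run weight is a genuine affine function of $d$; if any entry is $-\infty$ the run is constantly $-\infty$ and can be discarded. Hence $\beh{\A}((\bot,d))$ is a maximum of finitely many affine functions, and is therefore convex on $\RR_{\ge 0}$ (or identically $-\infty$). But $d \mapsto \min\{d,1\}$ is strictly concave at $d=1$ and not convex; for instance its midpoint inequality fails between $d=0$ and $d=2$. This contradiction gives $\beh{\WFFA_{\Sigma,\FF}(\AA_{\FF})} \subsetneq \beh{\WFFA_{\Sigma,\FF}(\EE_{\FF})}$.

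\emph{Main obstacle.} The only delicate points are the conventions on $-\infty$ in $\bb_{\times}$. First, $0 \times (-\infty) = -\infty$, so the binding can only send a weight to $-\infty$ and cannot create a nonconvex shape. Second, $\alpha = s \ge 0$ is a coefficient, not a semiring element, so every surviving run weight is affine with nonnegative slope. I would verify both of these carefully. With them in place, convexity of a finite maximum of affine functions settles the claim.
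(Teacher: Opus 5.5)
Your proof is correct. The two inclusions and the first strictness match the paper: both simply invoke Lemma~\ref{LEMMA:primitive_counterexample}, with the one-state automaton carrying $[\bind{1.0}\otimes 1.0]$. For the second strictness you take a genuinely different route.

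\textbf{The paper's route.} It uses a witness with singleton support: $\F'(\bot_{0.5})=1$ and $\F'=\zero$ elsewhere, recognized via the guard $[\bind{1.0}=0.5]$. The argument is about supports. A run weight built from affine expressions with finite coefficients is finite at every $d$. Hence, on length-one words, an $\AA_{\FF}$-weighted WFFA is either $\zero$ everywhere or nowhere $\zero$.

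\textbf{Your route.} You use a $\zero$-free witness, $\min\{d,1\}$, together with the invariant that a finite pointwise maximum of affine functions is convex.

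\textbf{Comparison.} The paper's argument is more elementary: it needs no convexity, only that real affine functions never reach $-\infty$. Yours shows something slightly stronger and closer to the financial motivation. Constraints add expressive power even among QFLs that never take the value $\zero$, because they can produce concave kinks (minima, as in short positions), not merely restrict supports.

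\textbf{One correction.} Your claim that the slope $\alpha$ is nonnegative is false. In $\bind{s}$ the parameter $s$ ranges over $S=\RR\cup\{-\infty\}$ and may be negative; the paper itself uses $\llangle -1.0\rrangle$ in the short-call automaton. This is harmless, since a maximum of affine functions is convex whatever the signs of the slopes, so just drop that remark. Your other convention check, $d\times(-\infty)=-\infty$ also for $d=0$, is exactly what is needed. It rules out run weights that equal $-\infty$ on only part of $\DD$, which could otherwise break the convexity argument.
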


\begin{proof}
The inclusions 
$
\beh{\WFFA_{\Sigma, \FF}(\PP_{\FF})} \subseteq \beh{\WFFA_{\Sigma, \FF}(\AA_{\FF})} \subseteq \beh{\WFFA_{\Sigma, \FF}(\EE_{\FF})}
$
are immediate.
To show strictness, we argue as follows.
First, define $\F: \DD \Sigma^* \to S$ on words of the form $\bot_{d_1,\dots,d_n}$, where $n \ge 0$ and $d_1,\dots,d_n \in D$, by
$
\F(\bot_{d_1,\dots,d_n}) = n + \sum_{i=1}^n d_i.
$ 
By Lemma~\ref{LEMMA:primitive_counterexample}, $\F \notin \beh{\WFFA_{\Sigma,\FF}(\PP_{\FF})}$, 
while $\F$ is recognized by a WFFA with affine expressions. Hence the first inclusion is strict.
Second, define $\F': \DD \Sigma^* \to S$ such that $\F'(\bot_{0.5}) = 1$ and $\F'(w) = \zero$ for all other words $w$.
Then $\F' \in \beh{\WFFA_{\Sigma,\FF}(\EE_{\FF})}$.
Suppose, for contradiction, that $\F' \in \beh{\WFFA_{\Sigma,\FF}(\AA_{\FF})}$. 
Then there exists a WFFA $\A$ over $\AA_{\FF}$ such that
$
{\beh{\A}(\bot_d) = \max_{i \in [k]} (a_i d + b_i)}
$
for some $k \in \NN$ and $a_i, b_i \in \RR$.
However, a function of this form cannot be equal to $\zero$ at all points except at a single input. 
This contradicts the definition of $\F'$. Hence ${\F' \notin \beh{\WFFA_{\Sigma,\FF}(\AA_{\FF})}}$, 
and the second inclusion is strict.
\end{proof}

\begin{lemma}
\label{LEMMA:bounded_sum_for_affine_expressions}
Let $\FF = \FF_{\arc, \times}$. Then $\AA_{\FF}$ does not satisfy the bounded-sum property.
\end{lemma}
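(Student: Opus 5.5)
We need to show that for every $B$ there is a finite sum of affine expressions that is not equivalent to any sum of at most $B$ affine expressions. Over $\FF = \FF_{\arc,\times}$ with $\DD = \RR_{\ge 0}$, an affine expression $[\bind{s}\otimes s']$ with $s,s' \in \RR$ defines the function $d \mapsto s d + s'$. The degenerate cases $s = -\infty$ or $s' = -\infty$ give the constant $-\infty$, which is the neutral element of $\oplus$ and can be dropped from a sum. Primitive expressions are the special cases $s' = 0.0$ or constant functions. Hence $\beh{e_1\oplus\dots\oplus e_k}$ is the pointwise maximum of at most $k$ affine functions $d\mapsto a_i d+b_i$ on $\RR_{\ge 0}$, possibly together with constants.

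The plan is to exhibit, for every $k$, a sum of $k$ affine expressions whose semantics is a convex piecewise-linear function with exactly $k$ distinct linear pieces on $\RR_{\ge 0}$. A natural choice is
\[
e^{(k)} = [\bind{0.0}\otimes 0.0] \oplus [\bind{1.0}\otimes (-1.0)] \oplus \dots \oplus [\bind{(k-1)}\otimes (-c_{k-1})],
\]
where $c_j = j(j-1)/2$. Then $\beh{e^{(k)}}(d) = \max_{0\le j<k}(jd - c_j)$, the maximum of tangent-like lines to a parabola. Line $j$ is the unique maximizer on the open interval $(j-1, j)$ for $1\le j\le k-2$, on $[0,0]$-neighborhood issues aside at the ends (line $0$ on $[0,0]$, line $k-1$ on $(k-2,\infty)$). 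We will verify this by checking that lines $j$ and $j+1$ intersect at $d=j$, which holds since $c_{j+1}-c_j = j$, and that the slopes are strictly increasing. It follows that the function has exactly $k$ maximal linear pieces, with breakpoints at $1,\dots,k-2$, all lying in $\DD$.

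The key step is a counting argument. Suppose $\beh{e^{(k)}} = \beh{e'_1\oplus\dots\oplus e'_b}$ with $e'_i \in \AA_{\FF}$. The right-hand side is $\max$ of at most $b$ affine functions (constants count as slope-$0$ lines, and $-\infty$ terms are ignored). Each such function is convex piecewise-linear, and each linear piece of the maximum coincides, on an interval of positive length, with one of the $b$ lines. Two distinct pieces have distinct slopes and hence come from distinct lines. Therefore the number of pieces is at most $b$, so $b\ge k$. Since $k$ is arbitrary, no uniform bound $B$ exists, which contradicts the bounded-sum property.

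The main obstacle is making rigorous the fact that a maximum of $b$ affine functions has at most $b$ linear pieces. We will prove it by restricting to the interior of each piece, where the function agrees with a single line of slope $j$. On a subinterval of positive length, one of the finitely many lines must attain the maximum on an infinite set. Two affine functions that agree on infinitely many points are equal, so that line has slope $j$. Distinct slopes then force distinct lines. Edge cases are harmless: coefficients equal to $-\infty$ only contribute $-\infty$ or a constant, and the constant case is just a line of slope $0$.
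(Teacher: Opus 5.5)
Your argument is correct and is essentially the paper's: the paper takes $\alpha_i = [\bind{2i}\otimes(-i^2)]$, which are tangent lines to a parabola, each uniquely dominant on an interval. It then shows that every dominant piece forces a distinct summand, with the same slope and intercept, in any equivalent sum, which is exactly your pigeonhole count.

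One harmless slip: with $c_j = j(j-1)/2$, line $0$ touches the maximum only at $d=0$. (This choice of $c_j$ also does not match your displayed summand $[\bind{1.0}\otimes(-1.0)]$.) So there are only $k-1$ pieces of positive length, and you get $b \ge k-1$ rather than $b \ge k$. Choosing $c_j = j(j+1)/2$ instead gives exactly $k$ pieces. Either way, this still rules out any uniform bound.
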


\begin{proof}
For $i \in \NN_{\ge 1}$, let $\alpha_i = [\bind{2i} \otimes (-i^2)]$. Then
$
\beh{\alpha_i}(x) = 2ix - i^2.
$
For every $N \in \NN_{\ge 1}$, the function
$
\beh{\alpha_1 \oplus \dots \oplus \alpha_N}
$
is the maximum of $N$ distinct affine functions, each dominating on a non-empty interval. 
Hence it cannot be represented as a maximum of fewer than $N$ affine expressions. 
Therefore no uniform bound exists.
\end{proof}

The failure of the bounded-sum property has concrete consequences for the size of automata constructions.

\begin{lemma}
\label{LEMMA:affine_expressions_blowup}
Let $\FF = \FF_{\arc, \times}$ and $\Sigma = \{\bot\}$. 
Then there exists a family $(\F_N)_{N \ge 1}$ such that:
\begin{enumerate}[label=(\alph*)]
\item $\F_N \in \beh{\WFFA^{\tau}_{\Sigma,\FF}(\AA_{\FF})}$ is recognized by an automaton with $\O(N)$ states;
\item every $\A \in \WFFA^{\tau}_{\Sigma,\FF}(\AA_{\FF})$ with $\beh{\A} = \F_N \cdot \F_N$ satisfies
$
|\A|_Q \ge 3 \cdot N^{\frac{4}{3}}.
$
\end{enumerate}
\end{lemma}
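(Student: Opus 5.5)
Two things need to be done: exhibit a family $\F_N$ recognized by an $\O(N)$-state purely transition-weighted $\AA_{\FF}$-WFFA, and show that any such automaton for $\F_N \cdot \F_N$ has at least $3 N^{4/3}$ states. Following Lemma~\ref{LEMMA:bounded_sum_for_affine_expressions}, take $\alpha_i = [\bind{2i} \otimes (-i^2)]$ for $i \in [N]$. Let $\F_N(\bot_d) = \max_{i \in [N]} (2id - i^2)$ and $\F_N(w) = \zero$ for $|w| \neq 1$.

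For (a), use states $q_0, q_1, \dots, q_N$ with $q_0$ initial and each $q_i$ final, all with weight $\one = 0.0$. Add transitions $(q_0, \bot, q_i)$ of weight $\alpha_i$. This gives $N+1$ states. (Only one state is really needed on the target side, but one state per summand is harmless and gives the required bound.)

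Then $(\F_N \cdot \F_N)(\bot_{d_1} \bot_{d_2}) = g(d_1) + g(d_2)$, where $g(x) = \max_i (2ix - i^2)$. Restricted to the part of $\DD^2$ that matters, $g$ is a convex piecewise-affine function with exactly $N$ pieces: each $\alpha_i$ dominates on an interval of positive length, since the breakpoints lie at $x = i + \tfrac12$. So the sum has $N^2$ distinct affine pieces on $\DD^2$. Each piece has gradient $(2i, 2j)$ and is the maximum on a full-dimensional cell.

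For (b), let $\A$ be any automaton in $\WFFA^{\tau}_{\Sigma,\FF}(\AA_{\FF})$ with $\beh{\A} = \F_N \cdot \F_N$ and $n = |\A|_Q$. The idea is to count runs of length $2$, generalizing the covering argument of Lemma~\ref{LEMMA:primitive_wffa_cube_property}. Because $\A$ is purely transition-weighted, each run of length $2$ contributes an affine function $(a_1 d_1 + b_1) + (a_2 d_2 + b_2)$ on $\DD^2$. The behavior on two-letter words is the maximum of these functions. Each of the $N^2$ cells must be covered by some run whose affine function coincides with $g(d_1)+g(d_2)$ on a subcube of that cell (same polyhedral covering argument as before). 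Hence that run must have first transition with $\beh{\cdot} = \alpha_i$ and second transition with $\beh{\cdot} = \alpha_j$ after redistributing constants. Constants cannot be shifted between the two transitions, since initial and final weights are $\one$.

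The realized pairs therefore give $N^2$ distinct pairs $(t_1, t_2)$ of consecutive transitions. Writing a run of length $2$ as $p \to q \to r$, bound the number of such pairs via the middle states. Let $m_q$ and $m'_q$ be the numbers of distinct first-step and second-step affine functions available through middle state $q$. We need $\sum_q m_q m'_q \ge N^2$.

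The main obstacle is turning this combinatorial covering into the exponent $4/3$. Incoming weights to $q$ come from at most $n$ source states, but parallel transitions make the count of distinct functions per state unbounded a priori. My first attempt would be to use that labels form a single letter, so transitions are determined by the triple (source, target, expression) and may still be parallel. I would then exploit an additional constraint: no spurious run may exceed $g(d_1)+g(d_2)$. This should force structure, namely that $q$ can only combine sets of slopes $\{i\}\times\{j\}$ forming a "rectangle" inside $[N]^2$ while total transitions are bounded.

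I would then balance the number of transitions against the number of middle states, with rectangles of size $k\times k$ costing about $k$ transitions. I expect some AM–GM-type optimization yields $n \ge 3N^{4/3}$. The precise bookkeeping leading to the constant $3$ is where I'm least certain, and I would likely need to refine the counting of transitions versus states there.
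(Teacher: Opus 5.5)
\textbf{There is a genuine gap: with your choice of $\F_N$, part (b) is false, so no bookkeeping can finish the proof.} Your $\F_N$ has only $N$ affine pieces, so $\F_N\cdot\F_N$ has only $N^2$ pieces, and it is cheap to recognize. Take states $p,h_1,\dots,h_N,r_1,\dots,r_N$ with $I=\{p\}$ and $F=\{r_1,\dots,r_N\}$, all initial and final weights $\one$. Add transitions $(p,\bot,h_i)$ weighted $\alpha_i$ and $(h_i,\bot,r_j)$ weighted $\alpha_j$. The runs are exactly $p\to h_i\to r_j$, with weight $\beh{\alpha_i}(d_1)+\beh{\alpha_j}(d_2)$. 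This automaton therefore recognizes your $\F_N\cdot\F_N$ with $2N+1<3N^{4/3}$ states for every $N\ge 2$. Balancing your own ``rectangle'' idea even gives $\O(N^{2/3})$ states.

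This is exactly why your final AM--GM step cannot close. Your inequality $\sum_q m_q m'_q\ge N^2$ with $m_q\le|I|$ and $m'_q\le|F|$ only yields $|I|\,|H|\,|F|\ge N^2$, i.e.\ $|\A|_Q\ge 3N^{2/3}$.

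\textbf{The missing idea is to pack $N^2$ distinct affine functions into $\O(N)$ states.} Let $\F_N(\bot_d)=\max_{i\in[N^2]}(2id-i^2)$. Realize it with $N$ initial states $q_1,\dots,q_N$, $N$ final states $q_{N+1},\dots,q_{2N}$, and one transition $(q_i,\bot,q_{j+N})$ weighted $\alpha_{N(i-1)+j}$ for each pair $i,j\in[N]$. Then each of the $N^4$ functions $\varphi_i(x)+\varphi_j(y)$ is the unique maximum near $(x,y)=(i,j)$, since the difference to any other pair is $(i-i')^2+(j-j')^2>0$. So some run must have first slope $2i$ and second slope $2j$, which gives $N^4$ pairwise distinct runs. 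Only the slopes are forced here; contrary to your remark, the constants can be shifted between the two transitions, but that does not matter.

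\textbf{The counting is then immediate, via a fact you overlooked.} Since $T\subseteq Q\times\Sigma\times Q$ and $\Sigma=\{\bot\}$, there are no parallel transitions. After trimming, any affine weight is either identically $\zero$ or never $\zero$. The automaton accepts only words of length $2$, which forces $I\cap F=\emptyset$ and forces every middle state to lie in $H=Q\setminus(I\cup F)$. Each run of length $2$ is therefore determined by a triple in $I\times H\times F$. Hence $N^4\le|I|\,|H|\,|F|$, and AM--GM gives $|\A|_Q\ge 3N^{4/3}$.

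The same no-parallel-transitions fact shows that your parenthetical in (a) is wrong: a single target state cannot carry $N$ differently weighted transitions from $q_0$.
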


\begin{proof}[Proof sketch]
For $i \in \NN_{\ge 1}$, define the function $\varphi_i: x \mapsto 2ix - i^2$, and for $N \in \NN_{\ge 1}$, define $f_N: x \mapsto \max_{i \in [N^2]} \varphi_i(x)$. 
Let $\F_N : \DD \Sigma^* \to S$ be given by $\F_N(\bot_d) = f_N(d)$ and $\F_N(w)=\zero$ for $|w| \neq 1$.
The function $f_N$ is realized by a purely transition-weighted WFFA over $\AA_{\FF}$ with $2N$ states: one distributes the $N^2$ affine expressions $\varphi_i$ over $N$ initial and $N$ final states.

For the Cauchy product, we obtain
$
(\F_N \cdot \F_N)(\bot_{x,y})
=
\max_{i,j \in [N^2]} \bigl(\varphi_i(x) + \varphi_j(y)\bigr).
$
Each pair $(i,j)$ defines a distinct affine function in two variables, and for every $(i,j)$ there is a non-empty region on which this function uniquely attains the maximum. Hence any WFFA recognizing $\F_N \cdot \F_N$ must realize at least $N^4$ distinct runs of length $2$.

Now let $\A=(Q,I,T,F,\wt_I, \wt_T, \wt_F)$ be a purely transition-weighted WFFA over $\AA_{\FF}$ with $\beh{\A} = \F_N \cdot \F_N$, and let
$
H = Q \setminus (I \cup F).
$
Since $\A$ accepts only words of length $2$, $I \cap F = \emptyset$ and every accepting run has the form
$
q_I \to q_H \to q_F,
$
where $q_I \in I$, $q_H \in H$, and $q_F \in F$. Therefore the number of runs is at most
$
|I| \cdot |H| \cdot |F|.
$
Thus
$
N^4 \le |I| \cdot |H| \cdot |F|.
$
By the arithmetic--geometric mean inequality,
$
|\A|_Q = |I| + |H| + |F| \ge 3 (|I|\,|H|\,|F|)^{1/3} \ge 3 N^{4/3}.
$
This proves the claim.
\end{proof}

\subsection{Monomials}
\label{SUBSEC:monomials}

In this subsection, we show that for $\FF = \FF_{\arc, \times}$, WFFAs with arbitrary $\FF$-expressions can be simulated by WFFAs using expressions of constant size.

An $\FF$-constraint $c \in \EE_{\FF}$ is called {\em basic} if $c = [(\pi_1 \oplus \pi_2) \bowtie \pi_3]$ where $\pi_1, \pi_2, \pi_3 \in \PP_{\FF}$ and ${\bowtie} \in \{=, \neq\}$. An $\FF$-constraint $c \in \EE_{\FF}$ is called {\em simple} if it is either basic or of the form $c = [\beta_1 \otimes \beta_2]$ where $\beta_1, \beta_2 \in \CC_{\FF}$ are basic $\FF$-constraints. An expression $e \in \EE_{\FF}$ is called an {\em $\FF$-monomial} if it is of the form $e = [c \otimes \alpha]$ where $c \in \CC_{\FF}$ is a simple $\FF$-constraint and $\pi \in \AA_{\FF}$ is a affine $\FF$-expression. Let $\MM_{\FF} \subseteq \CC_{\FF}$ denote the collection of all $\FF$-monomials.

\begin{remark}
\label{REM:monomials_expressiveness}
Clearly, $\beh{\AA_{\FF}} \subsetneq \beh{\MM_{\FF}}$. 
Moreover, all expressions in $\MM_{\FF}$ have bounded size, and thus can be evaluated in constant time (cf.~Remark~\ref{REM:exp_size}).
\end{remark}

\begin{theorem}
\label{THM:ff_arc_times_reduction_to_monomials}
Let $\FF = \FF_{\arc, \times}$. Then
$
\beh{\WFFA_{\Sigma, \FF}} = \beh{\WFFA_{\Sigma, \FF}(\MM_{\FF})}.
$
\end{theorem}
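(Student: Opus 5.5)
The inclusion $\beh{\WFFA_{\Sigma, \FF}(\MM_{\FF})} \subseteq \beh{\WFFA_{\Sigma, \FF}}$ is trivial. For the converse, fix $\A \in \WFFA_{\Sigma,\FF}$. The strategy is to show that every single $\FF$-expression $e$ is equivalent, as a function $\DD \to S$, to a finite $\oplus$-sum of $\FF$-monomials, i.e.\ $\beh{e} \in \beh{\clp{\MM_{\FF}}}$. Given this, we replace every transition weight of $\A$ by such an equivalent sum. By Lemma~\ref{LEMMA:plus_closed_expressiveness} (applied with $E = \MM_{\FF}$), we then obtain an equivalent $\MM_{\FF}$-weighted automaton.

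The core is a normal form for $\FF$-expressions over $\FF_{\arc,\times}$. Every $e$ evaluates, as a function of $d \in \RR_{\ge 0}$, to a piecewise-affine function with values in $\RR\cup\{-\infty\}$. I would prove by structural induction the following stronger statement. For each $e$ there are finitely many pairs $(c_j,\alpha_j)$ such that $\beh{e} = \beh{\bigoplus_j [c_j \otimes \alpha_j]}$. Here $\alpha_j \in \AA_{\FF}$ (allowing the constant $-\infty$ to be absorbed), and each $c_j$ is an \emph{interval constraint}, meaning its support is an interval (point, open, half-open, or unbounded) of $\RR_{\ge0}$ on which $e$ agrees with $\alpha_j$. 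The supports of the $c_j$ partition $\DD$.

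The induction is a refinement argument. For $s$ and $\bind{s}$ the claim is immediate, using the trivial constraint $[0.0 = 0.0]$. For $e_1 \oplus e_2$ and $e_1 \otimes e_2$, take the common refinement of the two interval partitions. On each cell the two affine pieces $a_1d+b_1$ and $a_2d+b_2$ combine under $\otimes$ to an affine function, which is closed by Example~\ref{EX:closure_expressions}(d). Under $\oplus$ they combine to the maximum, and we split the cell further at the crossing point where one function dominates. For $e_1 \bowtie e_2$, the value on each cell is $\one$ or $\zero$ according to whether two affine functions coincide. That set is the whole cell, a single point, or empty, so again we refine into subintervals with constant affine value $0.0$ or $-\infty$.

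The remaining, and most delicate, step is to express each interval as a \emph{simple} constraint, i.e.\ a product of at most two basic constraints $[(\pi_1\oplus\pi_2)\bowtie\pi_3]$. The candidates are as follows. A half-line condition $d \le r$ is $[(\bind{1.0}\oplus r) = r]$ and $d<r$ is $[(\bind{1.0}\oplus r)\neq \bind{1.0}]$; symmetrically for $d\ge r$, $d>r$. A point $d=r$ is $[(\bind{1.0}\oplus \zero) = r]$. A bounded interval is the $\otimes$ of a lower and an upper half-line constraint, which is exactly the two-factor form allowed in a simple constraint. Whole-domain and empty pieces use $[(0.0\oplus 0.0)=0.0]$ and its negation.

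The main obstacle I anticipate is bookkeeping at breakpoints. Crossing points and equality points produce degenerate cells, and the refinement must keep intervals whose endpoints are given by real constants. It must also treat $-\infty$-valued pieces, where the affine form degenerates and the difference $a_1d+b_1-(a_2d+b_2)$ is undefined. These can be handled by simply dropping those pieces from the sum, since $\zero$ is neutral for $\oplus$. Finiteness of the partition is clear because each operation adds at most finitely many breakpoints.
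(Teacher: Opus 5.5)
Your proposal is correct and follows essentially the paper's route: a structural induction showing that each $\beh{e}$ is piecewise affine via partition refinement (Lemma~\ref{LEMMA:finarc_exp_piecewise_affine}), encoding each piece as a monomial with an interval constraint to land in $\clp{\MM_{\FF}}$ (Lemma~\ref{LEMMA:polynomial_for_piecewise}), and then removing the $\oplus$-sums via Lemma~\ref{LEMMA:plus_closed_expressiveness}. Merging the first two lemmas into one strengthened induction is only a cosmetic difference, and your encodings of points and half-lines are literally of the required basic form $[(\pi_1\oplus\pi_2)\bowtie\pi_3]$, which is slightly more careful than the paper's own encodings.
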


\begin{proof}
The statement follows from Lemmas~\ref{LEMMA:from_arbitrary_expressions_to_monomials} and~\ref{LEMMA:plus_closed_expressiveness}.
\end{proof}
We briefly recall that a function $f : \DD \to S$ is \emph{piecewise affine} if there exists a finite partition of $\DD$ into intervals such that, on each interval, either $f$ is affine or $f$ is constantly $\zero$.

\begin{lemma}
\label{LEMMA:finarc_exp_piecewise_affine}
For every $e \in \EE_{\FF}$, the function $\beh{e} : \DD \to S$ is piecewise affine.
\end{lemma}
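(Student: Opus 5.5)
We argue by structural induction on $e \in \EE_{\FF}$, where $\FF = \FF_{\arc,\times}$, $\DD = \RR_{\ge 0}$ and $\SS = \SS_{\arc}$. The inductive statement is slightly stronger than the lemma: there is a finite partition of $\DD$ into intervals (possibly degenerate, i.e.\ single points, and possibly unbounded) such that on each interval $\beh{e}$ is either constantly $\zero = -\infty$ or equal to an affine map $d \mapsto a d + b$ with $a, b \in \RR$. Allowing point intervals is essential, since constraints such as $[e_1 = e_2]$ can single out isolated points (cf.\ the function $\F'$ in the proof of Lemma~\ref{LEMMA:wffa_with_affine_expressions_qfls_inclusion}).

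For the base cases, a constant $s$ gives a constant function, which is affine with slope $0$, or is $\zero$ if $s = -\infty$. A binding term $\bind{s}$ gives $d \mapsto d \times s$, which is affine in $d$ if $s \in \RR$ and constantly $\zero$ if $s = -\infty$; in both cases one interval suffices.

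For the inductive steps, take $e = e_1 \op e_2$ and let $P_1, P_2$ be the partitions for $e_1, e_2$. Their common refinement $P$, consisting of all nonempty pairwise intersections, is again a finite partition into intervals, and on each of its pieces both $\beh{e_1}$ and $\beh{e_2}$ are of the required form.
\begin{itemize}
\item For $\otimes$ (ordinary $+$, with $-\infty$ absorbing), the result on each piece is either $\zero$ or a sum of two affine maps, which is affine.
\item For $\oplus$ (max), if one side is $\zero$ the result is the other side. Otherwise two affine functions $f, g$ either coincide, or $f - g$ is affine and nonconstant and so has at most one zero, or $f - g$ is a nonzero constant. In every case we split the piece into at most three subintervals (below the crossing point, the crossing point itself, and above it), on each of which the maximum is a single affine function.
\item For $=$ and $\neq$, the same case analysis applies. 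On a piece where both sides are $\zero$, or where both are affine and identical, the comparison is constant. If exactly one side is $\zero$, the sides always differ. If both are affine and distinct, they agree at most at one point, so we split into at most three subintervals on which the output is constantly $\one = 0.0$ (affine) or constantly $\zero$.
\end{itemize}
Each step multiplies the number of pieces by at most $3$, so finiteness is preserved. This completes the induction.

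The main point needing care is the case analysis for $\oplus$ and the comparisons. Two things have to be handled correctly there. First, $\zero = -\infty$ is not an affine value and must be treated separately on each piece. Second, the crossing sets of two affine functions on an interval are only empty, a single point, or the whole interval. The latter is precisely why degenerate (point) intervals must be admitted in the notion of partition. Everything else is a routine refinement of finite interval partitions.
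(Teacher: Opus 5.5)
Your proof is correct and follows essentially the same route as the paper's. The paper also uses structural induction, merges the two subexpressions' partitions into a common refinement (its Remark on extending $\delta$), and splits each piece for $\oplus$, $=$, $\neq$ using the fact that two distinct affine functions cross at most once, with singleton pieces allowed. The only cosmetic difference is that the paper normalizes partitions to the form $\P(\delta)$ (singletons at every breakpoint, open intervals between them), whereas you allow arbitrary intervals. Your separate treatment of $\zero = -\infty$ on each piece is, if anything, more careful than the paper's case analysis.
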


\begin{proof}[Proof sketch]
The statement follows by structural induction on $e$. 
Constants and bindings are affine, $\otimes$ preserves affinity, and the operations $\oplus$, equality, and inequality introduce only finitely many breakpoints, yielding a finite partition of $\DD$.
\end{proof}

\begin{lemma}
\label{LEMMA:polynomial_for_piecewise}
Every piecewise-affine function $f : \DD \to S$ can be represented by an expression in $\clp{\MM_{\FF}}$.
\end{lemma}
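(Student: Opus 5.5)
Let $f:\DD\to S$ be piecewise affine. We take a finite partition $\DD = J_1 \cup \dots \cup J_m$ into intervals such that on each $J_k$ either $f(d) = a_k d + b_k$ for some $a_k, b_k \in \RR$, or $f \equiv \zero = -\infty$. The plan is to write $f$ as a finite max of pieces that vanish ($-\infty$) off their interval:
\[
f = \bigoplus_{k \,:\, f|_{J_k} \neq \zero} \bigl(\chi_{J_k} \otimes g_k\bigr),
\]
where $\chi_{J_k}$ is the semiring indicator of $J_k$ ($0.0$ inside, $-\infty$ outside) and $g_k$ is the affine function $d \mapsto a_k d + b_k$. Each summand will be produced by one monomial $[c_k \otimes \alpha_k]$, so that the sum lies in $\clp{\MM_{\FF}}$. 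If $f \equiv \zero$, the single monomial $\zero$ (a constant is affine, and any simple constraint may be prefixed) suffices.

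\textbf{Affine part.} Since $\DD = \RR_{\ge 0}$ and $\bb_{\times}(d,s) = d \times s$, we can take $\alpha_k = [\bind{a_k} \otimes b_k] \in \AA_{\FF}$ whenever $a_k \in \RR$. The convention $d \times (-\infty)$ is harmless here because $a_k$ is real on a non-$\zero$ piece. A subtle point is the case $a_k \neq 0$ at $d=0$: there $\bind{a_k}$ evaluates to $0$, which is correct, so no issue arises.

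\textbf{Constraint part.} This is the main step. Every interval in $\RR_{\ge 0}$ is an intersection of at most two half-lines of the forms $\{d \le r\}$, $\{d < r\}$, $\{d \ge r\}$, $\{d > r\}$, or is a single point $\{r\}$. Each half-line must be expressed by a basic constraint $[(\pi_1 \oplus \pi_2) \bowtie \pi_3]$:
\begin{itemize}
\item $d \le r$ is $[(\bind{1.0} \oplus r) = r]$;
\item $d > r$ is $[(\bind{1.0} \oplus r) \neq r]$;
\item $d \ge r$ is $[(\bind{1.0} \oplus r) = \bind{1.0}]$;
\item $d < r$ is $[(\bind{1.0} \oplus r) \neq \bind{1.0}]$;
\item $d = r$ is $[(\bind{1.0} \oplus \bind{1.0}) = r]$.
\end{itemize}
These correspond to Example~\ref{EX:ff_arc_times_expressions}(a),(b), with $\bind{1.0}$ evaluating to $d$. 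An interval bounded on both sides is then the product $\beta_1 \otimes \beta_2$ of two basic constraints, which is a simple constraint. An unbounded interval, or all of $\DD$, needs only one basic constraint. For all of $\DD$ we can use the trivially true constraint $[(0.0 \oplus 0.0) = 0.0]$.

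\textbf{Verification.} Set $e = [(c_1 \otimes \alpha_1) \oplus \dots \oplus (c_r \otimes \alpha_r)]$. For any $d$, exactly one $J_k$ contains $d$. All other constraints evaluate to $\zero = -\infty$, which absorbs their summands, while $c_k(d) = \one = 0.0$ leaves $\alpha_k(d) = f(d)$. Taking the max gives $\beh{e}(d) = f(d)$. Points in $\zero$-pieces get $-\infty$, either because no summand is active or, if every piece is $\zero$, from the constant term.

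The main obstacle I expect is the bookkeeping at the endpoints: open versus closed ends, and degenerate single-point pieces. This is exactly why the five boundary forms above, including equality, are all needed. I would check each form against the definitions of $=$ and $\neq$ before assembling the sum.
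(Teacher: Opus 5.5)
Your proof is correct and takes essentially the same route as the paper. Each non-$\zero$ piece is encoded by a monomial whose simple constraint is the indicator of its interval and whose affine part is $[\bind{a_k} \otimes b_k]$, and the pieces are combined by an $\oplus$-sum. Your version is slightly more careful: it handles closed and half-open endpoints, and it writes the point constraint as the genuinely basic $[(\bind{1.0} \oplus \bind{1.0}) = r]$, whereas the paper uses its canonical partition into points and open intervals and the not-literally-basic constraint $[\bind{1} = d]$.
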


\begin{proof}[Proof sketch]
Let $f : \DD \to S$ be piecewise affine. Then there exists a finite partition of $\DD$ into intervals such that $f$ is affine on each interval.
Each affine component can be encoded by a monomial together with constraints restricting its domain to the corresponding interval; see Example~\ref{EX:ff_arc_times_expressions}.
Taking the $\oplus$-sum over all components yields an expression in $\clp{\MM_{\FF}}$ representing $f$.
\end{proof}

\begin{lemma}
\label{LEMMA:from_arbitrary_expressions_to_monomials}
Let $\FF = \FF_{\arc, \times}$. Then
$
\beh{\EE_{\FF}} = \beh{\clp{\MM_{\FF}}}.
$
\end{lemma}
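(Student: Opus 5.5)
The plan is to prove the two inclusions separately; the inclusion $\beh{\clp{\MM_{\FF}}} \subseteq \beh{\EE_{\FF}}$ is immediate, since $\clp{\MM_{\FF}} \subseteq \EE_{\FF}$ syntactically. Every monomial is built from primitive expressions with $\oplus$, $\otimes$, $=$, $\neq$, and a finite $\oplus$-sum of monomials is again an $\FF$-expression. So the real work is the inclusion $\beh{\EE_{\FF}} \subseteq \beh{\clp{\MM_{\FF}}}$. For this I will simply chain Lemma~\ref{LEMMA:finarc_exp_piecewise_affine} and Lemma~\ref{LEMMA:polynomial_for_piecewise}: given $e \in \EE_{\FF}$, the first gives that $\beh{e}$ is piecewise affine, and the second yields some $e' \in \clp{\MM_{\FF}}$ with $\beh{e'} = \beh{e}$.

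Since the two cited lemmas are only sketched, I will make the second step concrete, as that is where the content lies. Let $\DD = \RR_{\ge 0}$, partitioned into finitely many intervals $J_1, \dots, J_m$, with $\beh{e}(d) = a_j d + b_j$ on $J_j$ or $\beh{e} \equiv -\infty$ there.
\begin{itemize}
\item \textbf{Affine part.} For each non-$\zero$ piece take the affine expression $\alpha_j = [\bind{a_j} \otimes b_j]$. If $a_j < 0$, $\bind{a_j}$ is still allowed, since $a_j \in S$ and $\bb_\times(d,a_j) = a_j d$.
\item \textbf{Endpoint constraints.} Express the membership condition $d \in J_j$ by at most two basic constraints, one per endpoint. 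For a lower endpoint $c$:
\begin{itemize}
\item $c \le d$ becomes $[(c \oplus \bind{1.0}) = \bind{1.0}]$, which is basic with $\pi_1 = c$, $\pi_2 = \pi_3 = \bind{1.0}$;
\item $c < d$ becomes $[(c \oplus \bind{1.0}) \neq c]$.
\end{itemize}
Upper endpoints are handled symmetrically, as in Example~\ref{EX:ff_arc_times_expressions}. A degenerate interval $\{c\}$ is treated by the single basic constraint $[(\bind{1.0} \oplus \bind{1.0}) = c]$, using idempotency of $\max$. A missing endpoint, or the endpoint $0$ of $\RR_{\ge 0}$, needs no constraint; if there are no constraints at all, use a trivially true basic constraint such as $[(0.0 \oplus 0.0) = 0.0]$.
\item \textbf{Monomials.} The product of at most two basic constraints is a simple constraint $c_j$, so $\mu_j = [c_j \otimes \alpha_j] \in \MM_{\FF}$. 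It evaluates to $a_j d + b_j$ on $J_j$ and to $-\infty$ elsewhere, because a false guard gives $\zero$, which is absorbing.
\item \textbf{Summing.} Since the $J_j$ are pairwise disjoint, $\bigoplus_j \mu_j$ agrees with $\beh{e}$ at every point; pieces with $\beh{e} \equiv \zero$ are simply omitted. If every piece is $\zero$, take the single monomial $[c \otimes \alpha]$ with any $c$ and $\alpha = [\bind{0.0} \otimes (-\infty)]$, which is identically $\zero$.
\end{itemize}

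For the first step, I will do the structural induction with the following invariant: $\beh{e}$ is piecewise affine with finitely many breakpoints.
\begin{itemize}
\item Constants and bindings $\bind{s}$ are affine. The value $s = -\infty$ gives constant $\zero$.
\item $\otimes$ (addition) of two piecewise-affine functions is affine on the common refinement of their partitions.
\item $\oplus$ (max) of two affine functions on an interval switches at most once, at their intersection point.
\item For $=$ and $\neq$: on each refined interval, two affine functions are either identical or agree at most at one point. Hence the resulting $\{\zero,\one\}$-valued function is piecewise constant, with finitely many breakpoints and possibly singleton pieces.
\end{itemize}
The main obstacle I expect is this bookkeeping. Singleton and half-open intervals must be allowed in the partition, and the $-\infty$ values must be handled consistently: $\max$ with $-\infty$ and comparisons involving $-\infty$ should be checked on each piece. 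Once the partition is allowed to contain arbitrary (including degenerate) intervals, each case is routine.
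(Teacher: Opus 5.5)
Your proposal is correct and is essentially the paper's proof: the nontrivial inclusion is obtained by chaining Lemma~\ref{LEMMA:finarc_exp_piecewise_affine} with Lemma~\ref{LEMMA:polynomial_for_piecewise}. Your explicit guards, affine parts and structural induction mirror the appendix constructions, and your singleton guard $[(\bind{1.0}\oplus\bind{1.0}) = c]$ and $\zero$-monomial follow the basic/affine syntax more literally than the appendix does. One small correction: only a \emph{closed} endpoint at $0$ can be dropped, since an open one still needs the guard $[(0.0 \oplus \bind{1.0}) \neq 0.0]$, which stays within your bound of two basic constraints.
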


\begin{proof}[Proof sketch]
By Lemma~\ref{LEMMA:finarc_exp_piecewise_affine}, every expression $e \in \EE_{\FF}$ defines a piecewise-affine function $f : \DD \to S$.
By Lemma~\ref{LEMMA:polynomial_for_piecewise}, such a function can be represented by an expression in $\clp{\MM_{\FF}}$.
\end{proof}

As a consequence, we obtain the following strict hierarchy:

\begin{corollary}
Let $\FF = \FF_{\arc, \times}$. Then:
\[
\beh{\WFFA_{\Sigma, \FF}(\PP_{\FF})} 
\subsetneq 
\beh{\WFFA_{\Sigma, \FF}(\AA_{\FF})} 
\subsetneq 
\beh{\WFFA_{\Sigma, \FF}(\MM_{\FF})} 
= 
\beh{\WFFA_{\Sigma, \FF}}.
\]
\end{corollary}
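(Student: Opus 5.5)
The strict hierarchy in the corollary splits into three claims: two strict inclusions and one equality.

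\textbf{First strict inclusion.} The relation $\beh{\WFFA_{\Sigma, \FF}(\PP_{\FF})} \subsetneq \beh{\WFFA_{\Sigma, \FF}(\AA_{\FF})}$ does not depend on $\Sigma$. The inclusion itself is trivial because $\PP_{\FF} \subseteq \AA_{\FF}$. For strictness I would reuse the witness from Lemma~\ref{LEMMA:primitive_counterexample}, namely $\F(\bot_{d_1\dots d_n}) = n+\sum_i d_i$, extended by $\zero$ to words containing letters other than a fixed $\bot \in \Sigma$.

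To see that this $\F$ is not primitive-recognizable over a general $\Sigma$, take a hypothetical $\PP_{\FF}$-weighted automaton for it. Delete every transition not labelled $\bot$. The result is a $\PP_{\FF}$-weighted automaton over $\{\bot\}$ recognizing the restriction of $\F$. This contradicts Lemma~\ref{LEMMA:primitive_counterexample}.

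\textbf{Equality.} The equality $\beh{\WFFA_{\Sigma, \FF}(\MM_{\FF})} = \beh{\WFFA_{\Sigma, \FF}}$ is exactly Theorem~\ref{THM:ff_arc_times_reduction_to_monomials}.

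\textbf{Second strict inclusion.} First I establish $\beh{\WFFA_{\Sigma, \FF}(\AA_{\FF})} \subseteq \beh{\WFFA_{\Sigma, \FF}(\MM_{\FF})}$. Every affine expression $\alpha$ is equivalent to a monomial $[c \otimes \alpha]$ with a trivially true simple constraint $c$, for example $c = [(0.0 \oplus 0.0) = 0.0]$, which evaluates to $\one$. Replacing each transition weight by this monomial preserves the behaviour.

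For strictness, let $\F'$ be the QFL used in Lemma~\ref{LEMMA:wffa_with_affine_expressions_qfls_inclusion}, namely $\F'(\bot_{0.5}) = 1$ and $\F' = \zero$ elsewhere. It lies in $\beh{\WFFA_{\Sigma, \FF}}$, and hence in the monomial class by the equality above. It is not affine-recognizable. Over $\{\bot\}$ this is the argument of Lemma~\ref{LEMMA:wffa_with_affine_expressions_qfls_inclusion}. Over a general $\Sigma$ I would again restrict to $\bot$-transitions: the value on single-letter $\bot$-words is then a finite maximum of affine functions of $d$. Such a function is either identically $-\infty$ or finite everywhere, so it cannot vanish everywhere except at the single point $d = 0.5$.

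\textbf{Main obstacle.} The only delicate point is that the cited lemmas are stated for $\Sigma = \{\bot\}$, whereas the corollary is for arbitrary $\Sigma$. The restriction argument above handles this, because deleting transitions preserves both $\PP_{\FF}$- and $\AA_{\FF}$-weightedness. If the corollary is meant only for $\Sigma = \{\bot\}$, this step is unnecessary.
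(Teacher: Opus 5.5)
Your proof is correct and follows the paper's route. The paper obtains the corollary by combining the separations of Lemma~\ref{LEMMA:wffa_with_affine_expressions_qfls_inclusion} (primitive $\subsetneq$ affine $\subsetneq$ full, using the same two witnesses you use) with the monomial reduction of Theorem~\ref{THM:ff_arc_times_reduction_to_monomials}. Your restriction to $\bot$-labelled transitions usefully closes the gap the paper leaves implicit between its $\Sigma=\{\bot\}$ lemmas and the general-$\Sigma$ statement. Your direct embedding of affine expressions into monomials is harmless but unnecessary: the strict inclusion of the affine class in the full class, together with $\beh{\WFFA_{\Sigma,\FF}(\MM_{\FF})}=\beh{\WFFA_{\Sigma,\FF}}$, already yields the second strict inclusion.
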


\begin{remark}
\label{REM:negation}
The reduction result of Theorem~\ref{THM:ff_arc_times_reduction_to_monomials} can be used to derive expressiveness limitations of WFFAs. 
In particular, one can show that the class of WFFA-recognizable QFLs is not closed under pointwise negation over the arctic semiring $\SS_{\arc}$. 
However, after an appropriate transformation, such negated functions can be represented in the tropical semiring. 
We omit the technical details.
\end{remark}

\section{Proof of the Kleene--Sch\"utzenberger Theorem}
\label{SEC:proof_kleene_schuetzenberger}

In this section, we prove a parametric version of the Kleene--Sch\"utzenberger theorem for WFFAs. In Section~\ref{SEC:weighted_regexp}, we stated the theorem for the full class of WFFAs with arbitrary $\FF$-expressions as transition weights. Here, we generalize this result to WFFAs whose transitions are restricted to an arbitrary collection $E$ of $\FF$-expressions. This formulation covers the full class as a special case and provides a flexible framework to study WFFAs with different types of transition expressions.

Let $\Sigma$ be an alphabet and $\FF = (\SS, \DD, \bb)$ a finance semiring with $\SS = (S, \oplus, \otimes, \zero, \one)$.

Given a collection of $\FF$-expressions $E \subseteq \EE_{\FF}$, we say that $\R \in \Reg_{\Sigma, \FF}$ is {\em $E$-weighted} if, for every subexpression of $\R$ of the form $e_a$ with $e \in \EE_{\FF}$ and $a \in \Sigma$, we have $e \in E$. Let
$
\Reg_{\Sigma, \FF}(E) = \{\R \in \Reg_{\Sigma, \FF}  \mid  \R \text{ is } E\text{-weighted}\}.
$
For any collection of weighted finance regular expressions $\W \subseteq \Reg_{\Sigma, \FF}$, let 
$
\beh{\W} = \{\beh{\R}  \mid  \R \in \W\}.
$

Unfortunately, under a restriction to an arbitrary collection $E$ of $\FF$-expressions, the expressive correspondence between WFFA and weighted finance regular expressions may fail. For example, the following lemma demonstrates that, already for primitive expressions, weighted finance regular expressions are strictly more expressive than WFFAs. 

\begin{lemma}
Let $\Sigma = \{\bot\}$ and $\FF = \FF_{\arc, \times}$. Then
$
\beh{\Reg_{\Sigma, \FF}(\PP_{\FF})} \supsetneq \beh{\WFFA_{\Sigma, \FF}(\PP_{\FF})}.
$
\end{lemma}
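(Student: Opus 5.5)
Two things have to be shown: the inclusion $\beh{\WFFA_{\Sigma, \FF}(\PP_{\FF})} \subseteq \beh{\Reg_{\Sigma, \FF}(\PP_{\FF})}$, and strictness via a QFL that a $\PP_{\FF}$-weighted expression defines but no $\PP_{\FF}$-weighted WFFA recognizes. For strictness I would reuse the QFL of Lemma~\ref{LEMMA:primitive_counterexample},
\[
\F(\bot_{d_1,\dots,d_n}) = n + \sum_{i=1}^n d_i .
\]
That lemma already shows $\F \notin \beh{\WFFA_{\Sigma,\FF}(\PP_{\FF})}$, so it remains to write $\F$ as a primitive-weighted expression. The point is that regular expressions may interleave $\eps$-weights between letters, while a WFFA over $\PP_{\FF}$ cannot. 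I take
\[
\R = \bigl(\bind{1.0}_{\bot} \cdot [1.0]_{\eps}\bigr)^* .
\]
The inner QFL $\beh{\bind{1.0}_{\bot} \cdot [1.0]_{\eps}}$ maps $(\bot,d)$ to $d+1$ and every other word to $\zero$. It is therefore proper, $\R$ is valid, and $\R \in \Reg_{\Sigma,\FF}(\PP_{\FF})$. Its star sums over factorizations into single letters only, because all other factors give $\zero = -\infty$. Hence $\beh{\R}(\bot_{d_1,\dots,d_n}) = \sum_i (d_i+1)$, and $\beh{\R}(\eps)=0.0=\one$, matching $\F(\eps)=0$. This is essentially the example from Theorem~\ref{THM:closure_counterexample_primitive}(b), here written as a regular expression.

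For the inclusion I would follow the classical state-elimination (or Kleene/McNaughton--Yamada) scheme. Let $\A \in \WFFA_{\Sigma,\FF}(\PP_{\FF})$. Its behavior is the $\oplus$ over $i \in I$ and $f \in F$ of
\[
[\wt_I(i)]_{\eps} \cdot L_{i,f} \cdot [\wt_F(f)]_{\eps},
\]
where $L_{i,f}$ assigns to each word the sum of transition weights of paths from $i$ to $f$, with no initial or final weights. Initial and final weights thus become $\eps$-constants, which is permitted in $\Reg_{\Sigma,\FF}(\PP_{\FF})$. The path series $L_{p,q}^{(k)}$, restricted to intermediate states among the first $k$ states, are built by the usual recursion
\[
L^{(k)}_{p,q} = L^{(k-1)}_{p,q} \oplus L^{(k-1)}_{p,k}\cdot (L'^{(k-1)}_{k,k})^* \cdot L^{(k-1)}_{k,q},
\]
where $L'_{k,k}$ is the proper part, excluding the empty path. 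The base cases $L^{(0)}_{p,q}$ are sums of atoms $(\wt_T(t))_{a}$ with $\wt_T(t) \in \PP_{\FF}$, together with $\one_{\eps}$ when $p=q$. The recursion only uses $\oplus$, $\cdot$ and star applied to proper series, so all atoms stay primitive and every starred subexpression is proper.

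I expect the main obstacle to be keeping the starred subexpressions proper while still recording the empty path. I would handle this by tracking $L'$, the paths of length at least one, separately throughout, and by adding the $\eps$ contribution $\one_{\eps}$ only at the top level where $p=q$. With that bookkeeping in place the induction is routine, and combining it with the separating example gives the strict inclusion.
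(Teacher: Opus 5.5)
Your argument is correct and is essentially the paper's: it uses the same separating QFL, realized by the witness $([1.0]_{\eps} \cdot \bind{1.0}_{\bot})^*$ (your order of the two factors is immaterial by commutativity), combined with Lemma~\ref{LEMMA:primitive_counterexample}. The inclusion, which the paper's proof leaves implicit, is exactly the state elimination of Lemma~\ref{LEMMA:from_wffa_to_restricted_regexp}, whose output lies in $\Reg^{\res}_{\Sigma,\FF}(\PP_{\FF}) \subseteq \Reg_{\Sigma,\FF}(\PP_{\FF})$, so you may simply cite it; if you redo it, use the bookkeeping you settle on at the end ($\eps$-free path expressions throughout, empty path added only at the top level), since feeding the $\one_{\eps}$-augmented $L^{(k-1)}_{p,k}$ into the recursion would count paths twice in a non-idempotent semiring.
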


\begin{proof}
Let $\F: \DD \Sigma^* \to S$ be defined by $\F(\bot_{d_1, \dots, d_n}) = n + \sum_{i = 1}^n d_i$ for all $d_1, \dots, d_n \in \DD$. Let $\R_{\F} \in \Reg_{\Sigma, \FF}(E)$ be the regular expression $([1.0]_{\eps} \cdot \bind{1.0}_{\bot})^*$. Then, $\beh{\R_{\F}} = \F$, so $\F \in \beh{\Reg_{\Sigma, \FF}(\PP_{\FF})}$.

On the other hand, by Lemma~\ref{LEMMA:primitive_counterexample}, $\R \notin \beh{\WFFA_{\Sigma, \FF}(E)}$.
\end{proof}

\begin{definition}
A regular expression $\R \in \Reg_{\Sigma, \FF}$ is called {\em $\eps$-free} if it can be derived from the grammar
\[
\R \; ::= \; e_a  \mid  \R \oplus \R  \mid  \R \cdot \R  \mid  \R \cdot \R^*  \mid  \R^* \cdot \R
\]
where $e \in \EE_{\FF}$ and $a \in \Sigma$.
\end{definition}

For $E \subseteq \EE_{\FF}$, let $\Reg_{\Sigma, \FF}^{\epsfree}(E) = \{\R \in \Reg_{\Sigma, \FF}(E)  \mid  \R \text{ is } \epsfree\}$.

\begin{remark}
\label{REM:epsfree_proper}
Note that, for any $\R \in \Reg_{\Sigma, \FF}^{\epsfree}(E)$, $\beh{\R}$ is proper, i.e., $\beh{\R}(\eps) = \zero$.
\end{remark}

\begin{definition}
A regular expression $\R \in \Reg_{\Sigma, \FF}$ is called {\em restricted} if it can be derived from the grammar
\[
\R \; ::= \; s_{\eps}  \mid  e_a  \mid  \R \oplus \R  \mid  \R \cdot \R  \mid  (\R_{\epsfree})^*
\]
where $s \in S$, $e \in \EE_{\FF}$ and $\R_{\epsfree} \in \Reg_{\Sigma, \FF}^{\epsfree}(\EE_{\FF})$.
\end{definition}

Clearly,
$
\Reg_{\Sigma, \FF}^{\epsfree}(E) \subseteq \Reg_{\Sigma, \FF}^{\res}(E) \subseteq \Reg_{\Sigma, \FF}(E).
$

We now present the main result of this section: a parametric version of the Kleene--Sch\"utzenberger theorem for WFFAs, which relates weighted finance regular expressions and WFFAs relative to an arbitrary collection of $\FF$-expressions. This formulation captures both the case where the collection is $(S,\otimes)$-closed and the general case for an arbitrary collection $E \subseteq \EE_{\FF}$.

Let $\WFFA_{\Sigma, \FF}^{\tau, \mathrm{proper}}(E)$ denote the class of all purely transition-weighted automata $\A \in \WFFA_{\Sigma, \FF}^{\tau}(E)$ such that $\beh{\A}$ is a proper QFL.

\begin{theorem}[Parametric Kleene--Sch\"utzenberger theorem]
\label{THM:parametric_kleene_schuetzenberger}
Let $E \subseteq \EE_{\FF}$ be a collection of $\FF$-expressions.
\begin{enumerate}[label=(\alph*)]
\item If $E$ is $(S, \otimes)$-closed, then $\beh{\Reg_{\Sigma, \FF}(E)} = \beh{\WFFA_{\Sigma, \FF}(E)}$.
\item For an arbitrary $E$, $\beh{\Reg_{\Sigma, \FF}^{\res}(E)} = \beh{\WFFA_{\Sigma, \FF}(E)}$.
\item For an arbitrary $E$, then $\beh{\Reg_{\Sigma, \FF}^{\epsfree}(E)} = \beh{\WFFA_{\Sigma, \FF}^{\tau, \mathrm{proper}}(E)}$.
\end{enumerate}
\end{theorem}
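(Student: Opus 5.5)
Each of the three parts splits into two inclusions: from expressions to automata, proved by structural induction using the closure results of Section~\ref{SEC:closure_wffa}, and from automata to expressions, proved by state elimination. The common base case is this. For $e_a$ with $e \in E$, the two-state automaton $q_0 \xrightarrow{a} q_1$ with transition weight $e$ and initial and final weights $\one$ is $E$-weighted, purely transition-weighted and proper. For $s_\eps$, a single state that is both initial and final with initial weight $s$ and final weight $\one$ is $E$-weighted, since it has no transitions.

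For the inductive steps in (a) and (b), $\oplus$ and $\cdot$ are handled by Theorem~\ref{THM:closure_properties_e_weighted_wffa}(a),(c), which need no assumption on $E$. For the star in (a), Lemma~\ref{LEMMA:closure_kleene_star_for_e}(a) applies because $E$ is $(S,\otimes)$-closed. For (b), the star is applied only to an $\eps$-free subexpression. We first prove (c) in the direction from expressions to automata, so that the body of the star is recognized by a purely transition-weighted proper $E$-weighted automaton; Lemma~\ref{LEMMA:closure_kleene_star_for_e}(b) then gives its star. For (c) itself we induct on the $\eps$-free grammar with the invariant ``purely transition-weighted and proper''. Sum is the disjoint union; it stays pure and remains proper because both value-at-$\eps$ entries are $\zero$ (Corollary~\ref{COR:closure_properties_purely_transition_weighted}(a)). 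Cauchy product is Corollary~\ref{COR:closure_properties_purely_transition_weighted}(b). For $\R\cdot\R'^*$ and $\R'^*\cdot\R$ we combine the star (Corollary~\ref{COR:closure_properties_purely_transition_weighted}(c)) with the product. This is legitimate because products with a proper factor are proper, although $\R'^*$ alone takes value $\one$ at $\eps$ and so still lies in $\WFFA^\tau$.

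For the converse directions, take $\A\in\WFFA_{\Sigma,\FF}(E)$. By Lemma~\ref{LEMMA:init_and_final_normalization_for_arbitrary_expressions} the transitions stay in $E$ while we move to a form in which every transition weight from $E$ is untouched and all constants sit on initial and final weights. We then run the classical state-elimination (or Arden/McNaughton–Yamada) procedure, writing each edge as the one-letter expression $e_a$ and placing initial and final weights as factors $s_\eps$ at the two ends. The resulting expression has the shape $\bigoplus_{i,f} (\wt_I(i))_\eps\cdot \R_{i,f}\cdot(\wt_F(f))_\eps$, where each $\R_{i,f}$ describes the nonempty paths from $i$ to $f$, together with a constant term for $\eps$. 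To keep the stars $\eps$-free, we follow the McNaughton–Yamada recursion on path sets restricted to intermediate states $\le k$, but take $L^k_{pq}$ to be nonempty paths only. The recursion
\[
L^{k}_{pq}=L^{k-1}_{pq}\oplus L^{k-1}_{pk}\cdot(L^{k-1}_{kk})^*\cdot L^{k-1}_{kq}\oplus L^{k-1}_{pk}\cdot L^{k-1}_{kq}
\]
must be rewritten so that each starred factor is attached to a nonempty factor, as in $L_{pk}\cdot L_{kk}^*$. With that rewriting, only the $\eps$-free forms $\R\cdot\R^*$ and $\R^*\cdot\R$ appear. This yields an expression in $\Reg^{\epsfree}(E)$ for every $(i,f)$, which gives (c) directly, since in the purely transition-weighted proper case there are no constant factors. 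It also gives (b), since the outer constants only use $s_\eps$, $\oplus$ and $\cdot$ applied to restricted expressions, and (a), since $\Reg^{\res}(E)\subseteq\Reg(E)$.

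The main obstacle is this elimination step: organizing the recursion so that every star sits inside an $\eps$-free context without ever creating an $s_\eps$ inside a starred body, and checking that the semantics of the rewritten recursion equals the sum over paths. Concretely, the ambiguity in the recursion has to be avoided, since the right-hand side must partition the paths (first visit to $k$, loops at $k$, last departure), so that no run is counted twice in a non-idempotent semiring. I would first try the unambiguous decomposition in which a path either avoids $k$, or is a nonempty prefix to the first visit of $k$, followed by a star of nonempty $k$-to-$k$ loops avoiding $k$ internally, followed by a nonempty suffix. The boundary case $p=k$ or $q=k$ needs separate care, handled by the variants $L_{kk}^*\cdot L_{kq}$ and $L_{pk}\cdot L_{kk}^*$.
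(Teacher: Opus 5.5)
Your proposal is correct and follows the paper's proof essentially step for step: atoms via Lemma~\ref{LEMMA:primitive_regexp_recognizable}, the $\eps$-free case via Corollary~\ref{COR:closure_properties_purely_transition_weighted}, sum and product via Theorem~\ref{THM:closure_properties_e_weighted_wffa}, the star via $(S,\otimes)$-closedness in (a) and via the purely transition-weighted proper automaton of the $\eps$-free body in (b), and the converse via the recursion $\R^{(k+1)}_{p,r}=\R^{(k)}_{p,r}\oplus\R^{(k)}_{p,k+1}\cdot(\R^{(k)}_{k+1,k+1})^*\cdot\R^{(k)}_{k+1,r}$ over nonempty paths, which is exactly the unambiguous decomposition you settle on (neither the preliminary normalization nor a special case for $p=k$ or $q=k$ is needed, since with intermediate-state semantics the general formula already covers those cases). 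One point you share with the paper is that $\R^{(0)}_{p,r}$ is an empty $\oplus$ when no transition joins $p$ to $r$, so absent terms must be pruned; this is harmless in (a) and (b), where $\zero_\eps$ is available, but in (c) the zero QFL (recognized, e.g., by $T=\emptyset$, $I=\{q_1\}$, $F=\{q_2\}$) has no $\eps$-free $E$-weighted expression unless some such expression denotes $\zero$ (none does for $E=\emptyset$, or for $E=\{0.0\}$ over $\SS_{\arc}$), so (c) tacitly needs that mild extra hypothesis.
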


\begin{remark}
As mentioned in Example~\ref{EX:closure_expressions}(a), $E = \EE_{\FF}$ is $(S, \otimes)$-closed. Therefore, Theorem~\ref{THM:kleene_schuetzenberger} is an immediate consequence of Theorem~\ref{THM:parametric_kleene_schuetzenberger}.
\end{remark}

The proof of Theorem~\ref{THM:parametric_kleene_schuetzenberger} is given below.

The next lemma establishes recognizability of the elementary weighted finance expressions.

\begin{lemma}
\label{LEMMA:primitive_regexp_recognizable}
Let $E \subseteq \EE_{\FF}$.
\begin{enumerate}[label=(\alph*)]
\item For every $e \in E$ and $a \in \Sigma$, $\beh{e_a} \in \beh{\WFFA_{\Sigma, \FF}^{\tau}(E)}$.
\item For every $s \in S$, $\beh{s_{\eps}} \in \beh{\WFFA_{\Sigma, \FF}(E)}$.
\end{enumerate}
\end{lemma}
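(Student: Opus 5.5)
Both parts are proved by writing down an explicit small WFFA and checking its behavior directly against the semantics of $e_a$ and $s_\eps$ given after Definition~\ref{DEF:regexp}.

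For (a), fix $e \in E$ and $a \in \Sigma$. I would take the two-state automaton
\[
\A_{e,a} = (\{q_0,q_1\}, \{q_0\}, \{(q_0,a,q_1)\}, \{q_1\}, \wt_I, \wt_T, \wt_F)
\]
with $q_0 \neq q_1$, $\wt_I(q_0) = \one$, $\wt_F(q_1) = \one$ and $\wt_T(q_0,a,q_1) = e$.
\begin{itemize}
\item It is purely transition-weighted, since all initial and final weights are $\one$.
\item It is $E$-weighted, since its only transition weight is $e \in E$.
\item Its only run is $q_0 \xrightarrow{a} q_1$, because $q_0 \notin F$ and no transition leaves $q_1$.
\end{itemize}
Hence for $w = (a,d)$ the weight in context $d$ is $\one \otimes \beh{e}(d) \otimes \one = \beh{e}(d)$. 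If this is $\zero$, the run is invalid and the empty sum also gives $\zero$, so $\beh{\A_{e,a}}(w) = \beh{e}(d)$ in either case. Every other finance word, including $\eps$ and words $(b,d)$ with $b \neq a$, has no run and so gets $\zero$. This matches $\beh{e_a}$ exactly.

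For (b), fix $s \in S$. I would take the one-state automaton with $Q = I = F = \{q\}$, $T = \emptyset$ and $\wt_I(q) = s$, $\wt_F(q) = \one$.
\begin{itemize}
\item It is vacuously $E$-weighted, since $\wt_T(T) = \emptyset \subseteq E$, whatever $E$ is.
\item Its only run is the empty run on $\eps$, with weight $s \otimes \one = s$. If $s = \zero$, this run is invalid and the value is still $\zero = s$.
\item Nonempty words have no runs and so get $\zero$.
\end{itemize}
Hence its behavior is $\beh{s_\eps}$. This automaton is in general not purely transition-weighted, which is why (b) only claims membership in $\WFFA_{\Sigma,\FF}(E)$ and not $\WFFA^{\tau}_{\Sigma,\FF}(E)$.

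No step here is really difficult. The only care needed is the validity convention: a run of weight $\zero$ is discarded, but then the behavior is $\bigoplus\emptyset = \zero$, which is the same value, so both cases agree. In (b) one also needs to notice that the empty transition set makes $E$-weightedness hold for arbitrary $E$, including $E = \emptyset$.
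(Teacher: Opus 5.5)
Your proposal is correct and matches the paper's proof. The paper uses the same two-state automaton $q_0 \xrightarrow{a \mid e} q_1$ with initial and final weights $\one$ for (a), and the same one-state automaton with initial weight $s$ and final weight $\one$ for (b). Your explicit treatment of the validity convention and of the vacuous $E$-weightedness when $T=\emptyset$ simply spells out what the paper calls straightforward verification.
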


\begin{proof}
Both claims are immediate from the one-transition and one-state constructions: 
$e_a$ is realized by a purely transition-weighted automaton with a single transition labeled by $a$ and weighted by $e$, and $s_{\eps}$ is realized by a one-state automaton whose unique state is both initial and final, with weight $s$ assigned to either the initial or the final weight function.
\end{proof}

\begin{lemma}
\label{LEMMA:epsfree_regexp_recognizable}
Let $E \subseteq \EE_{\FF}$ and $\R \in \Reg_{\Sigma, \FF}^{\epsfree}(E)$. Then there exists $\A \in \WFFA_{\Sigma, \FF}^{\tau}(E)$ such that $\beh{\A} = \beh{\R}$.
\end{lemma}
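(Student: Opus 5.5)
We argue by structural induction on $\R$ along the grammar of $\eps$-free expressions. We show the stronger statement that there exists $\A \in \WFFA_{\Sigma, \FF}^{\tau}(E)$ with $\beh{\A} = \beh{\R}$ and such that, moreover, $I \cap F = \emptyset$. The condition $I \cap F = \emptyset$ makes $\beh{\A}$ proper, consistent with Remark~\ref{REM:epsfree_proper}.

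\textbf{Base case.} For $\R = e_a$ with $e \in E$, Lemma~\ref{LEMMA:primitive_regexp_recognizable}(a) already gives the required automaton. It has a single transition from an initial state $q_0$ to a distinct final state $q_1$, so $I \cap F = \emptyset$ holds.

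\textbf{Sum and Cauchy product.} For $\R_1 \oplus \R_2$ we take the disjoint union of Lemma~\ref{LEMMA:closure_sum_wffa}. Both summands are proper, so $\beh{\R_i}(\eps) = \zero \in \{\zero,\one\}$. The union is purely transition-weighted and keeps $E$-weightedness (Corollary~\ref{COR:closure_properties_purely_transition_weighted}(a)), and it still has $I \cap F = \emptyset$. For $\R_1 \cdot \R_2$ we apply Corollary~\ref{COR:closure_properties_purely_transition_weighted}(b). To keep the invariant explicit, we can also give a direct construction that needs no normalization. Take the disjoint union of $\A_1$ and $\A_2$, with initial states $I_1$ and final states $F_2$. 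For every transition $(p,a,f)$ of $\A_1$ with $f \in F_1$ and every $i \in I_2$, add a copy $(p,a,i)$ with the same weight. Because $\A_1$ is proper and all initial and final weights equal $\one$, runs of the new automaton correspond bijectively to pairs of nonempty runs of $\A_1$ and runs of $\A_2$. Here we use $I_1 \cap F_1 = \emptyset$, so no split with an empty $\A_1$-part is lost. Hence the behavior is $\beh{\R_1}\cdot\beh{\R_2}$. All new weights lie in $E$, and $I \cap F = \emptyset$ is preserved.

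\textbf{The forms $\R\cdot\R'^*$ and $\R'^*\cdot\R$.} These are the only places a star occurs, and they are the main obstacle, because $\beh{\R'^*}(\eps) = \one$ is not proper. By Corollary~\ref{COR:closure_properties_purely_transition_weighted}(c), $\beh{\R'}^*$ is recognized by some purely transition-weighted $E$-weighted automaton. That automaton, however, has an initial state that is also final, with weight $\one$. I would not rely on that automaton directly. Instead I would build the products directly, starting with $\R \cdot \R'^*$.

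Take $\A$ for $\R$ and $\A'$ for $\R'$, both with $I\cap F=\emptyset$. Form their disjoint union with initial states $I_{\A}$ and final states $F_{\A} \cup F_{\A'}$. Then redirect the last step of each run in two ways:
\begin{itemize}
\item for every transition $(p,a,f)$ of $\A$ with $f \in F_{\A}$ and every $i' \in I_{\A'}$, add $(p,a,i')$ with the same weight, which starts the first iteration;
\item for every transition $(p,a,f')$ of $\A'$ with $f' \in F_{\A'}$ and every $i' \in I_{\A'}$, add $(p,a,i')$, which starts the next iteration.
\end{itemize}
Every run then factors uniquely as one nonempty $\A$-run followed by $k \ge 0$ nonempty $\A'$-runs, and the weights multiply. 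This yields $\beh{\R}\cdot\beh{\R'}^*$.

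Properness of $\A'$ is essential here. It makes every factor nonempty, so the decomposition is finite and bijective, matching the definition of the Kleene star. The final state set avoids $I_{\A}$, so the invariant holds. The case $\R'^* \cdot \R$ is symmetric: redirect the final transitions of $\A'$ both to $I_{\A'}$ and to $I_{\A}$, and make $I_{\A'} \cup I_{\A}$ the initial states. The step needing care is checking that the run decomposition is a bijection onto all factorizations $w = w_0 w_1\cdots w_k$ with $w_i \neq \eps$, and that the $\eps$-factors omitted from the star sum contribute only $\zero$ because $\beh{\R'}$ is proper.
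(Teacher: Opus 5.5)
\textbf{The redirection constructions are not well defined in this paper's model.} Your induction, the invariant $I\cap F=\emptyset$ and the intended run factorization are sound. The problem is the redirections, which are your only argument for $\R\cdot\R'^*$ and $\R'^*\cdot\R$. Here $T\subseteq Q\times\Sigma\times Q$, so each triple $(p,a,q)$ carries at most one transition with one expression. ``Adding a copy $(p,a,i')$ with the same weight'' for each final transition $(p,a,f)$ produces the same triple twice in two situations:
(i) $p$ has $a$-transitions into two distinct final states $f_1\neq f_2$;
(ii) in the iteration loop, $\A'$ already has $(p,a,i')$ with $i'\in I_{\A'}$ besides some $(p,a,f')$ with $f'\in F_{\A'}$.

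In either case a run no longer records where a factor ends, so your claimed unique factorization fails. Keeping only one copy loses runs. Merging the copies needs the expression $[\wt_T(p,a,f_1)\oplus\wt_T(p,a,f_2)]$ (resp.\ $[\wt_T(p,a,i')\oplus\wt_T(p,a,f')]$), which need not lie in $E$; avoiding exactly such $\oplus$-combinations is the whole difficulty for arbitrary $E$.

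Your induction hypothesis excludes neither situation. For example, Corollary~\ref{COR:closure_properties_purely_transition_weighted}(b), which you use in the Cauchy case, applied to $e_a\cdot(e'_b\oplus e''_b)$ yields an automaton with a state whose $b$-transitions, weighted $e'$ and $e''$, enter two distinct final states. So for the $\eps$-free expression $\bigl(e_a\cdot(e'_b\oplus e''_b)\bigr)\cdot(g_c)^*$ your construction would need $e'\oplus e''$ on a single transition.

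\textbf{Repairs.} The simplest is the paper's own proof. Corollary~\ref{COR:closure_properties_purely_transition_weighted}(c) gives $\beh{\R'}^*\in\beh{\WFFA^{\tau}_{\Sigma,\FF}(E)}$. Part (b) of that corollary needs no properness (here $\beh{\R}(\eps)=\zero$ and $\beh{\R'}^*(\eps)=\one$), so it applies to $\beh{\R}$ and $\beh{\R'}^*$ directly. The initial state of the star automaton being final is therefore harmless. Part (b)'s construction avoids collisions through the state pairing of Lemma~\ref{LEMMA:init_and_final_normalization_for_arbitrary_expressions}. Properness of $\eps$-free expressions comes from Remark~\ref{REM:epsfree_proper}, so no structural invariant is needed.

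If you want to keep your direct, linear-size constructions, you have two options:
\begin{itemize}
\item redirect into fresh copies of the initial states, indexed by the final state whose incoming transition is being replaced; or
\item add to the induction hypothesis that no state has two equally labelled transitions into final states, and that no state has equally labelled transitions into both a final and an initial state.
\end{itemize}
Both added properties hold for $e_a$ and are preserved by disjoint union and by all your redirections, provided you also use your direct Cauchy construction instead of the corollary. Together they exclude (i) and (ii).
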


\begin{proof}
By induction on the structure of $\R$. The base case $\R=e_a$ follows from Lemma~\ref{LEMMA:primitive_regexp_recognizable}(a). The cases $\oplus$ and concatenation follow from Corollary~\ref{COR:closure_properties_purely_transition_weighted}(a),(b). For expressions of the form $\R_1 \cdot \R_2^*$ and $\R_1^* \cdot \R_2$, we use Remark~\ref{REM:epsfree_proper} to ensure that the subexpression under the Kleene star is proper, and then apply Corollary~\ref{COR:closure_properties_purely_transition_weighted}(c).
\end{proof}

\begin{lemma}
\label{LEMMA:restricted_regexp_recognizable}
Let $E \subseteq \EE_{\FF}$ and $\R \in \Reg_{\Sigma, \FF}^{\res}(E)$. Then there exists $\A \in \WFFA_{\Sigma, \FF}(E)$ such that $\beh{\A} = \beh{\R}$.
\end{lemma}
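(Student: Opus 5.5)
We proceed by structural induction on $\R \in \Reg_{\Sigma, \FF}^{\res}(E)$, following the grammar of restricted expressions. The base cases are immediate from Lemma~\ref{LEMMA:primitive_regexp_recognizable}. For $\R = e_a$ with $e \in E$, part (a) gives a purely transition-weighted $E$-weighted automaton. For $\R = s_{\eps}$, part (b) gives a one-state automaton without transitions, which is trivially $E$-weighted.

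For the inductive step, consider first $\R = \R_1 \oplus \R_2$ and $\R = \R_1 \cdot \R_2$. By the induction hypothesis there are $\A_1, \A_2 \in \WFFA_{\Sigma, \FF}(E)$ with $\beh{\A_i} = \beh{\R_i}$. Theorem~\ref{THM:closure_properties_e_weighted_wffa}(a) and (c) then yield $E$-weighted automata for the sum and the Cauchy product. Neither of these parts requires any closure assumption on $E$: the sum uses the disjoint union of Lemma~\ref{LEMMA:closure_sum_wffa}, and the Cauchy product uses Lemma~\ref{LEMMA:closure_cauchy_for_arbitrary_expressions}. The latter relies on the normalizations of Lemma~\ref{LEMMA:init_and_final_normalization_for_arbitrary_expressions}, which keep initial and final weights as state weights rather than pushing them into transitions. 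Hence these two cases go through for arbitrary $E$.

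The only delicate case is $\R = (\R')^*$ with $\R' \in \Reg_{\Sigma, \FF}^{\epsfree}(E)$. We cannot use Theorem~\ref{THM:closure_properties_e_weighted_wffa}(d), because $E$ need not be $(S,\otimes)$-closed. We also cannot apply the induction hypothesis to $\R'$ directly, since that would give only a general $E$-weighted automaton, whose initial and final weights would have to be absorbed into transitions. This is exactly why the grammar restricts starred subexpressions to $\eps$-free ones. Instead we argue as follows.
\begin{enumerate}[label=(\roman*)]
\item By Lemma~\ref{LEMMA:epsfree_regexp_recognizable}, obtain $\A' \in \WFFA^{\tau}_{\Sigma, \FF}(E)$ with $\beh{\A'} = \beh{\R'}$.
\item By Remark~\ref{REM:epsfree_proper}, $\beh{\R'}$ is proper.
\item By Corollary~\ref{COR:closure_properties_purely_transition_weighted}(c), which rests on Lemma~\ref{LEMMA:closure_kleene_star_for_e}(b), conclude that $\beh{\R'}^*$ is recognized by a purely transition-weighted $E$-weighted automaton.
\end{enumerate}
This automaton lies in $\WFFA_{\Sigma, \FF}(E)$, so $\beh{(\R')^*} \in \beh{\WFFA_{\Sigma, \FF}(E)}$.

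Combining all cases completes the induction. Effectiveness follows because each invoked construction is explicit, apart from the bounded-sum optimisations, which we do not need here.

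We expect the main obstacle to be the star case, specifically verifying that Lemma~\ref{LEMMA:closure_kleene_star_for_e}(b) really produces only weights equivalent to elements of $E$. In the purely transition-weighted setting, the normalization and star constructions multiply transition weights only by $\one$ and add parallel weights. The $\oplus$-combinations are then removed by Lemma~\ref{LEMMA:plus_closed_expressiveness}. Since $\A'$ is proper, no $\eps$-value has to be redistributed. If that step turns out to be subtle, we would first make $\A'$ completely normalized by hand, using fresh initial and final states whose transitions copy the outgoing and incoming transitions of the original initial and final states with unchanged weights. This is possible because all initial and final weights equal $\one$ and $\beh{\A'}(\eps) = \zero$. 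We would then apply the construction of Lemma~\ref{LEMMA:cauchy_product_kleene_star_normalized_wffa}(b), redirecting transitions into the final state back to the initial state and making the initial state final with weight $\one$.
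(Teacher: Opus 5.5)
Your proof is correct and is essentially the paper's own argument. Both proceed by structural induction over the restricted grammar: the base cases come from Lemma~\ref{LEMMA:primitive_regexp_recognizable}, the $\oplus$ and $\cdot$ cases from Theorem~\ref{THM:closure_properties_e_weighted_wffa}(a),(c), and the star case from Lemma~\ref{LEMMA:epsfree_regexp_recognizable}, Remark~\ref{REM:epsfree_proper} and Corollary~\ref{COR:closure_properties_purely_transition_weighted}(c).

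Your hand-made fallback normalization is not needed, and as stated it would require care. Since $T \subseteq Q \times \Sigma \times Q$ allows only one transition per triple, copying the transitions of several initial (or final) states into one fresh state can produce parallel weights. These would have to be combined with $\oplus$ and then split again via Lemma~\ref{LEMMA:plus_closed_expressiveness}.
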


\begin{proof}
Again by induction on the structure of $\R$. The cases $e_a$ and $s_{\eps}$ follow from Lemma~\ref{LEMMA:primitive_regexp_recognizable}. The cases $\oplus$ and concatenation follow from Theorem~\ref{THM:closure_properties_e_weighted_wffa}(a),(c). If $\R = (\R_{\epsfree})^*$, then Lemma~\ref{LEMMA:epsfree_regexp_recognizable} yields a purely transition-weighted WFFA for $\R_{\epsfree}$, and Remark~\ref{REM:epsfree_proper} ensures that its behavior is proper; hence Corollary~\ref{COR:closure_properties_purely_transition_weighted}(c) applies.
\end{proof}

\begin{lemma}
\label{LEMMA:full_regexp_recognizable_if_s_otimes_closed}
Let $E \subseteq \EE_{\FF}$ be $(S,\otimes)$-closed and let $\R \in \Reg_{\Sigma, \FF}(E)$. Then there exists $\A \in \WFFA_{\Sigma, \FF}(E)$ such that $\beh{\A} = \beh{\R}$.
\end{lemma}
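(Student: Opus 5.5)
We argue by structural induction on $\R \in \Reg_{\Sigma,\FF}(E)$, maintaining the invariant that $\beh{\R}$ is recognized by some $\A \in \WFFA_{\Sigma,\FF}(E)$.

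\textbf{Base cases.} For $\R = e_a$ with $e \in E$, Lemma~\ref{LEMMA:primitive_regexp_recognizable}(a) gives a purely transition-weighted $E$-weighted automaton. For $\R = s_{\eps}$, Lemma~\ref{LEMMA:primitive_regexp_recognizable}(b) gives a one-state automaton with no transitions, which is trivially $E$-weighted.

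\textbf{Sum and Cauchy product.} For $\R = \R_1 \oplus \R_2$ and $\R = \R_1 \cdot \R_2$, we apply the induction hypothesis to $\R_1,\R_2$. We then use Theorem~\ref{THM:closure_properties_e_weighted_wffa}(a) and (c), which require no closure assumption on $E$.

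\textbf{Kleene star.} The only case using the hypothesis is $\R = (\R')^*$. Since $\R$ is valid, $\beh{\R'}$ is proper. By induction, $\beh{\R'} \in \beh{\WFFA_{\Sigma,\FF}(E)}$. Because $E$ is $(S,\otimes)$-closed, Theorem~\ref{THM:closure_properties_e_weighted_wffa}(d), i.e.\ Lemma~\ref{LEMMA:closure_kleene_star_for_e}(a), yields $\beh{\R'}^* \in \beh{\WFFA_{\Sigma,\FF}(E)}$.

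\textbf{Expected obstacle.} The main point needing care is this star step. The star construction first normalizes the automaton, which pushes initial and final weights from $S$ into transition weights and merges several transitions with $\oplus$. We must check that the resulting transition weights all have the form $s \otimes e \otimes s'$ (possibly $\oplus$-combined) with $e \in E$. By commutativity of $\otimes$ and $(S,\otimes)$-closedness, each such term is semantically equivalent to an element of $E$. Hence every transition weight is equivalent to an element of $\clp{E}$, and Lemma~\ref{LEMMA:plus_closed_expressiveness} removes the $\oplus$. This is exactly the argument already carried out in Lemma~\ref{LEMMA:closure_kleene_star_for_e}(a), so we invoke it directly.

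Finally, effectiveness follows because each step is a concrete construction, provided the replacement of $s \otimes e$ by an equivalent element of $E$ is computable.
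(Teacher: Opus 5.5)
Your proof is correct and follows the paper's argument exactly. The paper also proceeds by structural induction: Lemma~\ref{LEMMA:primitive_regexp_recognizable} handles the atomic cases, and Theorem~\ref{THM:closure_properties_e_weighted_wffa}(a), (c), (d) handles $\oplus$, Cauchy product and star, with $(S,\otimes)$-closedness used only in the star step; your unpacking of that step (normalization weights of the form $s \otimes e \otimes s'$, reduction to $\clp{E}$, then Lemma~\ref{LEMMA:plus_closed_expressiveness}) is precisely the content of Lemma~\ref{LEMMA:closure_kleene_star_for_e}(a).
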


\begin{proof}
Immediate from Lemma~\ref{LEMMA:primitive_regexp_recognizable} and Theorem~\ref{THM:closure_properties_e_weighted_wffa}.
\end{proof}

\begin{lemma}
\label{LEMMA:from_wffa_to_restricted_regexp}
Let $E \subseteq \EE_{\FF}$ and $\A \in \WFFA_{\Sigma, \FF}(E)$. Then there exists $\R \in \Reg_{\Sigma, \FF}^{\res}(E)$ such that $\beh{\R} = \beh{\A}$.
\end{lemma}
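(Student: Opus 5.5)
We plan to follow the classical route from weighted automata to rational expressions, carrying it out so that every transition expression of $\A$ appears only as an atom $e_a$ with $e \in E$. Only initial and final weights, which are constants, are ever turned into $s_\eps$-factors. Let $\A = (Q, I, T, F, \wt_I, \wt_T, \wt_F)$ with $Q = \{q_1,\dots,q_n\}$.

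\textbf{Step 1: path expressions.} We first treat $\eps$-free path expressions. For states $p,q$ and $k \in \{0,\dots,n\}$, let $\R^{(k)}_{p,q}$ describe the QFL that assigns to every nonempty finance word $w$ the $\oplus$-sum of the transition-only weights of all paths from $p$ to $q$ labelled by $w$. Here intermediate states are restricted to $\{q_1,\dots,q_k\}$, and the path may not be empty.

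For $k=0$ we set $\R^{(0)}_{p,q} = \bigoplus_{t=(p,a,q)\in T} (\wt_T(t))_a$. If no such transition exists, this term is simply absent; we handle the resulting "zero" expression by keeping a flag, or by dropping the term from all sums in which it occurs.

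The recursion is
\[
\R^{(k)}_{p,q} = \R^{(k-1)}_{p,q} \oplus \R^{(k-1)}_{p,q_k}\cdot \R^{(k-1)}_{q_k,q} \oplus \R^{(k-1)}_{p,q_k}\cdot (\R^{(k-1)}_{q_k,q_k})^*\cdot \R^{(k-1)}_{q_k,q}.
\]
This splits paths according to how often they visit $q_k$: zero times, exactly once, or at least twice. We must make sure this matches the $\eps$-free grammar, which only allows $\R\cdot\R^*$ and $\R^*\cdot\R$. Writing the last summand as $\R_1\cdot\bigl((\R_2)^*\cdot\R_3\bigr)$ achieves this. By induction each $\R^{(k)}_{p,q}$ is $\eps$-free and $E$-weighted. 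Its semantics is checked by a routine bijection between paths and their unique decomposition at visits to $q_k$; commutativity of $\SS$ is not needed here.

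\textbf{Step 2: full expression.} We then set
\[
\R = \bigoplus_{p\in I,\,q\in F} (\wt_I(p))_\eps\cdot \R^{(n)}_{p,q}\cdot(\wt_F(q))_\eps \;\oplus\; \bigoplus_{q\in I\cap F}(\wt_I(q)\otimes\wt_F(q))_\eps .
\]
The first part accounts for nonempty words, and the second accounts for the value on $\eps$. The stars in $\R$ occur only inside the $\eps$-free subexpressions $\R^{(n)}_{p,q}$. There they are of the form $\R_1\cdot\R_2^*$ or $\R_2^*\cdot\R_3$, with $\R_2$ $\eps$-free and hence proper by Remark~\ref{REM:epsfree_proper}. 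So $\R$ is valid, and it lies in $\Reg^{\res}_{\Sigma,\FF}(E)$: it is built from $s_\eps$, $\oplus$, $\cdot$ and $\eps$-free subexpressions.

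If the restricted grammar is read literally, so that $\eps$-free pieces only enter through a top-level $(\R_{\epsfree})^*$, we still need every $\eps$-free expression to be restricted. This holds because the $\eps$-free grammar uses only $e_a$, $\oplus$, $\cdot$ and stars of $\eps$-free expressions, all of which are productions of the restricted grammar. Finally, distributivity gives $\beh{\R}=\beh{\A}$.

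\textbf{Main obstacle.} The delicate point is the empty-set case. The grammar has no expression for the zero QFL over letters, and an empty $\oplus$ is not allowed. Our first attempt is to track, for each $(p,q,k)$, whether any path exists, and to omit the corresponding summands when none does. If the whole result is empty, we use $\zero_\eps$, which is a legitimate $s_\eps$ with $s=\zero$.
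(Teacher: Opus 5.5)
Your route is the paper's: state elimination producing $\eps$-free, $E$-weighted path expressions, then wrapping with $(\wt_I(p))_\eps$ and $(\wt_F(q))_\eps$ and adding an $\eps$-term. Your arguments for validity and for membership in $\Reg^{\res}_{\Sigma,\FF}(E)$ are fine. The recursion in Step~1, however, is wrong over a general commutative semiring.

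The star includes the empty iteration, since $\beh{\R'}^*(\eps)=\one$. So the third summand $\R^{(k-1)}_{p,q_k}\cdot(\R^{(k-1)}_{q_k,q_k})^*\cdot\R^{(k-1)}_{q_k,q}$ already sums over all paths visiting $q_k$ at least \emph{once}, not at least twice. Every path that visits $q_k$ exactly once is therefore counted twice: once by the second summand and once by the zero-iteration part of the third. Your ``routine bijection'' fails exactly here, because such a path has two parses. Over $\SS_{\arc}$ or $\SS_{\trop}$ this is invisible because $\oplus$ is idempotent, but the lemma is stated for arbitrary commutative $\SS$.

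Here is a concrete counterexample over $(\RR,+,\cdot,0.0,1.0)$ with $E=\EE_{\FF}$. Let $\A$ have states $q_1,q_2,q_3$, transitions $(q_2,a,q_1)$, $(q_1,a,q_1)$, $(q_1,a,q_3)$ of weight $1.0$, and $I=\{q_2\}$, $F=\{q_3\}$ with weights $1.0$. Every word of length $2$ has exactly one run, of weight $1.0$. Your construction gives $\R^{(1)}_{q_2,q_3}=[1.0]_a\cdot[1.0]_a\oplus[1.0]_a\cdot([1.0]_a)^*\cdot[1.0]_a$, which passes unchanged to $\R^{(3)}_{q_2,q_3}$ and evaluates to $2.0$ on such a word.

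The repair is easy and fits your bookkeeping of absent terms; there are two options.
\begin{enumerate}
\item Keep your three-way split but make the last summand genuinely ``at least twice'': $\R^{(k-1)}_{p,q_k}\cdot\bigl((\R^{(k-1)}_{q_k,q_k}\cdot(\R^{(k-1)}_{q_k,q_k})^*)\cdot\R^{(k-1)}_{q_k,q}\bigr)$. This is $\eps$-free via the production $\R\cdot\R^*$, and it may simply be dropped when the loop expression is absent.
\item Use the paper's two-term recursion $\R^{(k-1)}_{p,q}\oplus\R^{(k-1)}_{p,q_k}\cdot\bigl((\R^{(k-1)}_{q_k,q_k})^*\cdot\R^{(k-1)}_{q_k,q}\bigr)$. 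In that case an absent loop expression must be handled by deleting only the star factor, not the whole summand. The reason is that the star of the zero QFL maps $\eps$ to $\one$; it is not the zero QFL.
\end{enumerate}
With either correction, your treatment of the empty cases is actually more careful than the paper's own sketch, whose $\R^{(0)}_{p,r}=\bigoplus_{a\in\Sigma}\R_{(p,a,r)}$ silently refers to transitions that may not exist.
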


\begin{proof}[Proof sketch]
Let $\A = (Q, I, T, F, \wt_I, \wt_T, \wt_F) \in \WFFA_{\Sigma, \FF}(E)$ with $Q = \{1,\dots,m\}$.
For each transition $t = (p,a,r) \in T$, define a restricted expression
$
\R_t := [\wt_T(t)]_a \in \Reg_{\Sigma, \FF}^{\res}(E).
$
For all $p,r \in Q$ and $k \in \{0,\dots,m\}$, we define expressions
\(
\R_{p,r}^{(k)} \in \Reg_{\Sigma, \FF}^{\res}(E)
\)
recursively by
$
\R_{p,r}^{(0)} = \bigoplus_{t=(p,a,r)\in T} \R_t,
$
and
$
\R_{p,r}^{(k+1)}
=
\R_{p,r}^{(k)}
\oplus
\R_{p,k+1}^{(k)} \cdot \bigl(\R_{k+1,k+1}^{(k)}\bigr)^* \cdot \R_{k+1,r}^{(k)}.
$
Since $\R_{k+1,k+1}^{(k)}$ is $\eps$-free, its behavior is proper by Remark~\ref{REM:epsfree_proper}, and thus the star is admissible in the restricted syntax.
As in the usual state-elimination construction for weighted automata, one shows by induction on $k$ that $\R_{p,r}^{(k)}$ denotes the sum of the weights of all paths from $p$ to $r$ whose intermediate states lie in $\{1,\dots,k\}$. Hence $\R_{i,f}^{(m)}$ captures all paths from $i$ to $f$.
Finally, define
$
\R
=
\bigoplus_{i \in I,\, f \in F}
\bigl((\wt_I(i))_{\eps} \cdot \R_{i,f}^{(m)} \cdot (\wt_F(f))_{\eps}\bigr)
\in \Reg_{\Sigma, \FF}^{\res}(E).
$
Then $\beh{\R}$ coincides with $\beh{\A}$ on all non-empty words. If $\beh{\A}(\eps)\neq \zero$, we add the term $[\beh{\A}(\eps)]_{\eps}$.
\end{proof}

As a corollary of Theorem~\ref{THM:parametric_kleene_schuetzenberger}, we obtain:

\begin{corollary}
$
\beh{\Reg_{\Sigma, \FF}} = \beh{\WFFA_{\Sigma, \FF}}.
$
\end{corollary}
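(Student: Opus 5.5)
This follows by taking $E = \EE_{\FF}$ in Theorem~\ref{THM:parametric_kleene_schuetzenberger}(a). The hypothesis is met because $\EE_{\FF}$ is $(S,\otimes)$-closed (Example~\ref{EX:closure_expressions}(a)): for $s \in S$ and $e \in \EE_{\FF}$, the expression $[s \otimes e]$ is again an $\FF$-expression. By definition, $\Reg_{\Sigma,\FF}(\EE_{\FF}) = \Reg_{\Sigma,\FF}$ and $\WFFA_{\Sigma,\FF}(\EE_{\FF}) = \WFFA_{\Sigma,\FF}$, so the conclusion of part (a) is exactly the claimed equality.

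Since part (a) is only stated above, I would assemble the two inclusions directly. For $\beh{\Reg_{\Sigma,\FF}} \subseteq \beh{\WFFA_{\Sigma,\FF}}$ I apply Lemma~\ref{LEMMA:full_regexp_recognizable_if_s_otimes_closed} with $E=\EE_{\FF}$. This is a structural induction: the base cases $e_a$ and $s_\eps$ come from Lemma~\ref{LEMMA:primitive_regexp_recognizable}, and the cases $\oplus$, $\cdot$ and $^*$ come from Theorem~\ref{THM:closure_properties_e_weighted_wffa}(a),(c),(d). For the star, validity of the expression guarantees that the argument is proper, so (d) applies. For the converse inclusion I use Lemma~\ref{LEMMA:from_wffa_to_restricted_regexp}, which turns any WFFA into an expression in $\Reg^{\res}_{\Sigma,\FF}(\EE_{\FF}) \subseteq \Reg_{\Sigma,\FF}$ with the same behavior.

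The one point needing care is that every expression produced by state elimination is valid. Each starred subexpression there is $\eps$-free, so its behavior is proper by Remark~\ref{REM:epsfree_proper}. The possible extra summand $[\beh{\A}(\eps)]_\eps$ contains no star, so it causes no problem. I expect no genuine obstacle beyond checking that the specialization $E=\EE_{\FF}$ matches the unrestricted classes. Effectiveness follows because every construction used (disjoint union, product, normalization, gluing, state elimination) is explicit.
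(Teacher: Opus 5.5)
Your proposal is correct and follows the paper's own route: the corollary is obtained by instantiating Theorem~\ref{THM:parametric_kleene_schuetzenberger}(a) with $E=\EE_{\FF}$, which is $(S,\otimes)$-closed by Example~\ref{EX:closure_expressions}(a). Your unpacking into Lemma~\ref{LEMMA:full_regexp_recognizable_if_s_otimes_closed} and Lemma~\ref{LEMMA:from_wffa_to_restricted_regexp}, together with $\Reg^{\res}_{\Sigma,\FF}(\EE_{\FF})\subseteq\Reg_{\Sigma,\FF}$, is exactly how the paper assembles part~(a).
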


\section{Matrix Representation of WFFAs}
\label{SEC:matrix_representation}

In this section, we give an algebraic characterization of WFFAs in terms of matrix products. 
As a consequence, we obtain a polynomial-time procedure for evaluating their behavior. 
More precisely, the evaluation algorithm is quadratic in the number of states and linear in the length of the input word. 
This is noteworthy since a WFFA may have exponentially many runs of a given length, whereas the matrix-based evaluation avoids explicit run enumeration.

Throughout this section, let $\Sigma$ be an alphabet, let $\SS = (S, \oplus, \otimes, \zero, \one)$ be a semiring, and let $\FF = (\SS, \DD, \bb)$ be a finance semiring.

For $k,l \in \NN_{\ge 1}$, a \emph{$k \times l$ matrix} over a set $X$ is a mapping $A : [k] \times [l] \to X$. 
For matrices $A \in S^{k \times l}$ and $B \in S^{l \times m}$, their product $A \boxtimes B \in S^{k \times m}$ is defined by
$
(A \boxtimes B)_{i,j} = \bigoplus_{x \in [l]} A_{i,x} \otimes B_{x,j}.
$
Let $\UU^{(k)} \in S^{k \times k}$ denote the unit matrix.

It is well known that $(S^{k \times k}, \boxtimes, \UU^{(k)})$ is a monoid and that recognizable weighted languages admit a matrix characterization via monoid morphisms (see, e.g.,~\cite{DKV09}). 
We now adapt this viewpoint to WFFAs.

\begin{remark}
\label{REM:finance_homo_counterexample}
Not every morphism $\mu : \DD\Sigma_{\free}^* \to S^{k \times k}_{\boxtimes}$ corresponds to a WFFA. 
For instance, over $\FF_{\arc,\times}$, the QFL $\F: \DD \{\bot\}^* \to S$ defined by
$
\F(\bot_{d_1,\dots,d_n}) = \sum_{i = 1}^n d_i^2
$
admits such a matrix representation, but is not WFFA-recognizable, since every $\FF$-expression over $\FF_{\arc,\times}$ defines a piecewise-affine function.
\end{remark}

For an $\FF$-expression matrix $X \in (\EE_{\FF})^{k \times k}$, let $\beh{X}(d) \in S^{k \times k}$ be defined entrywise by
$
(\beh{X}(d))_{i,j} = \beh{X_{i,j}}(d).
$
Given a mapping $\xi : \Sigma \to (\EE_{\FF})^{k \times k}$, define the morphism
$
\mu_{\xi} : \DD\Sigma_{\free}^* \to S^{k \times k}_{\boxtimes}
$
by $\mu_{\xi}(\eps)=\UU^{(k)}$ and
\[
\mu_{\xi}((a_1,d_1)\dots(a_n,d_n))
=
\beh{\xi(a_1)}(d_1) \boxtimes \dots \boxtimes \beh{\xi(a_n)}(d_n).
\]

Let $E \subseteq \EE_{\FF}$. We say that a morphism $\mu : \DD\Sigma_{\free}^* \to S^{k \times k}_{\boxtimes}$ is \emph{$E$-generated} if there exists $\xi : \Sigma \to E^{k \times k}$ such that $\mu = \mu_{\xi}$.

\begin{theorem}
\label{THM:morphism_characterization_wffa}
Let $E \subseteq \EE_{\FF}$ with $\beh{\zero} \in \beh{E}$, and let $\F : \DD\Sigma^* \to S$. Then
$
{\F \in \beh{\WFFA_{\Sigma,\FF}(E)}}
$
if and only if there exist $k \in \NN_{\ge 1}$, vectors $\lambda \in S^{1 \times k}$ and $\nu \in S^{k \times 1}$, and an $E$-generated morphism
$
\mu : \DD\Sigma_{\free}^* \to S^{k \times k}_{\boxtimes}
$
such that
$
\F(w) = \lambda \boxtimes \mu(w) \boxtimes \nu
$
for all $w \in \DD\Sigma^*$.
\end{theorem}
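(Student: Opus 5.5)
We prove the two directions separately, following the classical matrix characterization of recognizable series (cf.~\cite{DKV09}) and adapting it to $\FF$-expressions. Throughout we identify the state set with $[k]$. Runs are encoded by matrix entries. The hypothesis $\beh{\zero} \in \beh{E}$ is used to fill the entries of $\xi(a)$ for which there is no transition, since $\xi$ must take values in $E^{k\times k}$.

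\emph{($\Rightarrow$)} Let $\A = (Q, I, T, F, \wt_I, \wt_T, \wt_F) \in \WFFA_{\Sigma,\FF}(E)$ with $Q = [k]$. Fix some $e_\zero \in E$ with $\beh{e_\zero} = \beh{\zero}$, i.e.\ $\beh{e_\zero}(d) = \zero$ for all $d$. Define:
\begin{itemize}
\item $\lambda_i = \wt_I(i)$ for $i \in I$ and $\lambda_i = \zero$ otherwise;
\item $\nu_j = \wt_F(j)$ for $j \in F$ and $\nu_j = \zero$ otherwise;
\item $\xi(a)_{i,j} = \wt_T(i,a,j)$ if $(i,a,j) \in T$, and $\xi(a)_{i,j} = e_\zero$ otherwise.
\end{itemize}
Then $\mu_\xi$ is $E$-generated. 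We show by induction on $n$, using distributivity in $\SS$, that for $w = (a_1,d_1)\dots(a_n,d_n)$ the entry $\mu_\xi(w)_{i,j}$ equals the $\oplus$-sum, over all transition paths $i = q_0 \xrightarrow{a_1} \dots \xrightarrow{a_n} q_n = j$, of $\bigotimes_l \beh{\wt_T(t_l)}(d_l)$. The filler entries contribute $\zero$, which kills any path using a non-transition. Consequently $\lambda \boxtimes \mu_\xi(w) \boxtimes \nu$ is the sum of $\wt_\A(\varrho, d_1\dots d_n)$ over all runs $\varrho$ with label $a_1 \dots a_n$. Runs that are invalid have weight $\zero$, so they do not change the sum. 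This sum is therefore $\beh{\A}(w)$. For $w = \eps$ we get $\lambda \boxtimes \nu = \bigoplus_{q \in I \cap F} \wt_I(q) \otimes \wt_F(q) = \beh{\A}(\eps)$.

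\emph{($\Leftarrow$)} Given $k$, $\lambda$, $\nu$ and $\xi : \Sigma \to E^{k\times k}$, build $\A$ with $Q = [k]$, $I = F = Q$, $\wt_I(i) = \lambda_i$, $\wt_F(j) = \nu_j$, $T = Q \times \Sigma \times Q$ and $\wt_T(i,a,j) = \xi(a)_{i,j} \in E$. The automaton is $E$-weighted, and the same path-sum identity gives $\beh{\A}(w) = \lambda \boxtimes \mu_\xi(w) \boxtimes \nu$. Transitions whose expression evaluates to $\zero$ only produce invalid runs, which contribute nothing. Note that this direction does not need the hypothesis on $E$.

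The only step that needs some care is the path-sum identity for the matrix product. It is a routine induction: expand $(A \boxtimes B)_{i,j}$ and use distributivity and associativity in $\SS$. One must also check that dropping invalid runs in the definition of $\beh{\A}$ changes nothing, since their weight is $\zero$. Commutativity of $\SS$ is not even required here.
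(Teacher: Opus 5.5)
Your proposal is correct and follows the paper's proof essentially verbatim. The paper also splits the theorem into these two directions: it fills the non-transition entries of $\xi(a)$ with some $e_{\zero} \in E$, and conversely builds the complete automaton on $[k]$ with $I = F = Q$ and $T = Q \times \Sigma \times Q$. You additionally spell out the path-sum induction and the handling of invalid runs and of $\eps$, which the paper only asserts. One small point to add is the degenerate case $Q = \emptyset$, which the paper sets aside by assuming $Q \neq \emptyset$; it is handled by adding a dummy state with all weights $\zero$, so that $k \ge 1$.
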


\begin{proof}
The statement follows from Lemmas~\ref{LEMMA:wffa_to_morphism} and~\ref{LEMMA:morphism_to_wffa}.
\end{proof}

\begin{lemma}
\label{LEMMA:wffa_to_morphism}
Let $E \subseteq \EE_{\FF}$ with $\beh{\zero} \in \beh{E}$ and let $\A \in \WFFA_{\Sigma,\FF}(E)$. Then there exist $k \in \NN_{\ge 1}$, vectors $\lambda \in S^{1 \times k}$ and $\nu \in S^{k \times 1}$, and an $E$-generated morphism $\mu$ such that
$
\beh{\A}(w) = \lambda \boxtimes \mu(w) \boxtimes \nu
$
for all $w \in \DD\Sigma^*$.
\end{lemma}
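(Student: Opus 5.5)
We will use the standard translation from a weighted automaton to its linear representation, adapted to $\FF$-expression matrices. Let $\A = (Q, I, T, F, \wt_I, \wt_T, \wt_F) \in \WFFA_{\Sigma,\FF}(E)$ and enumerate $Q = \{q_1,\dots,q_k\}$ (if $Q = \emptyset$ we add a dummy state; this keeps $k \ge 1$ and changes nothing). The initial vector $\lambda \in S^{1\times k}$ is $\lambda_i = \wt_I(q_i)$ if $q_i \in I$ and $\zero$ otherwise. The final vector $\nu \in S^{k \times 1}$ is defined dually from $\wt_F$.

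Next we define $\xi : \Sigma \to E^{k\times k}$. Fix an expression $e_\zero \in E$ with $\beh{e_\zero} = \beh{\zero}$; one exists by the hypothesis $\beh{\zero} \in \beh{E}$, and $\beh{e_\zero}$ is the constant function $\zero$. For $a \in \Sigma$ we set $\xi(a)_{i,j} = \wt_T(q_i,a,q_j)$ if $(q_i,a,q_j) \in T$, and $\xi(a)_{i,j} = e_\zero$ otherwise. All entries lie in $E$, so $\mu := \mu_\xi$ is $E$-generated. Note that $T \subseteq Q\times\Sigma\times Q$ is a set, so each entry is determined by at most one transition and no $\oplus$-combination is needed. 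This is exactly why no closure assumption on $E$ is required beyond the zero hypothesis.

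The key step is the path expansion of the matrix product. For $w = (a_1,d_1)\cdots(a_n,d_n)$, induction on $n$ together with distributivity gives
\[
\bigl(\mu(w)\bigr)_{i,j} = \bigoplus_{i_0=i,\, i_1,\dots,i_{n-1},\, i_n=j} \;\bigotimes_{m=1}^{n} \beh{\xi(a_m)_{i_{m-1},i_m}}(d_m).
\]
Any index sequence that does not correspond to a sequence of transitions of $\A$ contains a factor $\zero$, and this factor annihilates the product because $\zero$ is absorbing. Multiplying by $\lambda$ on the left and $\nu$ on the right therefore yields the $\oplus$ of $\wt_{\A}(\varrho, d_1\cdots d_n)$ over all runs $\varrho$ with label $a_1\cdots a_n$. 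Commutativity of $\SS$ is not even needed here, since the factors already appear in run order.

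It remains to compare this with $\beh{\A}(w)$, which sums only over valid runs. Invalid runs have weight $\zero$, which is neutral for $\oplus$, so the two sums coincide. For $w = \eps$ we get $\mu(\eps) = \UU^{(k)}$, hence $\lambda\boxtimes\nu = \bigoplus_{q\in I\cap F}\wt_I(q)\otimes\wt_F(q) = \beh{\A}(\eps)$. The only delicate point is bookkeeping: the zero entries must be realized by an expression of $E$, which is what the hypothesis $\beh{\zero}\in\beh{E}$ guarantees. Beyond that the argument is routine.
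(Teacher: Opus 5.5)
Your proposal is correct and matches the paper's proof: the same $\lambda$, $\nu$ and $\xi$, with an expression $e_{\zero}\in E$ filling every entry that does not correspond to a transition. You also spell out the path-expansion argument and the $w=\eps$ and $Q=\emptyset$ cases, which the paper leaves as ``it can be shown''.
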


\begin{proof}
Enumerate the states of $\A$ as $q_1,\dots,q_k$. 
Define $\lambda \in S^{1 \times k}$ and $\nu \in S^{k \times 1}$ by ${\lambda_i = \wt_I(q_i)}$ and ${\nu_i = \wt_F(q_i)}$ for all $i$, where missing weights are interpreted as $\zero$.
For each $a \in \Sigma$ define a matrix $\xi(a) \in E^{k \times k}$ by placing the transition weight of $(q_i,a,q_j)$ in position $(i,j)$, and an expression equivalent to $\zero$ otherwise. 
Then the induced morphism $\mu_{\xi}$ satisfies
$
\beh{\A}(w) = \lambda \boxtimes \mu_{\xi}(w) \boxtimes \nu,
$
as follows directly from the definition of WFFA behavior.
\end{proof}

\begin{lemma}
\label{LEMMA:morphism_to_wffa}
Let $k \in \NN_{\ge 1}$, $E \subseteq \EE_{\FF}$, $\lambda \in S^{1 \times k}$, $\nu \in S^{k \times 1}$, and $\xi : \Sigma \to E^{k \times k}$. 
If
$
\F(w)=\lambda \boxtimes \mu_{\xi}(w) \boxtimes \nu
$
for all $w \in \DD\Sigma^*$, then $\F \in \beh{\WFFA_{\Sigma,\FF}(E)}$.
\end{lemma}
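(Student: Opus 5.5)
We build the automaton directly from the matrix data, reversing the construction in the proof of Lemma~\ref{LEMMA:wffa_to_morphism}. Let $Q = \{q_1,\dots,q_k\}$. Take $I = Q$ with $\wt_I(q_i) = \lambda_i$, and $F = Q$ with $\wt_F(q_i) = \nu_i$. Let the transitions be $T = \{(q_i,a,q_j) \mid i,j \in [k],\ a \in \Sigma\}$, with $\wt_T(q_i,a,q_j) = \xi(a)_{i,j} \in E$. Then $\A = (Q,I,T,F,\wt_I,\wt_T,\wt_F)$ is $E$-weighted by construction. No closure assumption on $E$ is needed, since every transition weight is literally an entry of some $\xi(a)$ and no expressions are combined syntactically.

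The main step is to show $\beh{\A}(w) = \lambda \boxtimes \mu_{\xi}(w) \boxtimes \nu$ for every $w = (a_1,d_1)\dots(a_n,d_n)$. Expanding the iterated matrix product by distributivity gives
$$\lambda \boxtimes \mu_{\xi}(w) \boxtimes \nu = \bigoplus_{i_0,\dots,i_n \in [k]} \lambda_{i_0} \otimes \beh{\xi(a_1)_{i_0,i_1}}(d_1) \otimes \dots \otimes \beh{\xi(a_n)_{i_{n-1},i_n}}(d_n) \otimes \nu_{i_n}.$$
Associativity of $\boxtimes$ follows from associativity and distributivity in $\SS$, so this expansion is a routine induction on $n$. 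Each index sequence $(i_0,\dots,i_n)$ corresponds bijectively to a run $q_{i_0} \xrightarrow{a_1} \dots \xrightarrow{a_n} q_{i_n}$ of $\A$ with label $a_1\dots a_n$. The summand for that sequence is exactly $\wt_{\A}(\varrho, d_1\dots d_n)$. For $n=0$, both sides equal $\bigoplus_i \lambda_i \otimes \nu_i$, since $\mu_\xi(\eps) = \UU^{(k)}$.

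The one subtle point, and the only real obstacle, is that $\beh{\A}(w)$ sums only over \emph{valid} runs, whereas the matrix expansion sums over all index sequences. This mismatch is harmless: an invalid run has weight $\zero$ by definition, and $\zero$ is neutral for $\oplus$. So restricting the sum to $\Run_{\A}(w)$ does not change its value, and the two expressions coincide. This matters in particular for zero entries of $\xi(a)$ or of $\lambda,\nu$.

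A minor formal point: Definition~\ref{DEF:WFFA} allows $\wt_I$ and $\wt_F$ to take the value $\zero$. Alternatively, one may drop states with $\lambda_i = \zero$ from $I$ (or with $\nu_i = \zero$ from $F$); either choice yields the same behavior by the same absorption argument. Hence $\F = \beh{\A} \in \beh{\WFFA_{\Sigma,\FF}(E)}$.
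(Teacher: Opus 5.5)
Your proposal is correct and matches the paper's proof: take $Q=I=F=[k]$, $T=Q\times\Sigma\times Q$, initial and final weights $\lambda_i$ and $\nu_i$, and transition weights $(\xi(a))_{i,j}\in E$. The paper simply asserts that the behavior coincides with $\F$. Your expansion of the iterated matrix product, the bijection with runs, and the observation that invalid runs contribute $\zero$ supply exactly the verification the paper omits.
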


\begin{proof}
Construct a WFFA with state set $[k]$, all states both initial and final, initial and final weights given by $\lambda$ and $\nu$, and transition weight of $(i,a,j)$ equal to $(\xi(a))_{i,j}$. 
By construction, its behavior coincides with $\F$.
\end{proof}

As a direct consequence, we obtain the following complexity bound.

\begin{corollary}
\label{COR:wffa_behavior_evaluation}
Assume that $\oplus$, $\otimes$, and $\bb$ can each be computed in $\O(1)$ time. 
Let ${\A \in \WFFA_{\Sigma,\FF}}$ and let $w \in \DD\Sigma^*$. Then $\beh{\A}(w)$ can be computed in
$
{\O\bigl((|\A|_Q)^2 \cdot |\A|_{\EE_{\FF}} \cdot |w|\bigr)}
$
time.
\end{corollary}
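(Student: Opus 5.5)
The plan is to reduce the evaluation to the matrix representation of Lemma~\ref{LEMMA:wffa_to_morphism}. I would apply it with $E = \EE_{\FF}$; here $\zero \in S \subseteq \EE_{\FF}$, so the hypothesis $\beh{\zero} \in \beh{E}$ holds. Write $k = |\A|_Q$ and $w = (a_1,d_1)\dots(a_n,d_n)$ with $n = |w|$. The lemma gives $\lambda \in S^{1\times k}$, $\nu \in S^{k\times 1}$ and $\xi : \Sigma \to (\EE_{\FF})^{k\times k}$ such that
\[
\beh{\A}(w) = \lambda \boxtimes \beh{\xi(a_1)}(d_1) \boxtimes \dots \boxtimes \beh{\xi(a_n)}(d_n) \boxtimes \nu .
\]
The construction also gives the structure of $\xi$. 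Each entry of $\xi(a)$ is either the transition weight $\wt_T(q_i,a,q_j)$ or the constant $\zero$. Hence every entry has size at most $\max(|\A|_{\EE_{\FF}},1)$.

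The key point is to never form the $k\times k$ matrix products themselves, since each of those costs $\O(k^3)$. Instead I evaluate the expression left to right as a sequence of vector--matrix products. Set $v_0 = \lambda$ and $v_i = v_{i-1} \boxtimes \beh{\xi(a_i)}(d_i)$ for $i \in [n]$, and finally output $v_n \boxtimes \nu$. Associativity of $\boxtimes$ makes this equal to the displayed product; it follows from associativity and distributivity in $\SS$, as in the monoid statement recalled before Theorem~\ref{THM:morphism_characterization_wffa}.

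The cost of one step is as follows. Computing the matrix $\beh{\xi(a_i)}(d_i)$ means evaluating $k^2$ expressions. By Remark~\ref{REM:exp_size}, each evaluation takes $\O(|\A|_{\EE_{\FF}})$ time, giving $\O(k^2 |\A|_{\EE_{\FF}})$. Forming the product $v_{i-1} \boxtimes \beh{\xi(a_i)}(d_i)$ then takes $k$ entries times $k$ summands, i.e. $\O(k^2)$ semiring operations. Over $n$ steps this gives $\O(k^2\,|\A|_{\EE_{\FF}}\, n)$. The final product with $\nu$ and the initialization from $\lambda$ each add only $\O(k)$.

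There are two details to check. First, the empty word: there the result is simply $\lambda \boxtimes \nu$. Second, the case $T = \emptyset$, where $|\A|_{\EE_{\FF}} = 0$: the bound is read with $\max(1,\cdot)$, or alternatively one notes that zero entries need no evaluation.

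Remark~\ref{REM:exp_size} assumes that the comparisons $=$ and $\neq$ take $\O(1)$ time. The corollary's hypothesis lists only $\oplus$, $\otimes$ and $\bb$, so I would state that this comparison assumption is also used. I expect this to be the only point needing care; there is no real obstacle beyond avoiding full matrix multiplication.
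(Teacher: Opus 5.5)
Your proposal is correct and follows the paper's proof. Both invoke Lemma~\ref{LEMMA:wffa_to_morphism}, compute $v_i = v_{i-1} \boxtimes \beh{\xi(a_i)}(d_i)$ from left to right, and charge $\O(k^2 \cdot |\A|_{\EE_{\FF}})$ per letter for the entry evaluations plus $\O(k^2)$ for the vector--matrix product. Your additional remarks are valid refinements of details the paper leaves implicit: taking $E = \EE_{\FF}$ so that $\zero \in E$, the empty word, the case $T = \emptyset$, and the $\O(1)$-comparison assumption inherited from Remark~\ref{REM:exp_size}.
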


\begin{proof}
Let $k = |\A|_Q$. By Lemma~\ref{LEMMA:wffa_to_morphism}, write
$
\beh{\A}(w)=\lambda \boxtimes \mu_{\xi}(w) \boxtimes \nu.
$
Assume that ${w=(a_1,d_1)\dots(a_n,d_n)}$. We compute iteratively
\[
v_0=\lambda,
\qquad
v_i = v_{i-1} \boxtimes \beh{\xi(a_i)}(d_i)
\quad (i \in [n]).
\]
Each matrix evaluation $\beh{\xi(a_i)}(d_i)$ requires $\O(k^2 |\A|_{\EE_{\FF}})$ time, and each matrix-vector product has the same asymptotic cost. 
Thus the total running time is
$
\O(k^2 |\A|_{\EE_{\FF}} |w|).
$
\end{proof}

\begin{remark}
If $E = S$, $E = \PP_{\FF}$, or $E = \MM_{\FF}$, then $|\A|_{\EE_{\FF}} = \O(1)$. Hence the behavior can be evaluated in
$
\O((|\A|_Q)^2 \cdot |w|).
$
In particular, this applies to $\FF_{\arc,\times}$ when restricted to primitive, affine, or monomial transition weights.
\end{remark}

\section{Decision Problems for WFFAs}
\label{SEC:decision_problems}

In this section, we do not attempt an exhaustive study of decision problems for WFFAs. 
Instead, we illustrate how the reduction to monomials can be used to derive decidability results for two representative verification tasks: the support problem and the $>$-threshold problem.

\subsection{Support Problem}

The support problem asks whether a given WFFA assigns a value different from $\zero$ to some finance word.
Equivalently, it asks whether the quantitative finance language recognized by the automaton has non-empty support.

Let $X$ be a set and $f : X \to S$ a mapping. The \emph{support} of $f$ is defined as
$
{\supp(f) = \{x \in X \mid f(x) \neq \zero\}}.
$
For any finance word $w = (a_1,d_1)\dots(a_n,d_n) \in \DD\Sigma^*$, let
$
\pi_{\Sigma}(w) = a_1\dots a_n \in \Sigma^*.
$
For $L \subseteq \DD\Sigma^*$, let
$
\pi_{\Sigma}(L)=\{\pi_{\Sigma}(w)\mid w\in L\}\subseteq \Sigma^*.
$

\begin{lemma}
\label{LEMMA:expression_support_decidable}
Let $\FF = \FF_{\arc, \times}$ and let $e \in \EE_{\FF}$. Then it is decidable whether ${\supp(\beh{e}) \neq \emptyset}$.
\end{lemma}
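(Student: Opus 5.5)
Our approach is to compute the piecewise-affine structure of $\beh{e}$ explicitly and then inspect it. By Lemma~\ref{LEMMA:finarc_exp_piecewise_affine}, $\beh{e}:\RR_{\ge 0}\to \RR\cup\{-\infty\}$ is piecewise affine. We will make that proof effective: by structural induction on $e$ we compute a finite list of breakpoints $0=x_0<x_1<\dots<x_m$, with rational (or algebraic-over-the-constants) coordinates expressed in the constants occurring in $e$. On each open interval $(x_{j-1},x_j)$, on the unbounded interval $(x_m,\infty)$, and at each breakpoint $x_j$, the function $\beh{e}$ is either an explicitly given affine map $d\mapsto \alpha d+\beta$ or is constantly $-\infty$. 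Treating the breakpoints as separate singleton pieces is important, because constraints such as $[e_1=e_2]$ can make $\beh{e}$ finite at a single point only (cf.\ the proof of Lemma~\ref{LEMMA:wffa_with_affine_expressions_qfls_inclusion}).

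The induction runs as follows. For $s$ we have one piece, constant $s$. For $\bind{s}$ with $s\ne-\infty$ we get $d\mapsto s\cdot d$, and with $s=-\infty$ the function is $-\infty$. For $e_1\otimes e_2$ we take the common refinement of the two partitions and add the affine maps piecewise, letting $-\infty$ absorb. For $e_1\oplus e_2$ we refine, then on each piece where both maps are affine we compute the (at most one) crossing point of the two lines, add it as a breakpoint, and take the dominant line on each side. For $e_1=e_2$ and $e_1\ne e_2$ we refine, and on each piece compare the two maps. Two affine maps are either identical or agree at at most one point, which is computable, and that point is added as a breakpoint. Comparisons involving $-\infty$ are decided directly. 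The result is $\one=0.0$ or $\zero=-\infty$ on each refined piece. All of these operations involve only field arithmetic and order comparisons on the constants, so they are effective under the standing assumption that arithmetic and comparisons in $S$ are computable.

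Once the representation is computed, $\supp(\beh{e})\neq\emptyset$ holds iff some piece (open interval, unbounded tail, or breakpoint) carries an affine map rather than the constant $-\infty$. This is because an affine map with real coefficients never takes the value $-\infty$, and every piece is nonempty. This test is a finite check over the pieces.

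The main obstacle is the bookkeeping needed to keep the partition exact. In particular, the equality and inequality constraints must be handled correctly at isolated points, and so must the case where two affine maps coincide on a whole interval. Treating breakpoints as their own pieces and performing exact symbolic comparisons of coefficients $(\alpha,\beta)$ resolves this. A secondary point is that the number of pieces grows at most polynomially in $|e|$ at each step, so the procedure terminates.
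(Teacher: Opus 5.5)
Your proposal is correct and follows the paper's route. The paper likewise uses the effectively computable partition of $\DD$ into singleton breakpoints and open intervals from Lemma~\ref{LEMMA:finarc_exp_piecewise_affine}, and declares the support nonempty iff some piece carries an affine map whose coefficients are both different from $\zero$. One small caveat: termination needs only that each inductive step produces a finite partition. Your per-step linear growth bounds compose only to a bound exponential in the nesting depth of $e$, and the paper explicitly leaves open whether the number of pieces is polynomial in $|e|$, so do not read that remark as a complexity claim.
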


\begin{proof}
By Lemma~\ref{LEMMA:finarc_exp_piecewise_affine}, $\beh{e}$ is piecewise affine. 
Hence there exists a finite partition of $\DD$ into points and intervals such that $\beh{e}$ is affine on each piece. 
It is therefore enough to inspect finitely many pieces. On each piece, one can decide whether the corresponding affine function is non-zero somewhere. Thus it is decidable whether $\supp(\beh{e}) \neq \emptyset$.
\end{proof}

\begin{lemma}
\label{LEMMA:rec_qfl_support_regular}
Let $\FF = \FF_{\arc, \times}$ and let $\A \in \WFFA_{\Sigma, \FF}$. Then $\pi_{\Sigma}(\supp(\beh{\A})) \subseteq \Sigma^*$ is a regular language.
\end{lemma}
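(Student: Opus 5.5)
The plan is to reduce to a classical unweighted nondeterministic finite automaton over $\Sigma$ by discarding the weights and keeping only information about which transitions can be non-zero. The key fact is specific to the arctic semiring: $\oplus=\max$ and $\otimes=+$ with $\zero=-\infty$. A finite max of elements of $\RR\cup\{-\infty\}$ equals $-\infty$ iff all arguments are $-\infty$. A finite sum equals $-\infty$ iff some summand is $-\infty$. Hence for $w=(a_1,d_1)\dots(a_n,d_n)$ we have $\beh{\A}(w)\neq\zero$ iff there is a run $\varrho$ with label $a_1\dots a_n$ that is valid in context $d_1\dots d_n$. Validity in turn means $\wt_I(q_0)\neq\zero$, $\wt_F(q_n)\neq\zero$, and $d_i\in\supp(\beh{\wt_T(t_i)})$ for every $i\in[n]$. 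There is no cancellation, so nonemptiness of the support reduces to existence of one valid run.

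The crucial observation is that the data values are chosen independently per position. So $a_1\dots a_n\in\pi_\Sigma(\supp(\beh{\A}))$ iff there exists a run $q_0\xrightarrow{a_1}\dots\xrightarrow{a_n}q_n$ with the following properties:
\begin{itemize}
\item $q_0\in I$ and $\wt_I(q_0)\neq\zero$;
\item $q_n\in F$ and $\wt_F(q_n)\neq\zero$;
\item $\supp(\beh{\wt_T(t_i)})\neq\emptyset$ for each $i$.
\end{itemize}
Indeed, given such a run, we pick for each $i$ some $d_i$ in the (nonempty) support of the $i$-th transition weight; then the run is valid and the weight is $\neq\zero$. Conversely, a valid run on some $w$ witnesses these conditions. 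So I define an NFA $\mathcal N=(Q,I',T',F')$ where:
\begin{itemize}
\item $I'=\{q\in I\mid \wt_I(q)\neq\zero\}$;
\item $F'=\{q\in F\mid \wt_F(q)\neq\zero\}$;
\item $T'=\{t\in T\mid \supp(\beh{\wt_T(t)})\neq\emptyset\}$.
\end{itemize}
Then I will verify $L(\mathcal N)=\pi_\Sigma(\supp(\beh{\A}))$, which gives regularity. The case $w=\eps$ is consistent: $\eps$ is in the support iff some $q\in I\cap F$ has $\wt_I(q)\otimes\wt_F(q)\neq\zero$, iff $q\in I'\cap F'$.

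Moreover, by Lemma~\ref{LEMMA:expression_support_decidable}, the set $T'$ is computable, so $\mathcal N$ is effectively constructible. This additionally yields decidability of the support problem, via emptiness of $L(\mathcal N)$. The main point requiring care is the no-cancellation step. It uses that $\SS_{\arc}$ is zero-sum-free and has no zero divisors on $\RR$; in particular, the convention $d\times(-\infty)=-\infty$ only ever produces $\zero$ from $\zero$, and all finite values add to finite values. I expect this step and the independence-of-data argument to be the only substantive parts; the rest is routine.
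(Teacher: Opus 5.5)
Your proposal is correct and takes the same approach as the paper's proof, which builds exactly this NFA $\A' = (Q, I', T', F')$: it keeps the initial and final states with weight $\neq \zero$ and the transitions whose weight expression has nonempty support, then asserts $L(\A') = \pi_{\Sigma}(\supp(\beh{\A}))$. You also make explicit two points the paper leaves implicit, namely that there is no cancellation in $\SS_{\arc}$ and that the data values can be chosen independently at each position, as well as the $\eps$ case.
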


\begin{proof}
Let $\A = (Q, I, T, F, \wt_I, \wt_T, \wt_F)$. Define the non-deterministic finite automaton
$
\A' = (Q, I', T', F')
$
by
$I' = \{i \in I \mid \wt_I(i) \neq \zero\},$
$F' = \{f \in F \mid \wt_F(f) \neq \zero\},$
and
$
T' = \{(p,a,q) \in T \mid \supp(\beh{\wt_T(p,a,q)}) \neq \emptyset\}.
$
Then a word $u = a_1 \dots a_n \in \Sigma^*$ is accepted by $\A'$ if and only if there exists a finance word
$
w=(a_1,d_1)\dots(a_n,d_n)\in \DD\Sigma^*
$
with $\pi_{\Sigma}(w)=u$ and $\beh{\A}(w)\neq \zero$. Hence
$
L(\A')=\pi_{\Sigma}(\supp(\beh{\A})),
$
so $\pi_{\Sigma}(\supp(\beh{\A}))$ is regular.
\end{proof}

\begin{lemma}
\label{LEMMA:wffa_support_decidable}
Let $\FF = \FF_{\arc, \times}$ and let $\A \in \WFFA_{\Sigma, \FF}$. Then it is decidable whether $\supp(\beh{\A}) \neq \emptyset$.
\end{lemma}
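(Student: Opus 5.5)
The plan is to reduce the question to emptiness of a classical finite automaton, using Lemma~\ref{LEMMA:rec_qfl_support_regular} together with the decidability result of Lemma~\ref{LEMMA:expression_support_decidable}.

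First observe that $\supp(\beh{\A}) \neq \emptyset$ holds if and only if $\pi_{\Sigma}(\supp(\beh{\A})) \neq \emptyset$. This is immediate, since $\pi_{\Sigma}$ maps a non-empty set to a non-empty set and the empty set to the empty set.

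The next step is to make the construction in the proof of Lemma~\ref{LEMMA:rec_qfl_support_regular} effective. We build the NFA $\A' = (Q, I', T', F')$ from $\A$.
\begin{itemize}
\item The sets $I'$ and $F'$ are computable, since we only compare the finitely many initial and final weights with $\zero = -\infty$.
\item For each of the finitely many transitions $t \in T$, Lemma~\ref{LEMMA:expression_support_decidable} decides whether $\supp(\beh{\wt_T(t)}) \neq \emptyset$. Hence $T'$ is computable.
\end{itemize}
By Lemma~\ref{LEMMA:rec_qfl_support_regular}, $L(\A') = \pi_{\Sigma}(\supp(\beh{\A}))$.

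Two points of that lemma should be checked, and both hold in $\FF_{\arc,\times}$.
\begin{itemize}
\item The run weight is a sum of finitely many values. It differs from $-\infty$ exactly when every summand does, because $\SS_{\arc}$ has no zero divisors.
\item The data values $d_i$ can be chosen independently for each position of the run. Hence the existence of a valid run context reduces to each transition weight individually having non-empty support.
\end{itemize}
Given a valid run, the behavior is non-zero, because $\max$ of a set containing a real number is not $-\infty$.

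Finally, emptiness of $L(\A')$ is decidable by graph reachability. We check whether some state in $I'$ reaches some state in $F'$ via transitions in $T'$; this includes the case $I' \cap F' \neq \emptyset$, which corresponds to the empty word.

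The main obstacle is the effectiveness of Lemma~\ref{LEMMA:expression_support_decidable}. That lemma requires computing the breakpoints of the piecewise-affine function $\beh{e}$ by structural induction. This is routine when the constants are rational or otherwise computable reals with decidable comparison, and I would state this as a standing assumption. Everything else is standard automata theory.
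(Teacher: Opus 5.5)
Your proposal is correct and follows the paper's proof. Like the paper, you reduce the question to emptiness of the NFA $\A'$ from Lemma~\ref{LEMMA:rec_qfl_support_regular}, whose transition set is computable via Lemma~\ref{LEMMA:expression_support_decidable}, using that $\supp(\beh{\A}) \neq \emptyset$ iff its $\Sigma$-projection is non-empty; your extra checks (no zero divisors and no cancellation under $\max$ in $\SS_{\arc}$, independent choice of data values along a run, and the standing assumption that constants admit decidable comparisons) spell out what the paper leaves implicit when it calls the language ``effectively constructible.''
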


\begin{proof}
By Lemma~\ref{LEMMA:rec_qfl_support_regular}, the language $\pi_{\Sigma}(\supp(\beh{\A}))$ is regular and effectively constructible. 
Moreover,
$
\supp(\beh{\A}) \neq \emptyset
$
if and only if
$
\pi_{\Sigma}(\supp(\beh{\A})) \neq \emptyset.
$
Since emptiness of regular languages is decidable (see, e.g., \cite{HU79}), the claim follows.
\end{proof}

\subsection{Threshold Problems}

In this subsection, we focus on the finance semiring $\FF = \FF_{\arc, \times}$. 
In financial terms, the $>$-threshold problem asks whether a modeled instrument or strategy can exceed a prescribed payoff level under some market scenario.

Let $\D$ be a scenario language, i.e., a set of finance words representing possible scenarios.

\begin{definition}
\label{DEF:threshold_problem}
Let ${\bowtie} \in \{<,>\}$. 
The \emph{$\bowtie$-threshold problem over $\D$} asks, given ${\A \in \WFFA_{\Sigma, \FF}}$ and $\theta \in \RR$, whether there exists a finance word $w \in \D$ such that
$
\beh{\A}(w) \bowtie \theta.
$
If $\D = \DD\Sigma^*$, we simply speak of the \emph{$\bowtie$-threshold problem}.
\end{definition}

\begin{remark}
For WFFAs with constant transition weights, the threshold problem reduces to the corresponding problem for classical max-plus weighted automata; the relevant decidability results are well known (see, e.g.,~\cite{ABK22}).
\end{remark}

Let $\I \subseteq \DD$. We denote by $\DD_{\I}\Sigma^*$ the set of all finance words (scenarios)
$
w=(a_1,d_1)\dots(a_n,d_n)\in \DD\Sigma^*
$
such that $d_i \in \I$ for all $i \in [n]$. 
This defines a basic class of scenario languages with a global constraint on the data values, typically given by an interval $\I = [\ell, h]$.

\begin{theorem}
\label{THM:threshold_problem_wffa}
\begin{enumerate}[label=(\alph*)]
\item The $>$-threshold problem for $\WFFA_{\Sigma, \FF}$ is decidable.
\item Let $\I = [\ell,h] \subseteq \DD$ be an interval. Then the $>$-threshold problem over $\D = \DD_{\I}\Sigma^*$ for $\WFFA_{\Sigma, \FF}$ is decidable.
\end{enumerate}
\end{theorem}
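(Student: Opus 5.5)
The plan is to reduce the problem to a longest-path question on a finite weighted graph. The key observation is that over $\FF_{\arc,\times}$ the behavior is a maximum over runs, so
\[
\exists w \in \D:\ \beh{\A}(w) > \theta
\]
holds if and only if there exist a run $\varrho$ and a run context $\delta = d_1\dots d_n$ with $\wt_{\A}(\varrho,\delta) > \theta$ (with all $d_i \in \I$ in case (b)). The run weight is
\[
\wt_I(q_0) + \sum_{i} \beh{\wt_T(t_i)}(d_i) + \wt_F(q_n),
\]
and the $i$-th summand depends only on $d_i$. So the data values can be optimised independently, one transition at a time.

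\textbf{Step 1 (per-transition suprema).} For each transition $t$, compute
\[
\sigma_t = \sup\{\beh{\wt_T(t)}(d) \mid d \in \DD\} \in \RR \cup \{\pm\infty\},
\]
with $\DD$ replaced by $\I$ in case (b). Here $\sigma_t = -\infty$ means $t$ can never be valid. By Lemma~\ref{LEMMA:finarc_exp_piecewise_affine}, $\beh{\wt_T(t)}$ is piecewise affine. I would make that structural induction effective: carry along an explicit finite list of breakpoints together with the affine function (or $\zero$) on each point and each open interval.
\begin{itemize}
\item For $\oplus$, new breakpoints arise where two affine pieces cross.
\item For $=$ and $\neq$, the comparison is resolved piecewise: on each piece it holds either identically, nowhere, or at a single point.
\end{itemize}
The supremum is then the maximum over finitely many pieces of the supremum of an affine function on a point or an interval. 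This is $+\infty$ exactly when an unbounded piece carries a positive slope; otherwise it is a limit value at an endpoint. I expect this effective bookkeeping to be the main technical obstacle. The difficulty is especially at isolated points produced by equality constraints and on open pieces where the supremum is not attained. In case (b) the domain $[\ell,h]$ is compact, so every $\sigma_t$ is finite or $-\infty$, which simplifies matters.

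\textbf{Step 2 (strictness).} A supremum need not be attained, but the threshold is strict. Suppose a run has finite $\sigma_{t_i}$ with $\wt_I(q_0) + \sum_i \sigma_{t_i} + \wt_F(q_n) > \theta$. Then I can choose each $d_i$ so that the $i$-th summand exceeds $\sigma_{t_i} - \varepsilon/n$, where $\varepsilon$ is the slack above $\theta$. This gives an actual run weight $> \theta$. Conversely, every actual run weight is at most $\wt_I(q_0) + \sum_i \sigma_{t_i} + \wt_F(q_n)$. Hence the problem is equivalent to asking whether some run of the classical max-plus automaton $\A_\sigma$, obtained by giving each transition $t$ the constant weight $\sigma_t$, has weight $> \theta$. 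If some $\sigma_t = +\infty$, an analogous approximation argument makes the sum arbitrarily large.

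\textbf{Step 3 (graph problem).} First delete transitions with $\sigma_t = -\infty$, and delete initial and final states whose weight is $-\infty$. Then trim $\A_\sigma$ to states that are both reachable from an initial state and co-reachable to a final state.
\begin{enumerate}
\item If some remaining transition has $\sigma_t = +\infty$, answer yes.
\item Otherwise, all weights are real. If there is a cycle of positive total weight among the trimmed states, answer yes, since pumping it yields unbounded run weights. This is detectable by a Bellman--Ford style check.
\item Otherwise, the maximal run weight is the maximum of $\wt_I(i) + \mathrm{LP}(i,f) + \wt_F(f)$ over $i \in I$ and $f \in F$. Here $\mathrm{LP}(i,f)$ is the longest path weight from $i$ to $f$, which is well defined and computable since there are no positive cycles. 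The empty word is covered by the case $i = f$. Compare this maximum with $\theta$.
\end{enumerate}

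Part (b) is the same argument with suprema taken over $\I$ in Step 1. This establishes both claims of Theorem~\ref{THM:threshold_problem_wffa}.
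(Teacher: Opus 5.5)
Your proposal is correct and follows essentially the paper's route: compute the per-transition suprema over $\I$ from the piecewise-affine structure, answer yes if a transition with supremum $+\infty$ lies on a trimmed initial-to-final path, and otherwise reduce to a classical max-plus automaton whose transition weights are these constant suprema. The only differences are that you spell out the $\varepsilon$-approximation argument needed because suprema need not be attained, and the removal of initial/final states of weight $-\infty$, both of which the paper leaves implicit. You also decide the resulting max-plus question directly by positive-cycle detection and longest paths, where the paper cites \cite{ABK22}.
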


For the proof of Theorem~\ref{THM:threshold_problem_wffa}, we require the following lemma.

\begin{lemma}
\label{LEMMA:expression_sup_computable}
Let $e \in \EE_{\FF}$ and let either $\I = \DD$ or $\I = [l,h] \subseteq \DD$. Then $\sup_{x \in \I}\beh{e}(x)$ is computable.
\end{lemma}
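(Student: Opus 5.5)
We work over $\FF = \FF_{\arc,\times}$, so $\DD = \RR_{\ge 0}$, and prove the lemma by making Lemma~\ref{LEMMA:finarc_exp_piecewise_affine} effective. The plan is to compute, by structural induction on $e$, an explicit piecewise-affine description of $\beh{e}$. This description is a finite list of breakpoints $0 = x_0 < x_1 < \dots < x_m$ (with rational or exactly represented endpoints) splitting $\DD$ into the singleton pieces $\{x_j\}$ and the open intervals $(x_{j-1},x_j)$ and $(x_m,\infty)$. On each piece, $\beh{e}$ is either constantly $\zero = -\infty$ or equal to an affine map $x \mapsto \alpha x + \beta$ with $\alpha,\beta \in \RR$. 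Once such a description is available, the supremum over $\I$ is the maximum over the finitely many pieces meeting $\I$ of the supremum of the affine map on that (intersected) piece, and this is trivially computable.

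The induction goes as follows. A constant $s$ gives one piece with $\alpha = 0$, $\beta = s$, or the $\zero$-piece if $s = -\infty$. A binding $\bind{s}$ gives $x \mapsto s x$, or $\zero$ if $s = -\infty$. For $e_1 \otimes e_2$ we take the common refinement of the two partitions and add the affine maps piecewise; $\zero$ is absorbing. For $e_1 \oplus e_2$ we refine, and on each piece where both maps are affine we add the (at most one) crossing point of the two lines as a new breakpoint. On each resulting subpiece the larger map can be determined by evaluating at an interior point. For $e_1 = e_2$ and $e_1 \neq e_2$ we refine in the same way. On each piece the two affine maps either coincide identically, giving $\one$ or $\zero$ everywhere, or agree in at most one point, which we split off as a singleton piece; pieces where one or both sides are $\zero$ are handled directly. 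All steps only solve linear equations in one variable, so the breakpoints remain computable and their number stays finite.

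Given the description, for $\I = [l,h]$ we intersect every piece with $\I$. On a piece with affine map $\alpha x + \beta$ and closure $[u,v]$, the supremum is $\max(\alpha u + \beta, \alpha v + \beta)$. This is attained or approached at an endpoint, which is correct for a supremum even over open pieces. Pieces on which the map is $\zero$ contribute $-\infty$. For $\I = \DD$ the only unbounded piece is $(x_m,\infty)$. There the supremum is $+\infty$ if $\alpha > 0$, and $\beta + \alpha x_m$ if $\alpha \le 0$, again as a supremum. The overall supremum therefore lies in $\RR \cup \{\pm\infty\}$, and we report it as such.

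The main obstacle is the bookkeeping in the $=$ and $\neq$ cases. Exact coincidence of affine maps must be detected symbolically, not numerically. Singleton pieces must also be handled correctly, since an isolated point can carry a non-$\zero$ value that matters for the supremum, as with $\F'$ in Lemma~\ref{LEMMA:wffa_with_affine_expressions_qfls_inclusion}. We will handle this by keeping singletons as explicit pieces throughout and comparing coefficient pairs $(\alpha,\beta)$ exactly. A minor side issue is representability of the constants in $S$. We assume, as the paper implicitly does for computability, that the weights occurring in $e$ are given exactly, for instance as rationals, so that all arithmetic above is exact.
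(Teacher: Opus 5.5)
Your proposal is correct and follows essentially the paper's route: compute the piecewise-affine decomposition of $\beh{e}$, which is exactly the structural induction you make explicit and which the appendix carries out constructively. Then take the maximum of the breakpoint values and, on each open piece, the endpoint value $c_i' + \max\{c_i d_i, c_i d_{i+1}\}$, which is $+\infty$ on the unbounded piece when the slope is positive. Your intersection of pieces with $[l,h]$ is equivalent to the paper's refinement making $l,h$ breakpoints, and your remarks on exact arithmetic and symbolic coincidence tests spell out an effectiveness assumption the paper leaves implicit.
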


\begin{proof}
By Lemma~\ref{LEMMA:finarc_exp_piecewise_affine}, $\beh{e}$ is piecewise affine. 
Hence there exists a finite partition of $\DD$ into points and intervals such that $\beh{e}$ is affine on each piece. 
The supremum over $\DD$ or over $[l,h]$ is therefore attained either at a breakpoint or at an endpoint of one of the finitely many relevant intervals, and is thus computable.
\end{proof}

\begin{remark}
\label{REM:complexity_sup}
Let $\I=\DD$ or $\I=[l,h]\subseteq \DD$.
\begin{enumerate}[label=(\alph*)]
\item If, for a class $E \subseteq \EE_{\FF}$, every $e \in E$ defines a piecewise-affine function with $\O(|e|)$ pieces, then $\sup_{x \in I} \beh{e}(x)$ can be computed in $\O(|e|)$ time.
\item In particular, for $E = \MM_{\FF}$, all expressions have constant-size piecewise-affine structure, so $\sup_{x \in \I}\beh{e}(x)$ is computable in $\O(|e|)$ time.
\end{enumerate}
\end{remark}

\begin{proof}[Proof of Theorem~\ref{THM:threshold_problem_wffa}]
Let $\A = (Q, I, T, F, \wt_I, \wt_T, \wt_F) \in \WFFA_{\Sigma, \FF}$, let $\I = \DD$ or ${\I=[l,h]}$, and let $\theta \in \RR$. 
For each transition $t \in T$, compute
$
M_t = \sup_{x \in \I}\left(\beh{\wt_T(t)}(x)\right)
$
using Lemma~\ref{LEMMA:expression_sup_computable}. If some transition $t$ with $M_t = \infty$ occurs on a path from an initial state to a final state using only transitions with supremum different from $-\infty$, then the answer is positive.
Otherwise, transitions with $M_t=-\infty$ can be removed, and every remaining transition can be replaced by the constant weight $M_t$. 
This yields a classical max-plus weighted automaton $\A'$ over $\Sigma$ such that
\[
\exists w \in \DD_\I\Sigma^*:\ \beh{\A}(w)>\theta
\quad\Longleftrightarrow\quad
\exists u \in \Sigma^*:\ \beh{\A'}(u)>\theta.
\]
The claim now follows from the decidability of the $>$-threshold problem for max-plus weighted automata.
\end{proof}

\begin{remark}
By Theorem~\ref{THM:threshold_problem_wffa} and the constant-size representation of monomial expressions, the $>$-threshold problem for $\WFFA_{\Sigma,\FF}(\MM_{\FF})$ is decidable in polynomial time.
\end{remark}

Since Theorem~\ref{THM:kleene_schuetzenberger} yields an effective translation from weighted finance regular expressions to WFFAs, the preceding decidability results immediately transfer to weighted finance regular expressions.

\begin{corollary}
Let $\FF = \FF_{\arc, \times}$ and let $\R \in \Reg_{\Sigma, \FF}$. Then the following hold:
\begin{enumerate}[label=(\alph*)]
\item It is decidable whether $\supp(\beh{\R}) \neq \emptyset$.
\item Given $\theta \in \RR$, it is decidable whether there exists $w \in \DD\Sigma^*$ such that $\beh{\R}(w) > \theta$.
\item Given $\theta \in \RR$ and an interval $\I = [l,h] \subseteq \DD$, it is decidable whether there exists $w \in \DD_\I\Sigma^*$ such that $\beh{\R}(w) > \theta$.
\end{enumerate}
\end{corollary}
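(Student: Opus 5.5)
The statement is the last corollary: for $\FF = \FF_{\arc,\times}$ and $\R \in \Reg_{\Sigma,\FF}$, the support problem and the two $>$-threshold problems are decidable. The plan is a direct reduction to automata. Theorem~\ref{THM:kleene_schuetzenberger} turns $\R$ into a WFFA, and the decidability results of this section then apply to that automaton.

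\textbf{Step 1 (effective translation).} Given a valid $\R \in \Reg_{\Sigma,\FF}$, I would effectively construct $\A \in \WFFA_{\Sigma,\FF}$ with $\beh{\A} = \beh{\R}$. The construction is the inductive one behind Lemma~\ref{LEMMA:full_regexp_recognizable_if_s_otimes_closed}, taken with $E = \EE_{\FF}$, which is $(S,\otimes)$-closed by Example~\ref{EX:closure_expressions}(a). It starts from the base automata of Lemma~\ref{LEMMA:primitive_regexp_recognizable} and applies the closure constructions of Theorem~\ref{THM:closure_properties_e_weighted_wffa}: disjoint union, normalization, Cauchy product and star.

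\textbf{Main obstacle.} The one point needing care is that each step is computable, not merely existential. Lemma~\ref{LEMMA:closure_kleene_star_for_e}(a) passes through Lemma~\ref{LEMMA:plus_closed_expressiveness} and replaces weights by semantically equivalent expressions. For $E = \EE_{\FF}$ this replacement can be skipped altogether, because the star construction already produces weights in $\EE_{\FF}$ syntactically. So every construction is an explicit syntactic manipulation of $\FF$-expressions and is effective.

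\textbf{Step 2 (support).} Since $\supp(\beh{\R}) = \supp(\beh{\A})$, part (a) follows from Lemma~\ref{LEMMA:wffa_support_decidable}. That lemma relies on the regular language of Lemma~\ref{LEMMA:rec_qfl_support_regular} being effectively constructible. This holds because each transition test $\supp(\beh{\wt_T(t)}) \neq \emptyset$ is decidable by Lemma~\ref{LEMMA:expression_support_decidable}.

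\textbf{Step 3 (thresholds).} For parts (b) and (c) I would apply Theorem~\ref{THM:threshold_problem_wffa}(a) and (b), respectively, to $\A$ and $\theta$. For (c) the scenario language is $\D = \DD_\I\Sigma^*$ with $\I = [l,h]$. The answer transfers unchanged because $\beh{\A}(w) = \beh{\R}(w)$ for every finance word $w$.
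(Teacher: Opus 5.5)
Your proposal is correct and follows the paper's argument. The paper also translates $\R$ effectively into an equivalent WFFA via Theorem~\ref{THM:kleene_schuetzenberger}, then applies Lemma~\ref{LEMMA:wffa_support_decidable} and Theorem~\ref{THM:threshold_problem_wffa}(a),(b). Your observation that for $E = \EE_{\FF}$ the star construction needs no non-constructive replacement of weights usefully justifies the effectiveness, which the paper only asserts.
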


\begin{remark}
The $<$-threshold problem is closely related to risk analysis, where one asks whether there exists a scenario whose payoff stays below a given risk level. 
In the deterministic case (i.e., when each word induces at most one run), the $<$- and $>$-threshold problems are inter-reducible. 
In contrast, for nondeterministic WFFA over $\FF_{\arc, \times}$, threshold problems become undecidable, as follows from results on max-plus automata~\cite{ABK22}.
\end{remark}

\section{Conclusion}

We introduced weighted finite finance automata (WFFAs), a novel automata-theoretic framework for modeling quantitative aspects of financial systems.
We established closure properties of the model and proved a Kleene--Sch\"utzenberger-type characterization via weighted finance regular expressions, yielding an effective translation between the two formalisms.
Moreover, we provided an algebraic matrix representation of WFFAs, which leads to a polynomial-time evaluation procedure.
Finally, we applied these results to derive decidability results for fundamental verification tasks, including the support problem and the threshold problem.

Several directions for future research remain open.
On the algorithmic side, a systematic complexity analysis of the constructions and decision procedures developed in this paper is required.
In particular, the decidability landscape of WFFAs is not yet fully understood.
Beyond the threshold problem considered here, an important direction is to identify classes of quantitative finance languages that capture realistic market scenarios while preserving decidability.

From a modeling perspective, WFFAs can be extended to incorporate multiple interacting financial variables, enabling the analysis of systems driven by correlated factors such as multiple asset prices, exchange rates, and portfolio dynamics, thereby opening the way to applications in portfolio management.
Another promising direction is the extension to continuous time, in the spirit of weighted timed automata~\cite{LBBFHP01}, which would allow modeling of time-sensitive financial processes.

Another direction for future work concerns the algebraic encoding discussed in Remark~\ref{REM:ks_relation}. While the present paper focuses on direct constructions tailored to WFFAs, it would be interesting to investigate a fully algebraic treatment of the model. Such an approach may provide further structural insight and could be useful for studying extensions of WFFAs with additional features.

Altogether, WFFAs provide a unified automata-theoretic framework for modeling and analyzing quantitative aspects of financial systems, combining expressiveness with decidability and efficient evaluation.

\bibliographystyle{amsplain}
\bibliography{WFFA}

\newpage

\begin{appendices}

\section{Detailed Proofs}

This appendix provides detailed proofs of the main results as well as additional technical arguments omitted from the main text for clarity. 
All results stated in the paper are proved in full detail here.

\subsection{Closure Properties of WFFAs (Section~\ref{SEC:closure_wffa})}

\begin{lemma}[Lemma~\ref{LEMMA:closure_sum_wffa}]
Let $\A_1, \A_2 \in \WFFA_{\Sigma, \FF}$. Then there exists $\A_{\oplus} \in \WFFA_{\Sigma, \FF}$ such that 
$\beh{\A_{\oplus}} = \beh{\A_1} \oplus \beh{\A_2}$.
\end{lemma}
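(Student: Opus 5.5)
The plan is the disjoint union construction, carried out in full detail. Write $\A_j = (Q_j, I_j, T_j, F_j, \wt_{I_j}, \wt_{T_j}, \wt_{F_j})$ for $j \in \{1,2\}$. After renaming states, which does not change the behavior, assume $Q_1 \cap Q_2 = \emptyset$. Define $\A_{\oplus} = (Q, I, T, F, \wt_I, \wt_T, \wt_F)$ with $Q = Q_1 \cup Q_2$, $I = I_1 \cup I_2$, $T = T_1 \cup T_2$ and $F = F_1 \cup F_2$. Each weight function is defined piecewise: $\wt_I(q) = \wt_{I_j}(q)$ for $q \in I_j$, and similarly for $\wt_T$ and $\wt_F$. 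These are well defined because the index sets are disjoint. Every transition weight is still an $\FF$-expression, so $\A_{\oplus} \in \WFFA_{\Sigma,\FF}$. In fact, if both automata are $E$-weighted, then so is $\A_{\oplus}$; this is the fact used later in Theorem~\ref{THM:closure_properties_e_weighted_wffa}(a).

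The key step is a bijection of runs. Take any run $\varrho = (q_0 \xrightarrow{a_1} \dots \xrightarrow{a_n} q_n)$ of $\A_{\oplus}$. Its first state $q_0$ lies in $I_j$ for a unique $j$. By induction on $i$, every transition $t_i = (q_{i-1}, a_i, q_i) \in T$ has source in $Q_j$, and every transition of $T$ with source in $Q_j$ belongs to $T_j$, so $t_i \in T_j$ and $q_i \in Q_j$. Finally, $q_n \in F \cap Q_j = F_j$. Hence $\varrho$ is a run of $\A_j$, and conversely every run of $\A_1$ or $\A_2$ is a run of $\A_{\oplus}$.

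Since all weight functions restrict to those of $\A_j$, we get $\wt_{\A_{\oplus}}(\varrho,\delta) = \wt_{\A_j}(\varrho,\delta)$ for every run context $\delta$. In particular, validity is preserved. For $w=(a_1,d_1)\dots(a_n,d_n)$ this gives the disjoint decomposition $\Run_{\A_{\oplus}}(w) = \Run_{\A_1}(w) \,\dot\cup\, \Run_{\A_2}(w)$.

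It then follows, using associativity and commutativity of $\oplus$ over finite index sets, that
\[
\beh{\A_{\oplus}}(w) = \bigoplus_{\varrho \in \Run_{\A_1}(w)} \wt_{\A_1}(\varrho, d_1\dots d_n) \;\oplus \bigoplus_{\varrho \in \Run_{\A_2}(w)} \wt_{\A_2}(\varrho, d_1\dots d_n) = \beh{\A_1}(w) \oplus \beh{\A_2}(w).
\]
The convention $\bigoplus\emptyset=\zero$, together with $\zero$ being the neutral element of $\oplus$, covers the cases where a run set is empty, including $w=\eps$. There is no real obstacle. The only point needing care is the run-confinement argument: no transition connects $Q_1$ and $Q_2$, so no run of $\A_{\oplus}$ mixes the two automata.
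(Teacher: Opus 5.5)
Your proposal is correct and matches the paper's proof. The paper also takes the disjoint union of the two automata, defines the weight functions piecewise, and observes that every run lies inside one component with its weight unchanged. Your induction on the transitions of a run just spells out the run-confinement step that the paper asserts in one line.
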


\begin{proof}
We use the standard disjoint union construction for non-deterministic automata. 
For $i \in \{1,2\}$, let $\A_i = (Q_i, I_i, T_i, F_i, \wt_{I_i}, \wt_{T_i}, \wt_{F_i})$. We may assume that $Q_1 \cap Q_2 = \emptyset$. Then, we define $\A_{\oplus}$ by
\[
\A_{\oplus} = (Q_1 \cup Q_2, I_1 \cup I_2, T_1 \cup T_2, F_1 \cup F_2, \wt_{I_1} \cup \wt_{I_2}, \wt_{T_1} \cup \wt_{T_2}, \wt_{F_1} \cup \wt_{F_2}).
\]
Here, the union of functions is defined as follows. For $i \in \{1, 2\}$, let $f_i: X_i \to Y_i$ such that $X_1 \cap X_2 = \emptyset$. Then, $(f_1 \cup f_2): (X_1 \cup X_2) \to (Y_1 \cup Y_2)$ is defined  by $(f_1 \cup f_2)(x_1) = f_1(x_1)$ for all $x_1 \in X_1$ and by $(f_1 \cup f_2)(x_2) = f_2(x_2)$ for all $x_2 \in X_2$.

Since the state sets $Q_1$ and $Q_2$ are disjoint, every run of $\A_{\oplus}$ lies entirely within either $\A_1$ or $\A_2$. Thus, the set of runs of $\A_{\oplus}$ on any word $w$ is the disjoint union of the sets of runs of $\A_1$ and $\A_2$ on $w$. Moreover, all weights (initial, transition, final) are preserved for the respective components. Therefore,
we obtain
\[
\beh{\A_{\oplus}}(w) = \beh{\A_1}(w) \oplus \beh{\A_2}(w).
\]
\qedhere
\end{proof}

\begin{lemma}[Lemma~\ref{LEMMA:closure_hadamard_wffa}]
Let $\A_1, \A_2 \in \WFFA_{\Sigma, \FF}$. Then there exists 
$\A_{\otimes} \in \WFFA_{\Sigma, \FF}$ such that 
$\beh{\A_{\otimes}} = \beh{\A_1} \otimes \beh{\A_2}$.
\end{lemma}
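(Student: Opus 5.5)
The plan is the synchronous product construction from the main text, now with the bookkeeping written out. Write $\A_i = (Q_i, I_i, T_i, F_i, \wt_{I_i}, \wt_{T_i}, \wt_{F_i})$ for $i \in \{1,2\}$. Define $\A_{\otimes}$ as follows.
\begin{itemize}
\item States are $Q = Q_1 \times Q_2$, with $I = I_1 \times I_2$ and $F = F_1 \times F_2$.
\item Transitions are $T = \{((q_1,q_2),a,(q_1',q_2')) \mid (q_1,a,q_1') \in T_1,\ (q_2,a,q_2') \in T_2\}$.
\item Initial weights are $\wt_I(i_1,i_2) = \wt_{I_1}(i_1) \otimes \wt_{I_2}(i_2)$, and final weights $\wt_F$ are defined analogously.
\item Each synchronized transition gets the syntactic $\FF$-expression $\wt_T(t) = [\wt_{T_1}(q_1,a,q_1') \otimes \wt_{T_2}(q_2,a,q_2')] \in \EE_{\FF}$.
\end{itemize}
This is a legitimate WFFA because $\EE_{\FF}$ is closed under the $\otimes$ constructor.

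The first key step is a bijection. Fix a label word $a_1 \dots a_n$. I would show that the map sending a pair of runs $(\varrho_1, \varrho_2) \in \Run_{\A_1} \times \Run_{\A_2}$, both with label $a_1\dots a_n$, to the run of $\A_{\otimes}$ through the paired states $(q^1_j, q^2_j)$ is a bijection onto the runs of $\A_{\otimes}$ with that label. Injectivity is immediate. Surjectivity follows by projecting a product run to its two components: initiality, finality and transition membership are all defined componentwise.

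The second step is the weight identity. For any run context $\delta = d_1\dots d_n$, the semantics gives $\beh{[e_1 \otimes e_2]}(d) = \beh{e_1}(d) \otimes \beh{e_2}(d)$. Since $\SS$ is commutative, the factors can be rearranged to give
$\wt_{\A_{\otimes}}(\varrho, \delta) = \wt_{\A_1}(\varrho_1,\delta) \otimes \wt_{\A_2}(\varrho_2,\delta)$.

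The one point needing care is validity. $\Run_{\A}(w)$ contains only valid runs, so I must check that the bijection restricts correctly to them. A product run may be invalid even though each component is valid, because $\SS$ may have zero divisors. I would avoid this issue altogether by summing over all runs with the correct label rather than only valid ones. Invalid runs contribute $\zero$, which is neutral for $\oplus$, so $\beh{\A}(w)$ equals the $\oplus$-sum of $\wt_{\A}(\varrho,\delta)$ over all runs with label $\pi_\Sigma(w)$. This also covers $w = \eps$, where the runs are single states $(i_1,i_2) \in I \cap F$.

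Finally, distributivity yields
\[
\beh{\A_{\otimes}}(w) = \bigoplus_{(\varrho_1,\varrho_2)} \wt_{\A_1}(\varrho_1,\delta) \otimes \wt_{\A_2}(\varrho_2,\delta) = \Bigl(\bigoplus_{\varrho_1} \wt_{\A_1}(\varrho_1,\delta)\Bigr) \otimes \Bigl(\bigoplus_{\varrho_2} \wt_{\A_2}(\varrho_2,\delta)\Bigr),
\]
and the right-hand side is $\beh{\A_1}(w) \otimes \beh{\A_2}(w)$. All sums are finite, so no infinitary issues arise.
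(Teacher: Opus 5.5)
Your proposal is correct and uses the same synchronous product construction as the paper: componentwise initial and final weights, the syntactic $\otimes$ of the two transition expressions, a run correspondence, and commutativity of $\SS$. Your choice to sum over all runs with the right label, rather than only valid runs, is a useful refinement. It makes rigorous the paper's informal claim of a one-to-one correspondence between runs, which would fail if restricted to valid runs when $\SS$ has zero divisors.
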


\begin{proof}
In this case, we can adopt the standard product construction for weighted automata over commutative semirings (see, e.g., \cite{DKV09}).
Let
$
\A_i = (Q_i, I_i, T_i, F_i, \wt_{I_i}, \wt_{T_i}, \wt_{F_i})
$ ($i = 1, 2$).
We construct the product automaton
\[
\A_{\otimes} = (Q, I, T, F, \wt_I, \wt_T, \wt_F)
\]
where:
\begin{itemize}
\item $Q = Q_1 \times Q_2$, $I = I_1 \times I_2$, $F = F_1 \times F_2$;
\item $T$ contains all transitions $((q_1, q_2), a, (q_1', q_2'))$ such that $(q_1, a, q_1') \in T_1$ and $(q_2, a, q_2') \in T_2$;
\item for $(i_1, i_2) \in I$, $\wt_I(i_1, i_2) = \wt_{I_1}(i_1) \otimes \wt_{I_2}(i_2)$;
\item for $(f_1, f_2) \in F$, $\wt_F(f_1, f_2) = \wt_{F_1}(f_1) \otimes \wt_{F_2}(i_2)$;
\item for $t = ((q_1, q_2), a, (q_1', q_2')) \in T$, $\wt_T(t) = [\wt_{T_1}(q_1, a, q_1') \otimes \wt_{T_2}(q_2, a, q_2')]$. Note that here $\otimes$ is applied to $\FF$-expressions.
\end{itemize}
It can be shown that $\beh{\A_{\otimes}} = \beh{\A_1} \otimes \beh{\A_2}$. The idea is that, for each word $w$, there is a one-to-one correspondence between the runs of $\A_{\otimes}$ and the pairs of runs of $\A_1$ and $\A_2$ on the same word $w$. By pairing states and synchronizing transitions on the same input symbol, each accepting run in the product automaton carries the weight of the first run multiplied by the weight of the second run (using the commutativity of the semiring $\SS$). Summing over all runs on $w$ then yields the Hadamard product of the behaviors of $\A_1$ and $\A_2$.
\end{proof}

\begin{lemma}[Lemma~\ref{LEMMA:normalize_wffa}]
Let $\A \in \WFFA_{\Sigma, \FF}$. Then:
\begin{enumerate}[label=(\alph*)]
\item there exists an initially normalized $\A_I$ with $\beh{\A_I}=\beh{\A}$;
\item there exists a finally normalized $\A_F$ with $\beh{\A_F}=\beh{\A}$;
\item if $\beh{\A}$ is proper, there exists a completely normalized $\A_{I,F}$ with $\beh{\A_{I,F}}=\beh{\A}$.
\end{enumerate}
\end{lemma}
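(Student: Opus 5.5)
For (a) I would add a fresh state $q_I \notin Q$ and take $\A_I = (Q \cup \{q_I\}, \{q_I\}, T \cup T_I, F_I, \wt_I', \wt_T', \wt_F')$ with $\wt_I'(q_I) = \one$. For every transition $(i,a,q) \in T$ with $i \in I$, I add a transition $(q_I, a, q)$ carrying the $\FF$-expression $[\wt_I(i) \otimes \wt_T(i,a,q)]$. Here the initial weight is lifted into the expression, which is admissible because $\EE_{\FF}$ is closed under $\otimes$ and contains $S$. Several initial states may produce the same triple $(q_I,a,q)$, and $T$ is a set, so I would merge these into one transition whose weight is the $\oplus$-sum of the contributions. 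This uses closure of $\EE_{\FF}$ under $\oplus$ together with distributivity, since $\beh{\bigoplus_i e_i}(d) = \bigoplus_i \beh{e_i}(d)$. No transition enters $q_I$.

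To handle the empty word, I set $F_I = F \cup \{q_I\}$ with $\wt_F'(q_I) = \bigoplus_{i \in I \cap F} \wt_I(i) \otimes \wt_F(i)$, which equals $\beh{\A}(\eps)$; all other weights are inherited. To check $\beh{\A_I} = \beh{\A}$ on a non-empty word $w$, I would use the bijection between runs of $\A$ on $w$ and pairs (run of $\A_I$, choice of summand on its first transition). The first transition of a run of $\A_I$ evaluates to $\bigoplus_i \wt_I(i) \otimes \beh{\wt_T(i,a,q)}(d_1)$. Distributing this $\oplus$ over the remaining product splits the run's weight into exactly the weights of the runs of $\A$ that start in the various $i$. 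Invalid runs contribute $\zero$ on both sides, so restricting to valid runs does not affect the sums.

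Part (b) is the mirror image. I add a fresh $q_F$ with final weight $\one$. Each $(q,a,f) \in T$ with $f \in F$ yields $(q,a,q_F)$ with weight $[\wt_T(q,a,f) \otimes \wt_F(f)]$, and sums are merged as before. The empty-word value is put into $q_F$ made initial, or into the initial weights of states in $I \cap F$ while keeping $q_F$ exit-free. Commutativity of $\SS$ lets the final weight be moved next to the last transition weight.

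For (c) I would first apply (a) and then apply (b) to $\A_I$. The issue is a transition from $q_I$ directly to $q_F$ that arises from one-letter runs; it is fine because it neither enters $q_I$ nor leaves $q_F$. The delicate point, and the main obstacle, is the empty word: the $\eps$-value must be stored somewhere, and that would force $q_I$ to be final or $q_F$ to be initial. Properness gives $\beh{\A}(\eps) = \zero$, so the $\eps$-handling is simply omitted: $q_I$ is not final in (a), and applying (b) to $\A_I$ creates no extra initial states. Then $I = \{q_I\}$ and $F = \{q_F\}$ with weights $\one$, and the behaviour is preserved by (a) and (b).
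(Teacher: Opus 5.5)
Your proposal is correct and is essentially the paper's proof. Parts (a) and (b) use the same fresh-state constructions, with initial and final weights lifted into $\oplus$-merged $\FF$-expressions. For (c), applying (b) to the proper automaton produced by (a) yields exactly the paper's direct construction with transition sets $T_I$, $T_{I,F}$, $T_F$, up to equivalent expressions.

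The only slip is the second alternative you mention in (b). States of $I \cap F$ are no longer final in $\A_F$, so the $\eps$-value can only be stored as the initial weight of $q_F$, which is your first option.
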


\begin{proof}
Let $\A = (Q, I, T, F, \wt_I, \wt_T, \wt_F)$. 
\begin{enumerate}[label=(\alph*)]
\item Let $q_I$ be an element with $q_I \notin Q$. We define the initially normalized WFFA
\[
\A_I = (Q \cup \{q_I\}, \{q_I\}, T', F', [q_I \mapsto \one], \wt_{T'}, \wt_{F'})
\]
as follows. 
\begin{itemize}
\item We define the new transition set
\[
T' = T \cup \{(q_I, a, q)  \mid  (i, a, q) \in T \text{ and } i \in I\}
\]
and the transition weights $\wt_{T'}$ by 
\[
\wt_{T'}(t) = \wt_T(t)
\] for all $t \in T$
and
\[
\wt_{T'}(q_I, a, q) = \left[\bigoplus (\wt_I(i) \otimes \wt_T(i, a, q)  \mid  (i, a, q) \in T \text{ and } i \in I) \right].
\]
for all $(q_I, a, q) \in T' \setminus T$. Here, formally, we choose any order for the summands and any parenthesizing. Note that the order of the summands  and their parenthesizing in $\FF$-expression above are irrelevant, as its evaluation is performed using the commutative $\oplus$-operation in the semiring $\SS$.
\item If $\beh{\A_{I}}(\eps) = \zero$, then we let $F' = F$ and $\wt_{F'} = \wt_{F}$.
\item If $\beh{\A_{I}}(\eps) = s_{\eps} \neq \zero$, then we let $F' = F \cup \{q_I\}$ and $\wt_{F'} = \wt_{F} \cup [q_I \mapsto s_{\epsilon}]$.
\end{itemize}
Similarly to the corresponding construction for semiring-weighted automata (see, e.g., \cite{DKV09}), it can be shown that $\beh{\A_{I}} = \beh{\A}$. Every run of $\A$ starting from an original initial state $i \in I$ is simulated in $\A_I$ by a run starting from the new state $q_I$, whose outgoing transitions encode the initial weights $\wt_I(i)$. Likewise, the treatment of $q_I$ as a final state for $\eps$ ensures that the empty-word behavior is preserved, so overall $\beh{\A_{I}} = \beh{\A}$.
\item We proceed dually to (a).
\item Choose $q_I, q_F \notin Q$ such that $q_I \neq q_F$ and define the completely normalized WFFA
\[
\A_{I, F} = (Q \cup \{q_I, q_F\}, \{q_I\}, T', \{q_F\}, [q_I \mapsto \one], \wt_{T'}, [q_F \mapsto \one])
\]
where:
\begin{itemize}
\item $T' = T \cup T_{I} \cup T_{I,F} \cup T_{F}$ where:
\begin{itemize}
\item $T_{I} = \{(q_I, a, q) \; | \; (i, a, q) \in T \text{ for some } i \in I\}$;
\item $T_{I,F} = \{(q_I, a, q_F)  \mid  (i, a, f) \in T \text{ for some } i \in I \text{ and } f \in F\}$;
\item $T_{F} = \{(q, a, q_F)  \mid  (q, a, f) \in T \text{ for some } f \in F\}$;
\end{itemize}
\item $\wt_{T'}$ is defined as follows:
\begin{itemize}
\item for all $t \in T$, $\wt_{T'}(t) = \wt_T(t)$;
\item for all $(q_I, a, q) \in T_{I}$,
\[
\wt_{T'}(q_I, a, q) = \left[ \bigoplus(\wt_I(i) \otimes \wt_T(i, a, q)  \mid  (i, a, q) \in T \text{ and } i \in I) \right]
\]
\item for all $(q, a, q_F) \in T_{F}$
\[
\wt_{T'}(q, a, q_F) = \left[\bigoplus (\wt_T(q, a, f) \otimes \wt_F(f)  \mid  (q, a, f) \in T \text{ and } f \in F) \right].
\]
\item for all $(q_I, a, q_F) \in T_{I, F}$,
\[
\begin{aligned}
\wt_{T'}(q_I, a, q_F) = \left[
    \bigoplus (\wt_I(i) \otimes \wt_T(i, a, f) \otimes \wt_F(f) \mid (i, a, f) \in T_{I, a, F}) \right]
\end{aligned}
\]
where $T_{I, a, F} = T \cap (I \times \{a\} \times F)$.
\end{itemize}
Since $\beh{\A}(\eps) = \zero$, we may assume that $I \cap F = \emptyset$. Then, $\A_{I, F}$ is completely normalized. Moreover, it can be shown that the construction correctly simulates all runs of the original automaton $\A$. 
Multiple runs of $\A$ may correspond to the same run in $\A_{I,F}$, but by definition of the initial, transition, and final weights, the weight of each run in $\A_{I,F}$ equals the sum of the weights of the corresponding runs in $\A$. In particular, the contributions of $\eps$-runs are preserved. Hence, we have
$
\beh{\A_{I, F}} = \beh{\A}.
$
\qedhere
\end{itemize}
\end{enumerate}
\end{proof}

\begin{lemma}[Lemma~\ref{LEMMA:cauchy_product_kleene_star_normalized_wffa}]
\begin{enumerate}[label=(\alph*)]
\item 
Let $\A_F \in \WFFA_{\Sigma, \FF}$ be finally normalized and $\A_I \in \WFFA_{\Sigma, \FF}$ be initially normalized. 
Then there exists $\A_{\mathrm{Cauchy}} \in \WFFA_{\Sigma, \FF}$ such that
$\beh{\A_{\mathrm{Cauchy}}}=\beh{\A_F}\cdot\beh{\A_I}$.

\item 
Let $\A \in \WFFA_{\Sigma, \FF}$ be completely normalized. 
Then there exists $\A^* \in \WFFA_{\Sigma, \FF}$ with $\beh{\A^*}=\beh{\A}^*$.
\end{enumerate}
\end{lemma}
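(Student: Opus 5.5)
The plan is to formalize the two gluing constructions sketched in the main text and verify them through an explicit bijection on runs. Transition weights are copied unchanged, so no new $\FF$-expressions appear and the argument is purely combinatorial.

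\emph{Part (a).} Let $\A_F=(Q_1,I_1,T_1,\{q_F\},\wt_{I_1},\wt_{T_1},[q_F\mapsto\one])$ and $\A_I=(Q_2,\{q_I\},T_2,F_2,[q_I\mapsto\one],\wt_{T_2},\wt_{F_2})$ with $Q_1\cap Q_2=\emptyset$.
\begin{enumerate}
\item Identify $q_F$ and $q_I$ into one fresh state $p$, and substitute $p$ for $q_F$ in $T_1$ and $I_1$ and for $q_I$ in $T_2$ and $F_2$.
\item Take the initial states and weights of $\A_1$ and the final states and weights of $\A_2$. In particular $p$ is initial iff $q_F\in I_1$, with weight $\wt_{I_1}(q_F)$, and final iff $q_I\in F_2$, with weight $\wt_{F_2}(q_I)$.
\item Take $T=T_1\cup T_2$ with the original weights.
\end{enumerate}
The key observation concerns $p$: no transition of $T_1$ leaves it, and no transition of $T_2$ enters it. Hence every run of $\A_{\mathrm{Cauchy}}$ consists of a (possibly empty) prefix using only $T_1$ and ending in $p$, followed by a (possibly empty) suffix using only $T_2$ and starting in $p$. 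It visits $p$ exactly once, and it cannot switch back to $T_1$ afterwards.

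This gives a bijection between runs on $w$ and triples $(w_1\cdot w_2=w,\varrho_1,\varrho_2)$ with $\varrho_1$ a run of $\A_F$ on $w_1$ and $\varrho_2$ a run of $\A_I$ on $w_2$. Since the connecting weights $\wt_{F_1}(q_F)$ and $\wt_{I_2}(q_I)$ both equal $\one$, the run weight is $\wt(\varrho_1)\otimes\wt(\varrho_2)$. Summing over runs and using distributivity yields $\bigoplus_{w_1w_2=w}\beh{\A_F}(w_1)\otimes\beh{\A_I}(w_2)$.

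\emph{Part (b).} Let $\A$ be completely normalized, with $I=\{q_I\}$ and $F=\{q_F\}$.
\begin{enumerate}
\item If $q_I=q_F$, then no transitions touch this state, so $\beh{\A}$ is supported on $\eps$ only. Properness then forces $\beh{\A}=\zero$, whence $\beh{\A}^*$ is $\one$ on $\eps$ and $\zero$ elsewhere, which a one-state automaton realizes.
\item Otherwise $q_I\ne q_F$. Define $\A^*$ by deleting $q_F$, replacing each transition $(q,a,q_F)$ by $(q,a,q_I)$ with the same weight expression, and making $q_I$ both initial and final with weight $\one$.
\item No collisions arise from the redirection, because originally no transition entered $q_I$ and none left $q_F$.
\end{enumerate}
A run of $\A^*$ on $w$ splits at its visits to $q_I$ into $k\ge 0$ segments. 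Each segment corresponds uniquely to a run of $\A$ from $q_I$ to $q_F$ on a nonempty word $w_j$. This holds because inside a segment $q_I$ is never revisited, the only transitions into $q_I$ being the redirected ones. This gives a bijection with tuples $(w_1\cdots w_k=w,\ w_j\ne\eps,\ \varrho_1,\dots,\varrho_k)$, and the run weight is the product of the segment weights. The empty run ($k=0$) supplies $\one$ at $\eps$.

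Since $\beh{\A}$ is proper, factorizations containing an empty factor contribute $\zero$ to $\beh{\A}^*$. So restricting to nonempty factors matches Definition~\ref{DEF:qfl_operations}(d), and distributivity gives $\beh{\A^*}=\beh{\A}^*$.

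The main obstacle is making the run decomposition rigorous: I need to show that the gluing state is visited exactly once in (a), and exactly at segment boundaries in (b). Both facts rest entirely on the normalization conditions (no transition enters $q_I$, none leaves $q_F$), so I will state and prove them first as small claims about paths, and then derive the weight identities.
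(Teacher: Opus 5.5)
Your proposal is correct and follows the paper's proof. The paper uses the same two constructions, namely identifying the final state of $\A_F$ with the initial state of $\A_I$, and merging $q_F$ into $q_I$ by redirecting every transition that enters $q_F$. It justifies them by the same run decomposition, which you spell out more explicitly: the gluing state is visited exactly once, properness discards factorizations with empty factors, and the degenerate case $q_I = q_F$ is handled separately (the paper excludes it directly via properness).
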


\begin{proof}
\begin{enumerate}[label=(\alph*)]
\item Assume that
\[
\A_1 = (Q_1, I_1, T_1, \{q_{F_1}\}, \wt_{I_1}, \wt_{T_1}, [q_{F_1} \mapsto \one])
\]
and 
\[
\A_2 = (Q_2, \{q_{I_2}\}, T_2, F_2, [q_{I_2} \mapsto \one], \wt_{T_2}, \wt_{F_2}).
\]
We can choose $Q_1$ and $Q_2$ such that $q_{F_1} = q_{I_2}$ and $Q_1 \cap Q_2 = \{q_{F_1}\}$. Note that $T_1 \cap T_2 = \emptyset$ in this case. Then, we let
\[
\A_{\mathrm{Cauchy}} = (Q_1 \cup Q_2, I_1, T_1 \cup T_2, F_2, \wt_{I_1}, \wt_{T_1} \cup \wt_{T_2}, \wt_{F_2}).
\]
By construction, every run of $\A_{\mathrm{Cauchy}}$ consists of a run of $\A_1$ from some initial state to $q_{F_1} = q_{I_2}$, immediately followed by a run of $\A_2$ from $q_{I_2}$ to some final state in $F_2$. The weight of such a run is the product of the weights of the $\A_1$ and $\A_2$ components, because the final weight of $q_{F_1}$ in $\A_1$ and the initial weight of $q_{I_2}$ in $\A_2$ are both set to $\one$. Summing over all runs then yields the Cauchy product of $\beh{\A_1}$ and $\beh{\A_2}$
\item Let $\A = (Q, \{q_I\}, T, \{q_F\}, [q_I \mapsto \one], \wt_T, [q_F \mapsto \one])$. Since $\beh{\A}(\varepsilon) = \zero$, we have $q_I \neq q_F$. We define $\A^*$ from $\A$ by merging the states $q_I$ and $q_F$, i.e.
\[
\A^* = (Q \setminus \{q_F\}, \{q_I\}, T', \{q_I\}, [q_I \mapsto \one], \wt'_T, [q_I \mapsto \one]).
\]
where:
\begin{itemize}
\item $T' = T'_1 \cup T'_2$ such that
\begin{itemize}
\item $T_1' = \{(q, a, q') \in T \mid q \neq q_F \neq q'\} \subseteq T$,
\item $T_2' =  \{(q, a, q_I) \mid (q, a, q_F) \in T\}$;
\end{itemize}
\item $\wt_{T'}$ is defined for all $t_1' \in T_1'$ by $\wt_{T'}(t_1') = \wt_{T}(t_1')$ and, for all $(q, a, q_I) \in T_2'$, by $\wt_{T'}(q, a, q_I) = \wt_{T}(q, a, q_F)$.
\end{itemize}
By merging the initial state $q_I$ and the final state $q_F$, every run of $\A^*$ corresponds to a sequence of one or more runs of $\A$ concatenated end-to-end. Each transition in $\A^*$ that originally ended at $q_F$ now loops back to $q_I$, so the weight of a run in $\A^*$ is exactly the product of the weights of the corresponding runs in $\A$. Summing over all such sequences, then gives the Kleene star of $\beh{\A}$, i.e., $\beh{\A^*} = \beh{\A}^*$.
\qedhere
\end{enumerate}
\end{proof}

\subsection{Restricted WFFAs (Section~\ref{SEC:restricted_wffa})}

\begin{lemma}[Lemma~\ref{LEMMA:init_and_final_normalization_for_arbitrary_expressions}]
Let $E \subseteq \EE_{\FF}$ be any collection of $\FF$-expressions and $\A \in \WFFA_{\Sigma, \FF}(E)$.
\begin{enumerate}[label=(\alph*)]
\item There exists an initially normalized $\A_{I} \in \WFFA_{\Sigma, \FF}(E)$ such that $\beh{\A_{I}} = \beh{\A}$.
\item There exists a finally normalized $\A_{F} \in \WFFA_{\Sigma, \FF}(E)$ such that $\beh{\A_{F}} = \beh{\A}$.
\end{enumerate}
\end{lemma}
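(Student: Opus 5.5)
We follow the construction sketched after the statement in the main text. Instead of absorbing initial weights into transitions, which would require closure of $E$ under $\otimes$ with constants or under $\oplus$, we keep every transition weight literally unchanged and move the initial weights into the final weights. Let $\A = (Q, I, T, F, \wt_I, \wt_T, \wt_F)$, choose a fresh state $q_I \notin I \times Q$, and set $Q' = \{q_I\} \cup (I \times Q)$. The transitions are
\[
T' = \{(q_I, a, (i,q)) \mid (i,a,q) \in T,\ i \in I\} \cup \{((i,p), a, (i,q)) \mid i \in I,\ (p,a,q) \in T\},
\]
with $\wt_{T'}(q_I,a,(i,q)) = \wt_T(i,a,q)$ and $\wt_{T'}((i,p),a,(i,q)) = \wt_T(p,a,q)$. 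Hence $\wt_{T'}(T') \subseteq \wt_T(T) \subseteq E$, and $\A_I$ is $E$-weighted.

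For the boundary weights, take $\{q_I\}$ as the initial states with $\wt_{I'}(q_I) = \one$, so no transition enters $q_I$. Take $F' = I \times F$ with $\wt_{F'}(i,f) = \wt_I(i) \otimes \wt_F(f)$. To handle the empty word, let $s_\eps = \beh{\A}(\eps) = \bigoplus_{i \in I \cap F} \wt_I(i) \otimes \wt_F(i)$. If $s_\eps \neq \zero$, also make $q_I$ final with $\wt_{F'}(q_I) = s_\eps$. Only $q_I$ must carry initial weight $\one$, so this choice is admissible.

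Correctness rests on a bijection between runs. Fix a nonempty $w = (a_1,d_1)\dots(a_n,d_n)$. The run $q_0 \to q_1 \to \dots \to q_n$ of $\A$ with $q_0 \in I$ and $q_n \in F$ corresponds to the run $q_I \to (q_0,q_1) \to \dots \to (q_0,q_n)$ of $\A_I$. This map is bijective because the first component is fixed after the first step and records $q_0$. Corresponding transitions carry identical expressions, so each $\beh{\cdot}(d_j)$ factor agrees. The weight of the $\A_I$ run is
\[
\one \otimes \bigotimes_j \beh{\wt_T(t_j)}(d_j) \otimes \wt_I(q_0) \otimes \wt_F(q_n),
\]
and commutativity of $\SS$ makes it equal to $\wt_\A(\varrho, d_1\dots d_n)$. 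In particular, validity (non-$\zero$ weight) is preserved, so the sets $\Run(w)$ correspond and the $\oplus$-sums agree. For $w = \eps$, the only possible run is the trivial one at $q_I$, which gives $s_\eps$ or nothing, as required. Therefore $\beh{\A_I} = \beh{\A}$.

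Part (b) is dual. Track the terminal state instead, running the construction on reversed runs. Use a fresh final state $q_F$ with weight $\one$ and states $Q \times F$, where the second component is guessed at the start and verified at the end. Set initial weights $\wt_I(i) \otimes \wt_F(f)$ on $(i,f)$. A transition $(p,a,f) \in T$ is additionally copied as $((p,f),a,q_F)$ with the same expression, and $(p,a,q) \in T$ becomes $((p,f),a,(q,f))$. The $\eps$ case is handled by making $q_F$ initial with weight $s_\eps$.

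The only delicate point is bookkeeping rather than mathematics. We must ensure that no transition enters $q_I$ (respectively, leaves $q_F$), and that the $\eps$-value is attached to the fresh state, not to a product state, so that it is not double counted. We also need $\SS$ to be commutative in order to move $\wt_I(q_0)$ to the end; this holds by definition of a finance semiring.
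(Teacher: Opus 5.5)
Your proposal is correct and is essentially the paper's own construction: a fresh initial state $q_I$ with weight $\one$, product states $I \times Q$ that record the starting state, transition expressions copied unchanged, final weights $\wt_I(i) \otimes \wt_F(f)$ on $(i,f)$, the value $\beh{\A}(\eps)$ attached to $q_I$ when it is nonzero, and commutativity of $\SS$ to obtain the weight equality. For (b), the paper only says the case is symmetric; your explicit dual construction correctly fills this in, using states $Q \times F$ that guess the final state, a fresh final state $q_F$ with no outgoing transitions, and $\beh{\A}(\eps)$ as the initial weight of $q_F$.
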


\begin{proof}
Let $\A = (Q, I, T, F, \wt_I, \wt_T, \wt_F)$.
\begin{enumerate}[label=(\alph*)]
\item
\label{LEMMA:init_normalization_for_arbitrary_expressions}
Note that in the proof of Lemma~\ref{LEMMA:normalize_wffa}, the initial weights were eliminated by incorporating them into the transition weights. However, this approach cannot be applied if the collection of expressions is neither $\oplus$- nor $(S, \otimes)$-closed.

To overcome this limitation, we introduce an alternative normalization procedure. The key idea is that, when computing the weight of a WFFA run over a commutative semiring, the contribution of the initial weight can be deferred until a final state is reached. To implement this, we extend the states of $\A$ with an additional component that records the initial state of each run. Using this information, we can eliminate initial weights by transferring their contribution to the final weights. 

Formally, we choose a new state $q_I \notin Q$ and let
\[
\A_{I} = (\{q_I\} \cup (I \times Q), \{q_I\}, T_{I}, F_{I}, [q_I \mapsto \one], \wt_{T_{I}}, \wt_{F_{I}})
\]
where:
\begin{itemize}
\item $T_{I}$ and $\wt_{T_{I}}$ are defined as follows. For all $t = (q, a, q') \in T$ and all $i \in I$, we let $t' = ((i, q), a, (i, q')) \in T_{I}$ and $\wt_{T_{I}}(t') = \wt_T(t)$. Furthermore, for all $t = (i, a, q) \in T$ with $i \in I$, we let $t'' = (q_I, a, (i, q)) \in T_{I}$ and $\wt_{T_{I}}(t'') = \wt_T(t)$.
\item If $\beh{\A}(\eps) = \zero$, then $F_{I} = I \times F$. Otherwise, $F_{I} = \{q_I\} \cup (I \times F)$.
\item For all $i \in I$ and $f \in F$, $\wt_{F_{I}}(i, f) = \wt_I(i) \otimes \wt_F(f)$. If $\beh{\A}(\eps) \neq \zero$, then we let $\wt_{F_{I}}(q_I) = \beh{\A}(\eps)$.
\end{itemize}
Then, using commutativity of $\SS$, it can be shown that $\beh{\A_{I}} = \beh{\A}$.
\item This case is symmetric to \ref{LEMMA:init_normalization_for_arbitrary_expressions}.
\qedhere
\end{enumerate}
\end{proof}

\begin{lemma}[Lemma~\ref{LEMMA:plus_closed_expressiveness}]
Let $E \subseteq \EE_{\FF}$ be any collection of $\FF$-expressions. Then
$
\beh{\WFFA_{\Sigma, \FF}(\clp{E})} = \beh{\WFFA_{\Sigma, \FF}(E)}.
$
\end{lemma}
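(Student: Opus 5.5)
The inclusion $\beh{\WFFA_{\Sigma, \FF}(E)} \subseteq \beh{\WFFA_{\Sigma, \FF}(\clp{E})}$ is immediate, since $E \subseteq \clp{E}$ and every $E$-weighted WFFA is therefore $\clp{E}$-weighted. For the converse, let $\A = (Q, I, T, F, \wt_I, \wt_T, \wt_F) \in \WFFA_{\Sigma, \FF}(\clp{E})$. For each $t \in T$, fix a decomposition $\wt_T(t) = [e^t_1 \oplus \dots \oplus e^t_{k_t}]$ with $e^t_j \in E$, and put $K = \max_t k_t$ (or $K = 1$ if $T = \emptyset$).

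The construction splits every transition into parallel copies, one per summand. Parallel edges between the same pair of states with the same label are not allowed, because $T \subseteq Q \times \Sigma \times Q$. To get around this, we enlarge the state space to $Q' = Q \times [K]$, where the second component records which summand was used on the transition entering the state. The components of the new automaton $\A_E$ are as follows.
\begin{itemize}
\item Initial states are $I \times \{1\}$, with $\wt'_I(i,1) = \wt_I(i)$.
\item Final states are $F \times [K]$, with $\wt'_F(f,j) = \wt_F(f)$.
\item For every $t = (p,a,q) \in T$, every $j' \in [K]$ and every $j \in [k_t]$, there is a transition $((p,j'), a, (q,j))$ with weight $e^t_j \in E$.
\end{itemize}
These transitions are well defined: for fixed $((p,j'), a, (q,j))$ there is at most one underlying $t = (p,a,q)$, so no two weights are assigned to the same transition.

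Next we establish the correspondence between runs. A run of $\A_E$ is a sequence $(q_0,1)(q_1,j_1)\dots(q_n,j_n)$ in which the projection $q_0 \dots q_n$ is a run $\varrho$ of $\A$ with transitions $t_1, \dots, t_n$, and $j_i \in [k_{t_i}]$. Conversely, every pair consisting of a run $\varrho$ of $\A$ and a choice vector $(j_1,\dots,j_n) \in \prod_i [k_{t_i}]$ gives exactly one run of $\A_E$. Fix a context $\delta = d_1 \dots d_n$. The weight of the lifted run is $\wt_I(q_0) \otimes \bigotimes_i \beh{e^{t_i}_{j_i}}(d_i) \otimes \wt_F(q_n)$. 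Summing over all choice vectors and applying distributivity gives
\[
\bigoplus_{(j_1,\dots,j_n)} \wt_I(q_0) \otimes \bigotimes_{i=1}^n \beh{e^{t_i}_{j_i}}(d_i) \otimes \wt_F(q_n) = \wt_I(q_0) \otimes \bigotimes_{i=1}^n \beh{\wt_T(t_i)}(d_i) \otimes \wt_F(q_n) = \wt_{\A}(\varrho,\delta).
\]
Here the order of the $\oplus$-summands and their parenthesization do not matter, by commutativity and associativity of $\oplus$.

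The main obstacle is that $\beh{\A}$ sums only over valid runs. Invalid lifted runs contribute $\zero$ to the sum, and invalid runs $\varrho$ of $\A$ give a total of $\zero$ by the identity above. So summing over all runs, valid or not, yields the same value in both automata. We then sum over all runs $\varrho$ of $\A$ with label $a_1 \dots a_n$, which gives $\beh{\A_E}(w) = \beh{\A}(w)$ for every $w$. The empty word is also handled correctly: its runs are single states $(i,1)$ with $i \in I \cap F$, each of weight $\wt_I(i) \otimes \wt_F(i)$, exactly as in $\A$. Finally, $\A_E$ is $E$-weighted by construction, which proves the lemma.
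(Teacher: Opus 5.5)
Your proof is correct and uses essentially the paper's construction: each transition is split into one copy per $\oplus$-summand, and the state space is enlarged to pairs $(q,i)$, where $i$ records the summand used on the entering transition, so that no parallel transitions arise. The only differences are that you use a uniform index bound $K$ where the paper uses the per-state bound $\dist{q}_E$, which costs some extra states but does not affect correctness, and that your run-correspondence and distributivity argument spells out the verification the paper leaves as ``Then $\beh{\A_E} = \beh{\A_{\clp{E}}}$''.
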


\begin{proof}
For any $e = [e_1 \oplus {\dots} \oplus e_k] \in \langle E \rangle_{\oplus}$ with $e_i \in E$, let $\dist{e}_E = k$ and, for any $j \in [k]$, let $e_E^{(j)} = e_j$.
Let
\[
\A_{\clp{E}} = (Q, I, T, F, \wt_I, \wt_T, \wt_F) \in \WFFA_{\Sigma, \FF}(\clp{E})
\]. Our goal is to construct $\A_E \in \WFFA_{\Sigma, \FF}(E)$ such that $\beh{\A_E} = \beh{\A_{\clp{E}}}$. The idea of our construction is to split each transition of $\A_{\clp{E}}$ with weight $e \in \clp{E}$ into $\dist{e}_E$ many transitions with weights $e^{(1)}_E, \dots, e^{(\dist{e}_E)}_E$. For this, we enlarge the state set.

For each state $q \in Q$, let $\prev(q) = \{t = (p, a, q) \in T  \mid  p \in Q \text{ and } a \in \Sigma\}$ and let
\[
\dist{q}_E = \begin{cases}
\max \{\dist{\wt_T(t)}_E  \mid  t \in \prev(q) \}, & \text{if } \prev(q) \neq \emptyset, \\
1, & \text{otherwise}.
\end{cases}
\]
Then, we construct 
\[
\A_E = (Q', I', T', F', \wt_{I'}, \wt_{T'}, \wt_{F'}) \in \WFFA_{\Sigma, \FF}(E)
\]
where:
\begin{itemize}
\item $Q' = \bigl\{(q, i)  \mid  q \in Q \text{ and } i \in \{1, \dots, \dist{q}_E\}\bigr\}$;
\item $I' = Q \times \{1\}$ and, for every $q_0' = (q_0, 1) \in I'$, we let $\wt_{I'}(q_0') = \wt_{I}(q_0)$;
\item $F' = \{(q_f, i)  \mid  q_f \in F \text{ and } i \in \{1, \dots, \dist{q_f}_E\}\}$ and, for every $q_f' = (q_f, i) \in F'$, we let $\wt_{F'}(q_f') = \wt_{F}(q_f)$;
\item for every $t = (p, a, q) \in T$, every $i_p \in \{1, \dots, \dist{p}_E\}$ and every ${i_q \in \{1, \dots, \dist{\wt_T(t)}_E\}}$, we let $t' = ((p, i_p), a, (q, i_q)) \in T'$ and $\wt_{T'}(t') = (\wt_{T}(t))_E^{(i_q)}$.
\end{itemize}
Then, $\beh{\A_E} = \beh{\A_{\clp{E}}}$ and hence $\beh{\WFFA_{\Sigma, \FF}(\clp{E})} \subseteq \beh{\WFFA_{\Sigma, \FF}(E)}$. The opposite inclusion is obvious, as $E \subseteq \clp{E}$.
\end{proof}

\begin{lemma}[Lemma~\ref{LEMMA:closure_kleene_star_for_e}]
\begin{enumerate}[label=(\alph*)]
\item Let $E \subseteq \EE_{\FF}$ be $(S, \otimes)$-closed and let $\F \in \beh{\WFFA_{\Sigma, \FF}(E)}$ be proper. Then $\F^* \in \beh{\WFFA_{\Sigma, \FF}(E)}$.
\item  Let $E \subseteq \EE_{\FF}$ be arbitrary and let $\F \in \beh{\WFFA^{\tau}_{\Sigma, \FF}(E)}$ be proper. Then ${\F^* \in \beh{\WFFA^{\tau}_{\Sigma, \FF}(E)}}$.
\end{enumerate}
\end{lemma}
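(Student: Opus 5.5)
For part (a), I will start from an automaton $\A \in \WFFA_{\Sigma,\FF}(E)$ with $\beh{\A} = \F$. Since $\F$ is proper, Lemma~\ref{LEMMA:normalize_wffa}(c) yields a completely normalized $\A_{I,F}$ with the same behavior. Lemma~\ref{LEMMA:cauchy_product_kleene_star_normalized_wffa}(b) then merges $q_I$ and $q_F$ and gives $\A^*$ with $\beh{\A^*} = \F^*$. Both constructions copy original transition weights unchanged, except for the new transitions leaving $q_I$ or entering $q_F$. Their weights have the forms
\[
\Bigl[\bigoplus_i (s_i \otimes e_i)\Bigr], \qquad \Bigl[\bigoplus_j (e_j \otimes s'_j)\Bigr], \qquad \Bigl[\bigoplus_k (s_k \otimes e_k \otimes s'_k)\Bigr],
\]
with $e_i,e_j,e_k \in E$ and $s$'s in $S$. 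The merging step in the star construction only redirects targets and does not alter weights.

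The key step is to show that each such weight is equivalent to an element of $\clp{E}$. By commutativity of $\otimes$ in $\SS$, I rewrite $e_j \otimes s'_j$ as $s'_j \otimes e_j$, and $s_k \otimes e_k \otimes s'_k$ as $(s_k \otimes s'_k) \otimes e_k$. This rewriting is semantic, since $\beh{\cdot}$ evaluates pointwise in a commutative semiring. By $(S,\otimes)$-closedness, each summand $s \otimes e$ is equivalent to some $e' \in E$, so the whole weight is equivalent to an expression in $\clp{E}$. Transition weights of a WFFA only matter through their semantics, so I replace each weight by its equivalent $\clp{E}$ expression and obtain an automaton in $\WFFA_{\Sigma,\FF}(\clp{E})$ with behavior $\F^*$. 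Lemma~\ref{LEMMA:plus_closed_expressiveness} then removes the sums and yields an $E$-weighted automaton. The initial and final weights of the normalized automaton are $\one$, so they raise no issue.

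For part (b), the starting automaton $\A$ is purely transition-weighted, so all $s_i, s'_j$ above equal $\one$. The new weights are then sums of $\one\otimes e$, $e \otimes \one$, or $\one \otimes e \otimes \one$, all equivalent to $e$, and no closure assumption on $E$ is needed. The normalized automaton has initial and final weights $\one$, so it stays in $\WFFA^{\tau}$. Lemma~\ref{LEMMA:plus_closed_expressiveness} does not alter initial or final weights, so its output also stays in $\WFFA^{\tau}$.

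The main obstacle is the bookkeeping in the normalization of Lemma~\ref{LEMMA:normalize_wffa}(c). I must check that every newly created weight really has the displayed shape: a single $\oplus$-sum of products of at most two constants and exactly one element of $E$, rather than nested sums. The $T_{I,F}$ transitions are the most delicate case. I would verify this directly from the explicit formulas in the appendix proof, where each summand is $\wt_I(i) \otimes \wt_T(t) \otimes \wt_F(f)$.
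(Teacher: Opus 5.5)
Your proposal is correct and follows the paper's own proof essentially step by step. Both build the star automaton through complete normalization (Lemma~\ref{LEMMA:normalize_wffa}(c)) followed by the merging construction of Lemma~\ref{LEMMA:cauchy_product_kleene_star_normalized_wffa}(b). Both then observe that the new weights are flat $\oplus$-sums of terms $s\otimes e$, $e\otimes s'$, or $s\otimes e\otimes s'$, with $s=s'=\one$ in part (b), rewrite these into $\clp{E}$ via commutativity and $(S,\otimes)$-closedness, and finish with Lemma~\ref{LEMMA:plus_closed_expressiveness}, which keeps the unit initial and final weights needed for part (b).
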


\begin{proof}
\begin{enumerate}[label=(\alph*)]
\item Let $\A \in \WFFA_{\Sigma, \FF}(E)$ such that $\beh{\A} = \F$. First, we construct a WFFA $\A^* \in \WFFA_{\Sigma, \FF}(\EE_{\FF})$ with $\beh{\A} = \F^*$ as in the proof of Corollary \ref{COR:cauchy_product_kleene_star_full_wffa}. We can observe that the transition weights of $\A^*$ are the following expressions:
\begin{itemize}
\item $e \in E \subseteq \clp{E}$.
\item $e = [(s_1 \otimes e_1) \oplus {\dots} \oplus (s_k \otimes e_k)]$ where $s_1, \dots, s_k \in S$ and $e_1, \dots, e_k \in E$. Since $E$ is $(S, \otimes)$-closed, for every $i \in [k]$ there exists $\hat{e}_i \in E$ with $\beh{\hat{e}_i} = \beh{s_i \otimes e_i}$. Let $\hat{e} = [\hat{e}_1 \oplus {\dots} \oplus \hat{e}_k] \in \clp{E}$. Then, clearly, $\beh{\hat{e}} = \beh{e}$. Then, $e$ can be replaced with an equivalent expression from $\clp{E}$ without altering the behavior of $\A_{\iota, \phi}$.
\item $e = [(e_1 \otimes s_1) \oplus {\dots} \oplus (e_k \otimes s_k)]$ where $s_1, \dots, s_k \in S$ and $e_1, \dots, e_k \in E$. Since $\otimes$ is commutative, $\beh{e} = \beh{(s_1 \otimes e_1) \oplus {\dots} \oplus (s_1 \otimes e_k)}$. Then, by (b), $e$ can be also replaced with an equivalent expression from $\clp{E}$.
\item $e = [(s_1 \otimes e_1 \otimes s_1') \oplus {\dots} \oplus (s_k \otimes e_k \otimes s_k')]$ where $s_1, s_1', \dots, s_k, s_k' \in S$ and $e_1, \dots, e_k \in E$.
Since $\otimes$ is commutative, $\beh{e} = \beh{(\hat{s}_1 \otimes e_1) \oplus {\dots} \oplus (\hat{s}_k \otimes e_k)}$ where $\hat{s}_i = (s_1 \otimes s_1') \in S$ for all $i \in [k]$. Then, again, by (b), $e$ can also be replaced with an equivalent expression from $\clp{E}$.
\end{itemize}
These observations imply that, by modifying only transition weights, we may transform $\A^*$ into an automaton $\A^*_{\clp{E}} \in \WFFA_{\Sigma, \FF}(\clp{E})$ without changing its behavior, i.e., $ \beh{\A^*_{\clp{E}}} = \F^*$. By Lemma~\ref{LEMMA:plus_closed_expressiveness}, there exists an automaton $\A^*_{E} \in \WFFA_{\Sigma, \FF}(E)$ with the same behavior, $\beh{\A^*_{E}} = \F^*$. Consequently, $\F^* \in \beh{\WFFA_{\Sigma, \FF}(E)}$.
\item The proof follows the same construction as the proof of (a). We only need to observe that the transition weights of $\A^*$ are the expressions of one of the following forms:
\begin{itemize}
\item $e \in E \subseteq \clp{E}$;
\item $e = [(\one \otimes e_1) \oplus {\dots} \oplus (\one \otimes e_k)]$ with $e_1, \dots, e_k \in E$;
\item $e = [(e_1 \otimes \one) \oplus {\dots} \oplus (e_k \otimes \one)]$ with $e_1, \dots, e_k \in E$;
\item $e = [(\one \otimes e_1 \otimes \one) \oplus {\dots} \oplus (\one \otimes e_k \otimes \one)]$ with $e_1, \dots, e_k \in E$.
\end{itemize}
In each of the last three cases, the expression $e$ is equivalent to ${[e_1 \oplus {\dots} \oplus e_n]}$, which lies in $\clp{E}$. Replacing the transition weights accordingly and then applying the construction from the proof of part (a) yields an automaton $\A^*_E \in \WFFA_{\Sigma, \FF}(E)$ with $\beh{\A_E^*} = \F^*$. Finally, inspecting the construction in the proof of Lemma~\ref{LEMMA:plus_closed_expressiveness} shows that $\A^*_E$ in fact belongs to $\WFFA_{\Sigma, \FF}^{\tau}(E)$.
\qedhere
\end{enumerate}
\end{proof}

\subsubsection{Primitive Expressions (Subsection~\ref{SUBSEC:primitive_expressions})}

\begin{lemma}[Lemma~\ref{LEMMA:primitive_wffa_cube_property}]
Let $\Sigma = \{\bot\}$, let $\FF = \FF_{\arc, \times}$, and let $\A \in \WFFA_{\Sigma, \FF}(\PP_{\FF})$ be such that $\beh{\A}(w) \neq \zero$ for all $w \in \DD \Sigma^*$. 
Then for every $n \in \NN_{\ge 1}$ there exist $\varrho \in \Run_{\A}^n$, $\delta \in \DD^n$, and $R > 0$ such that
$
\beh{\A}(\bot_{\delta'}) = \wt_{\A}(\varrho, \delta')
$
for all $\delta' \in \mathcal C(\delta, R)$.
\end{lemma}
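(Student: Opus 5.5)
The proof works directly on the finitely many runs of length $n$ and shows that one of them is the maximizer on some full-dimensional cube. Fix $n \ge 1$. Since $\Sigma=\{\bot\}$, every run in $\Run_{\A}^n$ has label $\bot^n$, and $\Run_{\A}^n = \{\varrho_1,\dots,\varrho_k\}$ is finite. For each $j$ the function $g_j(\delta) = \wt_{\A}(\varrho_j,\delta)$ on $\DD^n=\RR_{\ge 0}^n$ is a max-plus product of the constants $\wt_I(q_0)$, $\wt_F(q_n)$ and of factors $\beh{\pi_i}(d_i)$ with $\pi_i \in \PP_{\FF}$. Each such factor is either a constant $s$ or $d_i \mapsto s\cdot d_i$, with value $-\infty$ only if $s=-\infty$; this uses the convention $d\times(-\infty)=-\infty$. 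Hence every $g_j$ is either identically $-\infty$ or a real affine function $c_j+\sum_i \alpha_{j,i} d_i$. Discard the runs whose weight is identically $-\infty$. The hypothesis $\beh{\A}(w)\neq\zero$ guarantees that at least one run remains, and that $\beh{\A}(\bot_\delta)=\max_j g_j(\delta)$ over the remaining affine $g_j$ for all $\delta$. Runs that are invalid at some $\delta$ are exactly those with weight $-\infty$, so nothing is lost by taking the maximum over all runs.

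Next, define the regions $\Delta_j=\{\delta\in\DD^n \mid g_j(\delta)\ge g_{j'}(\delta)\text{ for all } j'\}$. Each $\Delta_j$ is a closed polyhedron cut out by finitely many linear inequalities, and the $\Delta_j$ cover $\DD^n$. It remains to show that some $\Delta_j$ contains a cube $\mathcal C(\delta,R)$.

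This is the main point. A cheap argument avoids Baire or measure theory. First collapse runs with identical affine functions, since they give the same weight, so that the remaining $g_j$ are pairwise distinct. For $j\neq j'$ the set $\{g_j=g_{j'}\}$ is then a hyperplane, or empty if the two functions differ by a constant. A finite union of hyperplanes cannot contain the open set $(0,\infty)^n$, so there is a point $\delta^0$ with all $d^0_i>0$ at which all the $g_j$ take distinct values. Let $j$ be the unique maximizer at $\delta^0$ and let $\eta>0$ be its gap to the second-largest value. All the $g_{j'}$ are Lipschitz, so on a small neighbourhood of $\delta^0$ the function $g_j$ remains the strict maximum. Choose $R>0$ small and set $\delta=\delta^0$; the cube $\mathcal C(\delta,R)=\prod_i[d^0_i,d^0_i+R]$ lies in this neighbourhood and inside $\DD^n$.

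On $\mathcal C(\delta,R)$ we then have $\beh{\A}(\bot_{\delta'})=g_j(\delta')=\wt_{\A}(\varrho_j,\delta')$, which proves the lemma with $\varrho=\varrho_j$. The one routine check is the bookkeeping for $-\infty$ weights: constants and coefficients equal to $-\infty$ make a run's weight identically $-\infty$, and such runs never affect the maximum once at least one real-valued run exists.
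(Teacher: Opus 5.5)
Your proof is correct and follows the paper's route. Run weights on $\bot^n$ are real affine functions, the regions $\Delta_j$ where a run attains the maximum are polyhedra covering $\DD^n$, and one of them contains a cube. Your generic-point argument (a finite union of hyperplanes cannot contain $(0,\infty)^n$, then a Lipschitz neighbourhood of a strict maximizer) supplies the justification for that last step, which the paper only asserts. Discarding identically $-\infty$ runs also handles $-\infty$ initial and final weights, which the paper's assumption on transition weights alone does not cover.
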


\begin{proof}
Let $\A = (Q, I, T, F, \wt_I, \wt_T, \wt_F)$. We may assume that $\wt_T(t) \notin \{\zero, \bind{\zero}\}$ for all $t \in T$.
We fix a enumeration $\Run_{\A}^n = (\varrho_i)_{i \in [k]}$ where $k =|\Run_{\A}^n|$. 

For each $j \in [k]$ and $\delta = (d_1, \dots, d_n) \in \DD^n$, the weight $\wt_{\A}(\varrho_j, \delta)$ can be written in the form
\[
\wt_{\A}(\varrho_j, \delta) = a_{j,1} \cdot d_1 + {\dots} + a_{j, n} \cdot d_n + b_j
\]
for some $a_{j,1}, \dots, a_{j, n}, b_j \in \RR$. Thus, $\beh{\A}(\bot_{\delta}) = \max\{\wt_{\A}(\varrho_j, \delta) \mid j \in [k]\}$.
For each $j \in [k]$, let 
\[
\label{EQ:delta_j}
\Delta_j = \{\delta \in \DD^n  \mid  \beh{\A}(\bot_{\delta}) = \wt_{\A}(\varrho_j, \delta)\}.
\]
Since $\beh{\A}(\bot_{\delta}) \neq \zero$ for all $\delta \in \DD^n$, we have the coverage equation:
\begin{equation}
\label{EQ:coverage_delta_j}
\DD^n = \bigcup_{j \in [k]} \Delta_j.
\end{equation}
Furthermore, for every $\delta \in \DD^n$ and $j \in [k]$, we have $\delta \in \Delta_j$ if and only if 
\begin{equation}
\label{EQ:primitive_wffa_cube_constraint}
\wt_{\A}(\varrho_{j'}, \delta) \le \wt_{\A}(\varrho_j, \delta)
\end{equation}
for all $j' \in [k] \setminus \{j\}$. Note that each constraint (\ref{EQ:primitive_wffa_cube_constraint}) is a linear inequality. Therefore, $\Delta_j$ is defined by a system of linear inequalities, making it an $n$-dimensional polyhedron. By (\ref{EQ:coverage_delta_j}), there exist $j \in [k]$ and a cube $\mathcal C(\delta, R)$, for some $\delta \in \DD^n$ and $R \in \RR_{> 0}$, such that $\mathcal C(\delta, R) \subseteq \Delta_j$. Therefore, $\beh{\A}(\bot_{\delta'}) = \wt_{\A}(\varrho_j, \delta')$ for all $\delta' \in \mathcal C(\delta, R)$.
\end{proof}

\begin{lemma}[Lemma~\ref{LEMMA:primitive_counterexample}]
Let $\Sigma = \{\bot\}$ and $\FF = \FF_{\arc, \times}$. Let $\F: \DD \Sigma^* \to S$ be the QFL defined by  
$
\F(\bot_{d_1, \dots, d_n}) = n + \sum_{i=1}^n d_i
$
for all $n \in \NN$ and $d_1, \dots, d_n \in \DD$.
Then, $\F \in \beh{\WFFA_{\Sigma, \FF}(\PP_{\FF})} \setminus \beh{\WFFA_{\Sigma, \FF}(\PP_{\FF})}$.
\end{lemma}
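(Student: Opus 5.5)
The displayed statement has a typo: by the main-text version (Lemma~\ref{LEMMA:primitive_counterexample}), the intended claim is $\F \in \beh{\WFFA_{\Sigma, \FF}(\AA_{\FF})} \setminus \beh{\WFFA_{\Sigma, \FF}(\PP_{\FF})}$. I would prove the two parts separately.

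\emph{Membership.} I take the one-state automaton $\A_{\F}$ with state $q$, which is initial and final with weights $0.0$, and a single loop $(q,\bot,q)$ weighted by the affine expression $[\bind{1.0} \otimes 1.0]$. On $\bot_{d_1,\dots,d_n}$ there is exactly one run. Its weight is $0.0 + \sum_{i}(d_i + 1) + 0.0 = n + \sum_i d_i$, which also covers $n=0$.

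\emph{Non-membership.} Suppose $\A = (Q,I,T,F,\wt_I,\wt_T,\wt_F) \in \WFFA_{\Sigma,\FF}(\PP_{\FF})$ satisfies $\beh{\A} = \F$. Since $\F$ takes only real values, $\beh{\A}(w) \neq \zero$ for all $w$, so Lemma~\ref{LEMMA:primitive_wffa_cube_property} applies. Set
\[
M = \max\{\wt_I(i) + \wt_F(f) \mid i \in I,\ f \in F,\ \wt_I(i),\wt_F(f) \neq -\infty\},
\]
which is a finite real number because $I$ and $F$ are finite. Fix any $n \ge 1$ with $n > M$. The cube lemma gives a run $\varrho = q_0 \to \dots \to q_n$, a point $\delta$ and $R>0$ such that $\wt_{\A}(\varrho,\delta') = n + \sum_i d_i'$ for all $\delta' \in \mathcal C(\delta,R)$. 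In particular $\varrho$ is valid on the cube, so none of its weights are $-\infty$.

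Each transition weight on $\varrho$ is primitive: either a constant $s_i \in \RR$, contributing $s_i$, or a binding $\bind{s_i}$ with $s_i \in \RR$, contributing $s_i d_i'$. Hence
\[
\wt_{\A}(\varrho,\delta') = \wt_I(q_0) + \wt_F(q_n) + \sum_{i \in C} s_i + \sum_{i \in B} s_i d_i',
\]
where $C$ and $B$ partition $[n]$ into constant-weight and binding positions. This affine function coincides with $n + \sum_i d_i'$ on a cube with nonempty interior. Two affine functions that agree on an open set have the same coefficients, so $B = [n]$, $C = \emptyset$, and $s_i = 1$ for all $i$. The constant terms must agree as well, which gives $\wt_I(q_0) + \wt_F(q_n) = n > M$. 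This contradicts the definition of $M$.

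\emph{Expected obstacle.} The delicate step is the coefficient comparison. It needs the cube to be full-dimensional, which holds because $R>0$ and $\DD = \RR_{\ge 0}$ contains $[d_i, d_i+R]$. It also needs that exactly one term per position $i$ can involve $d_i'$; this holds because a primitive weight is either constant or a single binding, never both. That is exactly why every constant must vanish, leaving the bounded initial and final weights as the only source of the constant $n$.
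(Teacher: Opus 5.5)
Your proof is correct and follows the paper's argument. Membership uses the same one-state automaton with loop weight $[\bind{1.0} \otimes 1.0]$, and non-membership uses the cube lemma, coefficient comparison on the full-dimensional cube, and the boundedness of initial plus final weights; you also read the misprinted statement correctly as $\beh{\WFFA_{\Sigma,\FF}(\AA_{\FF})} \setminus \beh{\WFFA_{\Sigma,\FF}(\PP_{\FF})}$, and the only difference is that your single coefficient comparison handles at once the paper's two cases (all transition weights are bindings, or some are constants).
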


\begin{proof}
Let $\A_{\F} \in \WFFA_{\Sigma, \FF}(\AA_{\FF})$ be the automaton depicted in Figure \ref{FIG:primitive_counterexample}, where the transition labels $\bot$ are omitted. It is straightforward to verify that $\beh{\A_{\F}} = \F$.

\begin{figure}[ht]
\centering
\begin{tikzpicture}[
    >=stealth,
    every node/.style={font=\footnotesize},
    state/.style={circle, draw, minimum size=18pt, inner sep=1pt}
]
\node (q0) {};
\node[state, right=1cm of q0] (q1) {$q_0$};
\node[right=1cm of q1] (q11) {};

\draw[->] 
	(q0) edge[above] node{$0.0$} (q1)
	(q1) edge[loop above] node{$[\llangle 1.0 \rrangle \otimes 1.0]$} (q1)
	(q1) edge[above] node{$0.0$} (q11);
\end{tikzpicture}
\caption{WFFA $\A_{\F}$ from the proof of Lemma \ref{LEMMA:primitive_counterexample}.}
\label{FIG:primitive_counterexample}
\end{figure}

Next, we show that $\F \notin \beh{\WFFA_{\Sigma, \FF}(\PP_{\FF})}$. Suppose, for the sake of contradiction, that there exists an automaton
\[
\A =  (Q, I, T, F, \wt_I, \wt_T, \wt_F) \in \WFFA_{\Sigma, \FF}(\PP_{\FF})
\]
with atomic transition weights such that $\beh{\A} = \F$. By Lemma~\ref{LEMMA:primitive_wffa_cube_property}, for every $n \in \NN_{\ge 1}$, there exist a run $\varrho \in \Run_{\A}^n$, a run context $\delta \in \DD^n$, and a positive constant $R \in \RR$ such that 
\begin{equation}
\label{EQ:primitive_counterexample_run_eq_behavior}
\beh{\A}(\bot_{\delta'}) = \wt_{\A}(\varrho, \delta')
\end{equation}
for all $\delta' \in \mathcal C(\delta, R)$. Let $t_1 \dots t_n \in T^+$ denote the transition sequence of $\varrho$. For each $i \in [n]$, let $e_i^n = \wt_T(t_i)$ and $t_i = (q_i, \bot, q_i')$. Furthermore, let $w_I^n = \wt_I(q_1)$ and $\wt_F^n = \wt_F(q_n')$. We distinguish between the following subcases:
\begin{itemize}
\item {\em All transition weights are in $\llangle{S}\rrangle$}. Suppose that, for all $i \in [n]$, $e_i^n = \bind{s_i^n}$ with $s_i^n \in S$. Let $\delta' = (d_1, \dots, d_n) \in \C(\delta, R)$. Then, by Equation~\eqref{EQ:primitive_counterexample_run_eq_behavior},
\begin{equation}
\label{EQ:primitive_counterexample_all_binding_cube_start}
n + \sum_{i \in [n]}d_i = w_I^n + \sum_{i \in [n]} s_i^n \cdot d_i + w_F^n
\end{equation}
For any $r_1, \dots, r_n \in [0, R]$, define $\delta' = (d_1 + r_1, \dots, d_n + r_n) \in \C(\delta, R)$. By the same equation, we have
\begin{equation}
\label{EQ:primitive_counterexample_all_binding_cube_end}
n + \sum_{i \in [n]} (d_i + r_i) = w_I^n + \sum_{i \in [n]} s_i^n \cdot (d_i + r_i) + w_F^n.
\end{equation}
Subtracting \eqref{EQ:primitive_counterexample_all_binding_cube_start} from \eqref{EQ:primitive_counterexample_all_binding_cube_end} yields
\[
\sum_{i \in [n]} (1 - s_i^n) \cdot r_i = 0.
\]
Since this must hold for all choices of $r_1, \dots, r_n \in [0, R]$, we conclude that $s_i^n = 1$ for all $i \in [n]$. Substituting back into \eqref{EQ:primitive_counterexample_all_binding_cube_start} gives
\[
n = w_{I}^n + w_{F}^n.
\]
However, by definition, $w_I^n + w_F^n \le \max \wt_I(I) + \max \wt_F(F)$,
while $n \in \NN_{\ge 1}$ can be chosen arbitrarily large. This leads to a contradiction.
\item {\em Mixed transition weights}. Suppose that $\mathcal I = \{i \in [n] \mid e_i^n \in S \} \neq \emptyset$. For all $i \in \mathcal I$, let $e_i^n = s_i^n$, and for all $i \in [n] \setminus \mathcal I$, let $e_i^n = \bind{s_i^n}$. Then, by Equation~\eqref{EQ:primitive_counterexample_run_eq_behavior}, for $\delta' = (d_1, \dots, d_n) \in \C(\delta, \R)$, we have
\begin{equation}
\label{EQ:primitive_counterexample_mixed_cube_start}
n + \sum_{i \in [n]} d_i = w_I^n + \sum_{i \in \mathcal I} s_i^n + \sum_{i \in [n] \setminus \mathcal I} s_i^n \cdot d_i + w_F^n
\end{equation}
For any $r_1, \dots, r_n \in [0, R]$, define $\delta' = (d_1 + r_1, \dots, d_n + r_n) \in \C(\delta, R)$. By the same equation,
\begin{equation}
\label{EQ:primitive_counterexample_mixed_cube_end}
n + \sum_{i \in [n]} (d_i + r_i) = w_I^n + \sum_{i \in \mathcal I} s_i^n + \sum_{i \in [n] \setminus \mathcal I} s_i^n \cdot (d_i + r_i) + w_F^n.
\end{equation}
Subtracting \eqref{EQ:primitive_counterexample_mixed_cube_start} from \eqref{EQ:primitive_counterexample_mixed_cube_end} gives
\begin{equation}
\label{EX:primitive_counterexample_non_trivial_solution}
\sum_{i \in \mathcal I} r_i = \sum_{i \in [n] \setminus \mathcal I} s_i^n \cdot r_i
\end{equation}
This equality must hold for all $r_1, \dots, r_n \in [0, R]$. But if we set $r_i = 0$ for all $i \in [n] \setminus \mathcal I$ and $r_i = R$ for all $i \in \mathcal I$, we obtain $R \cdot |\mathcal I| = 0$, which implies $R = 0$. This contradicts the assumption that $R$ is positive.
\end{itemize}
Since all possible subcases result in contradictions, it follows that no WFFA with atomic transition weights can realize $\F$. Therefore, $\F \notin \beh{\WFFA_{\Sigma, \FF}}(\PP_{\FF})$.
\end{proof}

\begin{theorem}[Theorem~\ref{THM:closure_counterexample_primitive}]
Let $\Sigma = \{\bot\}$ and $\FF = \FF_{\arc, \times}$. Then, the following hold:
\begin{enumerate}[label=(\alph*)]
\item There exist $\F_1, \F_2 \in \beh{\WFFA_{\Sigma, \FF}(\PP_{\FF})}$ such that $(\F_1 \otimes \F_2) \notin \beh{\WFFA_{\Sigma, \FF}(\PP_{\FF})}$.
\item There exists $\F \in \beh{\WFFA_{\Sigma, \FF}(\PP_{\FF})}$ such that $\F$ is proper and ${\F^* \notin \beh{\WFFA_{\Sigma, \FF}(\PP_{\FF})}}$.
\end{enumerate}
\end{theorem}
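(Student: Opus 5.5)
Both parts reduce to Lemma~\ref{LEMMA:primitive_counterexample}. That lemma says the QFL $\F(\bot_{d_1,\dots,d_n}) = n + \sum_{i=1}^n d_i$ is not recognizable by any $\PP_{\FF}$-weighted WFFA over $\Sigma=\{\bot\}$ and $\FF=\FF_{\arc,\times}$. So the plan is to exhibit, in each case, primitive-weighted automata whose Hadamard product, respectively Kleene star, equals exactly this $\F$.

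For (a), take $\F_1(\bot_{d_1,\dots,d_n}) = \sum_i d_i$ and $\F_2(\bot_{d_1,\dots,d_n}) = n$, both with value $0.0$ on $\eps$. Each is recognized by a one-state automaton whose single state is initial and final with weights $0.0$. The loop weight is $\bind{1.0}$ for $\F_1$ and the constant $1.0$ for $\F_2$; both are primitive. For $\F_1$, the binding $\bb_\times(d,1.0)=d$ makes the unique run on a word of length $n$ weigh $\sum_i d_i$. For $\F_2$, each step adds $1$. In $\SS_{\arc}$ the Hadamard product is pointwise addition, so $\F_1\otimes\F_2 = \F$. Lemma~\ref{LEMMA:primitive_counterexample} then gives $\F_1\otimes\F_2\notin\beh{\WFFA_{\Sigma,\FF}(\PP_{\FF})}$.

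For (b), let $\F$ be the proper QFL with $\F((\bot,d)) = d+1$ and $\F(w)=\zero$ whenever $|w|\neq 1$. It is recognized by a two-state automaton with states $q_0 \neq q_1$ and a single transition $(q_0,\bot,q_1)$ of primitive weight $\bind{1.0}$. We put initial weight $1.0$ on $q_0$ and final weight $0.0$ on $q_1$. Properness holds because $q_0\neq q_1$, so there is no run on $\eps$. Next we compute $\F^*$ from Definition~\ref{DEF:qfl_operations}. A factorization $w=w_1\cdots w_k$ contributes a non-$\zero$ term only if every $|w_i|=1$. For a word of length $n$ the only such factorization is into its $n$ single letters, and for $n=0$ it is the empty factorization, with value $\one=0.0$. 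Hence $\F^*(\bot_{d_1,\dots,d_n}) = \sum_i (d_i+1) = n+\sum_i d_i$, which is again the $\F$ of Lemma~\ref{LEMMA:primitive_counterexample}, and that lemma finishes the proof.

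The only real work sits inside Lemma~\ref{LEMMA:primitive_counterexample}, which we may assume. What remains is to check two details: that the $\eps$-values agree ($0.0$ on both sides), and that the maximum in the star degenerates to a single summand. Both follow because $\F$ vanishes off length one.
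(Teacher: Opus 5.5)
Your proposal is correct and follows the paper's proof. It uses the same witnesses $\F_1(\bot_{d_1,\dots,d_n})=\sum_i d_i$ and $\F_2(\bot_{d_1,\dots,d_n})=n$ via one-state loops weighted $\bind{1.0}$ and $1.0$, the same proper $\F$ with $\F((\bot,d))=d+1$ realized by one $\bind{1.0}$-transition with initial weight $1.0$, and the same reduction of both parts to Lemma~\ref{LEMMA:primitive_counterexample}. Your explicit checks on $\F^*$ (only the single-letter factorization survives, and both sides equal $0.0$ on $\eps$) spell out details the paper leaves implicit.
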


\begin{proof}
\begin{enumerate}[label=(\alph*)]
\item Let $\F_1: \DD \Sigma^* \to S$ be defined for all $d_1, \dots, d_n \in \DD$ by 
\[
\F_1(\bot_{d_1, \dots, d_n}) = d_1 + {\dots} + d_n.
\]
and let $\F_2: \DD \Sigma^* \to S$ be defined for all $w \in \DD \Sigma^*$ by $\F_2(w) = |w|$.
Consider the WFFAs $\A_{\F_1} \in \WFFA_{\Sigma, \FF}(\PP_{\FF})$ and $\A_{\F_2} \in \WFFA_{\Sigma, \FF}(S)$ depicted in Figure \ref{FIG:thm_closure_counterexample_atomic_proof_a}. It is straightforward to check that $\beh{\A_{\F_1}} = \F_1$ and $\beh{\A_{\F_1}} = \F_2$. Let $\F = \F_1 \otimes \F_2$. Then, by Lemma~\ref{LEMMA:primitive_counterexample}, $\F \notin \beh{\WFFA_{\Sigma, \FF}(\PP_{\FF})}$.

\begin{figure}[ht]
\centering
\begin{tikzpicture}[
    >=stealth,
    every node/.style={font=\footnotesize},
    state/.style={circle, draw, minimum size=18pt, inner sep=1pt}
]
\node (q0) {};
\node[state, right=1cm of q0] (q1) {$q_0$};
\node[right=1cm of q1] (q11) {};
\node[above=0.5cm of q0] {$\A_{\F_1}$};

\node[right=2cm of q11] (p0) {};
\node[state, right=1cm of p0] (p1) {$p_0$};
\node[right=1cm of p1] (p11) {};
\node[above=0.5cm of p0] {$\A_{\F_2}$};

\draw[->] 
	(q0) edge[above] node{$0.0$} (q1)
	(q1) edge[loop above] node{$\llangle 1.0 \rrangle$} (q1)
	(q1) edge[above] node{$0.0$} (q11)
	(p0) edge[above] node{$0.0$} (p1)
	(p1) edge[loop above] node{$1.0$} (p1)
	(p1) edge[above] node{$0.0$} (p11);
\end{tikzpicture}
\caption{WFFAs $\A_{\F_1}$ and $\A_{\F_2}$ from the proof of Theorem~\ref{THM:closure_counterexample_primitive}(a).}
\label{FIG:thm_closure_counterexample_atomic_proof_a}
\end{figure}
\item Let $\F: \DD \Sigma^* \to S$ be defined by $\F(\bot_{d}) = d + 1$ for all $d \in \DD$ and $\F(w) = \zero$ for all $w \in \DD \Sigma^*$ with $|w| \neq 1$. Clearly, $\F$ is proper. Consider the automaton $\A_{\F} \in \WFFA_{\Sigma, \FF}(\bind{S}_{\FF})$ depicted in Figure \ref{FIG:thm_closure_counterexample_atomic_proof_b}. It is straightforward to verify that, $\beh{\A_{\F}} = \F$. 

However, for the Kleene star $\F^*$ of $\F$, we have 
\[
\F^*(\bot_{d_1, \dots, d_n}) = n + d_1 + {\dots} + d_n
\]
for all $d_1, \dots, d_n \in \DD$. By Lemma \ref{LEMMA:primitive_counterexample}, $\F \notin \beh{\WFFA_{\Sigma, \FF}(\PP_{\FF})}$.

\begin{figure}[ht]
\centering
\begin{tikzpicture}[
    >=stealth,
    every node/.style={font=\footnotesize},
    state/.style={circle, draw, minimum size=18pt, inner sep=1pt}
]
\node (q0) {};
\node[state, right=1cm of q0] (q1) {$q_0$};
\node[state, right=1.5cm of q1] (q2) {$q_1$};
\node[right=1cm of q2] (q22) {};

\draw[->] 
	(q0) edge[above] node{$1.0$} (q1)
	(q1) edge[above] node{$\llangle 1.0 \rrangle$} (q2)
	(q2) edge[above] node{$0.0$} (q22);
\end{tikzpicture}
\caption{WFFA $\A_{\F}$ from the proof of Theorem~\ref{THM:closure_counterexample_primitive}(b).}
\label{FIG:thm_closure_counterexample_atomic_proof_b}
\end{figure}
\qedhere
\end{enumerate}
\end{proof}

\begin{lemma}[Lemma~\ref{LEMMA:bounded_sum_property_ff_arc_times_positive}]
Let $\FF$ be a finance semiring with an additive data-binding function. Then, $\PP_{\FF}$ satisfies the bounded sum property and $B(\PP_{\FF}) \le 2$.
\end{lemma}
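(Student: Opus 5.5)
The plan is to reduce an arbitrary finite sum of primitive expressions to a sum with at most one constant term and at most one binding term, using commutativity and associativity of $\oplus$ together with additivity of $\bb$.

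First I fix $k \in \NN_{\ge 1}$ and $\pi_1, \dots, \pi_k \in \PP_{\FF}$, and partition $[k]$ into $J_c = \{i \mid \pi_i = s_i \in S\}$ and $J_b = \{i \mid \pi_i = \bind{s_i}\}$. Since $(S,\oplus,\zero)$ is a commutative monoid, for every $d \in \DD$ I can regroup
\[
\beh{\pi_1 \oplus \dots \oplus \pi_k}(d) = \Bigl(\bigoplus_{i \in J_c} s_i\Bigr) \oplus \Bigl(\bigoplus_{i \in J_b} \bb(d, s_i)\Bigr).
\]
Set $s' = \bigoplus_{i \in J_c} s_i \in S$ and $s'' = \bigoplus_{i \in J_b} s_i \in S$.

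Next, I apply additivity of $\bb$, by a straightforward induction on $|J_b|$, to get $\bigoplus_{i \in J_b} \bb(d, s_i) = \bb(d, s'')$ for all $d$ whenever $J_b \neq \emptyset$. Hence the sum is equivalent to $[s' \oplus \bind{s''}]$ if both index sets are nonempty, to the single expression $s'$ if $J_b = \emptyset$, and to $\bind{s''}$ if $J_c = \emptyset$. In every case the original sum is equivalent to a sum of $b \le 2$ primitive expressions, which is exactly the bounded-sum property with $B = 2$. Therefore $B(\PP_{\FF}) \le 2$ by minimality of $B(\PP_{\FF})$.

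The only subtle point is the empty-index cases: an empty $\oplus$ would give $\zero$, and $\bb(d,\zero)$ need not equal $\zero$. I avoid this by dropping the absent term instead of inserting $\zero$ or $\bind{\zero}$. Apart from that, the argument is routine bookkeeping.
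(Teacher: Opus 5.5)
Your proof is correct and follows essentially the same route as the paper: you use commutativity of $\oplus$ to separate the constant summands from the binding summands, then collapse the binding part into a single $\bind{s''}$ via additivity of $\bb$. The three cases you treat (both kinds present, only constants, only bindings) are exactly the paper's three subcases. Your explicit remark that absent terms must be dropped rather than replaced by $\zero$ or $\bind{\zero}$ is the same point the paper handles by splitting off the cases $l=k$ and $l=0$.
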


\begin{proof}
For any $k \in \NN_{\ge 1}$, let $\pi_1, \dots, \pi_k \in \PP_{\FF}$ and define $e = [\pi_1 \oplus {\dots} \oplus \pi_k]$. Since $\oplus$ is commutative, we may assume that $\pi_1, \dots, \pi_l \in S$ for some $l \le k$ and $\pi_{l + 1}, \dots, \pi_{k} \in \bind{S}_{\FF}$. Then
\[
e = [s_1 \oplus {\dots} \oplus s_l \oplus \bind{s_{l + 1}} \oplus {\dots} \oplus \bind{s_k}]
\]
where $s_1, \dots, s_k \in S$. Consider the following subcases:
\begin{itemize}
\item {\em All summands in $S$} ($l = k$): $e = [s_1 \oplus {\dots} \oplus s_k]$. Let $s' = (s_1 \oplus {\dots} \oplus s_k) \in S \subseteq \PP_{\FF}$. Then, $\beh{e} = \beh{s'}$.
\item {\em All summands in $\bind{S}_{\FF}$} ($l = 0$): $e = [\bind{s_1} \oplus {\dots} \oplus \bind{s_k}]$. Let $s' = (s_1 \oplus {\dots} \oplus s_k) \in S$. By additivity of $\bb$, we have $\beh{e} = \beh{\bind{s'}}$, with $\bind{s'} \in \PP_{\FF}$.
\item {\em Mixed summands} ($0 < l < k$): Let $s' = (s_1 \oplus {\dots} \oplus s_l) \in S$ and $s'' = (s_{l + 1} \oplus {\dots} \oplus s_k) \in S$. By additivity of $\bb$, we obtain $\beh{e} = \beh{s' \oplus \bind{s''}}$, where $s', \bind{s''} \in \PP_{\FF}$.
\end{itemize}
Hence, $\PP_{\FF}$ satisfies the bounded-sum property, with $B(\PP_{\FF}) \le 2$.
\end{proof}

\begin{lemma}[Lemma~\ref{LEMMA:bounded_sum_property_ff_arc_times_rr}]
Let $\FF = \FF_{\arc, \times}^{\RR}$. Then $\PP_{\FF}$ satisfies the bounded-sum property with $B(\PP_{\FF}) \le 5$.
\end{lemma}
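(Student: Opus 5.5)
The plan is to show directly that every finite $\oplus$-sum of primitive expressions over $\FF = \FF_{\arc,\times}^{\RR}$ is equivalent to a sum of at most three primitive expressions. This is stronger than the bound $5$ claimed. The semiring is $\SS_{\arc}$, so $\oplus$ is $\max$, and the data-binding function is $\bb^{\RR}_{\times}(d,s) = d \cdot s$. For $s = -\infty$ I adopt the same convention as in Example~\ref{EX:data_binding_prod}, namely $d \times (-\infty) = -\infty$ for every $d \in \RR$.

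First I take $e = [\pi_1 \oplus \dots \oplus \pi_k]$ with $\pi_i \in \PP_{\FF}$. As in the proof of Lemma~\ref{LEMMA:bounded_sum_property_ff_arc_times_positive}, I reorder by commutativity and write $e = [s_1 \oplus \dots \oplus s_l \oplus \bind{s_{l+1}} \oplus \dots \oplus \bind{s_k}]$. The constant part collapses to the single constant $c = \max\{s_1,\dots,s_l\} \in S$, which is one primitive expression. Next I discard every binding term with $s_j = -\infty$, since it contributes the function constantly $-\infty = \zero$, which is neutral for $\max$. If nothing remains, the whole sum is represented by $c$, or by $\zero$ if $l = 0$.

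The key step concerns the remaining binding terms with real coefficients $s_{l+1},\dots,s_m$. I put $\sigma^+ = \max_j s_j$ and $\sigma^- = \min_j s_j$ and claim that, for every $d \in \RR$,
\[
\max_{j} s_j d = \max\{\sigma^+ d,\ \sigma^- d\}.
\]
To prove this I split into three cases. For $d \ge 0$ the map $s \mapsto sd$ is monotone non-decreasing, so the maximum is $\sigma^+ d$, and $\sigma^- d \le \sigma^+ d$. For $d < 0$ the map is non-increasing, so the maximum is $\sigma^- d$, and $\sigma^+ d \le \sigma^- d$. For $d = 0$ all terms are $0$. Geometrically, a maximum of linear functions through the origin depends only on the two extreme slopes. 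This is exactly where non-additivity of $\bb^{\RR}_{\times}$ is circumvented: negative data values reverse the order, so I keep both extremal coefficients instead of only the $\oplus$-sum.

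Combining the pieces, $\beh{e} = \beh{c \oplus \bind{\sigma^+} \oplus \bind{\sigma^-}}$, with absent terms dropped. This is a sum of at most $3 \le 5$ elements of $\PP_{\FF}$, so $\PP_{\FF}$ has the bounded-sum property with $B(\PP_{\FF}) \le 3 \le 5$. The construction is explicit and therefore effective, which is what the subsequent remark needs. The only delicate point I expect is the treatment of $-\infty$ coefficients and the $-\infty$ conventions for negative data. If the paper's convention for $d \cdot (-\infty)$ with $d < 0$ differed, that would be the step to revisit, possibly at the cost of one or two extra summands, which the bound $5$ still leaves room for.
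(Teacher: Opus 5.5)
Your proof is correct. It streamlines the paper's argument rather than reproducing it.

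The paper also collapses the constants to their maximum and discards coefficients equal to $-\infty$ without loss of generality. For the binding terms, however, it first splits the coefficients into a nonnegative group and a negative group. It then replaces each group by $\bind{\max_i s_i} \oplus \bind{\min_i s_i}$, which leaves up to four binding terms plus one constant, hence the bound $5$.

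Your key observation is that $\max_j s_j d = \max\{\sigma^+ d, \sigma^- d\}$ holds for arbitrary real coefficients. The case distinction that matters is on the sign of the data value $d$, not on the signs of the coefficients. The paper's sign split therefore buys nothing: of its four coefficients, only the overall maximum and the overall minimum ever contribute. You get $B(\PP_{\FF}) \le 3$ with a shorter proof.

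Your bound is optimal. The expression $[1 \oplus \bind{1} \oplus \bind{-1}]$ defines $d \mapsto \max\{1, |d|\}$. A sum of at most two primitive expressions is one of three kinds:
\begin{itemize}
\item constant;
\item monotone (one constant plus one binding);
\item takes a value in $\{0, -\infty\}$ at $d = 0$ (bindings only).
\end{itemize}
The function $\max\{1,|d|\}$ is none of these, so $B(\PP_{\FF}) = 3$.

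Your treatment of $-\infty$ matches the paper, which also assumes $s_i \neq -\infty$ without loss of generality. The caveat at the end of your proposal is therefore unnecessary.
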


\begin{proof}
Let $k \in \NN_{\ge 1}$ and $e = [\pi_1 \oplus \dots \oplus \pi_k]$ with $\pi_1, \dots, \pi_k \in \PP_{\FF}$.

We distinguish the following subcases:

\begin{enumerate}[label=(\alph*)]
\item All $\pi_i \in \bind{S}_{\FF}$. Write $\pi_i = \bind{s_i}$ with $s_i \in S$ (without loss of generality, $s_i \neq -\infty$, i.e., $s_i \in \RR$).  
Consider the following subcases.
\begin{itemize}
\item all $s_1, \dots, s_k \ge 0$ or all $s_1, \dots, s_k \le 0$. Let  
\[
\sigma_1 = \bigoplus_{i=1}^k s_i, \quad \sigma_2 = -\bigoplus_{i=1}^k (-s_i).
\]  
Then $\beh{e} = \beh{\bind{\sigma_1} \oplus \bind{\sigma_2}}.$
\item $s_1, \dots, s_l \ge 0$ and $s_{l+1}, \dots, s_k < 0$ for some $0 < l < k$. Define  
\[
e_{\ge 0} = \left[\bigoplus_{i=1}^l \bind{\pi_i} \right], \quad e_{<0} = \left[\bigoplus_{i=l+1}^k \bind{\pi_i}\right].
\]  
By the previous subcase, there exist $\sigma_1, \sigma_2 \in \RR$ such that 
$
\beh{e_{\ge 0}} = \beh{\bind{\sigma_1} \oplus \bind{\sigma_2}},
$ 
and $\sigma_3, \sigma_4 \in \RR$ such that 
$
\beh{e_{<0}} = \beh{\bind{\sigma_3} \oplus \bind{\sigma_4}}.
$  
Thus,  
$
\beh{e} = \beh{\bigoplus_{i=1}^4 \bind{\sigma_i}}.
$
\end{itemize}

\item All $\pi_i \in S$. Then  
\[
\sigma = \bigoplus_{i=1}^k \pi_i \in S, \quad \text{and hence } \beh{e} = \beh{\sigma}.
\]

\item $\pi_1, \dots, \pi_l \in S$ and $\pi_{l+1}, \dots, \pi_k \in \bind{S}_{\FF}$ for some $0 < l < k$. Define  
\[
e_1 = [\bigoplus_{i=1}^l \pi_i], \quad e_2 = [\bigoplus_{i=l+1}^k \pi_i].
\]  
By (b), there exists $\sigma_1 \in S$ such that $\beh{e_1} = \beh{\sigma_1}.$
By (a), there exist $\sigma_2, \dots, \sigma_K$ with $K \le 4$ such that 
$\beh{e_2} = \beh{\bigoplus_{i=2}^K \bind{\sigma_i}}.$
Hence,  
\[
\beh{e} = \beh{\sigma_1 \oplus \bind{\sigma_2} \oplus \dots \oplus \bind{\sigma_K}}.
\]
\end{enumerate}

In all subcases, there exist $1 \le B \le 5$ and $\pi_1', \dots, \pi_B' \in \PP_{\FF}$ such that  
\[
\beh{e} = \beh{\pi_1' \oplus \dots \oplus \pi_B'}.
\]
\end{proof}

\begin{lemma}[Lemma~\ref{LEMMA:wffa_constructions_with_bounded_sum}]
Assume that $\PP_{\FF}$ satisfies the bounded-sum property.
\begin{enumerate}[label=(\alph*)]
\item Let $\A_1, \A_2 \in \WFFA^{\tau}_{\Sigma, \FF}(\PP_{\FF})$. Then, there exists $\A' \in \WFFA_{\Sigma, \FF}^{\tau}(\PP_{\FF})$ such that ${\beh{\A'} = \beh{\A_1} \cdot \beh{\A_2}}$, and $|\A'|_Q = \O(|\A_1|_Q + |\A_2|_Q)$.
\item Let $\A \in \WFFA^{\tau}_{\Sigma, \FF}(\PP_{\FF})$ such that $\beh{\A}$ is proper. Then, there exists $\A^* \in \WFFA_{\Sigma, \FF}^{\tau}(\PP_{\FF})$ such that $\beh{\A^*} = \beh{\A}^*$ and $|\A^*|_Q = \O(|\A|_Q)$.
\end{enumerate}
\end{lemma}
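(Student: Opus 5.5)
Given purely transition-weighted $\A_1, \A_2$, I would not pass through the general normalization of Lemma~\ref{LEMMA:init_and_final_normalization_for_arbitrary_expressions}, since that one introduces $|I|\cdot|Q|$ states. Instead I would use the $\oplus$-absorbing normalization of Lemma~\ref{LEMMA:normalize_wffa}. Because all initial and final weights equal $\one$, absorbing them produces transition weights of the form $[\bigoplus_j (\one \otimes \pi_j)]$ or $[\bigoplus_j (\pi_j \otimes \one)]$ with $\pi_j \in \PP_{\FF}$. These are equivalent to plain sums $[\pi_1 \oplus \dots \oplus \pi_k] \in \clp{\PP_{\FF}}$, as observed in Lemma~\ref{LEMMA:closure_kleene_star_for_e}(b).

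\textbf{Part (a).} First finally normalize $\A_1$ by adding one fresh state $q_F$, and initially normalize $\A_2$ by adding one fresh state $q_I$. The value on $\eps$ needs care: in the purely transition-weighted case it lies in $\{\zero, \one\}$. If it is $\one$, I keep the old final (resp. initial) states alongside the new one. The cleaner option is to avoid normalization altogether and glue directly: for every transition $(p,a,f)$ of $\A_1$ with $f \in F_1$, add transitions $(p,a,i)$ for each $i \in I_2$. If $\beh{\A_1}(\eps)=\one$, declare $I_2$ initial as well; if $\beh{\A_2}(\eps)=\one$, keep $F_1$ final. This yields $|\A_1|_Q + |\A_2|_Q + \O(1)$ states.

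Next, where several transitions with the same source, label and target arise, merge them into one transition whose weight lies in $\clp{\PP_{\FF}}$. By the bounded-sum property, each such weight equals $\beh{\pi'_1 \oplus \dots \oplus \pi'_b}$ with $b \le B(\PP_{\FF})$. Finally apply the splitting construction from the proof of Lemma~\ref{LEMMA:plus_closed_expressiveness}. It replaces each state $q$ by $\dist{q}_E \le B$ copies, so the state count grows only by the constant factor $B$, giving $\O(|\A_1|_Q+|\A_2|_Q)$. Inspection shows all initial and final weights stay $\one$, so the result is in $\WFFA^{\tau}_{\Sigma,\FF}(\PP_{\FF})$.

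\textbf{Part (b).} Since $\beh{\A}$ is proper, complete normalization (Lemma~\ref{LEMMA:normalize_wffa}(c)) adds two states and produces weights in $\clp{\PP_{\FF}}$ as above. Applying Lemma~\ref{LEMMA:cauchy_product_kleene_star_normalized_wffa}(b) merges $q_I$ and $q_F$; the merged state is initial and final with weight $\one$, giving $\F^*(\eps)=\one$. Parallel transitions created by the merge are again combined into $\clp{\PP_{\FF}}$ weights. Then bounded-sum reduction followed by Lemma~\ref{LEMMA:plus_closed_expressiveness} gives $\O(|\A|_Q)$ states.

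\textbf{Main obstacle.} The delicate point is the size accounting in Lemma~\ref{LEMMA:plus_closed_expressiveness}. Its construction makes initial states out of all of $Q\times\{1\}$, which in the paper's appendix form would break pure transition-weightedness; I would fix this by taking $I' = I\times\{1\}$. The copies $(q,i)$ must also be correctly reachable: the copy index records which summand was chosen on the incoming transition, and every copy of $p$ must have all outgoing transitions of $p$. Checking that this gives a weight-preserving bijection of runs needs the bound $\dist{q}_E \le B$ and distributivity. This is where the bounded-sum property is genuinely used; without it the number of copies would be unbounded, as the affine case (Lemma~\ref{LEMMA:affine_expressions_blowup}) shows.
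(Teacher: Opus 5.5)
Modulo one caveat that the paper shares, your proposal is correct and is essentially the paper's proof. You build a purely transition-weighted automaton with linearly many states and weights in $\clp{\PP_{\FF}}$ via the standard Cauchy/star constructions, compress each weight with the bounded-sum property, and then apply the splitting of Lemma~\ref{LEMMA:plus_closed_expressiveness}, whose initial set should indeed read $I\times\{1\}$. Your direct gluing in (a) is a variant that merely saves the merged state.

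The caveat is that $\beh{\A}(\eps)=\bigoplus_{q\in I\cap F}\one$ lies in $\{\zero,\one\}$ only when $\one\oplus\one=\one$. So both your argument and the paper's tacitly need an idempotent semiring. Without it the size bound genuinely fails: over $(\NN,+,\cdot)$, take $\A_1=\A_2$ with $I=F=Q$, $|Q|=n$, $T=\emptyset$; the product has value $n^2$ on $\eps$, which forces $|I'\cap F'|=n^2$. Your gluing relies on idempotency a second time. Keeping $F_1$ final duplicates every split $u\cdot\eps$, which is already realized by runs ending in $I_2\cap F_2$. The paper's normalize-and-identify construction counts each split once.
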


\begin{proof}
\begin{enumerate}[label=(\alph*)]
\item 
First, using the construction of the proof of Corollary \ref{COR:cauchy_product_kleene_star_full_wffa} for the Cauchy product, we build $\hat{\A} \in \WFFA^{\tau}_{\Sigma, \FF}(\EE_{\FF})$ with $|\hat{\A}|_Q = |\A_1|_Q + |\A_2|_Q + \O(1)$ such that $\beh{\hat{\A}} = \beh{\A_1} \cdot \beh{\A_2}$. Note that, since $\A_1$ and $\A_2$ are purely transition-weighted, we can assume that the transition weights of $\hat{\A}$ are $\FF$-expressions of the form $e = [\pi_1 \oplus {\dots} \oplus \pi_k]$ where $\pi_1, \dots, \pi_k \in \PP_{\FF}$, i.e., $e \in \clp{\PP_{\FF}}$. Since $\PP_{\FF}$ satisfies the bounded-sum property, every such $e$ can be replaced with an expression of the form $e' = [\pi_1' \oplus {\dots} \oplus \pi'_b]$ with $b \le B(\PP_{\FF})$, i.e., $|e'| \le B(\PP_{\FF})$. After all these expression adjustments we obtain $\hat{\A}_1 \in \WFFA_{\Sigma, \FF}^{\tau}(\clp{\PP_{\FF}})$ with $\beh{\hat{\A}_1} = \beh{\hat{\A}}$, $|\hat{\A}_1|_Q = |\hat{\A}|_Q$ and $|\hat{\A}_1|_{\EE_{\FF}} \le B(\PP_{\FF})$. Then, using Lemma~\ref{LEMMA:plus_closed_expressiveness}, we can construct $\A' \in \WFFA_{\Sigma, \FF}^{\tau}(\PP_{\FF})$ with $\beh{\A'} = \beh{\hat{\A}_1}$. It follows from the proof of Lemma~\ref{LEMMA:plus_closed_expressiveness} that $\A'$ has $\O(|\hat{\A}_1|_Q \cdot B(\PP_{\FF}))$ states. Since $B(\PP_{\FF})$ is a constant, we obtain that $\A'$ has $\O(|\hat{\A}_1|_Q) = \O(|\A_1|_Q + |\A_2|_Q)$ states.
\item The proof is similar to (a). Again, using the construction of Corollary \ref{COR:cauchy_product_kleene_star_full_wffa} for the Kleene star, we build $\hat{\A} \in \WFFA_{\Sigma, \FF}^{\tau}(\EE_{\FF})$ such that $\beh{\hat{\A}} = \beh{\A}^*$ and $|\hat{\A}|_Q = |\A|_Q + \O(1)$. Again, we notice that $\hat{\A} \in \WFFA_{\Sigma, \FF}^{\tau}(\clp{\PP_{\FF}})$. The rest of the proof proceeds using the same argument as in the proof of (a).
\qedhere
\end{enumerate}
\end{proof}

\subsubsection{Affine Expressions (Subsection \ref{SUBSEC:affine_expressions})}

\begin{lemma}[Lemma~\ref{LEMMA:wffa_with_affine_expressions_qfls_inclusion}]
Let $\FF = \FF_{\arc, \times}$ and $\Sigma = \{\bot\}$. Then,
\[
\beh{\WFFA_{\Sigma, \FF}(\PP_{\FF})} \subsetneq \beh{\WFFA_{\Sigma, \FF}(\PP_{\FF})} \subsetneq \beh{\WFFA_{\Sigma, \FF}(\EE_{\FF})}.
\]
\end{lemma}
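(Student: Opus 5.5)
The statement as typeset repeats $\PP_{\FF}$ in the middle position. It is the appendix restatement of Lemma~\ref{LEMMA:wffa_with_affine_expressions_qfls_inclusion}, so I read the chain as
\[
\beh{\WFFA_{\Sigma, \FF}(\PP_{\FF})} \subsetneq \beh{\WFFA_{\Sigma, \FF}(\AA_{\FF})} \subsetneq \beh{\WFFA_{\Sigma, \FF}(\EE_{\FF})}.
\]
The two non-strict inclusions come first and are immediate from $\PP_{\FF} \subseteq \AA_{\FF} \subseteq \EE_{\FF}$. Every $\PP_{\FF}$-weighted automaton is already $\AA_{\FF}$-weighted, and every $\AA_{\FF}$-weighted automaton is $\EE_{\FF}$-weighted.

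For strictness of the first inclusion I will reuse Lemma~\ref{LEMMA:primitive_counterexample} directly. The QFL $\F(\bot_{d_1,\dots,d_n}) = n + \sum_i d_i$ is recognized by the one-state automaton of Figure~\ref{FIG:primitive_counterexample}. Its loop weight $[\bind{1.0} \otimes 1.0]$ lies in $\AA_{\FF}$, so $\F \in \beh{\WFFA_{\Sigma,\FF}(\AA_{\FF})}$. Lemma~\ref{LEMMA:primitive_counterexample} (via the cube argument of Lemma~\ref{LEMMA:primitive_wffa_cube_property}) gives $\F \notin \beh{\WFFA_{\Sigma,\FF}(\PP_{\FF})}$.

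For the second inclusion I will use a separating QFL that is nonzero on exactly one finance word. Let $\F'(\bot_{0.5}) = 1$ and $\F'(w) = -\infty$ otherwise. It is recognized over $\EE_{\FF}$ by the two-state automaton $q_0 \to q_1$ with initial and final weight $0.0$ and transition weight $[\bind{1.0} = 0.5] \otimes 1.0$. This works because $\bb_\times(d,1.0) = d$, so the constraint evaluates to $0.0$ (the semiring $\one$) exactly when $d = 0.5$ and to $-\infty$ otherwise.

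The main step is showing $\F' \notin \beh{\WFFA_{\Sigma,\FF}(\AA_{\FF})}$. Suppose $\A$ is $\AA_{\FF}$-weighted with $\beh{\A} = \F'$, and restrict attention to words of length one. Each run of length one has weight $\wt_I(q) + \beh{\alpha}(d) + \wt_F(q')$ with $\alpha \in \AA_{\FF}$. Each such weight is either identically $-\infty$ (when some component is $\zero = -\infty$, including the coefficient cases $\bind{-\infty}$ and constant $-\infty$) or an affine function $a d + b$ with $a, b \in \RR$. Hence $\beh{\A}(\bot_d)$ is either constantly $-\infty$ or a maximum of finitely many real affine functions of $d$, which is real-valued for every $d$. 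The constant case contradicts $\F'(\bot_{0.5}) = 1$. The real-valued case contradicts $\F'(\bot_{1.0}) = -\infty$.

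The delicate point is this case distinction: I must check that no run weight can be $-\infty$ at some $d$ while finite at another. This holds because $\bb_\times(d,s) = -\infty$ only when $s = -\infty$, independently of $d \in \RR_{\ge 0}$, and because I also have to treat $d = 0$ with the convention $0 \times (-\infty) = -\infty$.
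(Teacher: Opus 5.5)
Your proposal is correct and follows the paper's proof essentially step for step. The inclusions come from $\PP_{\FF} \subseteq \AA_{\FF} \subseteq \EE_{\FF}$, and the first strictness uses the QFL $n + \sum_i d_i$ with Lemma~\ref{LEMMA:primitive_counterexample}. The second strictness uses a QFL supported only on $\bot_{0.5}$, and the paper likewise argues that a run with finite weight at $0.5$ has a real affine weight, which forces finite values at every other $d$.

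Your extra factor $\otimes 1.0$ in the transition weight is actually needed. The paper's automaton uses transition weight $[\bind{1.0} = 0.5]$ with initial and final weights $0$, which yields $0$ rather than $1$ on $\bot_{0.5}$.
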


\begin{proof}
Clearly, $\beh{\WFFA_{\Sigma, \FF}(\PP_{\FF})} \subseteq \beh{\WFFA_{\Sigma, \FF}(\AA_{\FF})} \subseteq \beh{\WFFA_{\Sigma, \FF}(\EE_{\FF})}$.

\begin{itemize}
\item To demonstrate that $\beh{\WFFA_{\Sigma, \FF}(\PP_{\FF})} \neq \beh{\WFFA_{\Sigma, \FF}(\AA_{\FF})}$, consider the QFL $\F: \DD \Sigma^* \to S$ with $\F(\bot_{d_1, \dots, d_n}) = n + d_1 + \dots + d_n$. By Lemma~\ref{LEMMA:primitive_counterexample}, $\F \notin \beh{\WFFA_{\Sigma, \FF}(\PP_{\FF})}$. Moreover, the WFFA $\A_{\F}$ from the proof of Lemma~\ref{LEMMA:primitive_counterexample} is in $\WFFA_{\Sigma, \FF}(\AA_{\FF})$, which implies $\F \in \beh{\WFFA_{\Sigma, \FF}(\AA_{\FF})}$.
\item Now we show that $\beh{\WFFA_{\Sigma, \FF}(\AA_{\FF})} \neq \beh{\WFFA_{\Sigma, \FF}(\EE_{\FF})}$. Let $\F': \DD \Sigma^* \to S$ be defined, e.g., by $\F'(\bot_{0.5}) = 1$ and $\F'(w) = \zero$ for all $w \in \DD \Sigma^* \setminus \{\bot_{0.5}\}$ (recall that $\bot_{0.5} = (\bot, 0.5)$). Note that $\F'$ is recognizable by $\A_{\F'} \in \WFFA_{\Sigma, \FF}(\EE_{\FF})$ defined as
\[
\A_{\F'} = (\{q_1, q_2\}, \{q_1\}, \{(q_1, \bot, q_2)\}, \{q_2\}, [q_1 \mapsto 0], \wt_T, [q_2 \mapsto 0])
\]
where $\wt_T(q_1, \bot, q_2) = [\bind{1.0} = 0.5]$. Suppose that there exists $\A \in \WFFA_{\Sigma, \FF}(\AA_{\FF})$ such that $\beh{\A} = \F'$. Then, there exist expressions $e_1, \dots, e_k \in \AA_{\FF}$ and constants $c_1, \dots, c_k \in S$ such that $\beh{\A}(w_d) = \max_{i \in [k]} (c_i + \beh{e_i}(d))$ for all $d \in \DD$. Without loss of generality, assume that, for all $i \in [k]$, $e_i = [\bind{a_i} \otimes b_i]$ and $\beh{\A}(\bot_{0.5}) = a_j + \beh{e_j}(0.5)$ for some $j \in [k]$. Then, $\beh{\A}(\bot_{0.5}) = a_j \cdot 0.5 + (b_j + c_j) = 1$, which implies $a_j, b_j, c_j \in \RR$. Then, for all $d \in \DD \setminus \{0.5\}$, $-\infty = \beh{\A}(w_d) \ge (a_j \cdot d + (b_j + c_j)) \in \RR$. This is a contradiction. Thus, $\F' \notin \WFFA_{\Sigma, \FF}(\AA_{\FF})$.
\end{itemize}
\end{proof}

\begin{lemma}[Lemma~\ref{LEMMA:bounded_sum_for_affine_expressions}]
Let $\FF = \FF_{\arc, \times}$. Then $\AA_{\FF}$ does not satisfy the bounded-sum property.
\end{lemma}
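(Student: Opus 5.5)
We need to show that $\AA_{\FF}$ does not have the bounded-sum property for $\FF = \FF_{\arc,\times}$. The plan follows the sketch in the main text: use the family $\alpha_i = [\bind{2i} \otimes (-i^2)]$, with $\beh{\alpha_i}(x) = 2ix - i^2$. Suppose $\AA_{\FF}$ had the property with some constant $B$. Take $N = B+1$ and $f_N = \beh{\alpha_1 \oplus \dots \oplus \alpha_N}$. Then there would be $b \le B$ affine expressions $e'_1,\dots,e'_b$ with $\beh{e'_1 \oplus \dots \oplus e'_b} = f_N$. We derive a contradiction by counting the distinct slopes of $f_N$.

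\textbf{Step 1: the shape of $f_N$.} For $x \in \DD = \RR_{\ge 0}$ we have $2ix - i^2 = x^2 - (x-i)^2$. Hence $f_N(x) = x^2 - \min_{i\in[N]} (x-i)^2$, so the maximizing index is the integer $i \in [N]$ nearest to $x$. On the open interval $J_i = (i - \tfrac12, i + \tfrac12)$, intersected with $\DD$ and taking $J_N = (N-\tfrac12,\infty)$, $\alpha_i$ is the unique maximizer. Each $J_i$ is a non-degenerate subinterval of $\DD$, since $i \ge 1$ gives $i - \tfrac12 > 0$. Therefore $f_N$ is continuous and piecewise affine, with slope exactly $2i$ on $J_i$, and this yields $N$ pairwise distinct slopes.

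\textbf{Step 2: the shape of a $b$-term sum.} Every affine expression defines a function $d \mapsto s d + s'$ with $s, s' \in \RR \cup \{-\infty\}$, using the conventions of $\bb_\times$; a primitive expression is the special case with one term trivial. Any term whose value is $-\infty$ is identically $-\infty$ on $\DD$ and can be discarded. This includes the case $s = -\infty$: by the convention $d \times (-\infty) = -\infty$ for all $d \ge 0$, including $d = 0$, the term $\bind{-\infty}$ is constantly $-\infty$. The remaining terms are genuine real affine functions $x \mapsto a_j x + c_j$. So $g = \max_{j \le b}(a_j x + c_j)$ is a maximum of at most $b$ affine functions, and on any open interval where $g$ is affine, its slope is one of the $a_j$.

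\textbf{Step 3: contradiction.} On each $J_i$ we have $g = f_N$, so $g$ has slope $2i$ there. Hence $\{2,4,\dots,2N\} \subseteq \{a_1,\dots,a_b\}$, which gives $N \le b \le B$. This contradicts $N = B+1$.

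The main point requiring care is Step 2. We must check that $-\infty$ coefficients and the data-binding conventions cannot create extra "slopes" or exotic piecewise behavior. This is handled by the convention $d\times(-\infty) = -\infty$ on all of $\DD$, which makes every non-real term vanish under $\max$.
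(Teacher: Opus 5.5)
Your proof is correct and follows essentially the paper's argument: you use the same family $\alpha_i = [\bind{2i} \otimes (-i^2)]$, discard the terms that are identically $-\infty$, and use the fact that each $\alpha_i$ is the unique maximizer on a non-degenerate interval. That fact forces any sum representing $\beh{\alpha_1 \oplus \dots \oplus \alpha_N}$ to contain a term agreeing with $\alpha_i$ there. The only difference is cosmetic: you count distinct slopes, using the pigeonhole fact that an affine piece of a finite maximum of affine functions has one of their slopes. The paper instead matches both the slope and the intercept of each $\alpha_j$ to some $\alpha'_{i_j}$.
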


\begin{proof}
We show that, for every $N \in \NN_{\ge 1}$, there exist affine $\FF$-expressions $\alpha_1, \dots, \alpha_N \in \AA_{\FF}$ such that, for all $n \in [N - 1]$ and $\alpha'_1, \dots, \alpha'_n \in \AA_{\FF}$, $\beh{\alpha_1 \oplus {\dots} \oplus \alpha_N} \neq \beh{\alpha'_1 \oplus {\dots} \oplus \alpha'_n}$.

For every $i \in \NN_{\ge 1}$, let $\alpha_i = [\bind{2 \cdot i} \otimes (-i^2)]$. Let $N \in \NN_{\ge 1}$ and $A_N = [\alpha_1 \oplus {\dots} \oplus \alpha_N]$. Then, for every $d \in \DD$, 
\begin{equation}
\label{EQ:affine_expression_lambda_n_semantics}
\beh{A_N}(d) = \begin{cases}
\beh{\alpha_1}(d), & \text{if } d \in [0, \frac{3}{2}], \\
\beh{\alpha_i}(d), & \text{if } d \in [i-\frac{1}{2}, i+\frac{1}{2}] \text{ for } i \in \{2, \dots, N - 1\}, \\
\beh{\alpha_N}(d), & \text{if } d \in [N - \frac{1}{2}, \infty).
\end{cases}
\end{equation}
Suppose that there exist $n \in [N - 1]$ and $\alpha_1', \dots, \alpha_n' \in \AA_{\FF}$ such that $\beh{A'_n} = \beh{A_{N}}$ where $A'_n = \alpha_1' \oplus {\dots} \oplus \alpha_n'$. Assume that, for all $i \in [n]$, $\alpha_i' = [\bind{a_i} \otimes b_i]$ where $a_i, b_i \in \RR$ (if $a_i = -\infty$ or $b_i = -\infty$, then $\beh{\alpha_i} = \beh{-\infty}$ and therefore $\alpha'_i$ can be excluded from $A'_n$).

For all $i \in [n]$ and $d \in \DD$, $\beh{A'_n}(d) = \beh{\alpha'_i}(d)$ if and only if $a_i \cdot d + b_i \ge a_j \cdot d + b_j$ for all $j \in [n]$. These constraints can be viewed as a system of $n$ linear inequalities in $d$. The solution to this system is either the empty set, an interval $[x, y]$ with $x, y \in \DD$ and $x \le y$, or an interval $[x, \infty)$ for some $x \in \DD$. Since $\beh{A'_n}$ is a continuous function, in the case of the empty set or a single-point interval $[x, x]$, $\alpha_i$ does not contribute to the value of $\beh{A_n'}$ and can therefore be eliminated from $A'_n$. Thus, without loss of generality, we can assume that $\beh{A'_n}(d) = \beh{\alpha'_i}(d)$ if and only if $d \in I_i$, where either $I_i = [x_i, y_i]$ for some $x_i, y_i \in \DD$ with $x_i < y_i$ or $I_i = [x_i, \infty)$ for some $x_i \in \DD$.
Furthermore, $\bigcup_{i \in [n]} I_i = \DD$. Note also that, for every $j \in [N]$, there exists $i_j \in [n]$ such that $\beh{\alpha_j}(d) = \beh{\alpha'_{i_j}}(d)$ for all $d$ belonging to some non-empty interval $(\hat{x}_{i_j}, \hat{y}_{i_j}) \subseteq I_{i_j}$. This implies that $a_{i_j} = 2 \cdot j$ and $b_{i_j} = -j^2$. Thus, the set $\{\alpha_1, \dots, \alpha_n\}$ must contain at least $N$ distinct elements. A contradiction.
\end{proof}

\begin{lemma}[Lemma~\ref{LEMMA:affine_expressions_blowup}]
Let $\FF = \FF_{\arc, \times}$ and $\Sigma = \{\bot\}$.
For every $i \in \NN_{\ge 1}$, let $\varphi_i: \DD \to \RR$ be defined by $\varphi_i(x) = 2 \cdot i \cdot x - i^2$ for all $x \in \DD$. For every $N \in \NN_{\ge 1}$, let $f_N: \DD \to R$ be defined for all $x \in \DD$ by $f_N(x) = \max_{i \in [N^2]} \varphi_i(x)$. Let $\F_N: \DD \Sigma^* \to S$ be defined for all $d \in \DD$ by $\F_N(w_d) = f_N(d)$ and, for all $w \in \DD \Sigma^*$ with $|w| \neq 1$, by $\F_N(w) = \zero$. Then, the following hold:
\begin{enumerate}[label=(\alph*)]
\item There exists $\A_N \in \WFFA^{\tau}_{\Sigma, \FF}(\AA_{\FF})$ such that $|\A_N|_Q = 2 \cdot N$ and $\beh{\A_N} = \F_N$.
\item For every $\A \in \WFFA_{\Sigma, \FF}^{\tau}(\AA_{\FF})$ with $\beh{\A} = \F_{N} \cdot \F_{N}$, we have $|\A|_Q \ge 3 \cdot N^{\frac{4}{3}}$
\end{enumerate}
\end{lemma}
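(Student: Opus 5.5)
For part (a) I would use an explicit two-layer automaton; for part (b) I would first show that any automaton for $\F_N \cdot \F_N$ needs $N^4$ distinct runs of length $2$, and then bound the number of such runs in terms of $|\A|_Q$.

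\emph{Part (a).} Take $N$ fresh states $p_1,\dots,p_N$, all initial, and $N$ fresh states $r_1,\dots,r_N$, all final, with all initial and final weights equal to $\one = 0.0$. For every pair $(a,b) \in [N]^2$, add the transition $(p_a,\bot,r_b)$ with weight $\bigl[\bind{2i}\otimes(-i^2)\bigr] \in \AA_{\FF}$, where $i=(a-1)N+b$. Then $\A_N$ is purely transition-weighted and has $2N$ states. No transition enters a $p_a$ or leaves an $r_b$, and $I\cap F=\emptyset$. Hence every run has length exactly $1$, and the runs correspond bijectively to $i\in[N^2]$. The behavior on $\bot_d$ is therefore $\max_i \varphi_i(d) = f_N(d)$, and it is $\zero$ on every other word.

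\emph{Part (b), step 1: many distinct runs.} Let $\A$ be purely transition-weighted over $\AA_{\FF}$ with $\beh{\A}=\F_N\cdot\F_N$, so that $\beh{\A}(\bot_{x,y}) = f_N(x)+f_N(y)$. Each run of length $2$ has weight that is either identically $-\infty$ or an affine function $\alpha x+\beta y+\gamma$ with real coefficients, real everywhere on $\DD^2$. Fix $(i,j)\in[N^2]^2$ with $i,j\ge 2$, and consider the open box $U_{ij}=(i-\tfrac12,i+\tfrac12)\times(j-\tfrac12,j+\tfrac12)$. By the computation in the proof of Lemma~\ref{LEMMA:bounded_sum_for_affine_expressions}, the target function equals $\varphi_i(x)+\varphi_j(y)$ on $U_{ij}$. (For $i=1$ or $j=1$, use an analogous open box inside the interval where $\varphi_1$ dominates.) The sets $\Delta_\varrho\cap U_{ij}$, where the maximum is attained by run $\varrho$, are finitely many relatively closed sets covering $U_{ij}$. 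So, exactly as in Lemma~\ref{LEMMA:primitive_wffa_cube_property}, one of them contains an open ball. On that ball, the affine weight of $\varrho$ coincides with $\varphi_i(x)+\varphi_j(y)$, so the two affine functions are equal and $\varrho$ has coefficients $(2i,2j)$. Distinct pairs give distinct coefficient vectors, hence at least $N^4$ distinct runs of length $2$ with real weights.

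\emph{Part (b), step 2: counting.} Call a run good if its weight is real. Let $I'$ be the set of first states, $H'$ the set of middle states and $F'$ the set of last states of good length-$2$ runs. Since $T\subseteq Q\times\Sigma\times Q$, a good run is determined by its state triple, so $N^4\le |I'|\,|H'|\,|F'|$. The main obstacle is proving that $I'$, $H'$ and $F'$ are pairwise disjoint, for which I will use that $\beh{\A}$ vanishes on words of length $\neq 2$.
\begin{itemize}
\item If a middle state $q_1$ of a good run $q_0\to q_1\to q_2$ were final, then $q_0\to q_1$ would be a run of length $1$. Its transition weight is affine with real coefficients and the initial and final weights are $\one$, so its weight is real. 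Since the behavior is the maximum over runs, it would be $\neq\zero$ on some word of length $1$, a contradiction.
\item Dually, a middle state that is initial yields a good run $q_1\to q_2$ of length $1$, again a contradiction.
\item If $q\in I'\cap F'$, choose good runs $q'\to q'_1\to q$ and $q\to q_1\to q_2$. Their concatenation is a good run of length $4$, forcing $\beh{\A}\neq\zero$ on some word of length $4$, a contradiction.
\end{itemize}
Hence $|\A|_Q\ge|I'|+|H'|+|F'|\ge 3(|I'||H'||F'|)^{1/3}\ge 3N^{4/3}$ by the arithmetic--geometric mean inequality.
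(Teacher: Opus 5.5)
Your proposal is correct and follows essentially the same route as the paper. Part (a) uses the identical two-layer automaton with $2N$ states. Part (b) uses the same chain of steps: extract $N^4$ distinct length-$2$ runs (one per pair $(i,j)$, forced by the region where $\varphi_i(x)+\varphi_j(y)$ strictly dominates), bound them by $N^4 \le |I|\,|H|\,|F|$, and conclude with the arithmetic--geometric mean inequality. Two of your steps are more careful than the paper's: the finite closed-cover argument, which the paper replaces by a loose claim that a run with slopes $(2i,2j)$ exists, and your direct proof that the sets $I'$, $H'$, $F'$ are pairwise disjoint, which the paper replaces by an unjustified ``without loss of generality'' trimming yielding $T \subseteq (I\times\Sigma\times H)\cup(H\times\Sigma\times F)$.
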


\begin{proof}
\begin{enumerate}[label=(\alph*)]
\item For each $i \in \NN_{\ge 1}$, let $e_i = [\bind{2 \cdot i} \otimes (-i^2)] \in \AA_{\FF}$. Then, $\beh{e_i} = \varphi_i$.
We let 
$
\A_{N} = (Q, I, T, F, \wt_I, \wt_T, \wt_F)
$
where:
\begin{itemize}
\item $Q = \{q_1, \dots, q_{2 \cdot N}\}$, $I = \{q_1, \dots, q_N\}$, $F = \{q_{N + 1}, \dots, q_{2 \cdot N}\}$;
\item $T = \{(q_i, \bot, q_{j + N}) \mid i, j \in [N]\}$;
\item $\wt_T(q_i, \bot, q_{j + N}) = e_{N \cdot (i - 1) + j}$ for all $i, j \in [N]$;
\item $\wt_I(q) = \one$ for all $q \in I$; $\wt_F(q) = \one$ for all $q \in F$.
\end{itemize}
It is easy to see that $\beh{\A_{N}} = \F_{N}$. 
\item Let $\F = \F_{N} \cdot \F_{N}$. Then, for all $x,y \in \DD$,
\[
\F(\bot_{x,y}) = \max \{\psi_{i,j}(x, y) \mid i, j \in [N^2]\}
\]
where, for all $i, j \in [N]$, $\psi_{i,j}: \DD^2 \to \RR$ is defined by $\psi_{i,j}(x, y) = \varphi_i(x) + \varphi_{j}(y)$. Fix any $i, j \in [N^2]$. Then, for all $i', j' \in [N^2]$, such that $i \neq i'$ or $j \neq j'$, and for $x = i$ and $y = j$, it is easy to check that
\[
\psi_{i,j}(x, y) - \psi_{i',j'}(x, y) = (i - i')^2 + (j - j')^2 > 0,
\]
which implies $\F(\bot_{x, y}) = \psi_{i,j} (x, y)$ for $x = i$ and $y = j$. Since $\psi_{i, j}$ is a continuous function, there exist $r_{i,j}^x, r_{i, j}^y > 0$ such that $\F(\bot_{\tilde{x}, \tilde{y}}) = \psi_{i, j}(\tilde{x}, \tilde{y})$ for all $\tilde{x} \in (i - r_{i, j}^x, i + r_{i, j}^x)$ and all $\tilde{y} \in (j - r_{i, j}^y, j + r_{i,j}^y)$, i.e., for infinitely many $\tilde{x}$ and infinitely many $\tilde{y}$.

Let $\A = (Q, I, T, F, \wt_I, \wt_T, \wt_F) \in \WFFA_{\Sigma, \FF}(\AA_{\FF})$ such that $\beh{\A} = \F$. Without loss of generality, we can make the following assumptions:
\begin{itemize}
\item $\zero \notin \wt_I(I) \cup \wt_F(F)$ and, for every $t \in T$, $\wt_T(t) = [\bind{s_t} \otimes s'_t]$ with $s_t \neq \zero$ and $s'_t \neq \zero$;
\item for every initial state $q \in I$, there is a valid run in $\A$, which starts in $q$;
\item for every final state $q \in F$, there is a valid run in $\A$, which ends in $q$;
\item every state $q \in Q$ belongs to a valid run of $\A$.
\end{itemize}

Note that $\A$ has valid runs only on finance words $w \in \DD \Sigma^*$ with $|w| = 2$. Then, clearly, $I \cap F = \emptyset$. Let $H = Q \setminus (I \cup F)$. Using the assumptions above, we can easily check that $T \subseteq (I \times \Sigma \times H) \cup (H \times \Sigma \times F)$ (otherwise, $\A$ would have a valid run on a finance word $w$ with $|w| \neq 2$). Then, $\A$ can have at most $|I| \cdot |H| \cdot |F|$ runs. Since for every $i,j \in [N^2]$,
\[
\beh{\A}(\bot_{x, y}) = \psi_{i, j}(x, y) = 2 \cdot i \cdot x + 2 \cdot j \cdot y - i^2 - j^2
\]
for infinitely many $x \in \DD$ and infinitely many $y \in \DD$, for every $i, j \in [N^2]$ there exists a run $\varrho = t_1 t_2$ where $t_1, t_2 \in T$, $s_{t_1} = 2 \cdot i$ and $s_{t_2} = 2 \cdot j$. Thus, $\A$ has at least $N^4$ runs and therefore
\[
N^4 \le |I| \cdot |H| \cdot |F|.
\]
Then, by the arithmetic-geometric mean inequality,
\[
|\A|_Q = |I| + |H| + |F| \ge 3 \cdot (|I| \cdot |H| \cdot |F|)^{\frac{1}{3}} \ge 3 \cdot N^{\frac{4}{3}}.
\]
\qedhere
\end{enumerate}
\end{proof}

\subsubsection{Monomials (Subsection~\ref{SUBSEC:monomials})}

For any subset $X \subseteq \DD$ and $c = \langle c_1, c_2 \rangle \in S^2$, let $\alpha^{c}_X: \DD \to S$ be defined for all $d \in \DD$ by 
\[
\alpha^{c}_X(d) =
\begin{cases}
	c_1 \cdot d + c_2, & \text{if } d \in X, \\
	\zero, & \text{otherwise}.
\end{cases}
\]
Here, we let $c_1 \cdot d + c_2 = \zero$ whenever $c_1 = \zero$ or $c_2 = \zero$. For $x \in \DD$ and $y \in \DD \cup \{\infty\}$, let $(x, y) = \{z \in \DD  \mid  x < z < y\}$.
For $k \in \NN$ and $\delta = \{d_1, \dots, d_k\} \subseteq \DD$ with $0 < d_1 < {\dots} < d_k$, let 
\[
\P(\delta) = \{ \{0\}, (0, d_1), \{d_1\}, (d_1, d_2), \dots, (d_{k-1}, d_k), \{d_k\}, (d_k, \infty)\} \subseteq 2^{\DD},
\]
the partition of $\DD$ defined by $\delta$. Note that $\P(\emptyset) = \{\{0\}, (0, \infty)\}$.
For $f_1, f_2: \DD \to S$, let $(f_1 \oplus f_2): \DD \to S$ be defined for all $x \in \DD$ by $(f_1 \oplus f_2)(x) = f_1(x) \oplus f_2(x)$.

Let $k \in \NN_{\ge 1}$. A function $f: \DD \to S$ is called {\em $k$-piecewise affine} if there exist ${\delta = \{d_1, \dots, d_{k-1}\} \subseteq \DD}$ with $0 < d_1 < {\dots} < d_{k - 1}$ and a coefficients mapping $c: \P(\delta) \to S^2$ such that 
$
f = \bigoplus (\alpha^{c(I)}_I  \mid  I \in \P(\delta)).
$
We say that $f$ is {\em piecewise affine} if $f$ is $k$-piecewise affine for some $k \in \NN_{\ge 1}$. Let ${\dist{f} = \min \{k \in \NN_{\ge 1}  \mid  f \text{ is } k\text{-piecewise affine}\}}$. If $f$ is not piecewise affine, then $\dist{f} = \infty$.

\begin{example}
\begin{enumerate}[label=(\alph*)]
\item Let $f: \DD \to S$ be defined for all $x \in \DD$ by $f(x) = 2 \cdot x + 1$. Then, $f$ is $1$-piecewise affine as it can be represented as $f = \alpha_{\{0\}}^{\langle 2, 1 \rangle} \oplus \alpha_{(0, \infty)}^{\langle 2, 1 \rangle}$. Clearly, $\dist{f} = 1$.
\item Let $f: \DD \to S$ be defined for all $x \in \DD$ by
\[
f(x) = \begin{cases}
2 \cdot x + 1, & \text{if } 0 < x \le 3, \\
-4 \cdot x + 6, & \text{if } x \ge 5, \\
-\infty, & \text{otherwise.}
\end{cases}
\]
Then, $f$ is 3-piecewise affine since
\[
f = \alpha_{\{0\}}^{\langle 0, -\infty \rangle} \oplus \alpha_{(0, 3)}^{\langle 2, 1 \rangle} \oplus \alpha_{\{3\}}^{\langle 2, 1 \rangle} \oplus \alpha_{(3,5)}^{\langle 0, -\infty \rangle} \oplus \alpha_{\{5\}}^{\langle -4, 6 \rangle} \oplus \alpha_{(5, \infty)}^{\langle -4, 6 \rangle}.
\]
Moreover, $\dist{f} = 3$.
\end{enumerate}
\end{example}

\begin{remark}
\label{REMARK:piecewise_affine_extension}
Let $\delta' \subseteq \DD$ be any set such that $\delta \subseteq \delta'$. Then, $f$ can be represented as 
$
f = \bigoplus \big(\alpha_I^{c'(I)}  \mid  I \in \P(\delta') \big)
$
where $c': \P(\delta') \to S^2$ is defined as follows. Let $I' \in \P(\delta')$. Then, there exists a unique $I \in \P(\delta)$ such that $I' \subseteq I$. Then, we let $c'(I') = c(I)$.
\end{remark}

\begin{lemma}[Lemma~\ref{LEMMA:finarc_exp_piecewise_affine}]
For every $e \in \EE_{\FF}$, the function $\beh{e} : \DD \to S$ is piecewise affine.
\end{lemma}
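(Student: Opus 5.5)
We argue by structural induction on $e \in \EE_{\FF}$, where $\FF = \FF_{\arc,\times}$ and $\DD = \RR_{\ge 0}$. The inductive claim is slightly stronger than the statement: for every $e$ there is a finite set $\delta_e \subseteq \DD$ of breakpoints and a coefficient mapping $c_e : \P(\delta_e) \to S^2$ such that $\beh{e} = \bigoplus(\alpha_I^{c_e(I)} \mid I \in \P(\delta_e))$. On each cell of $\P(\delta_e)$ the function $\beh{e}$ is then either an affine map $d \mapsto c_1 d + c_2$ with $c_1, c_2 \in \RR$, or constantly $\zero = -\infty$.

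The main tool is Remark~\ref{REMARK:piecewise_affine_extension}, which lets us refine two representations to the common partition $\P(\delta_{e_1} \cup \delta_{e_2})$. On a common cell $J$, both $\beh{e_1}$ and $\beh{e_2}$ are then of the form ``affine or $-\infty$''.

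\textbf{Base cases.} A constant $s$ is given by $\delta = \emptyset$ with coefficients $\langle 0, s \rangle$ on both cells. A binding $\bind{s}$ evaluates to $s \cdot d$, so it is given by $\langle s, 0 \rangle$, where $s = -\infty$ yields $\zero$ by the convention for $\alpha$.

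\textbf{Operation $\otimes$.} On a common cell $J$, the max-plus product of two affine maps is the sum $(a_1 + a_2)d + (b_1 + b_2)$, and $-\infty$ is absorbing. No new breakpoints are needed.

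\textbf{Operation $\oplus$.} On a cell $J$ we take the maximum of two affine maps $g_1, g_2$. If $a_1 = a_2$, one map dominates on all of $J$. Otherwise $g_1 = g_2$ at the single point $x^* = (b_2 - b_1)/(a_1 - a_2)$. If $x^* \in J$, we add $x^*$ as a breakpoint, and on each resulting subcell one of $g_1, g_2$ dominates. If either summand is $-\infty$, the other one is taken.

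\textbf{Comparisons $=$ and $\neq$.} These are handled the same way. On a cell $J$, the maps $g_1$ and $g_2$ either coincide on all of $J$ (identical coefficients, or both $-\infty$), or agree at most at one point $x^*$, or never agree (one finite, one $-\infty$, or parallel distinct lines). After adding $x^*$ as a breakpoint, $\beh{e_1 = e_2}$ is constant on every subcell, equal to $\one = 0$ (i.e.\ $\langle 0, 0 \rangle$) or $\zero$; $\neq$ is the complement.

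The point cells $\{d_i\}$ need no extra work, since the formula on a singleton is just one value, encoded by $\langle 0, v \rangle$. In every case the breakpoint set stays finite; the number of breakpoints grows by at most one per cell and per operation, which also gives a size bound of the form $O(|e|)$-style counts if desired.

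\textbf{Main obstacle.} The only delicate step is ensuring that $\oplus$ and the comparison operators introduce finitely many new breakpoints even when the two affine pieces coincide on a whole interval. The case split above on whether the coefficients are equal handles this: coincidence on a whole interval forces identical coefficients, so no breakpoint is needed there.
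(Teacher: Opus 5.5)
Your proof is correct and follows the paper's own argument essentially step for step. It uses the same strengthened invariant $\beh{e} = \bigoplus(\alpha_I^{c_e(I)} \mid I \in \P(\delta_e))$, the same common refinement via Remark~\ref{REMARK:piecewise_affine_extension}, coefficientwise addition for $\otimes$, and at most one new breakpoint (the crossing point) per open cell for $\oplus$, $=$ and $\neq$.

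Drop the closing aside about an $O(|e|)$ bound, though. ``At most one new breakpoint per cell'' only gives $|\delta_{e_1 \oplus e_2}| \le 2(|\delta_{e_1}| + |\delta_{e_2}|) + 1$, which iterated along a deep expression yields an exponential bound. The paper explicitly leaves open whether the number of pieces is even polynomial in $|e|$.
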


\begin{proof}
We show that there exist a finite set $\delta_e \subseteq \DD$ and a coefficients mapping $c_e: \P(\delta_e) \to S^2$ such that 
\begin{equation}
\label{EQ:finarc_exp_piecewise_affine}
\beh{e} = \bigoplus \big(\alpha_I^{c_e(I)} \mid  I \in \P(\delta_e) \big)
\end{equation}
for all $d \in \DD$.
We proceed by induction on the structure of $e$.
\begin{itemize}
\item Let $e = s \in S$. Then, we let $\delta_e = \emptyset$ and $c_e(\{0\}) = c_e((0, \infty)) = \langle 0.0, s\rangle$.
\item Let $e = \llangle s \rrangle$ with $s \in S$. Then, we let $\delta_e = \emptyset$ and $c_e(\{0\}) = c_e((0, \infty)) = \langle s, 0.0 \rangle$.
\item Let $e = e_1 \otimes e_2$ where $e_1, e_2 \in \EE_{\FF}$. By induction hypothesis, for ${i \in \{1, 2\}}$, there exist a finite set $\delta_{e_i} \subseteq \DD$ and $c_{e_i}: \P(\delta_{e_i}) \to S^2$ such that ${\beh{e_i} = \bigoplus_{I \in \P(\delta_{e_i})} \alpha_I^{c_{e_i}(I)}}$. By Remark~\ref{REMARK:piecewise_affine_extension}, we may assume that $\delta_{e_1} = \delta_{e_2} = \delta$.
Let $\delta_e = \delta = \{d_1, \dots, d_k\}$ where ${0 < d_1 < {\dots} < d_k}$. It remains to define $c_e$.

Let $I \in \delta_e$ and, for $i \in \{1, 2\}$, let $c_{e_i}(I) = \langle c_i, c_i' \rangle$. Then, we put $c_e(I) = \langle c_1 \otimes c_2, c_1' \otimes c_2' \rangle$.
\item Let $e = e_1 \oplus e_2$ where $e_1, e_2 \in \EE_{\FF}$. By induction hypothesis, for ${i \in \{1, 2\}}$, there exist a finite set $\delta_{e_i} \subseteq \DD$ and $c_{e_i}: \P(\delta_{e_i}) \to S^2$ such that ${\beh{e_i} = \bigoplus_{I \in \P(\delta_{e_i})} \alpha_I^{c_{e_i}(I)}}$. By Remark~\ref{REMARK:piecewise_affine_extension}, we can assume that $\delta_{e_1} = \delta_{e_2} = \delta$. Let $\delta = \{d_1, \dots, d_k\}$ where ${0 = d_0 < d_1 < {\dots} < d_k < d_{k + 1} = \infty}$. We costruct $\delta_e$ and $c_e$ as follows. For every $j \in \{0, \dots, k\}$, we let $I_j = (d_j, d_{j + 1})$ and proceed as follows.
\begin{itemize}
\item $c_e(\{d_j\}) = \langle 0.0, \beh{e}(d_j) \rangle$; if $j \neq 0$, then we put $d_j \in \delta_e$;
\item if $\beh{e_1}(d) \le \beh{e_2}(d)$ for all $d \in I_j$, then we put $c_e(I_j) = c_{e_2}(I_j)$;
\item if $\beh{e_2}(d) > \beh{e_2}(d)$ for all $d \in I_j$, then we put $c_e(I_j) = c_{e_1}(I_j)$;
\item if $\beh{e_1}(d') = \beh{e_2}(d')$ for some $d' \in I_j$ and $\beh{e_1}(d) \neq \beh{e_2}(d)$ for all $d \in I_j \setminus \{d'\}$, then we put $d' \in \delta_e$, $c_e(\{d'\}) = \langle 0, \beh{e}(d') \rangle$ and consider the following subcases:
\begin{itemize}
\item if $\beh{e_1}(d) < \beh{e_2}(d)$ for all $d \in (d_j, d')$, then we put $c_e((d_j, d')) = c_{e_2}(I_j)$ and $c_e((d', d_{j + 1})) = c_{e_1}(I_j)$;
\item otherwise, if $\beh{e_1}(d) > \beh{e_2}(d)$ for all $d \in (d_j, d')$, we put $c_e((d_j, d')) = c_{e_1}(I_j)$ and $c_e((d', d_{j + 1})) = c_{e_2}(I_j)$;
\end{itemize}
\end{itemize}
\item Let $e = (e_1 = e_2)$ where $e_1, e_2 \in \EE_{\FF}$. By induction hypothesis, for ${i \in \{1, 2\}}$, there exist a finite set $\delta_{e_i} \subseteq \DD$ and $c_{e_i}: \P(\delta_{e_i}) \to S^2$ such that ${\beh{e_i} = \bigoplus_{I \in \P(\delta_{e_i})} \alpha_I^{c_{e_i}(I)}}$. By Remark~\ref{REMARK:piecewise_affine_extension}, we can assume that $\delta_{e_1} = \delta_{e_2} = \delta$. Let $\delta = \{d_1, \dots, d_k\}$ where ${0 = d_0 < d_1 < {\dots} < d_k < d_{k + 1} = \infty}$. We costruct $\delta_e$ and $c_e$ as follows. For every $j \in \{0, \dots, k\}$, we let $I_j = (d_j, d_{j + 1})$ and proceed as follows.
\begin{itemize}
\item $c_e(\{d_j\}) = \langle 0.0, \beh{e}(d_j) \rangle$; if $j \neq 0$, then we put $d_j \in \delta_e$;
\item if $\beh{e_1}(d) = \beh{e_2}(d)$ for all $d \in I_j$, then we put $c_e(I_j) = \langle 0, \one \rangle$;
\item if $\beh{e_1}(d) \neq \beh{e_2}(d)$ for all $d \in I_j$, then we put $c_e(I_j) = \langle 0, \zero \rangle$;
\item if $\beh{e_1}(d') = \beh{e_2}(d')$ for some $d' \in I_j$ and $\beh{e_1}(d) \neq \beh{e_2}(d)$ for all $d \in I_j \setminus \{d'\}$, then we put $d' \in \delta_e$, $c_e((d_j, d')) = c_e((d', d_{j+1})) = \langle \zero, \zero \rangle$ and $c_e(\{d'\}) = \langle \one, \one \rangle$.
\end{itemize}
\item Finally, let $e = (e_1 \neq e_2)$ where $e_1, e_2 \in \EE_{\FF}$. This case is symmetric to the previous one. By induction hypothesis, for ${i \in \{1, 2\}}$, there exist a finite set $\delta_{e_i} \subseteq \DD$ and $c_{e_i}: \P(\delta_{e_i}) \to S^2$ such that ${\beh{e_i} = \bigoplus_{I \in \P(\delta_{e_i})} \alpha_I^{c_{e_i}(I)}}$. By Remark~\ref{REMARK:piecewise_affine_extension}, we can assume that $\delta_{e_1} = \delta_{e_2} = \delta$. Let $\delta = \{d_1, \dots, d_k\}$ where ${0 = d_0 < d_1 < {\dots} < d_k < d_{k + 1} = \infty}$. We costruct $\delta_e$ and $c_e$ as follows. For every $j \in \{0, \dots, k\}$, we let $I_j = (d_j, d_{j + 1})$ and proceed as follows.
\begin{itemize}
\item $c_e(\{d_j\}) = \langle 0.0, \beh{e}(d_j) \rangle$; if $j \neq 0$, then we put $d_j \in \delta_e$;
\item if $\beh{e_1}(d) = \beh{e_2}(d)$ for all $d \in I_j$, then we put $c_e(I_j) = \langle 0, \zero \rangle$;
\item if $\beh{e_1}(d) \neq \beh{e_2}(d)$ for all $d \in I_j$, then we put $c_e(I_j) = \langle 0, \one \rangle$;
\item if $\beh{e_1}(d') = \beh{e_2}(d')$ for some $d' \in I_j$ and $\beh{e_1}(d) \neq \beh{e_2}(d)$ for all $d \in I_j \setminus \{d'\}$, then we put $d' \in \delta_e$, $c_e((d_j, d')) = c_e((d', d_{j+1})) = \langle \one, \one \rangle$ and $c_e(\{d'\}) = \langle \zero, \zero \rangle$.
\end{itemize}
\end{itemize}
\end{proof}

\begin{remark}
We can show for affine expressions $\alpha \in \clp{\AA_{\FF}}$ that $\dist{\beh{\alpha}} = \Omega(|\alpha|)$. Indeed, for every $N \in \NN_{\ge 1}$, consider the expression $A_N \in \clp{\AA_{\FF}}$ as defined in Lemma~\ref{LEMMA:bounded_sum_for_affine_expressions}. Then, $|A_N| = 4 \cdot N - 1$ and, by Equation~\eqref{EQ:affine_expression_lambda_n_semantics}, $\dist{\beh{A_N}} = N$.

However, it remains open whether $\dist{\beh{e}}$ is polynomial in $|e|$ for all $\FF$-expressions $e \in \EE_{\FF}$.
\end{remark}

\begin{lemma}[Lemma~\ref{LEMMA:polynomial_for_piecewise}]
Let $f: \DD \to S$ be a piecewise affine function. Then, there exists $e \in \clp{\MM_{\FF}}$ such that $\beh{e} = f$.
\end{lemma}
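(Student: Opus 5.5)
The plan is to build $e$ piece by piece, following the representation $f = \bigoplus_{I \in \P(\delta)} \alpha_I^{c(I)}$ from the definition of piecewise affine functions. Fix $\delta = \{d_1, \dots, d_{k-1}\}$ with $0 < d_1 < \dots < d_{k-1}$ and the coefficient map $c$. Since the pieces in $\P(\delta)$ are pairwise disjoint, at every $d \in \DD$ exactly one summand can be non-$\zero$. Hence it suffices to find, for each $I \in \P(\delta)$, a monomial $m_I \in \MM_{\FF}$ with $\beh{m_I} = \alpha_I^{c(I)}$. Then $e = [\bigoplus_{I \in \P(\delta)} m_I] \in \clp{\MM_{\FF}}$ satisfies $\beh{e} = f$. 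Pieces with $c(I)$ containing $\zero$ could be dropped, but to keep $e$ well defined I will still include them with affine part $-\infty$.

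The affine part is easy. For $c(I) = \langle c_1, c_2 \rangle$ take $\alpha = [\bind{c_1} \otimes c_2] \in \AA_{\FF}$. Then $\beh{\alpha}(d) = c_1 \cdot d + c_2$, and it equals $-\infty$ whenever $c_1$ or $c_2$ is $\zero$, matching the convention in the definition of $\alpha_X^c$. The only subtlety is $c_1 = -\infty$ with $d = 0$. I will handle that case by simply using the constant $\zero$ as the affine part.

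The main step is to express the indicator of each piece by a simple constraint, i.e.\ basic constraints $[(\pi_1 \oplus \pi_2) \bowtie \pi_3]$ or a product of two of them. Following Example~\ref{EX:ff_arc_times_expressions}, I plan to use these encodings:
\begin{itemize}
\item $[d \le x]$ is the basic constraint $[(\bind{1.0} \oplus x) = x]$.
\item $[x < d]$ is $[(\bind{1.0} \oplus x) \neq x]$.
\item $[x \le d]$ is $[(\bind{1.0} \oplus x) = \bind{1.0}]$.
\item $[d < y]$ is $[(\bind{1.0} \oplus y) \neq \bind{1.0}]$.
\end{itemize}
These give the constraint for each type of piece:
\begin{itemize}
\item A singleton $\{d_j\}$ is the basic constraint $[(\bind{1.0} \oplus \zero) = d_j]$.
\item An open interval $(d_j, d_{j+1})$ is the product $[x < d] \otimes [d < y]$ with $x = d_j$, $y = d_{j+1}$.
\item The unbounded piece $(d_{k-1}, \infty)$ is the single basic constraint $[d_{k-1} < d]$.
\item The piece $(0, \infty)$ when $\delta = \emptyset$ is $[(\bind{1.0} \oplus \zero) \neq 0.0]$.
\end{itemize}
In each case the product has at most two basic factors, so it is a simple constraint. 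Setting $m_I = [c_I \otimes \alpha_I]$, we get $\beh{m_I}(d) = \alpha_I(d)$ for $d \in I$ and $\zero$ otherwise, because constraints evaluate to $\one = 0.0$ or $\zero$.

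I expect the main obstacle to be getting the boundary behaviour right with only the available shapes. Each basic constraint must compare $\pi_1 \oplus \pi_2$, which here is a $\max$, with a primitive. Each strict or non-strict comparison must be checked to produce exactly the intended open or closed endpoint, including at $d = 0$ and with the $-\infty$ conventions for $\bb_\times$. I would verify each encoding pointwise against the semantics in Example~\ref{EX:ff_arc_times_expressions}. I would then finish by combining the pieces via associativity and commutativity of $\oplus$, as in the definition of $\clp{\cdot}$.
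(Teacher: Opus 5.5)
Your proposal is correct and follows the paper's proof essentially verbatim. Like the paper, you build one monomial $[c_I \otimes (\bind{c_1}\otimes c_2)]$ per piece of $\P(\delta)$, encode each piece with the same constraints $[(\bind{1.0}\oplus x)\neq x]$ and $[(\bind{1.0}\oplus y)\neq\bind{1.0}]$, and sum the monomials in $\clp{\MM_{\FF}}$; your singleton encoding $[(\bind{1.0}\oplus\zero)=d_j]$ even matches the shape of a basic constraint more literally than the paper's $[\bind{1}=d]$, and your concern about $c_1=-\infty$ at $d=0$ is unnecessary, since the convention $d\times(-\infty)=-\infty$ is stated for all $d\in\RR_{\ge 0}$.
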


\begin{proof}
Let $f = \bigoplus (\alpha_I^{c(I)}  \mid  I \in \P(\delta))$ where $\delta \subseteq \DD$ is a finite set and $c: \P(\delta) \to S^2$.
For the proof that $f$ is definable by an $\FF$-polynomial, it suffices to show that, for every $I \in \P(\delta)$ and every $c = (c_1, c_2) \in S^2$, $\alpha_I^{c(I)}$ is definable by some $\mu \in \MM_{\FF}$. Consider the following cases.
\begin{itemize}
\item $c_1 = \zero$ or $c_2 = \zero$. Let $\mu \in \MM_{\FF}$ be defined as $(\one = \one) \otimes \zero$. Then, $\beh{\mu} = \alpha_I^c$.
\item $c_1, c_2 \in \RR$ and $I = \{d\}$ for some $d \in \RR_{\ge 0}$. Consider the monomial $\mu \in \MM_{\FF}$ defined as $(\bind{1} = d) \otimes (\bind{c_1} \otimes c_2)$. Then, $\beh{\mu} = \alpha_I^c$.
\item $c_1, c_2 \in \RR$ and $I = (d, d')$ with $d, d' \in \RR_{\ge 0}$ such that $d < d'$. Consider the monomial $\mu \in \MM_{\FF}$ defined as 
\[
\big((\llangle 1 \rrangle \oplus d) \neq d \big) \otimes \big((\llangle 1 \rrangle \oplus d') \neq \bind{1} \big) \otimes \big(\bind{c_1} \otimes c_2\big).
\]
Then, $\beh{\mu} = \alpha_I^c$.
\item $c_1, c_2 \in \RR$ and $I = (d, \infty)$ with $d \in \RR_{\ge 0}$. Consider the monomial $\mu \in \MM_{\FF}$ defined as $((\llangle 1 \rrangle \oplus d) \neq d) \otimes (\bind{c_1} \otimes c_2)$. Then, $\beh{\mu} = \alpha_I^c$.
\qedhere
\end{itemize}
\end{proof}

\subsubsection{Negation Operator (Remark \ref{REM:negation})}
\label{SUB:application_reduction_to_monomials}

As an application of Theorem~\ref{THM:ff_arc_times_reduction_to_monomials}, we now illustrate certain expressiveness limitations of WFFAs.

Let $\FF = \FF_{\arc, \times}$. For any QFL $\F: \DD \Sigma^* \to S$, where $S = \RR \cup \{-\infty\}$, define the {\em negation} $(-\F): \DD \Sigma^* \to S$ by
\[
(-\F)(w) = \begin{cases}
-\F(w), & \text{if } \F(w) \neq -\infty, \\
-\infty, & \text{otherwise}.
\end{cases}
\]

As observed in Example~\ref{EX:european_call_option_wffa}, such a negation operator is of practical relevance: it permits switching between the profit functions associated with long and short positions in a given financial instrument by simply reversing the sign of the payoff.
As demonstrated in Example~\ref{EX:european_call_option_wffa}, the profit functions of both the long and the short position in a European call option are WFFA-recognizable with respect to the finance semiring $\FF_{\arc, \times}$. However, for more complex financial instruments, this property may fail. The underlying intuition is that our model lacks the ability to compare the finance values carried by different transitions.

Let $S_{\arc} = \RR \cup \{-\infty\}$ and $S_{\trop} = \RR \cup \{-\infty\}$. Consider a QFL $\F: \DD \Sigma^* \to S_{\arc}$. Define the mapping $\T(\F): \DD \Sigma^* \to S_{\trop}$ by
\[
\T(\F)(w) =
\begin{cases}
\F(w), & \text{if } \F(w) \neq -\infty,\\[1mm]
+\infty, & \text{otherwise}.
\end{cases}
\]

Intuitively, $\T(\F)$ transforms the $-\infty$ values of $\F$ into $+\infty$, allowing us to interpret the QFL in the tropical semiring $\FF_{\trop, \times}$ while preserving the meaningful finite values of $\F$.

The following theorem clarifies the situation. While the negation of a WFFA-recognizable QFL need not be WFFA-recognizable over the arctic semiring, applying the transformation $\T$ allows one to represent the negated function within the tropical semiring.

\begin{theorem}
\label{THM:qfl_negation_unrec}
\begin{enumerate}[label=(\alph*)]
\item Let $\Sigma = \{\bot\}$ be a singleton alphabet and let $\F: \DD \Sigma^* \to S$ be defined for all $w \in \DD \Sigma^*$ by
\[
\F(w) = \begin{cases}
\max\{d_1, d_2\}, & \text{if } w=\bot_{d_1, d_2} \text{ for some } d_1, d_2 \in \DD, \\
-\infty, & \text{otherwise}.
\end{cases}
\] Then $\F \in \beh{\WFFA_{\Sigma, \FF_{\arc, \times}}}$, but $(-\F) \notin \beh{\WFFA_{\Sigma, \FF_{\arc, \times}}}$.
\item Let $\Sigma$ be any alphabet and let $\F \in \beh{\WFFA_{\Sigma, \FF_{\arc, \times}}}$. Then
$
\T(-\F) \in \beh{\WFFA_{\Sigma, \FF_{\trop, \times}}}.
$
\end{enumerate}
\end{theorem}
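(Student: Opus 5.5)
For part (a) I would first exhibit a WFFA for $\F$ and then use a convexity argument on a small box around the diagonal for $-\F$; for part (b) I would negate every weight through the piecewise-affine normal form.

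\emph{Recognizability of $\F$.} Take states $q_0,q_1,q_2$ with $I=\{q_0\}$, $F=\{q_2\}$ and weights $0.0$. Add two parallel paths of length $2$:
\begin{itemize}
\item $q_0 \to q_1' \to q_2$ with weights $\bind{1.0}$ and then $0.0$;
\item $q_0 \to q_1'' \to q_2$ with weights $0.0$ and then $\bind{1.0}$.
\end{itemize}
On $\bot_{d_1,d_2}$ the two runs have weights $d_1$ and $d_2$, so the behavior is $\max\{d_1,d_2\}$. On words of any other length there is no run, so the value is $-\infty$.

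\emph{Non-recognizability of $-\F$.} On length-$2$ words, $(-\F)(\bot_{d_1,d_2})=\min\{-d_1,-d_2\}$. Suppose $\A$ recognizes $-\F$ over $\FF_{\arc,\times}$. By Lemma~\ref{LEMMA:finarc_exp_piecewise_affine}, every transition weight $\beh{\wt_T(t)}$ is piecewise affine in one variable. Let $\delta_0\subseteq\DD$ be the finite union of all breakpoints of all transition weights of $\A$. Each length-$2$ run $\varrho$ then has weight $\wt_I(q)+g_\varrho(d_1)+h_\varrho(d_2)+\wt_F(q')$. On any open box $(a,b)\times(a,b)$ with $(a,b)\cap\delta_0=\emptyset$, each functional $g_\varrho,h_\varrho$ is either affine or constantly $-\infty$.

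Choose $x>0$ with $x\notin\delta_0$ and a small $r>0$ with $(x-r,x+r)\cap\delta_0=\emptyset$. On the box $B=(x-r,x+r)^2$, the behavior $\beh{\A}(\bot_{d_1,d_2})$ is a maximum of finitely many affine functions in $(d_1,d_2)$, because runs with weight $-\infty$ can be discarded. Such a maximum is convex on $B$. Since $-\F$ is finite everywhere on $B$, it would be convex there. But $\min\{-d_1,-d_2\}$ is not convex on $B$: its value at the midpoint $(x,x)$ of $(x-\eps,x+\eps)$ and $(x+\eps,x-\eps)$ is $-x$, whereas the average of the two endpoint values is $-x-\eps$. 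This contradiction proves (a). The one point to check carefully is that the maximum of affine functions over the finitely many runs is taken on exactly the same set of runs throughout $B$, and this holds because no breakpoint lies inside the box.

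\emph{Part (b).} By Theorem~\ref{THM:ff_arc_times_reduction_to_monomials} and Lemma~\ref{LEMMA:from_arbitrary_expressions_to_monomials}, every transition weight of $\A$ is piecewise affine, given by finitely many pieces $\alpha_I^{\langle c_1,c_2\rangle}$. Construct $\A'$ over $\FF_{\trop,\times}$ with the same states and transitions. Map each initial or final weight $s$ to $-s$, and map $-\infty$ to $+\infty$. For each transition weight $f$, build an $\FF_{\trop,\times}$-expression for the function $\bar f$ defined by $\bar f(d)=-f(d)$ if $f(d)\neq-\infty$ and $\bar f(d)=+\infty$ otherwise. Then every run of $\A'$ has weight $-\wt_\A(\varrho,\delta)$, or $+\infty$ exactly when the original run has weight $-\infty$. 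Taking the minimum over runs gives $\T(-\F)$, since $\min_\varrho(-x_\varrho)=-\max_\varrho x_\varrho$.

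\emph{Main obstacle.} The key step is a tropical analogue of Lemma~\ref{LEMMA:polynomial_for_piecewise}: each piece of $\bar f$ must be expressed as a monomial over $\FF_{\trop,\times}$. I would use the affine part $\bind{-c_1}\otimes(-c_2)$ combined with guards, each of which evaluates to the tropical one $0.0$ when true and to $+\infty$ when false:
\begin{itemize}
\item $[\bind{1.0}=d]$ for a point piece $\{d\}$;
\item $[(\bind{1.0}\oplus d)\neq \bind{1.0}]$ for the condition $d_0<d$ on a data value $d_0$, because $\oplus=\min$ in $\FF_{\trop,\times}$;
\item $[(\bind{1.0}\oplus d')\neq d']$ for the condition $d_0<d'$.
\end{itemize}
The $\oplus$ (i.e.\ $\min$) over the pieces is then absorbed into a single expression, as was done on the arctic side. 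The remaining verification is that the guards behave correctly. This needs extra care because the data domain of $\FF_{\trop,\times}$ is $\RR$ rather than $\RR_{\ge0}$: the input words of interest only carry nonnegative data, so for a negative data value every transition weight can be forced to the tropical zero $+\infty$ by an extra guard encoding $[0\le\bind{1.0}]$.
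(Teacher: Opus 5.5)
Your proof is correct. The recognizer for $\F$ in (a) is the paper's. Your construction in (b) is essentially the paper's as well: the paper first reduces to single-piece weights $\alpha^{\langle a_t,b_t\rangle}_{I_t}$ and attaches one tropical guard per transition, whereas you keep each piecewise-affine weight whole and take a tropical $\oplus$ over guarded pieces.

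The real difference is in showing $-\F\notin\beh{\WFFA_{\Sigma,\FF_{\arc,\times}}}$. The paper's argument is global and runs in five steps:
\begin{enumerate}
\item it passes to monomial weights;
\item it picks a maximal run on each integer word $\bot_{k,l}$;
\item by pigeonhole it fixes one transition pair $t_1t_2$ that is optimal for infinitely many $k$ and, for each such $k$, infinitely many $l>k$;
\item it solves $a_1k+(a_2+1)l=C$, which forces $a_2=-1$, $a_1=0$ and $C=0$;
\item it evaluates that pair on $\bot_{\delta+1,\delta}$, getting weight $-\delta$, which exceeds the required value $-(\delta+1)$.
\end{enumerate}

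You replace this extrapolation by a local convexity obstruction. On a breakpoint-free box the set of finite-weight runs is fixed, so the behavior is a finite maximum of affine maps and hence convex. The concave kink of $\min\{-d_1,-d_2\}$ along the diagonal rules this out.

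What your route buys:
\begin{itemize}
\item It needs only Lemma~\ref{LEMMA:finarc_exp_piecewise_affine}, not the monomial reduction.
\item It avoids the paper's claim that every monomial weight defines a single interval piece. A simple constraint can define, for example, the complement of a point, so that claim really requires the specific monomials from the proof of Lemma~\ref{LEMMA:polynomial_for_piecewise} combined with Lemma~\ref{LEMMA:plus_closed_expressiveness}.
\item It gives a reusable criterion: any QFL whose fixed-length restriction is non-convex on some breakpoint-free box is not recognizable over $\FF_{\arc,\times}$.
\end{itemize}

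What the paper's route buys: it only evaluates the automaton at widely spaced, integer-like points, so it does not depend on density of the data. You can recover this robustness by using the box $(\max\delta_0,\infty)^2$, which is breakpoint-free and contains $(x\pm1,x\mp1)$ and $(x,x)$.

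Your extra guard for negative data addresses an inconsistency in the paper's definition of $\FF_{\trop,\times}$ (data domain $\RR$, binding defined only on $\RR_{\ge0}$) that the paper's proof ignores.

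One slip in (b): $[(\bind{1.0}\oplus d)\neq\bind{1.0}]$ evaluates to $0.0$ iff $\min\{d_0,d\}\neq d_0$, that is, iff $d<d_0$. You wrote $d_0<d$. The condition $d<d_0$ is exactly what the left endpoint of an interval piece $(d,d')$ requires, so the construction stands once the label is corrected.
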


\begin{proof}
\begin{enumerate}[label=(\alph*)]
\item Let $\FF = \FF_{\arc, \times}$.
Consider the QFL ${\F: \DD \Sigma^* \to S}$ defined for all finance words $w = \bot_{d_1, \dots, d_n} \in \DD \Sigma^*$ over the singleton alphabet $\Sigma = \{\bot\}$ by
\[
\F(w) = \begin{cases}
\max\{d_1, d_2\}, & \text{if } n = 2, \\
-\infty, & \text{otherwise}.
\end{cases}
\]

The QFL $\F$ is recognized by the WFFA $\A_{\F} \in \WFFA_{\Sigma, \FF}$ depicted in Figure~\ref{FIG:thm_qfl_negation_unrec_proof}.

\begin{figure}[ht]
\centering
\begin{tikzpicture}[
    >=stealth,
    every node/.style={font=\footnotesize},
    state/.style={circle, draw, minimum size=18pt, inner sep=1pt}
]
\node (q0) {};
\node[state, right=1cm of q0] (q1) {$q_0$};
\node[state, above right=0.5cm and 1.5cm of q1] (q2) {$q_1$};
\node[state, below right=0.5cm and 1.5cm of q1] (q3) {$q_2$};
\node[state, below right=0.5cm and 1.5cm of q2] (q4) {$q_3$};
\node[right=1cm of q4] (q44) {};

\draw[->] 
    (q0) edge[above] node{$0.0$} (q1)	
    (q1) edge[above] node[above left,pos=.62]{$\llangle 1.0 \rrangle$} (q2)
    (q1) edge[above] node[below left,pos=.62]{$0.0$} (q3)
	(q2) edge[above] node[above right,pos=.38]{$0.0$} (q4)
    (q3) edge[above] node[below right,pos=.38]{$\llangle 1.0 \rrangle$} (q4)
    (q4) edge[above] node{$0.0$} (q44)
    ;
\end{tikzpicture}
\caption{WFFA $\A_{\F}$ from the proof of Theorem~\ref{THM:qfl_negation_unrec}(b).}
\label{FIG:thm_qfl_negation_unrec_proof}
\end{figure}


Now let $\F' = -\F$. Then for all finance words $w = \bot_{d_1, \dots, d_n} \in \DD \Sigma^*$ ,
\[
  \F'(\bot_{d_1,d_2}) = \begin{cases} \min\{-d_1, -d_2\}, & \text{if } n = 2, \\ -\infty, & \text{otherwise.} \end{cases}
\]

We show that $\F' \notin \beh{\WFFA_{\Sigma, \FF}}$.
Suppose, towards a contradiction, that there exists an automaton
$\A = (Q, I, T, F, \wt_I, \wt_T, \wt_F) \in \WFFA_{\Sigma,\FF}$ such that $\beh{\A} = \F'$.  
By Theorem~\ref{THM:ff_arc_times_reduction_to_monomials}, we may assume
without loss of generality that 
$\A \in \WFFA_{\Sigma,\FF}(\MM_{\FF})$.
More precisely, we may assume that every transition weight expression $e$
occurring in $\A$ satisfies 
$\beh{e} = \alpha_{I}^{\langle a, b\rangle}$,
where $a,b \in \RR_{\ge 0} \cup \{-\infty\}$ and
either $I = \{x\}$ for some $x \in \RR_{\ge 0}$,
or $I = (x,y)$, an open interval with $x \in \RR_{\ge 0}$ and
$y \in \RR_{\ge 0} \cup \{\infty\}$.

For all $k, l \in \NN_{\ge 1}$, choose a run $\varrho_{k,l} \in \Run_{\A}(\bot_{k,l})$ such that $\beh{\A}(\bot_{k,l}) = \wt_{\A}(\varrho_{k,l})$. Since $\A$ has only finitely many transitions, there exist transitions $t_1, t_2 \in T$, an infinite set $K \subseteq \NN_{\ge 1}$ and for each $k \in K$ an infinite set $L_k \subseteq \NN_{\ge 1}$, such that $\tau = t_1 t_2$ is the transition sequence of $\varrho_{k,l}$ for all $k \in K$ and $l \in L_k$.

Let $\beh{\wt_T(t_1)} = \alpha^{\langle a_1, b_1 \rangle}_{I_1}$ and $\beh{\wt_T(t_2)} = \alpha^{\langle a_2, b_2 \rangle}_{I_2}$. Clearly, $a_i, b_i \neq -\infty$ and each interval has the form $I_i = (c_i, \infty)$ for some $c_i \in \RR_{\ge 0}$. Denote by $z_0$ and $z_f$ the initial and final weights associated with $\tau$, respectively. Set $L_k' = L_k \cap [k + 1, \infty)$. Then for all $k \in K$ and all $l \in L_k'$ (so that $k < l$) we obtain:
\[
-l = \beh{\A}(\bot_{k, l}) = \wt_{\A}(\varrho_{k, l}) = z_0 + a_1 \cdot k + b_1 + a_2 \cdot l + b_2 + z_f
\]
Let $C = -(z_0 + b_1 + b_2 + z_f)$. Then the above equality rewrites as
\begin{equation}
\label{EQ:min_unrec_linear_eq}
a_1 \cdot k + (a_2 + 1) \cdot l = C.
\end{equation}
Fix $k \in K$. Since equality \eqref{EQ:min_unrec_linear_eq} holds for infinitely many $l \in L'_k$, the left-hand side ${a_1 \cdot k + (a_2 + 1) \cdot l}$ must remain constant while $l$ ranges over an infinite set. This is possible only if $a_2 = -1$. With $a_2 = -1$, Equation~\eqref{EQ:min_unrec_linear_eq} simplifies to $a_1 \cdot k = C$, and this equality holds for infinitely many $k \in K$. Hence the only possibility is $a_1 = 0$ and $C = 0$.

Now choose $\delta = 1 + \max\{c_1, c_2\}$. Then $\delta > c_1$ and $\delta > c_2$, so $\tau$ is a valid transition sequence for $\bot_{\delta + 1, \delta}$. Let $\varrho' \in \Run_{\A}(\bot_{\delta + 1, \delta})$ be the run over $\tau$. Then
\[
\wt_{\A}(\varrho') = (-C) + a_1 \cdot (\delta + 1) + a_2 \cdot \delta = -\delta
\]
On the other hand, by definition of the automaton's behavior,
\[
-(\delta + 1) = \beh{\A}(\bot_{\delta + 1, \delta}) \ge \wt_{\A}(\varrho') = -\delta,
\]
which is a contradiction.
\item Let
$
\A_{\arc} = (Q, I, T, F, \wt_I, \wt_T, \wt_F) \in \WFFA_{\Sigma, \FF_{\arc}}
$
be such that $\beh{\A} = \F$.
By Theorem~\ref{THM:ff_arc_times_reduction_to_monomials}, we may assume without loss of generality that, for every $t \in T$, 
\[
\beh{\wt_T(t)} = \alpha^{\langle a_t, b_t \rangle}_{I_t}
\]
where $a_t, b_t \in S_{\arc}$ and $I_t$ is either a singleton or an open interval.
Moreover, we may assume that $a_t \neq -\infty$ and $b_t \neq -\infty$, since otherwise the evaluation of such a transition weight would be $-\infty$, and transitions of this kind can be removed without affecting the behavior. This implies that $a_t, b_t \in \RR$ for all $t \in T$. Analogously, we may assume that all initial and final weights in $\A$ are real numbers. 

We construct a WFFA
$
\A_{\trop} = (Q, I, T, F, \tilde{\wt}_I, \tilde{\wt}_T, \tilde{\wt}_F) \in \WFFA_{\Sigma, \FF_{\trop, \times}}
$
as follows:
\begin{itemize}
\item For each $q_i \in I$, define $\hat{\wt}_I(q_i) = -\wt(q_i)$.
\item For each $q_f \in F$, define $\hat{\wt}_F(q_f) = -\wt(q_f)$.
\item For each $t \in T$, let $\tilde{c}_t \in \CC_{\FF_{\trop}}$ be a constraint  encoding the set $I_t$ (constructed as in the proof of Lemma~\ref{LEMMA:finarc_exp_piecewise_affine}). Then set
\[
\tilde{\wt}_T(t) = \tilde{c}_t \otimes [\llangle -a_t \rrangle \otimes [-b_t]]
\]
\end{itemize}

Clearly, for every finance word $w = (a_1, d_1) \dots (a_n, d_n) \in \DD \Sigma^*$, we have 
\[
\Run_{\A_{\trop}}(w) = \Run_{\A_{\arc}}(w).
\]

Let $\delta = d_1 \dots d_n$. Then, for every run $\varrho \in \Run_{\A_{\trop}}(w)$,
with $\wt_{\A_{\arc}}(\varrho, \delta) \neq -\infty$, the construction ensures
$
\wt_{\A_{\trop}}(\varrho, \delta) = -\wt_{\A_{\arc}}(\varrho, \delta).
$
Hence,
\[
\begin{aligned}
\beh{\A_{\trop}}(w)
&= \min_{\varrho \in \Run_{\A_{\trop}}(w)} \wt_{\A_{\trop}}(\varrho, \delta) \\
&= \T\!\left(- \max_{\varrho \in \Run_{\A_{\arc}}(w)} \wt_{\A_{\arc}}(\varrho, \delta) \right)
 = \T\!\left(-\beh{\A_{\arc}}(w)\right).
\end{aligned}
\]
where the last equality uses the definition of $\T$ (mapping $-\infty$ to $+\infty$).

Thus, $\beh{\A_{\trop}} = \T(-\F)$.
\qedhere
\end{enumerate}
\end{proof}

\subsection{Proof of the Kleene--Sch\"utzenberger Theorem (Section~\ref{SEC:proof_kleene_schuetzenberger})}

\begin{lemma}[Lemma~\ref{LEMMA:primitive_regexp_recognizable}]
Let $E \subseteq \EE_{\FF}$ be an arbitrary collection of $\FF$-expressions.
\begin{enumerate}[label=(\alph*)]
\item For every $e \in E$ and $a \in \Sigma$, $\beh{e_a} \in \beh{\WFFA_{\Sigma, \FF}^{\tau}}(E)$.
\item For every $s \in S$, $\beh{s_{\eps}} \in \beh{\WFFA_{\Sigma, \FF}(E)}$.
\end{enumerate}
\end{lemma}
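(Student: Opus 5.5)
For part (a), I will give an explicit two-state automaton and check its behavior directly. Fix $e \in E$ and $a \in \Sigma$. Choose two distinct states $q_0, q_1$ and set
\[
\A_{e,a} = (\{q_0,q_1\}, \{q_0\}, \{(q_0,a,q_1)\}, \{q_1\}, [q_0 \mapsto \one], \wt_T, [q_1 \mapsto \one]),
\qquad \wt_T(q_0,a,q_1) = e .
\]
The only transition weight is $e \in E$, and both the initial and the final weight equal $\one$. Hence $\A_{e,a} \in \WFFA^{\tau}_{\Sigma,\FF}(E)$.

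To compute the behavior, note that $q_0 \neq q_1$ and that there is no transition leaving $q_1$. So the runs of $\A_{e,a}$ are exactly the single run $q_0 \xrightarrow{a} q_1$, of length one. Consequently, for every $w \in \DD\Sigma^*$ with $w \neq (a,d)$ for all $d \in \DD$, the set $\Run_{\A_{e,a}}(w)$ is empty and $\beh{\A_{e,a}}(w) = \zero$. For $w = (a,d)$, there is at most one valid run. Its weight is $\one \otimes \beh{e}(d) \otimes \one = \beh{e}(d)$, and this holds in both cases: when the run is valid, and when it is invalid because $\beh{e}(d) = \zero$. The resulting value agrees with $\beh{e_a}(w)$ in all cases.

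For part (b), I fix $s \in S$ and take the one-state automaton with a single state $q$ that is both initial and final. It has no transitions, so $T = \emptyset$ and it is trivially $E$-weighted. I set $\wt_I(q) = s$ and $\wt_F(q) = \one$. Its only run is the empty run on $\eps$, with weight $s \otimes \one = s$, and it has no runs on nonempty words. This gives $\beh{s_\eps}$ exactly, also when $s = \zero$.

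Neither part has a genuine obstacle. The only points that need care are the following.
\begin{itemize}
\item In (a), the two states must be distinct, so that no run exists on $\eps$ and the $\eps$-value is $\zero$.
\item Invalid runs must be correctly identified with weight $\zero$, consistent with the convention $\bigoplus\emptyset = \zero$.
\end{itemize}
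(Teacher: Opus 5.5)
Your proposal is correct and matches the paper's proof. For (a) it uses the same two-state automaton, with a single $a$-transition weighted $e$ and initial and final weights $\one$; for (b) it uses the same one-state automaton with initial weight $s$ and final weight $\one$. Your explicit case check, covering invalid runs via $\bigoplus\emptyset=\zero$ and the case $s=\zero$, simply writes out the verification the paper calls straightforward.
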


\begin{proof}
Construct automata $\A_{e_a} \in \WFFA_{\Sigma, \FF}^{\tau}(E)$ and $\A_{s_{\eps}} \in \WFFA_{\Sigma, \FF}(E)$ as depicted in Figure \ref{FIG:wffa_atomic_regexp}. It is straightforward to verify that $\beh{\A_{e_a}} = \beh{e_a}$ and $\beh{\A_{s_{\eps}}} = \beh{s_{\eps}}$.
Thus, the claims (a) and (b) follow.

\begin{figure}[ht]
\centering
\begin{tikzpicture}[
    >=stealth,
    every node/.style={font=\footnotesize},
    state/.style={circle, draw, minimum size=18pt, inner sep=1pt}
]
\node (q0) {};
\node[state, right=1cm of q0] (q1) {$q_0$};
\node[state, right=2cm of q1] (q2) {$q_1$};
\node[right=1cm of q2] (q22) {};
\node[left=0.2cm of q0] {$\A_{e_a}$};

\node[right=2cm of q22] (p0) {};
\node[state, right=1cm of p0] (p1) {$p_0$};
\node[right=1cm of p1] (p11) {};
\node[left=0.2cm of p0] {$\A_{s_{\eps}}$};

\draw[->] 
	(q0) edge[above] node{$\one$} (q1)
	(q1) edge[above] node{$a \mid e$} (q2)
	(q2) edge[above] node{$\one$} (q22)
	(p0) edge[above] node{$s$} (p1)
	(p1) edge[above] node{$\one$} (p11);
\end{tikzpicture}
\caption{WFFAs $\A_{e_a}$ and $\A_{s_{\eps}}$ from the proof of Lemma~\ref{LEMMA:primitive_regexp_recognizable}.}
\label{FIG:wffa_atomic_regexp}
\end{figure}

\end{proof}

\begin{lemma}[Lemma~\ref{LEMMA:epsfree_regexp_recognizable}]
Let $E \subseteq \EE_{\FF}$ be an arbitrary collection of $\FF$-expressions and $\R \in \Reg_{\Sigma, \FF}^{\epsfree}(E)$. Then, there exists $\A \in \WFFA_{\Sigma, \FF}^{\tau}(E)$ such that $\beh{\A} = \beh{\R}$.
\end{lemma}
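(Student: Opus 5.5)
The proof goes by structural induction on $\R \in \Reg_{\Sigma, \FF}^{\epsfree}(E)$, following the $\eps$-free grammar. Throughout, I maintain the invariant that the automaton constructed for each subexpression is purely transition-weighted and $E$-weighted, with proper behavior. Properness is automatic by Remark~\ref{REM:epsfree_proper}. The point of keeping all initial and final weights equal to $\one$ is that the closure constructions then never need to merge constants into transition weights, so no closure assumption on $E$ is required.

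\textbf{Base case and binary operations.} For $\R = e_a$ with $e \in E$, Lemma~\ref{LEMMA:primitive_regexp_recognizable}(a) gives a two-state automaton in $\WFFA^{\tau}_{\Sigma,\FF}(E)$. For $\R = \R_1 \oplus \R_2$, the induction hypothesis gives automata $\A_1, \A_2 \in \WFFA^{\tau}_{\Sigma,\FF}(E)$. Since $\beh{\A_i}(\eps) = \zero \in \{\zero,\one\}$, Corollary~\ref{COR:closure_properties_purely_transition_weighted}(a) applies; concretely, the disjoint union keeps all initial and final weights equal to $\one$ and leaves the transition weights unchanged. For $\R = \R_1 \cdot \R_2$, I use Corollary~\ref{COR:closure_properties_purely_transition_weighted}(b). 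Here I check explicitly that the normalizations of Lemma~\ref{LEMMA:init_and_final_normalization_for_arbitrary_expressions} only copy transition weights. The new final weights $\wt_I(i)\otimes\wt_F(f)$ equal $\one\otimes\one=\one$, and no $\eps$-final state is added because the behaviors are proper. Gluing the final state of one automaton to the initial state of the other (Lemma~\ref{LEMMA:cauchy_product_kleene_star_normalized_wffa}(a)) therefore yields an automaton in $\WFFA^{\tau}_{\Sigma,\FF}(E)$.

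\textbf{Star cases.} For $\R = \R_1 \cdot \R_2^*$ (and symmetrically $\R_1^* \cdot \R_2$), $\beh{\R_2}$ is proper by Remark~\ref{REM:epsfree_proper}. Corollary~\ref{COR:closure_properties_purely_transition_weighted}(c), i.e.\ Lemma~\ref{LEMMA:closure_kleene_star_for_e}(b), then gives $\A_2^* \in \WFFA^{\tau}_{\Sigma,\FF}(E)$ with $\beh{\A_2^*} = \beh{\R_2}^*$. Next I apply the concatenation step with $\A_1$, again using Corollary~\ref{COR:closure_properties_purely_transition_weighted}(b). The one subtlety is that $\beh{\A_2^*}(\eps)=\one$, so the starred factor is not proper. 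Part (b) of that corollary is, however, stated for arbitrary $\F_1,\F_2$ in the purely transition-weighted class, and its proof only needs $\F_i(\eps)\in\{\zero,\one\}$ so that the final weights $\wt_I(i)\otimes\wt_F(f)$ and the $\eps$-weight of $q_I$ remain $\one$. I will verify this explicitly. The resulting product $\beh{\R_1}\cdot\beh{\R_2}^*$ is proper, since the $\R_1$ factor is proper, so the invariant is preserved.

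\textbf{Main obstacle.} I expect the delicate step to be the star case, specifically checking that Lemma~\ref{LEMMA:closure_kleene_star_for_e}(b) really stays inside $E$ rather than inside $\clp{E}$. The Kleene-star construction redirects edges into the final state back to the initial state; after complete normalization this can produce parallel transitions whose weights are $\oplus$-sums $[\one\otimes e_1\oplus\dots]$. I would handle this exactly as in that lemma. First, rewrite each such weight as an element of $\clp{E}$, which is possible because all constants involved are $\one$. Then split the parallel transitions with Lemma~\ref{LEMMA:plus_closed_expressiveness}. Finally, observe that this splitting only duplicates states with the copied initial and final weights, all equal to $\one$, so the automaton remains purely transition-weighted.
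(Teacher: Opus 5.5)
Your proposal is correct and follows the paper's proof essentially step for step. The paper also argues by structural induction over the $\eps$-free grammar, using Lemma~\ref{LEMMA:primitive_regexp_recognizable}(a) for $e_a$ and Corollary~\ref{COR:closure_properties_purely_transition_weighted}(a),(b),(c) for $\oplus$, concatenation and the two star-concatenation cases, including the same remark that $\beh{\R_2^*}(\eps)=\one$ still allows the purely transition-weighted concatenation. Your extra checks, that the normalizations and the splitting of Lemma~\ref{LEMMA:plus_closed_expressiveness} only copy initial and final weights equal to $\one$, make explicit what the paper leaves implicit.
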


\begin{proof}
We proceed by induction on the structure of $\R$.
\begin{itemize}
\item For $\R = e_a$ with $e \in E$ and $a \in \Sigma$, apply Lemma~\ref{LEMMA:primitive_regexp_recognizable}(a).
\item For $\R = \R_1 \oplus \R_2$, apply the induction hypothesis for $\R_1$ and $\R_2$ and Corollary \ref{COR:closure_properties_purely_transition_weighted}(a).
\item For $\R = \R_1 \cdot \R_2$, apply the induction hypothesis and Remark~\ref{REM:epsfree_proper} for $\R_1, \R_2 \in \Reg_{\Sigma, \FF}^{\epsfree}(E)$. Then, the claim follows from Corollary \ref{COR:closure_properties_purely_transition_weighted}(b).
\item Let $\R = \R_1 \cdot \R_2^*$ with $\R_1, \R_2 \in \Reg_{\Sigma, \FF}^{\epsfree}(E)$. By induction hypothesis, $\beh{\R_1}, \beh{\R_2} \in \beh{\WFFA_{\Sigma, \FF}^{\tau}(E)}$. By Remark~\ref{REM:epsfree_proper}, $\beh{R_1}(\eps) = \beh{\R_2}(\eps) = \zero$. By Corollary \ref{COR:closure_properties_purely_transition_weighted}, $\beh{\R_2^*} \in \beh{\WFFA_{\Sigma, \FF}^{\tau}(E)}$. Since $\beh{R_2^*}(\eps) = \one$, it follows from Corollary \ref{COR:closure_properties_purely_transition_weighted}(b) that $\R \in \beh{\WFFA_{\Sigma, \FF}^{\tau}(E)}$.
\item Let $\R = \R_1^* \cdot \R_2$. This case is symmetric to the previous one.
\end{itemize}
\end{proof}

\begin{lemma}[Lemma~\ref{LEMMA:restricted_regexp_recognizable}]
Let $E \subseteq \EE_{\FF}$ be an arbitrary collection of $\FF$-expressions and $\R \in \Reg_{\Sigma, \FF}^{\res}(E)$. Then, there exists $\A \in \WFFA_{\Sigma, \FF}(E)$ such that $\beh{\A} = \beh{\R}$.
\end{lemma}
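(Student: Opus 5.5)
The plan is a structural induction on $\R \in \Reg_{\Sigma, \FF}^{\res}(E)$ along the grammar defining restricted expressions. The invariant is membership in $\beh{\WFFA_{\Sigma, \FF}(E)}$, with no purity requirement, so the general closure results of Theorem~\ref{THM:closure_properties_e_weighted_wffa} can be used wherever they need no hypothesis on $E$.

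For the base cases, take $\R = e_a$ with $e \in E$. Lemma~\ref{LEMMA:primitive_regexp_recognizable}(a) gives a purely transition-weighted $E$-weighted automaton, and this is in particular in $\WFFA_{\Sigma, \FF}(E)$. For $\R = s_\eps$, Lemma~\ref{LEMMA:primitive_regexp_recognizable}(b) applies directly; the one-state automaton there has no transitions, so it is trivially $E$-weighted.

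For the inductive cases $\R = \R_1 \oplus \R_2$ and $\R = \R_1 \cdot \R_2$, the induction hypothesis gives $E$-weighted automata for $\R_1$ and $\R_2$. Theorem~\ref{THM:closure_properties_e_weighted_wffa}(a) and (c) then yield an $E$-weighted automaton for the sum and for the Cauchy product; neither part needs a closure assumption on $E$. The Cauchy case rests on Lemma~\ref{LEMMA:closure_cauchy_for_arbitrary_expressions}, which uses the state-recording normalization of Lemma~\ref{LEMMA:init_and_final_normalization_for_arbitrary_expressions} so that no initial or final weights are pushed into transitions.

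The remaining case, $\R = (\R')^*$ with $\R' \in \Reg_{\Sigma, \FF}^{\epsfree}(E)$, is the only delicate one. Theorem~\ref{THM:closure_properties_e_weighted_wffa}(d) is unavailable here because $E$ need not be $(S,\otimes)$-closed. Instead I would apply Lemma~\ref{LEMMA:epsfree_regexp_recognizable} to get $\A' \in \WFFA^{\tau}_{\Sigma, \FF}(E)$ with $\beh{\A'} = \beh{\R'}$. By Remark~\ref{REM:epsfree_proper}, $\beh{\R'}$ is proper, so Corollary~\ref{COR:closure_properties_purely_transition_weighted}(c) (equivalently Lemma~\ref{LEMMA:closure_kleene_star_for_e}(b)) gives $\beh{\R'}^* \in \beh{\WFFA^{\tau}_{\Sigma, \FF}(E)} \subseteq \beh{\WFFA_{\Sigma, \FF}(E)}$. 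The key point, and the main thing to check, is that purity makes the star construction of Lemma~\ref{LEMMA:cauchy_product_kleene_star_normalized_wffa} combine only expressions of the form $\one \otimes e$ and $\oplus$-sums of such, which lie in $\clp{E}$ up to equivalence. Lemma~\ref{LEMMA:plus_closed_expressiveness} then removes the sums. This step explains why the grammar restricts stars to $\eps$-free arguments, and it completes the induction.
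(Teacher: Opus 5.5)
Your proposal is correct and matches the paper's proof essentially step for step. Both use structural induction, with Lemma~\ref{LEMMA:primitive_regexp_recognizable} for the base cases and Theorem~\ref{THM:closure_properties_e_weighted_wffa}(a),(c) for sum and Cauchy product; for $(\R_{\epsfree})^*$ both combine Lemma~\ref{LEMMA:epsfree_regexp_recognizable}, Remark~\ref{REM:epsfree_proper} and Corollary~\ref{COR:closure_properties_purely_transition_weighted}(c), and your added explanation of why purity keeps the star construction within $\clp{E}$ is exactly the content of Lemma~\ref{LEMMA:closure_kleene_star_for_e}(b), on which that corollary rests.
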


\begin{proof}
We proceed by induction on the structure of $\R$.
\begin{itemize}
\item For regular expressions $e_a$ and $s_{\eps}$, apply Lemma~\ref{LEMMA:primitive_regexp_recognizable}.
\item For $\R = \R_1 \oplus \R_2$, apply the induction hypothesis for $\R_1$ and $\R_2$ and Theorem~\ref{THM:closure_properties_e_weighted_wffa}(a).
\item For $\R = \R_1 \cdot \R_2$, apply the induction hypothesis for $\R_1, \R_2$ and Theorem~\ref{THM:closure_properties_e_weighted_wffa}(c).
\item Let $\R = (\R_{\epsfree})^*$ where $\R_{\epsfree} \in \Reg_{\Sigma, \FF}^{\epsfree}(E)$. If follows from Lemma~\ref{LEMMA:epsfree_regexp_recognizable} that $\beh{\R_{\epsfree}} \in \beh{\WFFA_{\Sigma, \FF}^{\tau}(E)}$. By Remark~\ref{REM:epsfree_proper}, $\beh{\R_{\epsfree}}$ is proper. Then, by Corollary~\ref{COR:closure_properties_purely_transition_weighted}(c), ${\beh{(\R_{\epsfree})^*} \in \beh{\WFFA_{\Sigma, \FF}^{\tau}(E)}}$.
\end{itemize}
\end{proof}

\begin{lemma}[Lemma~\ref{LEMMA:from_wffa_to_restricted_regexp}]
Let $E \subseteq \EE_{\FF}$ and $\A \in \WFFA_{\Sigma, \FF}(E)$. Then, there exists $\R \in \Reg_{\Sigma, \FF}^{\res}(E)$ such that $\beh{\R} = \beh{\A}$.
\end{lemma}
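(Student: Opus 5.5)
We follow the McNaughton--Yamada (Kleene) state-elimination scheme sketched in the main text, but keep every building block inside the restricted syntax. Let $\A = (Q, I, T, F, \wt_I, \wt_T, \wt_F)$ with $Q = [m]$. For $p,r \in Q$ and $k \in \{0,\dots,m\}$ we define $\eps$-free expressions $\R^{(k)}_{p,r} \in \Reg_{\Sigma,\FF}^{\epsfree}(E)$ by
\[
\R^{(0)}_{p,r} = \bigoplus_{t=(p,a,r)\in T} [\wt_T(t)]_a ,
\qquad
\R^{(k+1)}_{p,r} = \R^{(k)}_{p,r} \oplus \R^{(k)}_{p,k+1}\cdot\bigl(\R^{(k)}_{k+1,k+1}\bigr)^*\cdot \R^{(k)}_{k+1,r}.
\]
The $\eps$-free grammar contains $\R\cdot\R^*$ and $\R\cdot\R$, so $\R^{(k)}_{p,k+1}\cdot(\R^{(k)}_{k+1,k+1})^*$ is $\eps$-free, and concatenating with $\R^{(k)}_{k+1,r}$ stays $\eps$-free. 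Hence the recursion never leaves $\Reg^{\epsfree}_{\Sigma,\FF}(E)$, and all atomic subexpressions are $[\wt_T(t)]_a$ with $\wt_T(t)\in E$. Since $\eps$-free expressions lie in $\Reg^{\res}_{\Sigma,\FF}(E)$, so do the $\R^{(k)}_{p,r}$. Every star is applied to an $\eps$-free expression, so validity follows from Remark~\ref{REM:epsfree_proper}.

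Empty sums cause a technical issue: if there is no transition $(p,a,r)$, the expression $\R^{(0)}_{p,r}$ is undefined. We handle this by tracking a flag ``empty'' and simplifying: $\emptyset\oplus\R=\R$, $\emptyset\cdot\R=\emptyset$, and $(\emptyset)^*$ never occurs alone, since a star is only formed inside a product whose outer factors are nonempty (otherwise the whole term is dropped). Alternatively, if $E$ contains some $e$, we can use $[e]_a\cdot\zero_\eps$, but that is not $\eps$-free, so the flag approach is cleaner.

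The core claim is proved by induction on $k$, using the commutativity of $\SS$ only for reordering finite sums. For every nonempty finance word $w=(a_1,d_1)\dots(a_n,d_n)$, $\beh{\R^{(k)}_{p,r}}(w)$ equals the $\oplus$-sum, over all paths $p=q_0\xrightarrow{a_1}\cdots\xrightarrow{a_n}q_n=r$ whose intermediate states lie in $[k]$, of $\bigotimes_i\beh{\wt_T(t_i)}(d_i)$. Moreover $\beh{\R^{(k)}_{p,r}}(\eps)=\zero$. The induction step is the standard unique decomposition of a path at its visits to state $k+1$: first segment, then zero or more loops, then last segment. This matches the Cauchy product and the Kleene-star definition (Definition~\ref{DEF:qfl_operations}), where only finitely many factorisations contribute because all factors are proper. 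Paths with zero weight are harmless, since they contribute $\zero$ on both sides. This bijection and its bookkeeping are the main obstacle, but they are routine.

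Finally we set
\[
\R = \bigoplus_{i\in I,f\in F} \bigl(\wt_I(i)\bigr)_\eps\cdot \R^{(m)}_{i,f}\cdot\bigl(\wt_F(f)\bigr)_\eps \;\oplus\; \bigl(\beh{\A}(\eps)\bigr)_\eps ,
\]
omitting empty terms, and we use $\zero_\eps$ if everything is empty. This is built with $s_\eps$, $\oplus$ and $\cdot$ only, so it lies in $\Reg^{\res}_{\Sigma,\FF}(E)$. On nonempty words, $\R$ yields $\bigoplus_{i,f}\wt_I(i)\otimes(\text{path sum})\otimes\wt_F(f)=\beh{\A}(w)$. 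On $\eps$, the only contribution comes from the last term, because the $\R^{(m)}_{i,f}$ are proper; and $\beh{\A}(\eps)=\bigoplus_{q\in I\cap F}\wt_I(q)\otimes\wt_F(q)$. Hence $\beh{\R}=\beh{\A}$.
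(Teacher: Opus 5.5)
Your proof is correct and follows essentially the same route as the paper: state elimination producing $\eps$-free expressions $\R^{(k)}_{p,r}$ (so every star is applied to a proper QFL), wrapping them as $(\wt_I(i))_\eps\cdot\R^{(m)}_{i,f}\cdot(\wt_F(f))_\eps$, and adding an $s_\eps$ term for the empty word. You are in fact more careful than the paper, which ignores empty sums in $\R^{(0)}_{p,r}$ and states the path condition for all states $q_1,\dots,q_n$ instead of only the intermediate ones; one small point to make explicit is that when $\R^{(k)}_{k+1,k+1}$ is empty, the star factor is simply dropped, so the term becomes $\R^{(k)}_{p,k+1}\cdot\R^{(k)}_{k+1,r}$.
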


\begin{proof}
Let $\A = (Q, I, T_c, T_m, F, \wt_I, \wt_T, \wt_F)$. Assume that $Q = \{1, \dots, m\}$.

For any transition $t = (p, a, q) \in T$, let $\R_t = (\wt_T(t))_a \in \Reg_{\Sigma, \FF}^{\epsfree}(E)$.
We proceed by induction on $k \in \{0, \dots, m\}$. For all $p, r \in Q$, we define the regular expression $\R_{p, r}^{(k)} \in \Reg_{\Sigma, \FF}^{\epsfree}(E)$ as follows:
\begin{itemize}
\item $\R_{p,r}^{(0)} = \bigoplus_{a \in \Sigma} \mathcal \R_{(p, a, r)}$
\item $\R_{p,r}^{(k+1)} = \R_{p,r}^{(k)} \oplus \R_{p,k+1}^{(k)} \cdot \big(\R_{k + 1, k + 1}^{(k)} \big)^* \cdot \R_{k+1,r}^{(k)}$ for $k \ge 0$. Here, since $\R_{k+1,k+1}^{(k)}$ is $\eps$-free, $\beh{\R_{k + 1, k + 1}^{(k)}}$ is proper by Remark~\ref{REM:epsfree_proper}.
\end{itemize}
One easily verifies, via a standard induction on $k \ge 0$, that $\R_{p, r}^{(k)}$ is $\varepsilon$-free.

For any word $u = a_1 \dots a_n \in \Sigma^*$, let $\textsc{Paths}_{p, r}^{(k)}(u)$ denote the set of all paths (i.e., run fragments) of $\A$ 
\[
\pi = \bigl(q_0 \xrightarrow{a_1} q_1 \xrightarrow{a_2} {\dots} \xrightarrow{a_n} q_n \bigr)
\]
where $q_0 = p$, $q_n = r$ and, for all $i \in [n]$, $t_i = (q_{i-1}, a_i, q_i) \in T$ and $q_i \le k$.
Given a {\em path context} $\delta = d_1 \dots d_n \in \DD^*$, we define the {\em weight} of $\pi$ in the usual manner by
\[
\wt^{\mathrm{path}}_{\A}(\pi, \delta) = \bigotimes_{i \in [n]} \beh{\wt_T(t_i)}(d_i)
\]
where we do not include the initial and final weights into the computation.
We say that $\pi$ is {\em valid} in context $\delta$ if $\wt^{\mathrm{path}}(\pi, \delta) \neq \zero$. Given $w = (a_1, d_1) \dots (a_n, d_n) \in \DD \Sigma^*$, let $\textsc{Paths}_{p,r}^{(k)}(w)$ denote the set of all $\pi \in \textsc{Paths}_{p,r}^{(k)}(a_1 \dots a_n)$ which are valid in path context $d_1 \dots d_n$.

Following the standard inductive argument used in proofs of the classical Kleene-Schützenberger theorem, one can show by induction on $k$ that, for all $p,r \in Q$ and all finance data words $w = (a_1, d_1) \dots (a_n, d_n) \in \DD \Sigma^+$,
\[
\beh{\R_{p,r}^{(k)}}(w) = \bigoplus_{\pi \in \textsc{Paths}_{p,r}^{(k)}(a_1 \dots a_n)} \wt^{\mathrm{path}}_{\A}(\pi, d_1 \dots d_n).
\]
It can also be verified that 
\[
\Run_{\A}(w) = \bigcup_{i \in I, f \in F} \textsc{Paths}_{i, f}^{(m)}(w).
\]
Let $\delta = d_1 \dots d_n$. Then, for any $\pi \in \textsc{Paths}_{i, f}^{(m)}(w)$, we have $\pi \in \Run_{\A}(w)$ and
\[
\wt_I(i) \otimes \wt^{\mathrm{path}}_{\A}(\pi,  \delta) \otimes \wt_F(f) = \wt_{\A}(\pi, \delta).
\]
Let $\R \in \Reg_{\Sigma, \FF}^{\res}(E)$ be defined as
\[
\R = \bigoplus_{i \in I, f \in F} \bigl((\wt_I(i))_{\eps} \cdot \R_{i, f}^{(m)} \cdot (\wt_F(f))_{\eps} \bigr).
\]
Then, $\beh{\R}(w) = \beh{\A}(w)$ for each $w \in \DD \Sigma^*$ with $w \neq \eps$. Hence, if $\beh{\A}$ is proper, we have $\beh{\R} = \beh{\A}$. If $\A$ is not proper, let $\R' = \R \oplus s_{\eps}$ with $s = \beh{\A}(\eps)$. Then, $\R' \in \Reg_{\Sigma, \FF}^{\res}(E)$ and $\beh{\R'} = \beh{\A}$.
\end{proof}

\subsection{Matrix Representation of WFFAs (Section \ref{SEC:matrix_representation})}

\begin{lemma}[Remark~\ref{REM:finance_homo_counterexample}]
Let $\Sigma = \{\bot\}$ and $\FF = \FF_{\arc, \times}$. Let $\F: \FF \Sigma^* \to S$ be the QFL defined for all $d_1, \dots, d_n \in \DD$ by
$
\F(\bot_{d_1, \dots, d_n}) = d_1^2 + {\dots} + d_n^2.
$
Then:
\begin{enumerate}[label=(\alph*)]
\item For $k = 1$,  there exist vectors $\lambda \in S^{1 \times k}$, $\nu \in S^{k \times 1}$ and a morphism ${\mu: \DD \Sigma_{\free}^* \to \SS_{\boxtimes}^{k \times k}}$ such that $\F(w) = \lambda \boxtimes \mu(w) \boxtimes \nu$ for all $w \in \FF \Sigma^*$.
\item $\F \notin \beh{\WFFA_{\Sigma, \FF}}$.
\end{enumerate}
\end{lemma}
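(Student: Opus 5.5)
For part (a) I will exhibit the morphism directly with $k=1$. Set $\lambda = (0.0)$ and $\nu = (0.0)$; these are the multiplicative units $\one$ of $\SS_{\arc}$. Define $\mu$ on generators by $\mu((\bot,d)) = (d^2) \in S^{1\times 1}$, and extend it to the free monoid $\DD\Sigma_{\free}^*$ as a monoid morphism into $(S^{1\times1},\boxtimes,\UU^{(1)})$. This extension is unique and exists because the monoid is free. Since $\boxtimes$ on $1\times 1$ matrices is just $\otimes$, which is ordinary $+$, we get $\mu(\bot_{d_1,\dots,d_n}) = (d_1^2+\dots+d_n^2)$. Hence $\lambda\boxtimes\mu(w)\boxtimes\nu = \F(w)$, and for $w=\eps$ the value is $0.0$, the empty sum. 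The essential point is that this morphism need not be $E$-generated for any $E\subseteq\EE_\FF$, so it does not clash with Theorem~\ref{THM:morphism_characterization_wffa}.

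For part (b) I will argue by contradiction. Suppose $\A\in\WFFA_{\Sigma,\FF}$ recognizes $\F$, and restrict attention to words of length $1$, so that $\beh{\A}(\bot_d) = d^2$ for all $d\in\DD$. By Lemma~\ref{LEMMA:wffa_to_morphism}, or simply by unfolding the definition of runs, $\beh{\A}(\bot_d)=\max_{(i,t,f)}\bigl(\wt_I(i)+\beh{\wt_T(t)}(d)+\wt_F(f)\bigr)$, taken over the finitely many one-transition runs. By Lemma~\ref{LEMMA:finarc_exp_piecewise_affine}, each $\beh{\wt_T(t)}$ is piecewise affine, with values $-\infty$ allowed on some pieces. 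Adding real constants and taking a finite maximum preserves this property. To see that the maximum stays piecewise affine, I would refine all the partitions to a common finite one and then split each interval further at the finitely many crossing points of affine pieces, which is the same mechanism as the $\oplus$ case in the proof of Lemma~\ref{LEMMA:finarc_exp_piecewise_affine}. So $d\mapsto\beh{\A}(\bot_d)$ is piecewise affine on $\DD$ with finitely many pieces.

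Now pick the unbounded last piece $(x,\infty)$. On it the function is either constantly $-\infty$ or equal to $ad+b$ for real $a,b$. But $d^2$ is real-valued there and strictly convex, so it agrees with no affine function on more than two points of $(x,\infty)$. This contradiction shows $\F\notin\beh{\WFFA_{\Sigma,\FF}}$.

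The step that needs the most care is making sure the closure of piecewise affinity under finite max and under shifts by real constants is really covered by the earlier lemma. If it is not, I would prove it directly by common refinement, splitting each interval at the finitely many solutions of $a_id+b_i=a_jd+b_j$. One could instead restrict to the length-$1$ slice and use the one-run structure above, so that the remaining argument is elementary.
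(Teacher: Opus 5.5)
Your proof is correct and follows the paper's argument: for (a), the same $1\times 1$ morphism with $\lambda=\nu=(\one)$; for (b), reduction to length-one words, piecewise affinity via Lemma~\ref{LEMMA:finarc_exp_piecewise_affine}, and a comparison with $d^2$ on the unbounded last piece. The closure step you flag needs no separate argument: the paper notes that $\bigoplus_{t=(i,\bot,f)}[\wt_I(i)\otimes\wt_T(t)\otimes\wt_F(f)]$ is itself an $\FF$-expression, so the lemma applies directly to the whole length-one behavior.
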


\begin{proof}
\begin{enumerate}[label=(\alph*)]
\item Let $\lambda = \nu = (\one)$. Let $\mu: \DD \Sigma^* \to S^{1 \times 1}$ be the morphism defined by $\mu(\eps) = (\one)$ and, for all $d_1, \dots, d_n \in \DD$ with $n \in \NN_{\ge 1}$ by $\mu(\bot_{d_1, \dots, d_n}) = (d_1^2 + {\dots} + d_n^2)$. Then, $\F(w) = \lambda \boxtimes \mu(w) \boxtimes \nu$ for all $w \in \DD \Sigma^*$.
\item Suppose that there exists $\A = (Q, I, T, F, \wt_I, \wt_T, \wt_F) \in \WFFA_{\Sigma, \FF}$ such that $\beh{\A} = \F$. Let $T' \subseteq T$ be the set of all transitions $t = (i, \bot, f)$ of $\A$ where $i \in I$ and $f \in F$. We also let $e_t = [\wt_I(i) \otimes \wt_T(t) \otimes \wt_F(f)] \in \EE_{\FF}$. Consider $e = [\bigoplus_{t \in T'} e_t] \in \EE_{\FF}$. Then, for all $d \in \DD$, $d^2 = \beh{\A}(w_d) = \beh{e}(d)$. By Lemma~\ref{LEMMA:finarc_exp_piecewise_affine}, $\beh{e}(d): \DD \to S$ is a piecewise affine function. Then, there exist $d' \in \DD$ and constants $A, B \in S$ such that $\beh{e}(d) = A \cdot d + B$ for all $d \ge d'$. Hence, $d^2 = A \cdot d + B$ for all $d \ge d'$. A contradiction. 
\qedhere
\end{enumerate}
\end{proof}

\begin{lemma}[Lemma~\ref{LEMMA:wffa_to_morphism}]
Let $E \subseteq \EE_{\FF}$ with $\beh{\zero} \in \beh{E}$ and $\A \in \WFFA_{\Sigma, \FF}(E)$. Then, there exist $k \in \NN_{\ge 1}$, vectors $\lambda \in S^{1 \times k}$ and $\nu \in S^{k \times 1}$, and an $E$-generated morphism $\mu: \DD \Sigma_{\free}^* \to \SS^{k \times k}_{\boxtimes}$ such that $\beh{\A}(w) = \lambda \boxtimes \mu(w) \boxtimes \nu$ for all $w \in \DD \Sigma^*$.
\end{lemma}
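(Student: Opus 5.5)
The plan is to encode the automaton directly as a linear representation. I enumerate the states of $\A = (Q, I, T, F, \wt_I, \wt_T, \wt_F)$ as $q_1, \dots, q_k$ with $k = |Q|$. If $Q = \emptyset$, then $\beh{\A}$ is constantly $\zero$; in that case I take $k = 1$, $\lambda = (\zero)$, and an arbitrary $\xi$ with entries equivalent to $\zero$.

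\textbf{Defining the data.} I set $\lambda_i = \wt_I(q_i)$ if $q_i \in I$ and $\lambda_i = \zero$ otherwise. Dually, $\nu_i = \wt_F(q_i)$ if $q_i \in F$ and $\nu_i = \zero$ otherwise. Since $\beh{\zero} \in \beh{E}$, I fix some $e_{\zero} \in E$ with $\beh{e_{\zero}}(d) = \zero$ for all $d \in \DD$. For each $a \in \Sigma$, I define $\xi(a) \in E^{k \times k}$ by $\xi(a)_{i,j} = \wt_T(q_i, a, q_j)$ if $(q_i, a, q_j) \in T$, and $\xi(a)_{i,j} = e_{\zero}$ otherwise. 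All entries lie in $E$ because $\A$ is $E$-weighted, so $\mu_\xi$ is $E$-generated.

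\textbf{Key identity.} Next I show, by induction on $n = |w|$ for $w = (a_1,d_1)\dots(a_n,d_n)$, that
\[
(\mu_\xi(w))_{i,j} = \bigoplus_{\pi} \bigotimes_{l=1}^n \beh{\wt_T(t_l)}(d_l),
\]
where $\pi$ ranges over all transition sequences $t_1 \dots t_n$ of $\A$ from $q_i$ to $q_j$ with labels $a_1 \dots a_n$.

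For $n = 0$, $\mu_\xi(\eps) = \UU^{(k)}$, which matches the empty path sum ($\one$ on the diagonal, $\zero$ off it). For the inductive step, I write $\mu_\xi(w(a,d)) = \mu_\xi(w) \boxtimes \beh{\xi(a)}(d)$ and expand the matrix product using distributivity. Terms whose pair $(q_x, a, q_j)$ is not in $T$ contribute $\otimes\,\zero = \zero$ by absorption, so they vanish.

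\textbf{Conclusion.} Multiplying by $\lambda$ on the left and $\nu$ on the right and distributing again gives
\[
\lambda \boxtimes \mu_\xi(w) \boxtimes \nu = \bigoplus \wt_{\A}(\varrho, d_1\dots d_n),
\]
with the sum over all $\varrho \in \Run_{\A}$ with label $a_1\dots a_n$. Pairs with $q_i \notin I$ or $q_j \notin F$ drop out because $\lambda_i$ or $\nu_j$ is $\zero$.

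It remains to reconcile this with the definition of $\beh{\A}(w)$, which sums only over \emph{valid} runs. Invalid runs have weight exactly $\zero$, and $\zero$ is neutral for $\oplus$, so the two sums agree. The case $w = \eps$ gives $\bigoplus_i \lambda_i \otimes \nu_i$, which matches the length-0 runs $q_i \in I \cap F$.

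\textbf{Main obstacle.} The only subtle points are two bookkeeping matters. One is the discrepancy between all runs and valid runs, handled by absorption and neutrality of $\zero$. The other is the need for a $\zero$-valued expression inside $E$, which is precisely the hypothesis $\beh{\zero} \in \beh{E}$. Everything else is routine distributivity in the semiring.
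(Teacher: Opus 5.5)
Your proposal is correct and follows essentially the same construction as the paper: enumerate the states, read off $\lambda$ and $\nu$ from the initial and final weights (with $\zero$ elsewhere), and fill $\xi(a)$ with the transition expressions, using a fixed $e_{\zero}\in E$ for missing entries. The paper simply asserts the resulting identity ("it can be shown"); your induction on $|w|$ supplies that verification, including the valid-versus-invalid-run bookkeeping. You also handle the degenerate case $Q=\emptyset$ explicitly, whereas the paper just assumes $Q\neq\emptyset$.
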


\begin{proof}
Let $e_{\zero} \in E$ such that $\beh{e_{\zero}} = \beh{\zero}$.
Let $\A = (Q, I, T, F, \wt_I, \wt_T, \wt_F)$. Assume that $Q \neq \emptyset$. Let $(q_1, \dots, q_k)$ be an enumeration of $Q$ and let:
\begin{itemize}
\item $\lambda = (\lambda_1, \dots, \lambda_k)$ where, for each $i \in [k]$, $\lambda_i = \begin{cases}
\wt_I(q_i), & \text{if } q_i \in I, \\
\zero, & \text{otherwise};
\end{cases}$
\item $\nu = (\nu_1, \dots, \nu_k)^T$ where, for each $i \in [k]$, $\nu_i = \begin{cases} \wt_F(q_i), & \text{if } q_i \in F, \\ \zero, & \text{otherwise};
\end{cases}$
\item $\xi: \Sigma \to E^{k \times k}$ be defined for all $a \in \Sigma$ and $i, j \in [k]$ by
$$
(\xi(a))_{i,j} = \begin{cases}
\wt_T(q_i, a, q_j), & \text{if } (q_i, a, q_j) \in T, \\
e_{\zero}, & \text{otherwise}.
\end{cases}
$$
\end{itemize}
It can be shown that $\beh{\A}(w) = \lambda \boxtimes \mu_{\xi}(w) \boxtimes \nu$ for all $w \in \DD \Sigma^*$.

\end{proof}

\begin{lemma}[Lemma~\ref{LEMMA:morphism_to_wffa}]
Let $k \in \NN_{\ge 1}$, $E \subseteq \EE_{\FF}$, $\lambda \in S^{1 \times k}$, $\nu \in S^{k \times 1}$ and $\xi: \Sigma \to E^{k \times k}$. Let $\F: \DD \Sigma^* \to S$ be defined by $\F(w) = \lambda \boxtimes \mu_{\xi}(w) \boxtimes \nu$ for all $w \in \DD \Sigma^*$. Then, $\F \in \beh{\WFFA_{\Sigma, \FF}(E)}$.
\end{lemma}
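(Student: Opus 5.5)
We build the WFFA directly from the data $(k,\lambda,\nu,\xi)$, in the same way as in the proof sketch given in the main text. Let
\[
\A = ([k], [k], T, [k], \wt_I, \wt_T, \wt_F)
\]
with $\wt_I(i) = \lambda_i$ and $\wt_F(i) = \nu_i$ for all $i \in [k]$. The transition set is $T = \{(i,a,j) \mid i,j \in [k],\ a \in \Sigma\}$, and the transition weights are $\wt_T(i,a,j) = (\xi(a))_{i,j} \in E$. This gives $\A \in \WFFA_{\Sigma,\FF}(E)$ at once. Note that no assumption $\beh{\zero} \in \beh{E}$ is needed, because every transition is present and carries an entry of $\xi(a)$ as its weight.

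Next we show that $\beh{\A}(w) = \lambda \boxtimes \mu_{\xi}(w) \boxtimes \nu$ for every $w = (a_1,d_1)\dots(a_n,d_n)$. For $n = 0$, $\mu_\xi(\eps) = \UU^{(k)}$, so the right-hand side equals $\bigoplus_i \lambda_i \otimes \nu_i$. This is exactly the sum over the length-$0$ runs $i$, where zero-weight runs contribute $\zero$ and are therefore harmless. For $n \ge 1$, we expand the iterated matrix product by distributivity:
\[
\lambda \boxtimes \mu_\xi(w) \boxtimes \nu = \bigoplus_{q_0,\dots,q_n \in [k]} \lambda_{q_0} \otimes \bigotimes_{m=1}^n \beh{(\xi(a_m))_{q_{m-1},q_m}}(d_m) \otimes \nu_{q_n}.
\]
This formula is proved by a short induction on $n$, using the associativity of $\boxtimes$ and the definition of $\beh{X}(d)$ entrywise.

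Each index tuple $(q_0,\dots,q_n)$ corresponds bijectively to a run $q_0 \xrightarrow{a_1} \dots \xrightarrow{a_n} q_n$ of $\A$ with label $a_1\dots a_n$. Under this correspondence, the summand is precisely $\wt_\A(\varrho, d_1\dots d_n)$.

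The only point requiring care, and it is minor, is that $\beh{\A}(w)$ sums only over valid runs, i.e.\ those with nonzero weight, whereas the matrix sum ranges over all tuples. The invalid runs have weight $\zero$, which is the neutral element of $\oplus$, so dropping them does not change the sum. Hence the two expressions agree, and $\F = \beh{\A} \in \beh{\WFFA_{\Sigma,\FF}(E)}$.
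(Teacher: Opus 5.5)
Your proof is correct and follows the same approach as the paper: all states in $[k]$ are initial and final, $T = [k]\times\Sigma\times[k]$ is complete, and the weights are read off from $\lambda$, $\nu$ and $\xi$. Your distributive expansion of the matrix product, together with the observation that invalid runs contribute $\zero$, spells out the verification that the paper leaves as ``by construction''.
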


\begin{proof}
Let $\lambda = (\lambda_1, \dots, \lambda_k)$ and $\nu = (\nu_1, \dots, \nu_k)^T$. Let $\A \in \WFFA_{\Sigma, \FF}(E)$ be defined as $\A = (Q, I, T, F, \wt_I, \wt_T, \wt_F) $ where:
\begin{itemize}
\item $Q = I = F = [k]$, $T = Q \times \Sigma \times Q$;
\item for all $i \in [k]$, $\wt_I(i) = \lambda_i$ and $\wt_F(i) = \nu_i$;
\item for all $t = (i, a, j) \in T$, $\wt_T(t) = (\xi(a))_{i,j} \in E$.
\end{itemize}
Then, $\A \in \WFFA_{\Sigma, \FF}(E)$ and $\beh{\A} = \L$.
\end{proof}

\subsection{Decision Problems for WFFAs (Section~\ref{SEC:decision_problems})}

\begin{lemma}[Lemma~\ref{LEMMA:expression_support_decidable}]
Let $\FF = \FF_{\arc, \times}$ and $e \in \EE_{\FF}$. Then, it is decidable whether $\supp(\beh{e}) \neq \emptyset$.
\end{lemma}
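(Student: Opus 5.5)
The plan is to reduce the question to finitely many elementary checks by means of the explicit piecewise-affine normal form constructed in the appendix proof of Lemma~\ref{LEMMA:finarc_exp_piecewise_affine}. That proof is effectively constructive: by structural induction on $e$ it produces a finite set $\delta_e = \{d_1,\dots,d_k\}$ with $0<d_1<\dots<d_k$ and a coefficient map $c_e:\P(\delta_e)\to S^2$ with
\[
\beh{e} = \bigoplus\bigl(\alpha_I^{c_e(I)} \mid I\in\P(\delta_e)\bigr).
\]
The first step is to verify that each inductive step can be carried out algorithmically, assuming the constants occurring in $e$ are given exactly (e.g.\ as rationals).

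\textbf{Effectiveness of the induction.} For constants and bindings the representation is immediate. For $\otimes$ we first refine both partitions to a common refinement (Remark~\ref{REMARK:piecewise_affine_extension}) and then combine the coefficients piecewise, which is trivial. The cases $\oplus$, $=$ and $\neq$ are the only ones that need care: on each open interval $I_j$ one must decide whether two affine functions $a x + b$ and $a' x + b'$ coincide, differ everywhere, or cross at exactly one point of $I_j$. This is done by comparing coefficients and, when $a\neq a'$, solving $x = (b'-b)/(a-a')$ and testing whether $x\in I_j$. All of this is exact rational arithmetic. The $-\infty$ cases must be handled separately: a piece whose coefficients contain $\zero$ is constantly $-\infty$, and comparisons against such a piece are decided directly. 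Values at the breakpoints $d_j$ are obtained by evaluating the subexpressions there, which Remark~\ref{REM:exp_size} permits. Hence $(\delta_e, c_e)$ is computable from $e$.

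\textbf{Deciding nonemptiness.} With the normal form in hand, $\supp(\beh{e})\neq\emptyset$ holds iff some piece $I\in\P(\delta_e)$ with $c_e(I)=\langle c_1,c_2\rangle$ satisfies $c_1\neq\zero$ and $c_2\neq\zero$. Indeed, $I$ is nonempty by construction (singletons and open intervals with distinct endpoints), and on such a piece the value $c_1 d + c_2$ is a real number, hence different from $-\infty$. Conversely, if every piece has a $\zero$ coefficient, then $\beh{e}\equiv-\infty$. Checking this condition requires only a finite scan over the pieces.

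The main obstacle is the $\oplus$ and comparison steps of the induction. There, the appendix construction assumes that two affine pieces meet in at most one point of an interval, and one must also cover the degenerate cases (identical functions, or one side equal to $-\infty$) so that the algorithm is total. I would handle this by a uniform case split on whether $(a,b)=(a',b')$, on whether $a=a'$, and on membership in $\{-\infty\}$, before computing any intersection point.
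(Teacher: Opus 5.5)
Your proposal is correct and follows the paper's proof: obtain the normal form $\beh{e} = \bigoplus\bigl(\alpha_I^{c_e(I)} \mid I \in \P(\delta_e)\bigr)$ from Lemma~\ref{LEMMA:finarc_exp_piecewise_affine}, then check whether some piece has both coefficients different from $\zero$. The paper only asserts that this normal form can be computed effectively, and your explicit argument for that step (exact rational arithmetic, plus separate handling of identical, parallel and $-\infty$-valued pieces) fills in a detail the paper leaves implicit.
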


\begin{proof}
By Lemma~\ref{LEMMA:finarc_exp_piecewise_affine}, we can effectively find a representation of $\beh{e}$ as in Equation~\eqref{EQ:finarc_exp_piecewise_affine}.
Then, $\supp(\beh{e}) \neq \emptyset$ if and only if $\supp\!\left(\alpha_I^{c_e(J)} \right) \neq \zero$ for some $J \in \P(\delta_e)$ (i.e., $J \neq \emptyset$).
For every $J \in \P(\delta_e)$, let $c_e(J) = \langle c_J, c'_J \rangle$. Then, $\alpha_J^{c_e(J)}: J \to S$ with
$\alpha_J^{c_e(J)}(d) = c_J \cdot d + c'_J$ and hence $\supp(\alpha_I^{c_e(J)}) \neq \zero$ 
if and only if $c_J \neq \zero$ and $c'_J \neq \zero$. This means that $\supp(\beh{e}) \neq \zero$ if and only if there exists 
$I \in \P(\delta)$ such that $c_I \neq \zero$ and $c'_J \neq \zero$.
Hence, it is decidable whether $\supp(\beh{e}) \neq \emptyset$.
\end{proof}

\begin{lemma}[Lemma~\ref{LEMMA:rec_qfl_support_regular}]
Let $\FF = \FF_{\arc, \times}$ and $\A \in \WFFA_{\Sigma, \FF}$. Then, $\pi_{\Sigma}(\supp(\beh{\A})) \subseteq \Sigma^*$ is a regular language.
\end{lemma}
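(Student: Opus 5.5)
The plan is to build a classical nondeterministic finite automaton $\A'$ over $\Sigma$ by pruning $\A$, and to show that $L(\A')=\pi_{\Sigma}(\supp(\beh{\A}))$. Write $\A = (Q, I, T, F, \wt_I, \wt_T, \wt_F)$. Define $\A' = (Q, I', T', F')$ with $I' = \{i \in I \mid \wt_I(i) \neq \zero\}$, $F' = \{f \in F \mid \wt_F(f) \neq \zero\}$, and $T' = \{t \in T \mid \supp(\beh{\wt_T(t)}) \neq \emptyset\}$. By Lemma~\ref{LEMMA:expression_support_decidable}, each condition in the definition of $T'$ is decidable, so $\A'$ is effectively constructible. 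This is what Lemma~\ref{LEMMA:wffa_support_decidable} later needs.

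For the inclusion $\pi_{\Sigma}(\supp(\beh{\A})) \subseteq L(\A')$, I take $w = (a_1,d_1)\dots(a_n,d_n)$ with $\beh{\A}(w) \neq \zero$. Since $\bigoplus \emptyset = \zero$, the set $\Run_{\A}(w)$ is nonempty, so some run $\varrho$ with label $a_1\dots a_n$ is valid in context $d_1\dots d_n$. Validity means every factor of $\wt_{\A}(\varrho, d_1\dots d_n)$ is non-$\zero$, because $\zero$ is absorbing. Hence $\wt_I(q_0)\neq\zero$, $\wt_F(q_n)\neq\zero$, and $\beh{\wt_T(t_i)}(d_i)\neq\zero$ for every $i$, so $d_i$ witnesses $t_i \in T'$. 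The state sequence of $\varrho$ is therefore an accepting run of $\A'$ on $a_1\dots a_n$.

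For the converse, I take an accepting run $q_0 \xrightarrow{a_1} \dots \xrightarrow{a_n} q_n$ of $\A'$. For each $t_i \in T'$ I pick $d_i$ with $\beh{\wt_T(t_i)}(d_i) \neq \zero$; these choices are independent because each weight depends only on its own data value. The run then has weight $\wt_I(q_0)\otimes\bigotimes_i \beh{\wt_T(t_i)}(d_i)\otimes \wt_F(q_n)$, a product of non-$\zero$ elements. In $\SS_{\arc}$ this product is a sum of reals and hence not $-\infty$, so the semiring has no zero divisors.

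This is the step I expect to be the main obstacle. A nonzero run weight alone does not give $\beh{\A}(w)\neq\zero$, since in a general semiring a sum of nonzero elements can vanish. In the max-plus semiring, however, $\max$ of a finite set containing a real number is real, so $\beh{\A}(w)\neq -\infty$. Thus $w \in \supp(\beh{\A})$ and $\pi_{\Sigma}(w)=a_1\dots a_n$, which completes the equality. Regularity of $\pi_{\Sigma}(\supp(\beh{\A}))$ then follows because it is recognized by the finite automaton $\A'$.
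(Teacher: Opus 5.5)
Your proof is correct and follows the paper's approach: you use the same pruned automaton $\A'$, which drops initial and final states with weight $\zero$ and transitions whose weight has empty support, and you prove the same two inclusions. Your version is more careful than the appendix proof, which asserts the equivalence pointwise for each fixed $w$ (false in general, since $\A'$ ignores the data values); you instead pick the witnesses $d_i$ independently and state explicitly that in $\SS_{\arc}$ neither a product nor a finite sum of non-$\zero$ elements equals $\zero$.
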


\begin{proof}
Let $\A = (Q, I, T, F, \wt_I, \wt_T, \wt_F)$. We define the non-deterministic finite automaton $\A' = (Q, I', T', F')$ by letting:
\begin{itemize}
\item $I' = \{i \in I  \mid  \wt_I(i) \neq \zero\}$, $F' = \{f \in F  \mid  \wt_F(f) \neq \zero\}$;
\item $T' = \{t \in T  \mid  \supp(\beh{\wt_T(t)}) \neq \emptyset\}$.
\end{itemize} 
Then, for every $w \in \DD \Sigma^*$, $\beh{\A}(w) \neq \zero$ if and only if $\A'$ accepts the word $\pi_{\Sigma}(w)$. Hence, $\A'$ accepts the language $\pi_{\Sigma}(\supp(\beh{\A})) \subseteq \Sigma^*$, i.e., $\pi_{\Sigma}(\supp(\beh{\A}))$ is regular.
\end{proof}

\begin{lemma}[Lemma~\ref{LEMMA:expression_sup_computable}]
Let $e \in \EE_{\FF}$ and either $\I = \DD$ or $\I = [l, h] \subseteq \DD$. Then, $\sup_{x \in \I} \beh{e}(x)$ is computable.
\end{lemma}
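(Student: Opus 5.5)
We use the effective piecewise-affine representation of $\beh{e}$ constructed in the proof of Lemma~\ref{LEMMA:finarc_exp_piecewise_affine}. Structural induction on $e$ yields a finite set $\delta_e=\{d_1<\dots<d_k\}$ and coefficients $c_e:\P(\delta_e)\to S^2$ with
\[
\beh{e}=\bigoplus\bigl(\alpha_J^{c_e(J)} \mid J\in\P(\delta_e)\bigr).
\]
Each step of that induction is effective. Constants and bindings are explicit. Coefficients multiply componentwise under $\otimes$. For $\oplus$, $=$ and $\neq$, the crossing point of two affine functions on an open interval is obtained by solving a linear equation, and the values at the breakpoints are computed directly. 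So we obtain $\delta_e$ and $c_e$ algorithmically, assuming real arithmetic on the given constants is computable.

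Next we compute the supremum piece by piece. For $\I=\DD$, the pieces are the singletons $\{0\},\{d_j\}$, the bounded open intervals $(d_j,d_{j+1})$, and the unbounded interval $(d_k,\infty)$. On a singleton the value is read off directly. On a bounded open interval $(x,y)$ with coefficients $\langle c_1,c_2\rangle$ both finite, the supremum of $c_1 d+c_2$ is $\max\{c_1x+c_2,\,c_1y+c_2\}$, a limit at an endpoint that need not be attained. If either coefficient is $\zero=-\infty$, the piece contributes $-\infty$. On $(d_k,\infty)$ the supremum is $+\infty$ if $c_1>0$ (with $c_2$ finite), and $c_1 d_k+c_2$ if $c_1\le 0$. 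The overall supremum is the maximum over these finitely many candidate values, with the value $+\infty$ allowed, as in the proof of Theorem~\ref{THM:threshold_problem_wffa}.

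For $\I=[l,h]$, we refine the partition by adding $l$ and $h$ as breakpoints (Remark~\ref{REMARK:piecewise_affine_extension}) and keep only the pieces contained in $[l,h]$. These are finitely many singletons and bounded open intervals, which we treat as above. Here the supremum is always finite or $-\infty$.

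The main obstacle is not the optimisation itself but making sure the induction of Lemma~\ref{LEMMA:finarc_exp_piecewise_affine} really is an algorithm. In particular, deciding on each interval whether $\beh{e_1}\le\beh{e_2}$, $=$, or crosses once reduces to comparing two affine functions, which is decidable by checking a single interior point and the linear equation $(a_1-a_2)d=b_2-b_1$. A further subtlety is that the supremum may be a non-attained limit at an open endpoint. The endpoint evaluation described above handles this correctly, since affine functions are continuous on each piece.
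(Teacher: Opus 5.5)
Your proposal is correct and follows essentially the same route as the paper. Both arguments take the piecewise-affine representation from Lemma~\ref{LEMMA:finarc_exp_piecewise_affine}, bound each open piece by its endpoint values (with the $\pm\infty$ conventions on the unbounded piece), take the maximum together with the breakpoint values, and for $[l,h]$ refine the partition via Remark~\ref{REMARK:piecewise_affine_extension}; your remarks on the effectiveness of the induction and on non-attained suprema at open endpoints simply make explicit what the paper's proof leaves implicit.
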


\begin{proof}
By Lemma~\ref{LEMMA:finarc_exp_piecewise_affine}, $\beh{e}$ is a piecewise affine function, i.e., there exist a finite set $\delta = \{d_1, \dots, d_k\} \subseteq \DD$ with $0 = d_0 < d_1 < {\dots} < d_k < d_{k + 1} = \infty$ and a mapping $c: \P(\delta) \to S^2$ such that $\beh{e} = \bigoplus_{J \in \P(\delta)} \alpha_J^{c(J)}$. For every $i \in \{0, \dots, k\}$, let $J_i = (d_i, d_{i + 1})$ and $c(J_i) = \langle c_i, c'_i \rangle$. Then, for all $i \in \{0, \dots, k\}$,
$$
M_i = \sup_{x \in J_i} \alpha_{J_i}^{c(J_i)}(x) = c_i' + \max\{c_i \cdot d_i, c_i \cdot d_{i + 1}\}
$$
where for $d_{k + 1} = \infty$, we assume $c_i \cdot \infty = \infty$ whenever $c_i > 0$, 
$c_i \cdot \infty = -\infty$ whenever $c_i < 0$, $0 \cdot \infty = 0$ and $(-\infty) \cdot \infty = -\infty$. Note that $M_i \in \RR \cup \{-\infty, \infty\}$.
Then,
$$
\sup_{x \in \DD} \beh{e}(x) = \max\{\beh{e}(d_0), {\dots}, \beh{e}(d_k), M_0, \dots, M_k \}.
$$
Therefore, $\sup_{x \in \DD} \beh{e}(x)$ is computable.

For $\I = [l, h]$, we can use Remark~\ref{REMARK:piecewise_affine_extension} and assume without loss of generality that $l = d_{p}$ and $h = d_{q}$ for some $0 \le p < q \le k$.
Then,
$$
\sup_{x \in [l, h]} \beh{e}(x) = \max\{ \beh{e}(d_p), {\dots}, \beh{e}(d_q), M_p, {\dots}, M_q \}
$$
and hence $\sup_{x \in [l, h]} \beh{e}(x)$ is also computable.
\end{proof}

Now ce can give the proof of Theorem~\ref{THM:threshold_problem_wffa}.
\begin{proof}
Let $\A = (Q, I, T, F, \wt_I, \wt_T, \wt_F) \in \WFFA_{\Sigma, \FF}$, $\I = \DD$ or $\I = [l, h]$ and $\theta \in \RR$ a threshold. Our goal is to determine whether there exists a finance word $w \in \DD_{\I} \Sigma^*$ such that $\beh{\A}(w) > \theta$. For each transition $t \in T$, we use Lemma~\ref{LEMMA:expression_sup_computable} to compute $M_t = \sup_{x \in \I} \beh{\wt_T(t)}(x)$.
Define the sets:
\begin{itemize}
\item $T_{-\infty} = \{t \in T  \mid  M_t = -\infty\}$,
\item $T_{\infty} = \{t \in T  \mid  M_t = \infty\}$.
\end{itemize}
Observe that for every $t \in T_{-\infty}$ and $d \in \I$, we have $\beh{\wt_T(t)}(d) = -\infty$. Consequently, any run of $\A$ that includes a transition in $T_{-\infty}$ is not valid.

First, for every transition $t = (p, \sigma, q) \in T_{\infty}$ check whether:
\begin{itemize}
\item $p$ is reachable from an initial state from $I$ via transitions in $T \setminus T_{-\infty}$, and
\item some final state from $F$ is reachable from $q$ via valid transitions in $T \setminus T_{-\infty}$
\end{itemize} 
If both conditions hold, then for every threshold $\vartheta \in \RR$ there exists a run $\varrho$ of $\A$ on some word $w \in \DD_{\I}(\Sigma^*)$ such that $\varrho$ touches the transition $t$ and $\wt_{\A}(\varrho) > \vartheta$. As a consequence, there exists $w \in \DD_I \Sigma^*$ such that $\beh{\A}(w) \ge \wt_{\A}(\varrho) > \theta$.

Otherwise, if no transition in $T_{\infty}$ appears in any valid run of $\A$, 
we construct a semiring-weighted automaton $\A' = (Q, I, T', F, \wt_I, \wt_{T'}, \wt_F)$ over $\Sigma$ and $\SS_{\arc}$ where $T' = T \setminus (T_{-\infty} \cup T_{\infty})$ and $\wt_{T'}: T' \to \RR$ is defined for all $t \in T'$ by $\wt_{\T'}(t) = M_t$. Then, $\beh{\A}(w) > \theta$ for some $w \in \DD_I \Sigma^*$ if and only if $\beh{\A'}(u) > \theta$ for some $u \in \Sigma^*$.
Since the $>$-threshold problem for weighted automata over the max-plus semiring is decidable (cf.~\cite{ABK22}), and the present setting can be reduced to this case, the $>$-threshold problem for $\WFFA_{\Sigma, \FF}$ over $\DD_{\I} \Sigma^*$ is also decidable.
\end{proof}

\end{appendices}

\end{document}